\documentclass[11pt]{article}

\usepackage[margin=1in]{geometry}
\usepackage{amsmath,amssymb,amsthm,mathtools}
\usepackage[T1]{fontenc}
\usepackage{lmodern}
\usepackage{authblk}
\usepackage{enumitem}
\usepackage{xcolor}
\usepackage{appendix}
\usepackage{tabularx}
\usepackage{array}
\usepackage{colortbl}
\usepackage{framed}
\usepackage{tikz}
\usetikzlibrary{arrows.meta}
\usepackage{hyperref}
\usepackage{cleveref}
\usepackage{aliascnt}

\newenvironment{boxedstatement}[1]{%
  \begin{framed}\noindent\textbf{#1}\par\medskip
}{%
  \end{framed}
}

\usepackage[backend=biber,style=alphabetic,maxbibnames=99,giveninits=true]{biblatex}
\hypersetup{
  colorlinks=true,
  linkcolor=blue!60!black,
  citecolor=blue!60!black,
  urlcolor=blue!60!black,
  pdftitle={EFI Pairs Without One-Way Puzzles: Oracle Separations from Communication Complexity},
  pdfauthor={Atul Mantri},
  bookmarksnumbered=true,
}

\newtheorem{theorem}{Theorem}[section]
\newcommand{\sharedthm}[3]{%
  \newaliascnt{#1}{theorem}%
  \newtheorem{#1}[#1]{#2}%
  \aliascntresetthe{#1}%
  \crefname{#1}{#2}{#3}\Crefname{#1}{#2}{#3}}
\sharedthm{lemma}{Lemma}{Lemmas}
\sharedthm{proposition}{Proposition}{Propositions}
\sharedthm{corollary}{Corollary}{Corollaries}
\sharedthm{claim}{Claim}{Claims}
\sharedthm{conjecture}{Conjecture}{Conjectures}
\theoremstyle{definition}
\sharedthm{definition}{Definition}{Definitions}
\sharedthm{model}{Model}{Models}
\sharedthm{remark}{Remark}{Remarks}
\sharedthm{example}{Example}{Examples}
\sharedthm{problem}{Open Problem}{Open Problems}
\crefname{appendix}{Appendix}{Appendices}\Crefname{appendix}{Appendix}{Appendices}
\theoremstyle{plain}
\newtheorem*{theorem*}{Theorem}

\newcommand{\cH}{\mathcal H}\newcommand{\cO}{\mathcal O}\newcommand{\cR}{\mathcal R}
\newcommand{\cD}{\mathcal D}\newcommand{\cA}{\mathcal A}\newcommand{\cB}{\mathcal B}
\newcommand{\cK}{\mathcal K}\newcommand{\cX}{\mathcal X}\newcommand{\cT}{\mathcal T}
\newcommand{\cV}{\mathcal V}
\newcommand{\cG}{\mathcal G}
\newcommand{\Count}{\mathsf{Count}}
\newcommand{\OWPuzz}{\mathsf{OWPuzz}}
\newcommand{\EVOWPuzz}{\textsf{EV-}\OWPuzz}
\newcommand{\OnePRS}{\mathsf{1PRS}}
\newcommand{\OWSG}{\mathsf{OWSG}}
\newcommand{\QEFID}{\mathsf{QEFID}}
\newcommand{\QPT}{\mathsf{QPT}}
\newcommand{\VSP}{\mathsf{VSP}}
\newcommand{\poly}{\operatorname{poly}}
\newcommand{\polylog}{\operatorname{polylog}}
\newcommand{\cM}{\mathcal M}
\newcommand{\negl}{\operatorname{negl}}
\newcommand{\tr}{\operatorname{Tr}}
\newcommand{\td}{\operatorname{TD}}
\newcommand{\TV}{\operatorname{TV}}
\newcommand{\Adv}{\operatorname{Adv}}
\newcommand{\Herm}{\operatorname{Herm}}
\newcommand{\E}{\mathbb E}
\newcommand{\norm}[1]{\left\|#1\right\|}

\newcommand{\ket}[1]{|#1\rangle}
\newcommand{\bra}[1]{\langle #1|}
\newcommand{\ip}[2]{\langle #1 | #2\rangle}
\newcommand{\proj}[1]{|#1\rangle\!\langle #1|}
\newcommand{\Id}{I}
\newcommand{\eps}{\varepsilon}
\newcommand{\bbR}{\mathbb R}\newcommand{\bbC}{\mathbb C}
\newcommand{\opnorm}[1]{\norm{#1}_{\mathrm{op}}}
\newcommand{\Un}{\mathrm U(n)}\newcommand{\On}{\mathrm O(n)}
\newcommand{\PP}{\mathsf{PP}}\newcommand{\QCMA}{\mathsf{QCMA}}\newcommand{\BQP}{\mathsf{BQP}}
\newcommand{\PSPACE}{\mathsf{PSPACE}}\newcommand{\EXPTIME}{\mathsf{EXPTIME}}
\newcommand{\eqdef}{\vcentcolon=}
\newcommand{\Samp}{\mathsf{Samp}}
\newcommand{\Ver}{\mathsf{Ver}}

\title{EFI Pairs Without One-Way Puzzles:\\ Oracle Separations from Communication Complexity}
  \author{Atul Mantri~\thanks{atulmantri@vt.edu}}
  \affil{Department of Computer Science, Virginia Tech, USA 24061}
\date{}

\begin{document}
\maketitle

\begin{abstract}
EFI pairs (Brakerski, Canetti, and Qian, ITCS 2023) and one-way puzzles (Khurana and Tomer, STOC
2024) are the leading candidates for the minimal assumption of quantum cryptography.  The first
are efficiently preparable quantum states, statistically far yet computationally
indistinguishable; the second are classical puzzles, easy to sample and hard to solve.
One-way puzzles imply EFI pairs, and whether the converse holds is open.

We construct a single classical oracle relative to which one-way puzzles do not exist, even with an
unbounded verifier, while an EFI pair survives every distinguisher that queries the oracle
classically throughout and holds advice about it, making its one superposition query at the end.
The oracle answers every question about the output probabilities of quantum samplers, which removes
the puzzles, and hides a Haar-random half-dimensional subspace.

To prove security we reduce it to communication complexity.  An adversary whose knowledge of the
subspace arrives as classical query answers can be simulated inside a two-party protocol against the
party holding it, so it does no better than the best classical protocol for Vector-in-Subspace
(Klartag and Regev, STOC 2011), whatever the oracle computes.  That argument does not cover the
superposition query, which we bound instead using tools from random matrix theory.

The same attack gives a classical simulation of any quantum party in a classical-message protocol
with no entanglement shared in advance, so relative to the oracle there is no proof of quantumness
either.  Quantum polynomial time therefore offers no advantage on any task with classical inputs
and outputs, while the two quantum states stay indistinguishable.  We state conjectures on
removing the restriction on superposition queries.
\end{abstract}

\newpage
\begingroup
\small
\tableofcontents
\endgroup
\newpage

\section{Introduction}
\label{sec:intro}

Classical cryptography is organized around a single minimal primitive.  One-way functions are
necessary and sufficient for essentially all of Minicrypt, so that one assumption governs
commitments, symmetric encryption, signatures, and zero knowledge.  Quantum cryptography has no such
primitive.  Instead, a line of work beginning with Ji, Liu, and Song~\cite{JLS18} has shown that
quantum commitments, and through them secure two-party and multiparty computation, follow from
assumptions that are not known to imply one-way functions and are plausibly
weaker~\cite{MY22,AQY22,KT24,BCQ23}.

Out of that line two candidates for the minimal assumption have emerged.  The first is the
\emph{EFI pair} of Brakerski, Canetti, and Qian~\cite{BCQ23}: a
family of pairs of mixed states $(\rho_{0,\lambda},\rho_{1,\lambda})$ that are \emph{e}fficiently
generatable, statistically \emph{f}ar, and computationally \emph{i}ndistinguishable.  EFI pairs turn
out to be equivalent to quantum bit commitments, and are therefore necessary and sufficient for
oblivious transfer, for secure multiparty computation, and for quantum computational zero knowledge
for all of $\mathsf{QIP}$~\cite{BCQ23}.  They are also equivalent, up to $O(\log\lambda)$ bits of
non-uniformity, to single-copy pseudorandom states~\cite{MetaEFI25}.  The second candidate is the
\emph{one-way puzzle} of Khurana and Tomer~\cite{KT24}: a pair $(\Samp,\Ver)$ in which the sampler
$\Samp$ is an efficient quantum algorithm producing a classical key--puzzle pair $(k,s)$, while the
verifier $\Ver$ is computationally unbounded.  Since challenge and solution are both classical,
one-way puzzles are the natural notion of one-wayness for quantum computation with classical
communication~\cite{CGG24}.  They are implied by pseudorandom states, by one-way state generators,
and by quantum money, and they in turn imply quantum commitments~\cite{KT24,CGG24}.  Moreover, their
classically-secure variant characterizes inefficient-verifier proofs of quantumness against uniform
adversaries~\cite{MSY25}.

One-way puzzles imply EFI pairs, and whether the converse holds is open.  The two primitives ask for
hardness of different kinds.  A one-way puzzle asks for a classical search problem that is hard on
average, whereas an EFI pair asks only that two quantum states be hard to tell apart.  The question
is whether the second kind of hardness always brings the first with it.  Equivalently, it is the question of where the boundary lies
between the two lowest worlds in the classification of~\cite{CountCrypt}.  One asymmetry is
already known: one-way puzzles cannot exist if $\BQP=\PP$, so counting power removes them,
whereas EFI pairs have no known attack from any fixed classical complexity oracle.

The state of the art is due to Bostanci, Chen, and Nehoran~\cite{BCN24}, who show that in the common
Haar random state (CHRS) model, augmented by unitary oracles for large complexity classes, EFI pairs exist while one-way state
generators and \emph{sample-efficient} one-way puzzles do not.  That last restriction is deliberate
rather than technical, since one-way puzzles with an inefficient verifier are in fact constructed in
the same augmented model.  The reason is that a common Haar random state does not remove the
difficulty of estimating the output probabilities of quantum samplers, and one-way puzzles exist
precisely when that estimation problem is hard on average~\cite{CGGH25,HM24}.  The following question therefore remains.

\begin{quote}
\emph{Is there an oracle relative to which EFI pairs exist but one-way puzzles, including the
inefficiently verifiable notion, do not?}
\end{quote}

In other words, we are asking whether the minimal assumption of quantum cryptography can be separated
from classical average-case hardness.  An affirmative answer would say that, relative to an oracle,
commitments and everything built from them can rest on hardness that is not the hardness of any
classical search problem, and hence that no relativizing argument derives a one-way puzzle from an
EFI pair.

The characterization of~\cite{CGGH25} suggests how to proceed.  To remove one-way puzzles we should
hand every oracle-aided sampler its own output probabilities, so that estimating them becomes easy.
The difficulty is that those probabilities depend on the hidden object out of which the EFI pair is
built, so the oracle that removes the puzzles also leaks information about the object on which the
pair's indistinguishability rests.  The rest of the paper shows that the two can coexist.

\subsection{Main results}
\label{sec:results}

We answer the question affirmatively for distinguishers with \emph{coherent-last access}, a model
made precise below.  \Cref{def:efi-classical} names the resulting notion of security, and
\Cref{cor:barrier} states the black-box barrier it yields.  Throughout, $n=n(\lambda)=2^{\lambda}$
and $H\subseteq\bbR^n$ is a Haar-random subspace of dimension $n/2$.  The pair we hide consists of
the maximally mixed states on $H$ and on its orthogonal complement,
\[
  \rho_{0}=\frac{2P_H}{n},\qquad \rho_{1}=\frac{2P_{H^\perp}}{n} .
\]
The two states have orthogonal supports, so a party who knows $H$ tells them apart with certainty:
it measures $\{P_H,P_{H^\perp}\}$.  A party who does not know $H$ has no such measurement
available.  The question is how much of $H$ an oracle-aided adversary can learn.

We hide a subspace because of a property that Theorem~C isolates.  Deciding which of the two states
one holds is a two-party problem that is easy for quantum communication and hard for classical
communication, and the security proof turns that gap into a bound on the distinguisher's
advantage.

The oracle itself is a pair $\cO=(\cR,\Count)$.  The source $\cR$ returns, on input $(b,1^\lambda)$,
one fresh copy of $\rho_b$.  The counting oracle $\Count$ is classical, and when asked about an
oracle-aided quantum sampler $C$ with classical output it returns bits of the exact output
probability $\Pr[C^{\cO}(1^\lambda)=x]$.  Such a sampler may of course query $\cO$ in turn, so the
definition as just stated would be circular.  We break the circularity by giving each query a
\emph{rank}: a query of rank $r$ may ask only about samplers whose own calls have rank below $r$.

One thing has to be fixed before we can state the results.  In an oracle model the access a
distinguisher is granted is part of the security claim, so security must be stated against a named
class of distinguishers.  The class below grants every resource the EFI game requires, together with
a full polynomial budget of classical queries, and restricts only where superposition queries may
occur.

\begin{model}[Coherent-last access]
\label{model:access}
A distinguisher has \emph{coherent-last access} to an oracle if it may do the following, in this
order:
\begin{enumerate}[label=(\roman*),leftmargin=2.2em,itemsep=2pt,topsep=3pt]
\item[(o)] receive advice depending arbitrarily on the oracle, either a classical string of
  polynomial length or a quantum state on $(\tfrac13-\eps)\lambda$ qubits for a fixed $\eps>0$
  (\Cref{cor:hybrid-qadvice}; \Cref{cor:qadvice-sharp} raises this to $(1-\eps)\lambda$);
\item obtain polynomially many reference copies of both outputs, from honest executions of the
  generator run by the challenger, which returns only the designated output register;
\item make polynomially many adaptive \emph{classical} queries of polynomial total length,
  interleaved with arbitrary quantum processing; and
\item make one \emph{coherent} query on polynomially many qubits, that is, of width
  $M=2^{\poly(\lambda)}$, before measuring.
\end{enumerate}
Its computational power is otherwise unbounded.  Equivalently, every oracle interaction but the last
is classical.
\end{model}

\begin{boxedstatement}{Theorem A (EFI pairs without one-way puzzles; informal, \Cref{thm:single})}
There is a single deterministic Boolean oracle $\cO'$ on classical strings such that, relative to
$\cO'$:
\begin{enumerate}[label=(\roman*),leftmargin=2.2em]
\item one-way puzzles do not exist, even with a computationally unbounded verifier, and the attack is
  a classical polynomial-time algorithm making classical queries, so classically-secure puzzles and
  $\QEFID$ pairs are ruled out as well; and
\item an approximation of the half-subspace pair, stored in the oracle, is efficiently generatable
  with $\td\ge1-2^{-\Omega(\lambda)}$, and every $\QPT$ distinguisher with coherent-last access
  (\Cref{model:access}) has advantage $2^{-\Omega(\lambda)}$ against it.
\end{enumerate}
Part~(i) holds for every fixing of the oracle's randomness, and part~(ii) with probability one.
Part~(i) carries no restriction on access: the attack is classical, so it applies verbatim to
adversaries that query $\cO'$ in superposition.  Part~(ii) makes the pair a classical-advice EFI pair
(\Cref{def:efi-classical}), secure in addition against quantum advice of linearly many qubits, and it
yields a black-box barrier: no fully black-box construction of one-way puzzles from EFI pairs has a
security reduction in the class of \Cref{model:access} (\Cref{cor:barrier}).
\end{boxedstatement}

Part~(i) is elementary: because $\Count$ returns exact bits, the attack is a short
conditional-sampling argument.  All of the difficulty lies in part~(ii), and there it is imported
rather than manufactured: from the communication lower bound of Klartag and Regev for the classical
queries, and from matrix concentration for the coherent one.  We build and analyze the oracle in two stages.  \Cref{sec:oracle}
through \Cref{sec:coherent} work with an idealized source that returns a fresh sample on every call,
since that is the form the concentration arguments require, and \Cref{sec:single} then implements
that source inside a Boolean function, so that the final oracle $\cO'$ is classical.

\paragraph{Scope of the access model.}  Part~(i) carries no
restriction.  The attack is a classical polynomial-time algorithm making classical queries, so
it applies verbatim to adversaries with full coherent access to $\cO'$.

Only part~(ii) is restricted.  Two features define the class of \Cref{model:access} against which it
is proved.  The first is the placement of the coherent query.  In the standard quantum oracle model an algorithm may query
a classical oracle in superposition throughout, whereas here one such query comes last.  Running the
public generator oneself costs one coherent query per execution, so a distinguisher may do that once,
and further executions are then supplied to it as reference copies run by the challenger.  The second
feature is the advice, which may be classical of any polynomial length, or quantum on $O(\lambda)$
qubits (\Cref{cor:hybrid-qadvice}).  Against quantum advice alone, by contrast, the pair is secure at
every polynomial size (\Cref{cor:quantum-advice}).  \Cref{cor:barrier} draws the black-box
consequence: a reduction escapes the class only by making two or more adaptive coherent queries, or
else one coherent query followed by classical ones.

\paragraph{Comparison with known coherent-query bounds.}  The restriction to one coherent query is
not special to our construction.  We know
of no ensemble for which a lower bound is available against more than one adaptive coherent query to
an arbitrary Boolean function of the secret, at the width $M=2^{\poly(\lambda)}$ that this model
forces.  The state of the art is a single query together with polynomially many parallel ones for the
phase-state ensemble~\cite{LMW24}, explicit one-query separations for structured
unitaries~\cite{DLM26}, and a bound at ``one and a half'' for pseudorandom states~\cite{Huang25},
whose sharper form is conditional and whose unconditional form does not reach the width we need.  To these we
add two statements that reach past a single query (\Cref{prop:advice-one-query},
\Cref{prop:asym-two-query}), and coherent-last access grants in addition a full polynomial budget of
adaptive classical queries, which none of those settings addresses (\Cref{tab:positioning}).
Removing both features at once is what \Cref{conj:lift} asks for.  Resolving it affirmatively would
also admit quantum advice of any polynomial size (\Cref{sec:lift}), and
\Cref{sec:intro-lift} returns to the question.

Part~(i) is stronger than the nonexistence of one-way puzzles.  The estimator
behind it applies to any party in a protocol whose messages are classical, sent one outgoing bit at a
time, and it collapses quantum polynomial time to classical throughout that setting.

\begin{boxedstatement}{Theorem B (interactive collapse; informal, \Cref{thm:interactive}, \Cref{cor:no-poq})}
Relative to $\cO'$, every quantum polynomial-time party in a protocol whose messages are classical,
with no entanglement or other correlated setup shared in advance, is simulated by a classical
polynomial-time party making classical queries to $\cO'$, to within $2^{-\lambda}$ in total variation
on the whole transcript.  The simulation is uniform in the counterparty, which may have unbounded
computational power.  Consequently, relative to $\cO'$ there is no proof of
quantumness even against an unbounded verifier, and every quantum-samplable classical distribution is
classically samplable.
\end{boxedstatement}

The proof of part~(ii) is different in kind, and it uses nothing about $\Count$ beyond
the fact that its answers are classical bits.  Since hiding a subspace is only one way to apply it,
we state it separately.

\begin{boxedstatement}{Theorem C (security from communication complexity; informal, \Cref{thm:transcript-security})}
Let $X$ be a hidden random object, and let $\Pi_X$ be the two-party problem in which Bob holds $X$,
Alice holds a classical description of a sample from $\rho_0(X)$ or from $\rho_1(X)$, and Alice must
say which one after classical communication.  Write $\beta_\Pi(\ell)$ for the best advantage that a
classical protocol for $\Pi_X$ achieves with $\ell$ bits.  Suppose that, apart from the challenge and
declared classical side information, a distinguisher's dependence on $X$ runs through $L$ bits of
classical query-and-answer traffic with an oracle determined by $X$, together with $a$ bits of
classical advice.  Then its average advantage is at most $2\beta_\Pi(L+a+1)$.  The oracle need only be deterministic; it
may be arbitrarily powerful, and it need not be computable.
\end{boxedstatement}

Security thus reduces to a communication lower bound, and it does so no matter how powerful the
oracle is.  Notice what is being bounded here: not what the oracle can compute, but how much of $X$
actually reaches the adversary.  Adaptivity and the number of queries, in particular, cost nothing beyond what
they contribute to the total length $L$, because a two-way protocol charges only the length of the
conversation.  \Cref{sec:framework} makes the comparison with a one-query coherent bound
precise.  Advice, by contrast, we charge directly.  Advice about $X$ is itself a message from Bob, so $a$ bits
of it degrade the budget from $L$ to $L+a+1$, and they do so even when the advice is chosen after the
oracle has been fixed.

Theorem~C also constrains the construction, through two requirements.  The pair must be usable, so
the associated two-party problem has to be \emph{easy} for quantum communication: the holder of the
challenge sends her state to the holder of the secret, who measures it.  The pair must also not
leak, so the same problem has to be \emph{hard} for classical communication.  We need, then, a
problem whose quantum and classical communication complexities are far apart, and the size of that
gap is the security we obtain.  Of the two requirements, the classical one is built into the bound
itself, which certifies nothing once $\Pi_X$ admits a cheap classical protocol.  The quantum one is
automatic (\Cref{prop:converse}): the holder of $X$ can always perform the Helstrom measurement on a
state that Alice sends in $\lceil\log_2 d\rceil$ qubits.  Any use of Theorem~C rests on such a
gap.

The standard example of a problem with that gap is Vector-in-Subspace, $\VSP_n$, studied by Klartag
and Regev~\cite{KR11}.  Here Bob holds a half-dimensional subspace $H\subseteq\bbR^n$, Alice holds a
unit vector promised to lie either in $H$ or in $H^\perp$, and Alice must decide which.  Quantumly
the problem is easy.  An $O(\log n)$-qubit two-message protocol decides it with certainty: Alice
sends the vector encoded in amplitudes, and Bob, who knows $H$, measures $\{P_H,P_{H^\perp}\}$ and
returns the outcome.  Classically it is hard.  The randomized communication complexity is
$\Omega(n^{1/3})$, even with two-way interaction and shared randomness.  This is why the pair we hide
is built from a half-dimensional subspace.

Theorem~D turns to coherent queries.  A separation in the standard quantum oracle model has to handle
them, and Theorem~C provably cannot, since a coherent query behaves like quantum communication and
$\VSP$ is exponentially easy in that model.

\begin{boxedstatement}{Theorem D (coherent queries; informal, \Cref{thm:one-query}, \Cref{cor:parallel}, \Cref{cor:quantum-advice}, \Cref{thm:hybrid}, \Cref{prop:advice-one-query}, \Cref{prop:asym-two-query})}
Let $f_H\colon[M]\to\{\pm1\}$ be an \emph{arbitrary} Boolean function of $H$, accessed as a phase
oracle, and assume nothing about the adversary's computational power.  Against $(\rho_0,\rho_1)$:
\begin{enumerate}[label=(\roman*),leftmargin=2.2em]
\item one coherent query has average bias $O\big(\sqrt{\log(2M)/n}\big)$, and $t$ parallel queries
  have average bias $O\big(\sqrt{t\log(2M)/n}\big)$;
\item $m$ qubits of quantum advice give bias $O(\sqrt{m+1}/n)$, so $m=\Theta(n^2)$ is the exact
  threshold for constant bias;
\item $\poly(\lambda)$ adaptive classical $\Count$-queries followed by one coherent query, with
  reference copies and advice, classical or of linearly many qubits, give advantage $\negl(\lambda)$;
\item advice of dimension $n^{2-\eps}$ with one coherent query of any width, and two adaptive coherent
  queries whose first has width $n^{2-\eps}$ however wide the second, give bias $2^{-\Omega(\lambda)}$.
\end{enumerate}
Part~(iii) is the full game of Theorem~A(ii); the others take one challenge copy and no reference
copies, and part~(iv) is as far past one query as the argument reaches.
\end{boxedstatement}

Theorem~E asks the opposite question: not how little the oracle reveals, but how much.  Two rates
come out of it.  The first measures what a natural class of classical adversaries learns, and the
second measures what a message is worth in the one-way model of the communication problem behind it.

\begin{boxedstatement}{Theorem E (two rates; informal, \Cref{prop:exact-rate}, \Cref{prop:oneway-rate})}
\begin{enumerate}[label=(\roman*),leftmargin=2.2em]
\item An adversary that measures its challenge copy with any $H$-independent POVM, takes no reference
  copies, and is then told the \emph{exact} probability of every outcome, a relaxation of the bits $\Count$ supplies, has average advantage at most $2/\sqrt{\pi n}$.  Measuring in the computational basis with
  $O(\lambda)$ classical queries attains this to within a factor $1-O(n^{-1})$, so the rate in this
  model is $2/\sqrt{\pi n}\,(1+O(n^{-1}))$: sharp to the leading constant, not an exact finite-$n$
  optimum.
\item In the one-way model of $\VSP_n$ an $L$-bit message yields advantage $O(L/\sqrt n)$, while
  $\Omega(n^{-1/2})$ is available already at logarithmic length.  This holds at every message length,
  on the exact promise and the rotation-invariant distribution; the constant-advantage endpoint it
  recovers, $\Theta(\sqrt n)$ bits, is already known by a different route
  (\Cref{sec:oneway-rate}).
\end{enumerate}
\end{boxedstatement}

The upper and lower bounds of part~(i) agree asymptotically, at the rate $2/\sqrt{\pi n}$, and three
consequences follow.  First, the dimension must be superpolynomial rather than merely large, since
negligible security forces $n=\lambda^{\omega(1)}$.  Second, the security level is
$2^{-\Theta(\lambda)}$ on both sides, so that the gap between the $n^{-1/6}$ we prove and the
$n^{-1/2}$ an attack achieves is a gap in the exponent only.  That gap comes from converting a
constant-error communication bound into a small-advantage one, and not from the $\Omega(n^{1/3})$
bound itself (\Cref{sec:exact-rate}).  Third, the same rate is the most that a single bit of
information about $H$ can ever be worth (\Cref{lem:local-corr}), and each answer of the counting
oracle is exactly one such bit.  The scale $n^{-1/2}$ then appears once more in Theorem~D, as the
value of one coherent query, tight there up to $\sqrt{\log M}$, and \Cref{cor:one-query-tight} shows
that a one-query adversary attains it.

Part~(ii), finally, is a statement about $\VSP_n$ alone, proved by a spectral argument that is
independent of the rest of the paper.  \Cref{sec:oneway-rate} relates it to the program aiming at the same
bound for two-way protocols.

\paragraph{Primitives removed along with one-way puzzles.}  One-way puzzles do not exist relative to
$\cO'$, and so neither does any primitive known to imply them: \emph{multi-copy} pseudorandom states
(PRS) and unitaries, pure-output one-way state generators, and quantum money mini-schemes with pure
banknotes.  \Cref{cor:collateral} proves this for the source-model oracle of \Cref{sec:oracle}, and
the argument applies verbatim to $\cO'$, since \Cref{thm:single}(i) is the same attack, as is
Theorem~B.

The qualifier ``multi-copy'' is necessary, for the following reason.  By~\cite{MetaEFI25}, EFI pairs
are equivalent to single-copy pseudorandom states, up to $O(\log\lambda)$ bits of non-uniformity in
the state family.  Suppose that equivalence relativizes to the access model of \Cref{thm:single},
preserving its advice convention and query budget.  Then single-copy pseudorandomness survives
$\cO'$ in that model, and $\cO'$ separates multi-copy from single-copy pseudorandomness, which is the
separation studied in~\cite{CCS24}.  \Cref{sec:no-owpuzz} states precisely the condition on
relativization that this inference depends on.

\subsection{Technical overview}
\label{sec:overview}

Theorem~A combines two oracle layers whose requirements conflict.  Removing
one-way puzzles requires that every oracle-aided sampler be handed its own output probabilities,
because by~\cite{CGGH25,HM24} the average-case hardness of estimating those probabilities is exactly
what a one-way puzzle provides.  Those probabilities are, however, functions of the hidden subspace,
so the layer that removes the puzzle bears directly on the pair that is supposed to remain
indistinguishable.  The two are compatible because the counting oracle is queried classically and
answers in single bits, and because by the communication lower bound no polynomial number of
classical bits about $H$ suffices to distinguish the pair.

We sketch the argument here in the order in which the proof develops it; the formal treatment
occupies \Cref{sec:framework} through \Cref{sec:single}.  There are four parts.  We first define the
counting layer and show that it removes every form of classical-output hardness.  We then ask what a
bounded classical transcript about $H$ is worth, which turns out to be a question of communication
complexity, and this settles the classical part of the access model.  Next we ask what one
superposition query is worth; no communication argument can reach that question, so we treat it in
query complexity instead, and then combine the two bounds.  Finally we replace the state source by a
Boolean function, so that the oracle becomes a single classical object, and we draw the black-box
barrier.  Along the way we collect two exact rates, which measure what the oracle does leak.

\paragraph{The counting oracle.} \label{sec:overview-oracle} The samplers whose probabilities
$\Count$ reports may themselves query $\Count$, and without some restriction on this the definition
is inconsistent.  To see the difficulty, consider a sampler that requests the leading bit of
its own probability of outputting $0$ and then outputs that bit.  No answer to that query is
consistent with the sampler's behavior (\Cref{ex:liar}).  We therefore assign each query a
\emph{rank}, much as the polynomial hierarchy is stratified by alternation depth, and permit a
rank-$r$ query to refer only to samplers whose own calls have rank below $r$.  We can then define the whole
family by an ordinary recursion on $r$, and we need no fixed-point theorem to do so
(\Cref{lem:welldef}).

Two features of the definition matter later, and both concern how it is stated rather than what it
says.  The first is that the rank condition is checked against the description the oracle is handed,
and not against the behavior of the sampler that description defines.  This is necessary, because a
sampler may compute its rank at runtime, possibly in superposition, in which case there is no single
behavior to inspect (\Cref{sec:rank}).  The second is that $\Count$ returns exact bits of the
probability rather than an approximation.  Consequently we never have to control an accumulation of
approximation error across ranks, and the coherent form of the oracle remains an ordinary Boolean
function (\Cref{def:oracle}).

\paragraph{Removing classical-output hardness.} \label{sec:overview-attack} Because the bits are
exact, $t$ classical queries pin an output probability down to additive error $2^{-t}$, so that error
$2^{-\poly(\lambda)}$ is available at polynomial cost (\Cref{lem:pe-easy}).  Given that, the attack
on one-way puzzles is elementary.  What the adversary would like, on being given a puzzle $s$, is to
sample a key from the true conditional distribution of keys given $s$, since a key drawn that way is
accepted by the honest verifier just as often as an honest key is.  It can do exactly this, one bit
at a time.  Indeed, the conditional probability of the next key bit is a ratio of two output
probabilities, namely those of the samplers ``run $\Samp(1^\lambda)$ and output $(s,k_1\cdots k_i)$'',
and both are available from $\Count$.  A hybrid argument over the $m$ key bits then places the joint
distribution of $(s,\cA(s))$ within $2^{-\lambda}$ of the honest one (\Cref{lem:tv}), by separating
the prefixes that carry little probability from those on which the estimates are relatively accurate.
Both halves of that accounting -- the local one and the threshold -- are shared with the interactive
simulation below, so we state them once (\Cref{lem:ratio}, \Cref{lem:threshold}).  The unbounded verifier then accepts with probability $1-\negl(\lambda)$.  Notice that the
adversary never runs $\Ver$, so we need no assumption on the verifier.  Notice too that it
is a classical algorithm making classical queries, so classically-secure puzzles are ruled out as
well (\Cref{thm:no-owpuzz}).

The same estimates settle the corresponding decision problem.  With exact bits in hand the
likelihood-ratio test between two classical-output samplers becomes computable, so relative to $\cO$
no such pair is closer computationally than it is statistically (\Cref{prop:no-qefid}), and $\QEFID$
pairs do not exist either (\Cref{cor:no-qefid}).

The estimator applies round by round to an interactive party as well.  Simulating each outgoing bit
in turn replaces any $\QPT^{\cO}$ party in a classical-message protocol by a classical one, and does
so against every counterparty at once (\Cref{thm:interactive}), so that relative to $\cO$ there is no
proof of quantumness even with an unbounded verifier (\Cref{cor:no-poq}).  Taken together, these
three statements say what can survive $\cO$.  On any task with classical inputs and outputs, quantum
polynomial time gains nothing over classical polynomial time.  Whatever hardness survives $\cO$ must
be of a different kind altogether, namely the indistinguishability of two quantum states.

\paragraph{Security from a classical transcript.} \label{sec:overview-framework} Suppose that every
route from the hidden object $X$ to the adversary carries classical data.  Then we may run the
adversary inside a two-party game, with one player running it while a second player, who holds $X$
and is computationally unbounded, answers its queries.  Each query becomes a message and each answer
a message in return, so the advantage is bounded by the length of that conversation and by nothing
else, and in particular not by anything the oracle happens to be able to compute
(\Cref{thm:transcript-security}).  Adaptivity and the number of queries cost nothing here, since a
two-way protocol charges only the total length of the conversation.  Advice about $X$, on the other
hand, is itself a message from the holder of $X$, so $a$ bits of advice cost $a+1$ bits of budget,
and they do so even when the advice is chosen after the oracle has been fixed.

One requirement shapes the definition.  A communication protocol takes classical inputs, so the
challenge must be handed over as a description rather than as a state.  That makes the simulating
player stronger rather than weaker, which is precisely why the reduction is sound.  Unfortunately,
the same requirement excludes quantum side information depending on $X$.  That is why we treat the
reference copies separately below, and why coherent access falls outside the scope of the theorem
altogether, as \Cref{sec:framework} explains.

\paragraph{The associated two-party problem.} \label{sec:overview-vsp} Theorem~C placed two requirements
on the construction, one quantum and one classical, and Vector-in-Subspace meets both of
them (\Cref{prop:converse}, \Cref{thm:vsp-lb}).  We use the classical bound through a small-advantage
form, namely $\beta(n,L)\le C\sqrt{(L+1)/n^{1/3}}$, which we obtain by symmetrizing across and within
the two promise cases and then amplifying (\Cref{lem:vsp-small}).

\paragraph{Reference copies.} \label{sec:overview-reference} The generator is public, so a
distinguisher can do more than hold its challenge: it can draw polynomially many fresh copies of both
states besides.  Those copies are quantum side information about $H$, and quantum side information is
exactly what \Cref{thm:transcript-security} does not cover.  We therefore remove them, by handing
the adversary something classical that is at least as useful in their place, namely the \emph{spans}
$A\subseteq H$ and $B\subseteq H^\perp$ of the copies it would have drawn.  Given a basis of $A$, the
adversary can resample the copies for itself
with the correct joint distribution, so it loses nothing by the exchange.  Unlike the copies, the
spans are classical, which is exactly what the reduction needs.

Now condition on the spans.  The challenge then splits into two branches.  The first is supported on
the subspace the adversary already knows, and there we concede the entire advantage; since
that branch carries weight only $2q/n$, conceding it is cheap.  The second branch is a fresh
Vector-in-Subspace instance, this time inside $K=(A\oplus B)^\perp$, of dimension $n-2q$, in which
the residual subspace is once again Haar-random of half dimension (\Cref{lem:residual},
\Cref{lem:reference}).  Reference copies cost us two things, then, and nothing more: an additive
$2q/n$, and a loss of dimension (\Cref{cor:reference-vsp}).

One step then remains, and it is routine.  What we have bounded so far is an advantage averaged over
the random subspaces, whereas an oracle separation needs a single sequence $\{H_\lambda\}$ that
defeats every adversary at once.  A discretization of the gate set, Markov's inequality, and
Borel--Cantelli together make that passage, and we isolate it as \Cref{lem:fixing} because three
later arguments need it.  The one point to watch is that the maximization over advice strings has to
sit inside the expectation; the reduction accommodates this without loss, since the holder of $X$ can
send the best advice (\Cref{sec:security-thm}).

\paragraph{The two rates.} \label{sec:overview-rates} The bounds above say that the oracle leaks
little.  Two rates say how little, and we compute both directly rather than through the reduction.
Consider first a distinguisher that measures its challenge with a fixed POVM $\{E_k\}$ and is then
told every outcome probability exactly.  Its advantage is the likelihood-ratio quantity
\[
  \frac1n\sum_k\big|\tr(E_kR_H)\big| ,
\]
and each term in that sum is governed by the law of $\bra vP_H\ket v$,
which for a real half-dimensional subspace is $\mathrm{Beta}(n/4,n/4)$.  The mean absolute deviation
of that law gives the rate $2/\sqrt{\pi n}$, and measuring in the computational basis attains it
(\Cref{prop:exact-rate}).  For the one-way model of $\VSP_n$ the picture is different.  There the
second-moment operator $\E_H[(\int f\,d\mu_H)^2]$ is diagonalized exactly by the spherical harmonics,
with eigenvalue $\frac1{n-1}$ at degree two and rapid decay above it.  A level-$\ell$ inequality then
bounds what a single cell of the partition induced by Alice's message can leak, and summing over
cells gives $O(L/\sqrt n)$ (\Cref{prop:oneway-rate}).

\paragraph{The one-query bound.}
\label{sec:overview-coherent}
The communication argument charges the adversary for the classical bits it exchanges with the
oracle, and a superposition query is not a classical message, so that argument does not apply under
coherent access.  By \Cref{prop:converse} no communication bound can replace it, since the problem is
easy for quantum communication.  We turn to query complexity instead.  Write the pre-query processing of a one-query adversary as an isometry with
branches $A_x$, one branch for each label $x$ the oracle can be asked about, and write its post-query
measurement as an effect $\Pi$.  The bias is then a quadratic form in the unknown truth table,
\[
  Z_H(f)=\sum_{x,y}f_xf_y\,C_H(x,y),\qquad C_H(x,y)=\tr(\Pi_{xy}A_y\Delta A_x^\dagger),
  \qquad \Delta=\tfrac{R_H}{n},
\]
where $R_H=2P_H-\Id$ is the reflection in $H$.  Three objects enter this form, each in a different
way.  The adversary contributes the fixed data
$(A_x,\Pi_{xy})$.  The randomness of $H$ enters \emph{linearly}, through $\Delta$ alone.  The truth
table appears only through the rank-one sign pattern $f_xf_y$.  What we have to bound is the maximum
of $|Z_H(f)|$ over all $2^M$ sign patterns, since $f$ is an arbitrary function of $H$ and we are
given no control over it.

The one general tool for bounding a quadratic form is the operator norm, and applying it directly
costs a factor $M$ that the completeness relation shows is spurious.  Reweighting the labels by the
query mass $\tau_x=\frac1n\tr(A_x^\dagger A_x)$ removes it, because $D_\tau^{1/2}f$ is a unit vector
for \emph{every} sign pattern, so that the maximum over the $2^M$ patterns collapses to a single
operator norm $\opnorm{G_H}$ (\Cref{sec:onequery} does this carefully).  Neither the weighting nor
the collapse is new: both are the query-label form of the argument of Lombardi, Ma, and
Wright~\cite{LMW24}, and \Cref{sec:intro-lmw} compares the two settings.

What is left is to bound $\E_H\opnorm{G_H}$, and here our setting differs from the earlier one.  By
the linearity in
$R_H$, each entry of $G_H$ is a linear statistic $\tr(B_{xy}R_H/n)$ of one Haar-conjugated
reflection.  The scalar case is easy, and it already shows the scale we are after: such a statistic
is Lipschitz in the Haar unitary that defines $R_H$, hence subgaussian, so that a maximum over $N$
suitably normalized statistics is $O(\sqrt{\log(2N)}/n)$ (\Cref{lem:scalar-linear}).  An operator
norm, however, is not a maximum over finitely many scalars, and what we need is the matrix form of
that statement.  Had the entries been built from independent random signs, the classical matrix
Khintchine inequality would supply it, giving an operator norm of order $\sigma\sqrt{\log(2M)}$ for a
variance parameter $\sigma$ computed from the coefficient matrices.  A single Haar reflection,
though, offers no independence, so we take the matrix concentration from a curvature
hypothesis instead and compute the variance proxy it asks for (\Cref{lem:khintchine}), with the row
and column normalizations playing the role of the variance parameter.

The substitute for independence is curvature.  Haar measure on $\mathrm{SU}(n)$ and $\mathrm{SO}(n)$
has Ricci curvature $\Omega(n)$, which is precisely why scalar Lipschitz functions on those groups
concentrate at scale $n^{-1/2}$, and Huang and Tropp~\cite{HT21} show that the same hypothesis yields
subgaussian concentration for matrix-valued functions as well.  Their criterion asks for a bound on
the sum of squared derivatives along an orthonormal frame, and here that computation is short:
moving $U$ in a direction $K$ rotates $R_H$ by the commutator $i[K,D]$, and
$\norm{[K,D]}_2\le2\opnorm D\norm K_2$.  The conclusion is that
$\E_H\max_f|Z_H(f)|\le C\sqrt{\log(2M)/n}$ (\Cref{thm:one-query}).  This extends to parallel queries
with no further argument (\Cref{cor:parallel}) and, applied through Rosenthal's one-query state
synthesis~\cite{Rosenthal23}, it also excludes quantum advice of polynomial size
(\Cref{cor:quantum-advice}).

\paragraph{Combining classical and coherent queries.} \label{sec:overview-hybrid} The two arguments
above bound different resources by different methods, and \Cref{thm:hybrid} combines them into a
single statement, covering polynomially many adaptive classical queries together with advice,
followed by one coherent query.  The combination preserves negligibility, though not the sharp
constant of the pure one-query corner.  We split the advantage into two parts: the part that an
$H$-\emph{independent} final query would produce, which the communication bound controls, and the
increment that arises from making that query $H$-dependent, which the spectral bound controls.

For the increment we would like to fix the classical transcript and then apply the one-query bound to
whatever follows it.  The problem is that the transcript is correlated with $H$, so conditioning on
it changes the law of $H$, whereas the one-query bound is an average over that law.  Two facts
resolve it.  The first makes the conditioning explicit: write the answers into a fixed operator
$V_\pi$, and leave the question of whether they are the \emph{true} answers to an indicator
$\chi_\pi(H)$.  The transcript's $H$-dependence is then carried by a scalar, and the masses
$w_\pi=\frac1n\tr(V_\pi^\dagger V_\pi)$ satisfy $\sum_\pi\chi_\pi(H)w_\pi=1$ for every $H$
(\Cref{lem:record}), so that what follows is a convex combination.  The second fact is that the
quantity being conditioned, namely the relaxed spectral norm of the previous paragraph, is a
Lipschitz function of the Haar unitary defining $R_H$, and so concentrates.  Concentration is
exactly what survives conditioning on a classical transcript, since such a transcript is a condition
on that same unitary.  Removing the conditioning over the at most $2^{L+a+2}$ transcripts then costs
only a maximal-inequality term $C\sqrt{(L+a)/n}$.

Quantum advice is admitted on $O(\lambda)$ qubits, by charging a net of advice states to the
classical budget (\Cref{cor:hybrid-qadvice}).  Beyond that size, however, the same argument marks the
limit of what we can prove, since it requires a classical transcript \emph{between} the state source
and the coherent query.  There are two ways to violate that requirement.  A
coherent query placed first can synthesize arbitrary quantum advice about $H$, by
\Cref{cor:quantum-advice}, and two coherent queries already contain advice together with a query.  In
the first case the adversary holds $H$-dependent quantum side information before any transcript
exists, which the communication reduction cannot represent; in the second there is no transcript at
all (\Cref{sec:hybrid}, \Cref{sec:lift}).  Closing either case is the content of
\Cref{conj:lift}.

\paragraph{Removing the state source.} \label{sec:overview-single} \Cref{sec:oracle} through
\Cref{sec:coherent} work with the source $\cR$, which returns a fresh Haar sample on every call,
because that is the form the concentration arguments require.  The final oracle contains no such
source.  Instead we store samples inside a Boolean function, again through Rosenthal's one-query
synthesis: for each branch $b$ and each index $r$ we draw a vector $u_{b,r}$ from the appropriate
sphere, and the oracle carries the Boolean function that synthesizes an approximation of the rounded
state $\ket{\tilde u_{b,r}}$.  The generator picks $r$ with its own coins, runs the synthesis circuit
with its one query addressed to that section, and then discards $r$.

The construction works because $r$ is discarded, and the order of the two steps in the proof is
forced by that.  Suppose we tried to apply the communication reduction to the stored experiment
directly.  The holder of the hidden object would receive the entire list from which the challenge was
drawn, and the associated problem would collapse: Alice fingerprints her challenge vector in
$O(\lambda)$ bits, the other player finds the matching entry, and reads off its branch.  So we must
replace the stored samples by fresh ones \emph{before} applying the reduction, which is what
\Cref{lem:fresh} does.  Only then does the analysis of the earlier sections apply, and
it applies with the enlarged hidden object in place of $H$ (\Cref{lem:side}).  \Cref{sec:single} gives two
further reasons why the replacement is a precondition rather than a convenience.

\subsection{Comparison with Lombardi--Ma--Wright}
\label{sec:intro-lmw}

The closest prior security proofs are those of Lombardi, Ma, and Wright~\cite{LMW24} and,
concurrently with this manuscript, Huang~\cite{Huang25}.  Since then, Dong, Lombardi, and
Ma~\cite{DLM26} have proved explicit one-query lower bounds through oracle state search and Choi
state games, together with a sharper separation between one-query synthesis and quantum programs.
Their lower bounds concern one-query adversaries; moreover, some of their targets admit two-query
constructions, and their quantum-program result concerns advice without queries.  So our coherent-query
bounds do not improve on any of these, and the two settings are better compared by scope than by
strength.

For unitary synthesis the classical half is the easy one, which is not what one might expect.
Indeed, a classical algorithm making $T$ queries reads only $T$ bits of the oracle, so that once we
fix its
internal randomness the circuit can reach at most $2^T$ states, no matter how adaptively it chose
those queries.  A counting argument then caps the fidelity it can achieve with a generic state at a
constant (\Cref{obs:classical-synthesis}).  A single coherent query, by contrast, already suffices to
synthesize an arbitrary state~\cite{Rosenthal23}, and \Cref{sec:framework} makes the comparison
precise.

We adopt the plan of their one-query proof rather than rediscover it.  The
deterministic query-mass weighting that depends only on the isometry, which we call $\tau$, is their
weight vector $\ket{\mathrm{wt}_V}$ read on query labels instead of on workspace basis
vectors~\cite[\S2.4.1]{LMW24}.  Their $D_{V,h}$ is a different quantity, measuring the deviation of
amplitudes from typical.  The reduction of a maximum over truth tables to a single operator norm by
means of that weighting is also due to them, and it works in both settings for the same reason, namely
that a diagonal weighting commutes with the phase oracle.  A concentration statement for the
ensemble is likewise common to both proofs, in their case a Talagrand inequality on the Boolean cube,
which is essential to their conclusion about pseudorandom states.

What differs at this step is which concentration statement is available, and what has to be computed
to use it.  Their states are built from independent coordinates, so matrix concentration for
independent sums applies to them directly.  A single Haar-conjugated reflection provides no
independence, so we take the matrix concentration from a curvature hypothesis instead --- an
inequality due to Huang and Tropp rather than to us (\Cref{prop:herbst}) --- and what \Cref{lem:khintchine}
supplies is the variance proxy that hypothesis asks for, for a matrix whose entries are linear
statistics of one conjugation.

The one structural difference lies in how the hidden object enters the bias.  In~\cite{LMW24} the
challenge state $\ket{\psi_{R_k}}$ is itself random and oracle-dependent, so that the bias is a
\emph{quadratic} form in the same random vector, and decoupling is then required in order to separate
its two occurrences.  In our setting the bias is $Z_H(f)=\frac1n\tr\big(E_f\,R_H\big)$, which is
\emph{linear} in the hidden reflection, for every truth table $f$ and every query count $T$.
Consequently there is no decoupling step and no good event on which to condition, although the width
still enters through $\sqrt{\log(2M)}$.  The one-query proof accordingly reduces to three steps: a
weighted relaxation, a derivative computation, and an appeal to a matrix concentration inequality.
This is the one respect in which our setting is simpler than theirs.

Neither ensemble dominates the other, since the two are suited to different conclusions.  The phase
states of~\cite{LMW24} are efficiently preparable from a \emph{classical} random oracle.  A
polynomial-query theorem in their setting would yield a classical oracle relative to which
single-copy pseudorandomness survives while, as Lombardi, Ma, and Wright observe, quantum cryptography would not
black-box imply any hard language, which is the question raised in~\cite{KQST23}.  Our states, by
contrast, require Haar-random data, which \Cref{sec:single} stores in a Boolean function rather than
removes, so for that conclusion their ensemble is the better suited.  A random oracle, on the other
hand, does not eliminate one-way puzzles, whereas our counting oracle does, so for the purpose of
proving a separation ours is the better suited.  \Cref{tab:positioning} sets out the comparison.

\begin{table}[t]
\centering
\footnotesize
\renewcommand{\arraystretch}{1.3}
\setlength{\tabcolsep}{4pt}
\begin{tabularx}{\textwidth}{@{}>{\raggedright\arraybackslash}p{0.10\textwidth} >{\raggedright\arraybackslash}p{0.15\textwidth} >{\raggedright\arraybackslash}p{0.11\textwidth} >{\raggedright\arraybackslash}X >{\raggedright\arraybackslash}p{0.19\textwidth}@{}}
\hline
\textbf{Work} & \textbf{Coherent} & \textbf{Classical} & \textbf{Ensemble / target} & \textbf{Result type}\\
\hline
\multicolumn{5}{@{}l}{\emph{Prior work}}\\
\cite{LMW24} & one; poly parallel & advice only & random oracle; single-copy PRS, commitments & one-query bound (not an EFI-versus-$\OWPuzz$ separation)\\
\cite{Huang25}$^\dagger$ & one, then one classical & one (after) & random oracle; PRS & conditional bound (not a separation)\\
\hline
\rowcolor[gray]{.9}\textbf{This work} & one; poly parallel; one \emph{after} the classical phase; two at restricted width & poly adaptive ($+$ advice) & single classical oracle; Haar half-subspace pair; classical-advice EFI vs.\ $\OWPuzz$ and $\QEFID$ & separation in the classical-query model, extended by one coherent query\\
\hline
\emph{Open} & \emph{poly adaptive} & & \emph{any} & \emph{fully coherent access; \Cref{conj:lift}}\\
\hline
\end{tabularx}
\caption[Positioning against prior single-coherent-query bounds.]{Positioning. All three share a single-coherent-query base, and none reaches polynomially many adaptive coherent queries, the row that fully coherent access to a classical oracle would need. Classical queries are the easy case (\Cref{obs:classical-synthesis}), and the ensemble of~\cite{LMW24} is better suited to the complexity-independence conclusion. The third column is the resource their setting does not address and ours requires, since the counting oracle is queried classically. $^\dagger$Preprint; it gives an assumption-light route whose bound is not negligible at polynomial width, and a sharper route conditional on a block-orthogonality (sum-to-max) assumption that it states as such.}
\label{tab:positioning}
\end{table}

\paragraph{Single-copy security in both conjectures.}  \Cref{conj:lift} and the conjecture
of~\cite{LMW24} are both single-copy statements, and there is a reason for that.
Kretschmer~\cite{Kre21} breaks any \emph{multi-copy} PRS with a single $\PP$ query, using
$O(\lambda)$ copies, and so no unitary-synthesis lower bound can proceed through multi-copy security.
Single-copy security is therefore the only level at which the connection to synthesis remains
available, and~\cite{LMW24} say so explicitly.  Now, by \cite[Thm.~1.1]{MetaEFI25}, EFI pairs are
equivalent to non-uniform single-copy pseudorandom states ($\OnePRS$) with $O(\log\lambda)$ advice.
An EFI pair therefore sits at exactly the level where that connection is available and where the
counting attack does not apply.  This is why the two conjectures take the same form: both are
single-copy distinguishing questions against a secret-dependent classical oracle.  The ensembles
themselves, however, differ, and the relationship between the two conjectures is not a reduction.  An
equivalence between primitives does not transport an ensemble-level query lower bound from one to the
other, and we know of no reduction between the two ensembles in either direction.

\subsection{The question of fully coherent access}
\label{sec:intro-lift}

Access to $\cO'$ in \Cref{thm:single} is classical, followed by one coherent query.  Since $\cO'$ is
itself a Boolean function, the natural strengthening is the standard relativized quantum model, in
which the distinguisher queries $\cO'$ in superposition throughout.  That is \Cref{conj:lift}, stated
in \Cref{sec:lift}.  Both possible answers are informative.

If \Cref{conj:lift} holds at polynomial parameters, then the reductions of \Cref{sec:single} carry
\Cref{thm:single} over to fully coherent access, so the separation holds in the standard quantum
oracle model and admits quantum advice of any polynomial size (\Cref{thm:two-cases}(A)).  If instead
it fails, then coherent access to a classical function of the secret, with polynomial classical
advice, achieves a bias above the ceiling $O(\poly(\lambda)\cdot n^{-1/6})$ that Klartag--Regev
imposes on every classical transcript of polynomial length together with advice about the same
secret (\Cref{thm:two-cases}(B)).  That would separate coherent access from classical transcripts for
a concrete ensemble, though it would not resolve the Unitary Synthesis Problem of Aaronson and
Kuperberg~\cite{AK07}; \Cref{sec:lift} gives the three reasons.

At small width the question is already settled, by a union bound: enumerating truth tables and advice
strings settles \Cref{conj:lift} at every query count once $M+a\le n^{2-\eps}$ for a fixed $\eps>0$,
with polynomial classical advice and no reference copies (\Cref{prop:small-width}).  Two
further bounds reach past a single query, and both of them restrict a width rather than a query
count.  The first admits quantum advice of dimension $n^{2-\eps}$ together with one coherent query of
any width (\Cref{prop:advice-one-query}); the second admits two adaptive coherent queries whose first
has width $n^{2-\eps}$ (\Cref{prop:asym-two-query}).  What remains open is therefore $T\ge2$ at widths $n^{2-o(1)}$ and above.  Security against every $\QPT$ adversary has to cover that
regime, since such an adversary may ask about samplers of description length $\lambda^k$ for any $k$.

\subsection{Related work}
\label{sec:related}

\paragraph{Minimal primitives.} EFI pairs are due to~\cite{BCQ23} and $\OWPuzz$s to~\cite{KT24},
while pure-output and inefficiently verifiable one-way state generators ($\OWSG$s) sit above
$\OWPuzz$s and above EFI pairs respectively~\cite{BJ24}.  These primitives are organized
in~\cite{CountCrypt} into three classes.  \emph{QuantuMania} consists of those building
$\EVOWPuzz$s, \emph{CountCrypt} of those building $\OWPuzz$s but not $\EVOWPuzz$s, and
\emph{NanoCrypt} of those building EFI pairs but not $\OWPuzz$s.  In that language, ours is a separation of NanoCrypt from
CountCrypt in the model of \Cref{model:access}, that is, against adversaries whose queries to
$\Count$ are classical but for one at the end.

\paragraph{Meta-complexity.} $\OWPuzz$s are characterized by the average-case hardness of
probability estimation~\cite{CGGH25,HM24}.  That is precisely the hardness our counting oracle
removes, and our attack follows the conditional-sampling idea behind the forward direction of the
characterization.  There is a second characterization, by hardness of proper quantum distribution
learning~\cite{HHM25}, and since our oracle makes probability estimation easy it would make that task
easy relative to $\cO$ as well, provided the characterization relativizes.  The bare nonexistence of
$\OWPuzz$s under counting power is, we stress, not new: they cannot exist if
$\BQP=\PP$~\cite{CountCrypt}, and an explicit $\PP$-oracle attack is known~\cite{HardnessQOW}.

What is different here is that a bare $\PP$ oracle is a fixed language \emph{independent} of $H$,
whereas once a state source produces $H$-dependent states, a sampler may call it.  Our route to
eliminating $\OWPuzz$s, through probability estimation, therefore \emph{forces} an $H$-dependent
counting oracle, and such an oracle can in principle leak $H$ and break the very pair the
construction has to hide.  In the converse direction, $\OWPuzz$s can be built from
$\#\mathsf P$-hardness together with quantum advantage~\cite{KT25}, and our oracle removes exactly
the hardness those constructions rely on.

\paragraph{EFI, single-copy pseudorandomness, and $\mathsf P$ versus $\PSPACE$.} There is a
classical oracle relative to which $\mathsf P=\mathsf{NP}$ and single-copy pseudorandom states
exist, together with the explicit question of whether single-copy PRS existence implies $\mathsf
P\ne\PSPACE$~\cite{KQST23}.  Together with~\cite{MetaEFI25} that is, up to $O(\log\lambda)$ advice,
the EFI-versus-$\mathsf P=\PSPACE$ frontier.  The known $\PP$-oracle attack~\cite[Thm.~27]{Kre21}
uses $\poly(\lambda)$ copies, in fact $O(\lambda)$ in its proof, so it does not apply to single-copy
security, and copy amplification~\cite{BZ26} does not close the gap.  No implication from EFI to
$\mathsf P\ne\PSPACE$ is currently known, and our result does not give one; \Cref{sec:rank} returns
to this point.

\paragraph{Oracle separations and their lifting.} Pseudorandom states are separated from one-way
functions in~\cite{Kre21,KQST23}.  The CHRS model was introduced in~\cite{CCS24}, together with a
separation of single-copy from multi-copy pseudorandomness.  Both CHRS and a Haar random swap model
are used in~\cite{BCN24}; we rely only on the CHRS results there, since the swap-model section
carries an author's note retracting its EFI-versus-$\OWSG$ separation after a bug was found.  A
unitary oracle with EFI and $\QEFID$ pairs but no $\OWSG$s is given in~\cite{BMM24}, which is the
opposite orientation to ours, since it \emph{retains} $\OWPuzz$s.  $\OWPuzz$s are
separated from their efficiently verifiable variant in~\cite{CGG24}.  Finally, the current framework
for lifting CHRS separations to unitary ones is~\cite{GZ25}; our separation does not need it, since
\Cref{thm:single} already gives a single classical oracle.

\paragraph{Communication complexity.} Our security reduction is, to our knowledge, the first to route
the security of a quantum-state pair relative to a classical-query oracle through a communication
lower bound.  The hard problem is $\VSP$~\cite{KR11}, and Kerenidis et al.~\cite{KLLRX12} give a prior-free
information-complexity bound for a discretized robust variant.  On the upper-bound side, the
$O(\sqrt n)$ one-way protocol is stated by Raz~\cite{Raz99}, first published
in~\cite[App.~A]{Mon19}, and given a computationally efficient form in~\cite{GS19}.  That last paper
belongs to a line of work on compressed classical descriptions of quantum states, beginning
with~\cite{Aar04}, in which the description size is governed by the same one-way communication
bounds.  A matching $\Omega(\sqrt n)$ lower bound for one-way protocols on gapped instances follows
from the one-way complexity of Partial Matching~\cite{GKKRW07}, by a reduction recorded
in~\cite{GS19}; \Cref{sec:oneway-rate} compares it with \Cref{prop:oneway-rate}.

\addtocontents{toc}{\protect\setcounter{tocdepth}{1}}
\subsection{Organization}
\Cref{sec:overview} describes the whole argument informally.  \Cref{sec:prelims} then fixes notation and collects the background we use, including
the query model, the conventions of communication complexity, and the classical lower bound for
$\VSP$; the matrix concentration inequality is stated separately in \Cref{sec:concentration}, where
it is first used.  \Cref{sec:framework} proves the reduction to communication complexity, together
with the quantum protocol that bounds its reach.  \Cref{sec:oracle} defines the oracle and proves
that classical-output hardness, of the search, decision, and interactive kinds, does not survive it.
\Cref{sec:security} applies the reduction to our pair, proves EFI security, and states the two rates
of Theorem~E, whose proofs are deferred.  \Cref{sec:coherent} proves the coherent-query, quantum-advice, and hybrid
bounds.  \Cref{sec:single} replaces the state source by a Boolean function and gives the single
deterministic classical oracle of Theorem~A.  \Cref{sec:lift} states the conjecture that would place
the separation in the standard quantum oracle model, and \Cref{sec:discussion} collects the open
problems.  Three appendices follow: \Cref{app:attack} contains the conditional-sampling attack,
\Cref{app:rates} the proofs of the two rates, and \Cref{app:deferred} two results used only locally.

\Cref{fig:path} shows which parts of the paper the proof of \Cref{thm:single} uses.

\begin{figure}[t]
\centering
\begin{tikzpicture}[
  spine/.style={draw=black!65,fill=black!5,align=left,text width=58mm,inner sep=5pt,font=\small},
  side/.style={draw=black!30,dashed,align=left,text width=48mm,inner sep=5pt,font=\small,text=black!55},
  ar/.style={-{Latex[length=4.5pt]},black!65,line width=.5pt},
  arside/.style={-{Latex[length=3.5pt]},black!30,dashed,line width=.4pt}]

\node[spine] (s1) at (0,0)
  {\textbf{\Cref{sec:framework}}\\ Security from a bounded classical transcript
   (\Cref{thm:transcript-security})};
\node[spine] (s2) at (0,-2.15)
  {\textbf{\Cref{sec:rank}--\ref{sec:no-owpuzz}, \Cref{app:attack}}\\ The ranked counting oracle; no
   one-way puzzles (\Cref{thm:no-owpuzz})};
\node[spine] (s3) at (0,-4.3)
  {\textbf{\Cref{sec:reference}--\ref{sec:security-thm}}\\ Reference copies, then EFI security
   under classical access (\Cref{thm:EFI-security})};
\node[spine] (s4) at (0,-6.6)
  {\textbf{\Cref{sec:khintchine}--\ref{sec:onequery}}\\ The variance bound
   (\Cref{lem:khintchine}); one coherent query (\Cref{thm:one-query})};
\node[spine] (s5) at (0,-8.9)
  {\textbf{\Cref{sec:hybrid}}\\ Classical transcript, then one coherent query
   (\Cref{thm:hybrid})};
\node[spine] (s6) at (0,-11.05)
  {\textbf{\Cref{sec:single}}\\ One deterministic classical oracle (\Cref{thm:single}); the
   black-box barrier (\Cref{cor:barrier})};

\foreach \a/\b in {s1/s2,s2/s3,s3/s4,s4/s5,s5/s6} \draw[ar] (\a) -- (\b);

\node[side] (b1) at (7.4,-2.15)
  {\Cref{sec:no-qefid}--\ref{sec:interactive}: $\QEFID$ pairs and the interactive collapse
   (\Cref{thm:interactive})};
\node[side] (b2) at (7.4,-4.3)
  {\Cref{sec:exact-rate}--\ref{sec:oneway-rate}, \Cref{app:rates}: the two exact rates
   (Theorem~E)};
\node[side] (b3) at (7.4,-6.6)
  {\Cref{sec:advice-is-a-query}, \Cref{app:deferred}: what quantum advice and what reference
   copies are worth};
\node[side] (b4) at (7.4,-11.05)
  {\Cref{sec:lift}: what fully coherent access would need (\Cref{conj:lift})};

\draw[arside] (s2) -- (b1); \draw[arside] (s3) -- (b2);
\draw[arside] (s4) -- (b3); \draw[arside] (s6) -- (b4);
\end{tikzpicture}
\caption{Dependency structure. Solid boxes are used in the proof of \Cref{thm:single}; dashed boxes
are not.}
\label{fig:path}
\end{figure}
\addtocontents{toc}{\protect\setcounter{tocdepth}{2}}

\section{Preliminaries}
\label{sec:prelims}

This section fixes notation and collects the background the paper uses.  Three of its subsections
gather material from the areas the proof draws on.  \Cref{sec:prelim-query} fixes the form in which
we write a query algorithm, \Cref{sec:prelim-cc} recalls the conventions of classical communication
complexity, and \Cref{sec:prelim-vsp} states the communication lower bound we import.  There is a
fourth such import, the matrix concentration inequality, but we defer it to
\Cref{sec:concentration}, the first subsection of \Cref{sec:coherent}, where it is first used.  For
EFI pairs we follow~\cite{BCQ23}, and for one-way puzzles~\cite{KT24,CGG24,CountCrypt}.

\subsection{Notation}
\label{sec:prelim-notation}

We write $\QPT$ for quantum polynomial time.  A function is negligible, written $\negl(\lambda)$, if
it decays faster than every inverse polynomial.  The security parameter is $\lambda$, and the ambient
dimension is $n=n(\lambda)=2^{\lambda}$, a computable power of two, so that $\bbC^n$ is exactly
$\lambda$ qubits.  In fact the arguments use only that $n$ is even and $2^{\Theta(\lambda)}$, and
they use the two halves of that bound for different purposes.  Every security estimate uses the
lower bound $n\ge2^{\Omega(\lambda)}$.  The upper bound $n\le2^{O(\lambda)}$, on the other hand,
makes the dimension addressable: a $\QPT$ algorithm has to write indices into $[n]$, and so needs
$\log_2n=\poly(\lambda)$.  It also lets us report a bound of the form $n^{-\Omega(1)}$ as
$2^{-\Omega(\lambda)}$.  Following~\cite{BCQ23} we reserve $n$ for the dimension and use $\lambda$
for the security parameter.  Part of the $\OWPuzz$ literature writes $n$ for the security parameter
instead, which would collide with our usage.

We write $\ln$ for the natural logarithm and $\log_2$ for the base-two logarithm.  Inside a
concentration bound the base of a logarithm affects only the universal constant in front, and there
we write $\log$.  Finally, $\mathbf 1[\cdot]$ denotes the indicator of an event, and $\TV$
denotes total variation distance between distributions over a discrete set.

\subsection{Operators and effects}
\label{sec:prelim-operators}

For an operator $Y$ on a finite-dimensional Hilbert space we write $\norm Y_1$, $\opnorm Y$ and
$\norm Y_2=\sqrt{\tr(Y^\dagger Y)}$ for the trace, operator and Hilbert--Schmidt norms.  We use
$\opnorm Y\le\norm Y_2\le\norm Y_1$ throughout, and for a positive semidefinite $Y$ on $\bbC^d$ we
also use $\norm Y_2\le\sqrt{\opnorm Y\tr Y}\le\sqrt{\tr Y}$ when $\opnorm Y\le1$.  We write
$A\preceq B$ when $B-A$ is positive semidefinite, $\Herm_d$ for the $d\times d$ Hermitian matrices,
and $\Id$ for the identity.  The trace distance is $\td(\rho,\sigma)=\frac12\norm{\rho-\sigma}_1$.

An \emph{effect} is an operator $E$ with $0\preceq E\preceq\Id$, equivalently a single element of a
two-outcome POVM $\{E,\Id-E\}$.  The acceptance probability of a two-outcome measurement on a state
$\rho$ is then $\tr(E\rho)$, for the effect $E$ associated with the accepting outcome.  Effects are
how a computationally unbounded distinguisher enters the analysis, and they capture it completely:
any such distinguisher is a channel followed by a two-outcome measurement, and that composition is
again a two-outcome measurement.  When an operator $\Pi$ acts on a tensor product $\bbC^{\cX}\otimes\cK$, for
a finite index set $\cX$ and a finite-dimensional Hilbert space $\cK$, we write
$\Pi_{xy}\eqdef(\bra x\otimes\Id)\,\Pi\,(\ket y\otimes\Id)$ for its blocks, which are operators on
$\cK$.  If $\Pi$ is Hermitian then $\Pi_{xy}^\dagger=\Pi_{yx}$.

Three devices recur in \Cref{sec:coherent}, and we fix them here.  The first is the \emph{Hermitian dilation} of a
rectangular or non-Hermitian matrix $Y$,
\[
  \mathcal Y\ \eqdef\ \begin{pmatrix}0&Y\\ Y^\dagger&0\end{pmatrix},
\]
which is Hermitian and satisfies $\opnorm{\mathcal Y}=\opnorm Y$.  It converts a statement about
Hermitian matrices into one about arbitrary matrices, at the cost of doubling the dimension.  The
second device is purification.  For a state $\rho$ on $\bbC^d$, a \emph{purification} is a pure state
on a space of dimension at most $d^2$ whose reduced state is $\rho$, and handing a purification to a
receiver in place of $\rho$ only increases what that receiver can do.

The third is a net, which we record as a lemma because three separate arguments below reduce an
operator norm to a finite maximum in exactly this way.

\begin{lemma}[A net captures the norm of a Hermitian operator]
\label{lem:net}
Let $\cV$ be a complex Hilbert space of dimension $K$.  Its unit sphere carries a $\frac14$-net
$\mathcal N$ with $|\mathcal N|\le9^{2K}$, and for every such net and every Hermitian $Y$ on $\cV$,
\[
  \opnorm Y\ \le\ 2\max_{\gamma\in\mathcal N}\big|\bra\gamma Y\ket\gamma\big| .
\]
\end{lemma}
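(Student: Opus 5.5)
The plan has two parts: first the existence of the net, by volume, and then the norm inequality, by approximating a top eigenvector.

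For existence, identify $\cV$ with $\bbR^{2K}$ as a real inner-product space; the unit sphere is the same and the distances are the same. Take a maximal $\frac14$-separated subset $\mathcal N$ of the sphere. By maximality it is a $\frac14$-net. The open balls of radius $\frac18$ around its points are disjoint and lie inside the ball of radius $\frac98$. Comparing volumes in real dimension $2K$ gives
\[
|\mathcal N|\le(9/8)^{2K}/(1/8)^{2K}=9^{2K}.
\]

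For the inequality, use that $Y$ is Hermitian, so $\opnorm Y=\sup_{\norm\psi=1}|\bra\psi Y\ket\psi|$. Pick a unit eigenvector $\psi$ with $|\bra\psi Y\ket\psi|=\opnorm Y$, and choose $\gamma\in\mathcal N$ with $\norm{\psi-\gamma}\le\frac14$. Then write
\[
\bra\psi Y\ket\psi-\bra\gamma Y\ket\gamma=\bra{\psi-\gamma}Y\ket\psi+\bra\gamma Y\ket{\psi-\gamma}.
\]
Each of the two terms is at most $\frac14\opnorm Y$ in modulus, so the difference is at most $\frac12\opnorm Y$. Hence $|\bra\gamma Y\ket\gamma|\ge\frac12\opnorm Y$, which rearranges to the claimed factor $2$.

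None of these steps is a real obstacle. The only points that need care are two. First, the covering must be done in real dimension $2K$, which is what produces the exponent $2K$. Second, Hermiticity is essential: it makes the numerical radius equal to the operator norm, which lets us use a single vector $\gamma$ on both sides of the form rather than a pair of vectors.
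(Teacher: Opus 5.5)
Your proof is correct and follows the paper's route. The net bound is the same volume comparison in real dimension $2K$, and you spell out the maximal-separated-set argument behind the $(1+2/\delta)^{2K}$ estimate that the paper simply cites; the inequality is the same two-term decomposition at a vector attaining the numerical radius, which for Hermitian $Y$ equals $\opnorm Y$.
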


\begin{proof}
The unit sphere of $\cV$ is the unit sphere of a real space of dimension $2K$, which carries a
$\delta$-net of size at most $(1+2/\delta)^{2K}$ by the standard volume comparison; at
$\delta=\frac14$ that is $9^{2K}$.

For the inequality, recall that $\opnorm Y=\sup_{\norm x=1}|\bra xY\ket x|$ when $Y$ is Hermitian.
Take $x$ attaining that supremum and $\gamma\in\mathcal N$ with $\norm{x-\gamma}\le\frac14$.  Writing
the difference as $\bra{x-\gamma}Y\ket x+\bra\gamma Y\ket{x-\gamma}$ gives
\[
  \big|\bra xY\ket x-\bra\gamma Y\ket\gamma\big|\ \le\ 2\norm{x-\gamma}\,\opnorm Y\ \le\ \tfrac12\opnorm Y ,
\]
so that $\opnorm Y\le\max_{\gamma\in\mathcal N}|\bra\gamma Y\ket\gamma|+\frac12\opnorm Y$, which
rearranges to the claim.
\end{proof}

\subsection{Haar-random subspaces and the half-subspace pair}
\label{sec:prelim-haar}

We write $\Un$ and $\On$ for the unitary and real orthogonal groups, and $\mathrm{SU}(n)$ and
$\mathrm{SO}(n)$ for their subgroups of determinant one.  Each of these carries a unique
translation-invariant probability measure, its Haar measure.  On the sphere $S^{n-1}$ we write
$\sigma$ for the uniform probability measure, and $\nu_H$ for the uniform probability measure on the
unit sphere of a subspace $H$.

By a \emph{Haar-random subspace of dimension $n/2$} in $\bbR^n$, or in $\bbC^n$, we mean one drawn
from the rotation-invariant probability measure on the Grassmannian of $n/2$-dimensional subspaces.
Concretely, $H=Q\cdot\mathrm{span}(e_1,\dots,e_{n/2})$ for $Q$ Haar on $\On$, or on $\Un$ in the
complex case.  Rotation invariance means that $gH$ has the same law as $H$ for every fixed
$g\in\On$, and this is the only property of the measure that we use anywhere.  For such an $H$ we
write $P_H$ for the orthogonal projector onto it, and
\begin{equation}
  R_H \eqdef 2P_H-\Id
  \label{eq:reflection}
\end{equation}
for the associated reflection, which is a Hermitian involution with $R_H^2=\Id$ and $\tr R_H=0$.
Write $D$ for the diagonal matrix with $n/2$ entries $+1$ followed by $n/2$ entries $-1$.  Then
$R_H=QDQ^\top$ in the real case, and $R_H=UDU^\dagger$ with $U$ Haar on $\Un$ in the complex one, so
that the reflection is a Haar conjugate of one fixed traceless involution.  This is the form in which
every estimate in \Cref{sec:coherent} uses it.

The two half-subspace states are
\begin{equation}
  \rho_{0,H}=\frac{2P_H}{n}=\frac{\Id+R_H}{n},\qquad
  \rho_{1,H}=\frac{2P_{H^\perp}}{n}=\frac{\Id-R_H}{n},\qquad
  \rho_{0,H}-\rho_{1,H}=\frac{2R_H}{n} .
  \label{eq:pair}
\end{equation}
They have orthogonal supports, so that $\td(\rho_{0,H},\rho_{1,H})=1$ and the measurement
$\{P_H,P_{H^\perp}\}$ distinguishes them with certainty.  The third identity in \eqref{eq:pair} is
the reason the analysis stays linear in the hidden object.  Indeed, for any effect $E$ acting on a
single copy, the difference of the acceptance probabilities on $\rho_{0,H}$ and $\rho_{1,H}$ is
\begin{equation}
  \tr\big(E(\rho_{0,H}-\rho_{1,H})\big)\ =\ \frac2n\tr(E\,R_H),
  \label{eq:bias-linear}
\end{equation}
and since $\tr R_H=0$ only the traceless part of $E$ contributes.  We call a quantity of the form
$\tr(AR_H)$,
with $A$ fixed, a \emph{linear statistic} of the reflection.

Our construction uses the real ensemble, since that is the setting of the communication lower bound
in \Cref{sec:prelim-vsp}.  The arguments of \Cref{sec:coherent}, on the other hand, are stated for
the complex ensemble.  They transfer to the real one with a change of universal constant, and
\Cref{lem:su-so} makes that precise.

\subsection{The query model}
\label{sec:prelim-query}

Let $\cX$ be a finite set of query labels, write $M=|\cX|$ for the \emph{width}, and let
$f\colon\cX\to\{\pm1\}$ be a truth table, whose values we also write $f_x=f(x)$.  The \emph{phase
oracle} for $f$ is the unitary $D_f\otimes\Id$ on $\bbC^{\cX}\otimes\cK$ with
$D_f=\sum_xf_x\proj x$, where $\cK$ carries the rest of the algorithm's workspace.  A Boolean
function $g\colon\cX\to\{0,1\}$ is accessed instead in the \emph{bit-flip} form $\ket x\ket
c\mapsto\ket x\ket{c\oplus g(x)}$.  The two forms are interchangeable: preparing the answer qubit in
$\ket-$ turns a bit-flip query into a phase query, and conversely a phase query on the domain
$\cX\times\{0,1\}$ with truth table $(-1)^{c\,g(x)}$ implements a bit-flip query.  The two models are
equivalent up to a factor $2$ in the width, and we work with phase oracles throughout.
That conversion costs us nothing, since our bounds hold for \emph{every} truth table $f$ and depend
on the width only through $\log(2M)$.

Every algorithm below holds one copy of the challenge state, which is either $\rho_{0,H}$ or
$\rho_{1,H}$, and queries an oracle whose truth table is an arbitrary function of the hidden
subspace.  We describe such an algorithm in two ways, one general and one specific to a single query.

The general description comes first.  An algorithm making any number of queries, of arbitrary
computational power, is a channel followed by a two-outcome measurement.  Its acceptance probability
on a challenge $\rho$ is $\tr(E_f\rho)$ for an effect $E_f$ on $\bbC^n$ that depends on the
truth table $f$ but not on the challenge, and we measure its success by the \emph{bias}
\begin{equation}
  Z_H(f)\ \eqdef\ \tfrac12\tr\big(E_f(\rho_{0,H}-\rho_{1,H})\big)\ =\ \frac1n\tr\big(E_fR_H\big),
  \label{eq:bias-def}
\end{equation}
which by \eqref{eq:bias-linear} is half the distinguishing advantage.  Note that the dependence on
the number of queries is hidden inside $E_f$, and that for each fixed $f$ the bias is a linear
statistic of $R_H$.

The second description is specific to one query, and it is the form that the proof of
\Cref{thm:one-query} needs.  A one-query algorithm applies a unitary to the challenge together with
fresh ancillas, queries the phase oracle once, applies a second unitary, and measures.  Absorbing the
final unitary into the measurement, the pre-query processing is an isometry
\begin{equation}
  W\colon\bbC^n\to\bbC^{\cX}\otimes\cK,\qquad W\ket\psi=\sum_{x\in\cX}\ket x\,A_x\ket\psi,\qquad
  \sum_x A_x^\dagger A_x=\Id ,
  \label{eq:normal-form}
\end{equation}
where $A_x$ is the \emph{branch} on which the algorithm queries the label $x$.  The completeness
relation on the right is exactly the statement that $W$ preserves norms.  The post-query measurement
is then an effect $0\preceq\Pi\preceq\Id$ on $\bbC^{\cX}\otimes\cK$ with blocks $\Pi_{xy}$, and in
this notation the effect of \eqref{eq:bias-def} is
$E_f=W^\dagger(D_f\otimes\Id)\Pi(D_f\otimes\Id)W$.  Finally, we attach to the isometry the
\emph{query mass}
\begin{equation}
  \tau_x\ \eqdef\ \tfrac1n\tr(A_x^\dagger A_x)\ =\ \tr\big(A_x^\dagger A_x\cdot\tfrac{\Id}n\big),
  \label{eq:query-mass}
\end{equation}
which is the probability that the query lands on the label $x$ when the challenge is maximally
mixed.  By the completeness relation $\tau$ is a probability vector on $\cX$, and it depends on the
algorithm alone, not on $H$.

\subsection{EFI pairs and one-way puzzles}
\label{sec:prelim-primitives}

\begin{definition}[EFI pair,~\cite{BCQ23}]
\label{def:efi}
A family $\{(\rho_{0,\lambda},\rho_{1,\lambda})\}_\lambda$ of pairs of mixed states is an \emph{EFI pair} if:
\begin{enumerate}[label=(\roman*),leftmargin=2.2em]
  \item \textbf{Efficient generation:} a $\QPT$ algorithm $G$ satisfies $G(1^\lambda,b)=\rho_{b,\lambda}$ for both $b\in\{0,1\}$;
  \item \textbf{Statistical farness:} $\td(\rho_{0,\lambda},\rho_{1,\lambda})\ge\delta(\lambda)$ for some inverse polynomial $\delta$;
  \item \textbf{Computational indistinguishability:} every non-uniform $\QPT$ distinguisher $\mathsf D$ has advantage $\negl(\lambda)$ in distinguishing $\rho_{0,\lambda}$ from $\rho_{1,\lambda}$.
\end{enumerate}
Here \emph{non-uniform} is as in~\cite{BCQ23}: $\mathsf D$ receives an advice state of polynomial size, which
may be quantum.
\end{definition}

\begin{definition}[Classical-advice EFI pair]
\label{def:efi-classical}
A \emph{classical-advice EFI pair} satisfies (i) and (ii) of \Cref{def:efi} and, in place of (iii),
computational indistinguishability against every $\QPT$ distinguisher receiving a classical advice
string of polynomial length.
\end{definition}

What we construct is a classical-advice EFI pair, and in fact it is secure against rather more than
that definition asks.  It is secure against quantum advice of $O(\lambda)$ qubits together with
everything else the model grants (\Cref{cor:hybrid-qadvice}), and against quantum advice of any
polynomial size on its own (\Cref{cor:quantum-advice}).  Only one combination lies beyond these,
namely polynomial-size quantum advice together with classical queries, and that combination is what
separates \Cref{def:efi-classical} from \Cref{def:efi} here (\Cref{sec:security-thm}).  Note that
the convention is local to EFI distinguishers.  \Cref{def:owpuzz} below keeps the standard notion,
under which the attacker to be ruled out may carry quantum advice, and since our attacker is
classical, nothing is lost there.

Some later works, for instance~\cite{MetaEFI25}, require farness $1-\negl(\lambda)$ instead.  The
two requirements are equivalent up to standard polarization, and in any case our constructions
achieve $\td=1$ in the source model and $1-2^{-\Omega(\lambda)}$ relative to the single classical
oracle of \Cref{sec:single}, so the distinction is immaterial here.

One feature of the oracle setting needs care.  There both $G$ and $\mathsf D$ have oracle access, and
the third condition becomes meaningful only once we fix the access granted to $\mathsf D$: restricting $\mathsf D$ to
classical queries and forbidding it queries altogether are different security notions.  We therefore
always name the class, and ours is \Cref{model:access}.  Moreover, since $G$ is \emph{public}, a
security proof must in addition let $\mathsf D$ obtain polynomially many copies of both states, which we call
\emph{reference copies}.  Beyond this we use an EFI pair only as a pair of states that is hard to
distinguish.

\begin{definition}[One-way puzzle,~\cite{KT24,CGGH25}]
\label{def:owpuzz}
A \emph{one-way puzzle} is a pair $(\Samp,\Ver)$ where $\Samp$ is a uniform $\QPT$ algorithm
outputting a classical key--puzzle pair $(k,s)$, and $\Ver(k,s)\in\{0,1\}$ \emph{may be
computationally unbounded}, such that:
\begin{enumerate}[label=(\roman*),leftmargin=2.2em]
  \item \textbf{Correctness:} $\Pr_{(k,s)\gets\Samp(1^\lambda)}[\Ver(k,s)=1]\ge1-\negl(\lambda)$;
  \item \textbf{Security:} every non-uniform $\QPT$ adversary $\cA$ satisfies
    $\Pr_{(k,s)\gets\Samp(1^\lambda)}[\Ver(\cA(1^\lambda,s),s)=1]\le\negl(\lambda)$.
\end{enumerate}
If $\Ver$ is efficient the primitive is an \emph{efficiently verifiable} one-way puzzle ($\EVOWPuzz$).
\end{definition}

Inefficient verification is the default in~\cite{KT24}, because an efficient verifier would be
broken by a $\QCMA$ oracle.  Our target throughout is accordingly the unrestricted, inefficiently
verifiable notion.

A $\QEFID$ pair, introduced in~\cite[Def.~5]{CGG24} and stated in~\cite[Def.~2.1]{BMM24}, is the
classical-output analogue of an EFI pair: a pair of distributions over classical strings, samplable
by a $\QPT$ algorithm, statistically far, and computationally indistinguishable.  The two
classical-output primitives express hardness of different kinds.  A one-way puzzle expresses
classical-output hardness of \emph{search}, whereas a $\QEFID$ pair expresses classical-output
hardness of \emph{decision}.  They are nevertheless equivalent up to non-uniformity,
by~\cite[Lem.~8, Lem.~9, Cor.~14]{CGG24}.

\subsection{Classical communication complexity}
\label{sec:prelim-cc}

We recall the model in the form used in \Cref{sec:framework}, since it is not the form most familiar
to a reader coming from quantum cryptography.  In a two-party problem, Alice holds an input $u$ and
Bob holds an input $H$, and the two exchange classical messages according to functions fixed in
advance of the input, each message depending on the sender's own input and on the messages so far.
At the end one of them announces an output.  The \emph{cost} of the protocol is the maximum total
length of the messages exchanged, and this is the only resource charged: local computation is
unbounded on both sides, and neither party is required to be efficient, or even computable.  That
last point is what makes the model useful to us, since the party who will hold our hidden subspace
has to be able to answer arbitrary queries about it.

A protocol has \emph{shared randomness} if both parties see a common random string, drawn
independently of the inputs, before the protocol begins.  Equivalently, such a protocol is a
distribution over deterministic ones.  A \emph{one-way} protocol consists of a single message from
Alice to Bob, after which Bob announces the output; here we follow the definition of~\cite{KNR99}.
Finally, on a distribution over inputs together with a bit $b$ to be decided, the \emph{advantage} of
a protocol is $|\Pr[\text{correct}]-\frac12|$, the average being taken over the inputs, the bit, and
the randomness.

One structural fact is used twice below.  Fix a deterministic protocol and fix a full transcript
$\pi$.  Then the set of input pairs producing $\pi$ is a \emph{rectangle}, that is, a set of the form
$A\times B$ with $A$ a set of Alice's inputs and $B$ a set of Bob's.  The reason is that at each
round the sender's message depends only on its own input and on the transcript so far, so membership
in the set of inputs consistent with $\pi$ is decided separately on the two sides.  A protocol of
cost $L$ thus partitions the input space into at most $2^{L+1}$ rectangles.  For a one-way
protocol the partition is finer on Alice's side only: her message realizes a partition of her input
space into at most $2^{L+1}$ cells, and Bob then answers optimally given the cell and his own input.

We record last the convention relating the two notions of advantage that appear in the paper.  The
\emph{distinguishing advantage} of an algorithm $\mathsf D$ between two states is
$\Adv(\mathsf D)\eqdef|\Pr[\mathsf D(\rho_0)=1]-\Pr[\mathsf D(\rho_1)=1]|$, written $\Adv(\mathsf D;X)$ when the states depend on a
hidden object $X$.  With a uniform challenge bit, a distinguisher of advantage $\eps$ gives a
protocol of advantage exactly $\eps/2$, and we carry that factor $2$ explicitly rather than absorbing
it.

\subsection{The Vector-in-Subspace problem}
\label{sec:prelim-vsp}

\begin{definition}[Vector in Subspace,~\cite{KR11}]
\label{def:vsp}
In $\VSP_n$, Bob receives a subspace $H\subseteq\bbR^n$ of dimension $n/2$ and Alice a unit vector $u\in S^{n-1}$ promised to lie in $H$ or in $H^\perp$. The associated distribution draws $H$ from the rotation-invariant measure on the Grassmannian of
$n/2$-dimensional subspaces of $\bbR^n$, a uniform bit $b$, and $u$ uniformly from the unit sphere of $H$ (if $b=0$) or of $H^\perp$ (if $b=1$). The goal is to decide $b$ by classical randomized communication.
\end{definition}

The quantum upper bound is immediate.  Alice encodes $u$ in the amplitudes of $\lceil\log_2n\rceil$
qubits and sends them, and Bob applies $\{P_H,P_{H^\perp}\}$ and decides with certainty~\cite{KR11}.
It is the classical lower bound that we use.

\begin{theorem}[{Klartag--Regev~\cite[Thm.~4.2]{KR11}}]
\label{thm:vsp-lb}
There is a universal $c>0$ such that every classical randomized, two-way, shared-randomness protocol
whose messages are Borel functions of the inputs, and which decides $\VSP_n$ with error at most
$\frac13$ on every instance of the promise of \Cref{def:vsp}, communicates at least $c\,n^{1/3}$ bits.
\end{theorem}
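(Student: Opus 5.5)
Since \Cref{thm:vsp-lb} is imported from~\cite{KR11}, the plan is to reconstruct its rectangle-discrepancy proof rather than derive it from anything earlier in the paper.

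\textbf{Step 1: reduce to a distributional statement.} By Yao's principle, applied to the distribution of \Cref{def:vsp}, it suffices to show the following. Every \emph{deterministic} protocol of cost $L\le c\,n^{1/3}$ has advantage at most, say, $\frac1{10}$ on that distribution. Shared randomness is a mixture of deterministic protocols, and worst-case error $\frac13$ implies average error $\frac13$, so this suffices. Fix such a protocol. By the rectangle property of \Cref{sec:prelim-cc}, it partitions (sphere)$\times$(Grassmannian) into at most $2^{L+1}$ rectangles $A_i\times B_i$, with a constant output on each. The advantage is therefore at most
\[
\tfrac12\sum_i\bigl|\mu_0(A_i\times B_i)-\mu_1(A_i\times B_i)\bigr|,
\]
where $\mu_b$ is the joint law of $(u,H)$ given $b$. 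It remains to bound the discrepancy of each rectangle.

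\textbf{Step 2: the core discrepancy lemma.} I would aim to prove the following. If $\sigma(A)\ge e^{-t}$ with $t\le c'n^{1/3}$, then for all but a $e^{-t}$-fraction of $H$ (in Haar measure),
\[
\bigl|\nu_H(A)-\sigma(A)\bigr|\ \le\ C\,\frac{t}{n^{1/3}}\cdot\sigma(A)\quad\text{(or a similar small relative error)},
\]
and the same holds for $\nu_{H^\perp}(A)$. Given this, $\mu_0(A\times B)-\mu_1(A\times B)=\E_H\bigl[\mathbf 1_B(H)\,(\nu_H(A)-\nu_{H^\perp}(A))\bigr]$. The typical $H$ contribute at most a small multiple of $\sigma(A)\gamma(B)$. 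The exceptional $H$ contribute at most $e^{-t}$, provided $\gamma(B)$ is not tiny; rectangles with tiny $\sigma(A)$ or $\gamma(B)$ have total mass at most $2^{L+1}e^{-t}$. Summing over the at most $2^{L+1}$ rectangles, with $t\approx 2L$, gives total advantage $O(L/n^{1/3})+o(1)$. This is below $\frac1{10}$ once $L\le c\,n^{1/3}$.

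\textbf{Step 3: the main obstacle.} The hard step is the sampling statement of Step~2 for an arbitrary large set $A$. My plan is to write $\nu_H(A)-\sigma(A)$ as $\langle f,\,T_H f\rangle$-type quantities, with $f=\mathbf 1_A/\sigma(A)$. Expanding $\mathbf 1_A$ in spherical harmonics, the degree-$k$ component contributes to the restriction to a random $H$ with variance decaying like $n^{-k/2}$ for even $k$ (odd degrees cancel by symmetry). This is the same degree-two eigenvalue $\frac1{n-1}$ that reappears in \Cref{prop:oneway-rate}. For a set of measure $e^{-t}$, a level-$k$ inequality bounds the mass of $f$ at degree $k$ by roughly $(Ct/k)^{k}$, obtained through hypercontractivity of the heat semigroup on $S^{n-1}$. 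Balancing $(t/k)^k n^{-k/2}$ against the tail is where the exponent $1/3$ emerges. Two issues remain. The $\nu_H$ versus $\nu_{H^\perp}$ comparison at second moment only yields the $n^{1/2}$ (one-way) scale. Reaching two-way needs a tail bound in $H$, not just a variance bound, which I would try to get by applying concentration of measure on $\On$ to the Lipschitz function $H\mapsto\nu_H(A_\epsilon)$ for a smoothed set $A_\epsilon$. The loss from smoothing, traded against the Lipschitz constant, is the step I expect to be delicate and to fix the final exponent $n^{1/3}$.
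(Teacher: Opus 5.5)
Your Steps~1--2 are the right skeleton. The paper does not reprove \Cref{thm:vsp-lb}; it imports it, noting that the argument of \cite{KR11} is a corruption bound. That argument is a rectangle bound fed by a sampling theorem for random half-dimensional subspaces, with exponentially small failure probability. Given such a theorem, your sum over the at most $2^{L+1}$ rectangles goes through. Constant relative error at $t\approx cn^{1/3}$ already suffices there, so your sharper $Ct/n^{1/3}$ form is not needed.

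The gap is that this sampling theorem is the entire content of the result, and Step~3 does not prove it. The harmonic computation you sketch (level-$k$ inequality plus the damping of degree $2j$ by $\omega_{2j}$) gives only a variance bound, $\E_H[(\nu_H(A)-\sigma(A))^2]\lesssim\sigma(A)^2t^2/n$. Chebyshev turns that into a failure probability of order $t^2/n$, not $e^{-t}$. This is exactly the loss that makes the per-rectangle second-moment bound of \Cref{cor:twoway-disc} pay a factor $2^{L/2}$.

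The repair you propose, Lipschitz concentration on $\mathrm{SO}(n)$ for $H\mapsto\nu_H(A_\varepsilon)$, cannot reach the needed regime even with no smoothing loss. Take $A=\{x_1\ge s\}$, a cap with $\sigma(A)=e^{-t}$. Then $H\mapsto\nu_H(A)$ is already smooth. It is about $e^{-t}$ for typical $H$ and about $e^{-t/2}$ (up to polynomial factors) when $e_1\in H$. A rotation of Hilbert--Schmidt size $O(1)$ moves a typical $H$ into such a position, so the Lipschitz constant is at least $e^{-t/2}/\poly(n)$. Then \eqref{eq:scalar-conc}, at failure probability $e^{-t}$, certifies only deviations of order $e^{-t/2}\sqrt{t/n}/\poly(n)$. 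That exceeds $0.1\,e^{-t}$ once $t\gtrsim\log n$.

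For a general measurable set the smoothing is itself uncontrolled. For a union of many well-separated caps of radius $\rho$, the ratio $\sigma(A_\varepsilon)/\sigma(A)$ is already large once $\varepsilon\gtrsim\rho/n$. Additive concentration cannot see a multiplicative deviation at exponentially small measure, so this route stops at $L=O(\log n)$.

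What is missing is a large-deviation estimate through moments of order $q\approx t$ in the $H$-variable. One way is hypercontractive $L^q(H)$ bounds on each degree-$k$ component of $\mathbf 1_A$, combined with the level-$k$ inequality and Markov's inequality at order $q$. The cap shows where the exponent comes from. There $\nu_H(A)/\sigma(A)\approx\exp\big(2t((P_H)_{11}-\tfrac12)\big)$, so its $L^q(H)$ norm is about $e^{qt^2/n}$. At $q=t$ this stays bounded exactly when $t^3\lesssim n$. Establishing an estimate of this strength for every measurable $A$ is the actual work of \cite{KR11}, and your proposal leaves it open.
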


Three features of~\cite{KR11} matter for the way we use it.  First, the bound is for \emph{two-way}
interactive protocols with shared randomness, and not only for one-way ones.

Second, the exact promise, that $u\in H$ or $u\in H^\perp$, is the case that Klartag and Regev call
$\VSP_0$, and it is exactly the case we need.  The Borel hypothesis in \Cref{thm:vsp-lb} is due to Klartag
and Regev, and it is needed only because the exact promise is supported on a measure-zero subset of
$S^{n-1}\times\mathrm{Gr}$, where $\mathrm{Gr}$ is the Grassmannian of $n/2$-dimensional subspaces,
so that a protocol has to realize a \emph{measurable} rectangle partition.  Our reduction satisfies
that hypothesis: for each fixing of the public randomness the protocol is a fixed algorithm acting on
the classical descriptions of $u$ and $H$, and so its message functions are Borel.  For the
robust promise $\VSP_\theta$ with $0<\theta<1/\sqrt2$ they remove the hypothesis entirely.

Third, their argument is a corruption bound, and the only use it makes of correctness is that the
errors under the two promise cases sum to at most $\frac23$.  What they prove is in fact
the distributional statement over the balanced distribution of \Cref{def:vsp}, which is the form we
use.

This lower bound does not rest on a single method.  The prior-free information complexity of the
discretized robust problem is also $\Omega(n^{1/3})$, by combining the main theorem
of~\cite{KLLRX12} with the same rectangle estimate of~\cite{KR11}.  That bound is prior-free, that
is, a maximum over input distributions rather than a statement about the distribution of
\Cref{def:vsp}, so we cite it as corroboration rather than as a component of the proof.

We use \Cref{thm:vsp-lb} only through the following small-advantage form.  Write $\beta(n,L)$ for the
supremum, over classical randomized two-way protocols of cost $L$ with shared randomness and Borel
message functions, of the advantage on the $\VSP_n$ distribution of \Cref{def:vsp}.  The complement,
symmetrization and majority-vote constructions in the proof below all preserve Borel measurability.

\begin{lemma}[Small-advantage bound]
\label{lem:vsp-small}
There is a universal $C>0$ such that $\beta(n,L)\le C\sqrt{(L+1)/n^{1/3}}$ for every even $n$ and every $L\ge0$. Moreover $\beta(n,0)=0$.
\end{lemma}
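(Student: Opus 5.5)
The plan is to argue by contradiction through amplification.  Suppose a cost-$L$ protocol $P$ achieves advantage $\beta$ on the $\VSP_n$ distribution.  The goal is a protocol of cost $O(L/\beta^2)$ that decides $\VSP_n$ with error at most $\frac13$ on every promise instance.  \Cref{thm:vsp-lb} then gives $O(L+1)/\beta^2\ge c\,n^{1/3}$, which rearranges to $\beta\le C\sqrt{(L+1)/n^{1/3}}$.  The claim $\beta(n,0)=0$ is separate and easy.  With no communication the announcing party's output depends only on its own input and the shared randomness.  Alice's marginal of $u$ is uniform on $S^{n-1}$ in both cases $b=0,1$, and Bob's marginal of $H$ is the Grassmannian measure in both cases.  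So the output is independent of $b$, and the advantage is exactly $0$.

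The first step is to convert average advantage into advantage on every instance, by symmetrization.  Replacing $P$ by its complement if necessary, assume $\Pr[\text{correct}]\ge\frac12+\beta$.  Using shared randomness, the parties draw a Haar $g\in\On$, and Alice and Bob run $P$ on $(gu,gH)$.  This preserves the promise, and the result is a Borel protocol.  By rotation invariance its success probability on any instance $(u,H)$ depends only on which promise case holds, since $\On$ acts transitively on pairs $(u,H)$ with $u\in H$, and likewise on pairs with $u\in H^\perp$.

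Write $p_0,p_1$ for the success probabilities in the two cases, so that $\frac12(p_0+p_1)\ge\frac12+\beta$.  The second step equalizes them within the two cases.  The parties run the symmetrized protocol, and with shared coins they flip the output with a suitable fixed probability biased toward the weaker answer.  This makes both case-wise success probabilities at least $\frac12+\beta/2$ (more carefully, at least $\frac12+\beta$ minus nothing, by mixing the protocol with a constant answer).  The result is a protocol with worst-case advantage $\Omega(\beta)$ on every promise instance.

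The third step amplifies.  Run $k=\Theta(1/\beta^2)$ independent copies with fresh shared randomness, and have the announcer output the majority.  By Hoeffding the error is at most $\frac13$ on every instance.  To make the majority computable, each run's output is communicated at one extra bit per run, so the cost is $k(L+1)=O((L+1)/\beta^2)$.

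Applying \Cref{thm:vsp-lb} finishes the argument.  The main obstacle is the equalization step.  Bias toward a constant answer could leave one case at advantage near $0$ if $p_0$ and $p_1$ are very lopsided, but averaging gives $p_0+p_1-1\ge2\beta$.  Mixing with the constant answer favoring the weak side then yields both $p_i'\ge\frac12+\beta/2$ for an appropriate mixing weight.  A second point to check is that all steps preserve Borel measurability, and the paper already asserts this.
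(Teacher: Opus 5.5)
Your proposal is correct and follows the paper's proof. Both arguments complement, use a shared Haar rotation so that the success probability depends only on the promise case, equalize the two cases, amplify by majority to worst-case error $\frac13$, and invoke \Cref{thm:vsp-lb}; both get $\beta(n,0)=0$ because the announcer's output is independent of $b$.

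The only difference is how the two cases are equalized. The paper runs $P$ on $(u,H^\perp)$ on a public coin and complements the output, which equalizes without loss at $\frac12+\beta$. Mixing with a constant answer, as you do, can lose a factor $2$: with $p_0=1$ and $p_1=2\beta$ the best equalized value is $\frac12+\beta/(2(1-\beta))$, so your parenthetical ``at least $\frac12+\beta$ minus nothing'' is wrong. Your final bound $\frac12+\beta/2$ is correct and suffices, and your $k(L+1)$ cost for the majority step is slightly more careful than the paper's $tL$.
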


\begin{proof}
We amplify a protocol of small advantage into one of error $\frac13$ and apply \Cref{thm:vsp-lb}.
Let $P$ be an $L$-bit protocol of advantage $\eps$; replacing $P$ by its complement, we may assume
$\Pr[\text{correct}]=\frac12+\eps$.

We will show that $P$ can be replaced, at the same cost, by a protocol whose success probability is
$\frac12+\eps$ on \emph{every} instance of the promise.  Majority voting needs exactly that
uniformity over inputs, and $P$ guarantees only an average over the distribution.

The obvious device, rotating an instance, equalizes success \emph{within} each promise case but not
across the two, so we symmetrize the two cases against each other first.  If $(u,H,b)\sim\VSP_n$ then
so is $(u,H^\perp,1\oplus b)$.  Let $P'$ publicly flip a fair coin: on heads run $P(u,H)$; on tails Bob substitutes $H^\perp$ and
they output the complement of $P(u,H^\perp)$.  Then for each fixed $b$,
\[
  \Pr[P'\text{ correct}\mid b]=\tfrac12\Pr[P\text{ correct}\mid b]+\tfrac12\Pr[P\text{ correct}\mid1\oplus b]=\tfrac12+\eps,
\]
so the two conditional success probabilities are now equal.

Now we equalize within each case.  The $\VSP$ distribution is invariant under a common rotation, and
$\On$ acts transitively on pairs $(u,H)$ with $u\in H$, and on those with $u\in H^\perp$.  Sampling a
Haar $Q$ from public randomness and running $P'$ on $(Qu,QH)$ gives a protocol $P''$ of the
same cost whose success probability is $\frac12+\eps$ on every instance in the promise, which is the
uniformity we wanted.

Majority voting now applies.  Running $P''$ independently $t=\Theta(1/\eps^2)$ times and taking the
majority gives worst-case error at most $\frac13$, at cost $tL$.  \Cref{thm:vsp-lb} forces that cost
to be large, and the bound on $\eps$ follows:
\[
  tL\ \ge\ c\,n^{1/3},\qquad\text{so}\qquad \eps\ \le\ C\sqrt{\frac L{n^{1/3}}}\quad (L\ge1) .
\]

That leaves $L=0$.  With no communication the output is a function of the announcing
party's own input and the public randomness.  Bob's input $H$ is independent of $b$, and since $H$ is
uniform, Alice's marginal vector $u$ is sphere-uniform regardless of $b$; in either case the output is
independent of $b$, so $\beta(n,0)=0$.  Writing $L+1$ in place of $L$ absorbs this case at
the cost of a constant, and that is the stated bound.
\end{proof}

\section{Security from communication complexity}
\label{sec:framework}

Everything the oracle of \Cref{sec:oracle} will tell an adversary is a string of classical bits.
In this section we show that when that is so, security reduces to a communication lower bound,
however powerful the oracle may be.  We state the reduction for an arbitrary hidden object and an
arbitrary oracle, since the proof uses nothing about either beyond the fact that the answers are
classical.

Suppose, then, that the security of a primitive amounts to the claim that two states
$\rho_0(X),\rho_1(X)$ built from a hidden object $X$ are hard to distinguish, and suppose that every
route by which $X$ influences the adversary carries \emph{classical data}.  We may then place the
adversary in a two-party game, with Alice running it and holding the challenge while Bob holds $X$
and is computationally unbounded.  Each query becomes a message to Bob and each answer a message in
return, so the adversary's power is bounded by the length of that conversation and by nothing else,
and in particular not by any property of the oracle it queries.

One requirement shapes the definition that follows.  A communication protocol gives Alice a
\emph{classical} input, so the problem we associate to the pair has to hand her a classical
description of the challenge, from which she prepares the state herself.  Granting her that
description increases her power rather than restricting it, which is what makes the reduction
legitimate.  The same requirement, however, forces care over quantum side information depending on
$X$.  The theorem below does \emph{not} cover such information, and where it arises, namely for the
reference copies of \Cref{sec:reference}, we reduce it to classical data first.

\begin{definition}[The associated two-party problem]
\label{def:induced}
Let $X$ be a random hidden object taking values in a standard Borel space, and let
$(\rho_0(X),\rho_1(X))$ be a pair of states.  The \emph{associated two-party problem} $\Pi_X$ is the
following.  Bob receives $X$.  Alice receives, for a uniform bit $b$, a classical description $y$
drawn from a preparation kernel $\kappa_b(X)$; here $\kappa_b$ is a Borel map from $X$ to
distributions on a standard Borel space of descriptions, each description $y$ determines a pure state
$\proj{\psi_y}$ through a Borel map, and the mixture of those states under $\kappa_b(X)$ is
$\rho_b(X)$.  The kernels are part of the data of $\Pi_X$, and for the half-subspace pair they are
the uniform distributions on the unit vectors of $H$ and of $H^\perp$.  Alice also receives classical
side information $\omega(X)$, a Borel function of $X$.  The two then communicate classically, and
Alice outputs a guess for $b$.

Write $\beta_\Pi(L)$ for the supremum of the advantage over $L$-bit classical randomized protocols
for $\Pi_X$ on this distribution, with Borel message functions.
\end{definition}

\begin{theorem}[Security from a bounded classical transcript]
\label{thm:transcript-security}
Let $\cA_X\colon\{0,1\}^*\to\{0,1\}$ be \emph{any} deterministic Boolean truth table whose bits are measurable functions of $X$, not assumed efficiently computable, or computable at all. Let $\mathsf D$ be a quantum algorithm of arbitrary computational power whose dependence on $X$ enters \emph{only} through:
\begin{enumerate}[label=(\roman*),leftmargin=2.2em]
\item one challenge drawn from $\rho_0(X)$ or $\rho_1(X)$ with a uniform bit $b$, and classical side information $\omega(X)$, both as in \Cref{def:induced};
\item classical query strings to $\cA_X$, in a self-delimiting encoding, with total query-plus-answer length at most $L$;
\item classical advice $\alpha$ of length at most $a$, which may depend on $X$ through any Borel map;
\end{enumerate}
and whose instruments and message functions are Borel in the classical data it holds, and which may otherwise use arbitrary computation and arbitrary resources independent of $X$. Then
\[
  \E_X\big[\Adv(\mathsf D;X)\big]\ \le\ 2\,\beta_\Pi(L+a+1).
\]
\end{theorem}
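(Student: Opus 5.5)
The plan is to simulate $\mathsf D$ inside a classical two-party protocol for $\Pi_X$, with Alice running $\mathsf D$ and Bob answering on behalf of $X$. The protocol has Alice holding the description $y$ and $\omega(X)$, and Bob holding $X$. Bob first sends the advice $\alpha=\alpha(X)$, in a self-delimiting form of length at most $a+1$; for instance, a fixed-length padding to $a$ bits plus a length marker, or simply sending $a$ bits and letting the declared length be part of the encoding. Alice then prepares $\proj{\psi_y}$ herself, which is possible since she has unbounded power and the state is a Borel function of $y$. She runs $\mathsf D$ on that state, on $\omega(X)$, on $\alpha$, and on any $X$-independent resources. All internal randomness and quantum measurement outcomes are sampled by Alice privately, or from shared randomness. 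Whenever $\mathsf D$ issues a classical query $q$ to $\cA_X$, Alice sends $q$ and Bob replies with $\cA_X(q)$. Because the queries are self-delimiting, each party knows where each message ends and whose turn it is, so the total traffic is at most $L$ bits plus the advice. Finally Alice outputs $\mathsf D$'s output bit.

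The key step is to check that the output distribution of this protocol, conditioned on $(X,b)$, equals that of $\mathsf D$ in the real game. The challenge in the real game is $\rho_b(X)$. By \Cref{def:induced} it is exactly the $\kappa_b(X)$-mixture of $\proj{\psi_y}$, and $\mathsf D$'s behaviour is linear in its input state. So running $\mathsf D$ on $\psi_y$ for a random $y\sim\kappa_b(X)$ gives the same output law; knowing $y$ only adds power, and we never use it beyond preparing the state. Every other route of $X$ into $\mathsf D$ is reproduced exactly by Bob's messages. Hence $\Pr[\text{protocol correct}]=\frac12+\frac12\E_X[\Pr[\mathsf D(\rho_0)=1]-\Pr[\mathsf D(\rho_1)=1]]$, after fixing an orientation.

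There is one subtlety. $\E_X\Adv$ averages absolute values, so the sign of the difference can depend on $X$. To handle it, I would let Bob append one extra bit $s(X)=\mathrm{sign}$ of $\mathsf D$'s bias on $X$. This bit is a Borel function of $X$, and Bob can compute it since he is unbounded. Alice flips her output according to it, which converts the protocol advantage into exactly $\frac12\E_X\Adv(\mathsf D;X)$. Since $\alpha$ and $s$ are both messages from Bob that depend only on $X$, they can be sent together at the start, and this sign bit is where the $+1$ is spent. Whether the $+1$ covers the advice encoding or the sign bit depends on how ``length at most $a$'' is encoded; I expect a fixed-length encoding for the advice, so the $+1$ goes to the sign bit. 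The cost is then at most $L+a+1$. Since the protocol is a randomized protocol with Borel message functions, its advantage is at most $\beta_\Pi(L+a+1)$, and this gives $\E_X\Adv\le2\beta_\Pi(L+a+1)$.

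I expect the main obstacle to be measurability and the accounting for the conversation length, not the simulation idea. I must check three things. First, Alice's message functions, which come from a quantum algorithm with adaptive measurements, are Borel functions of $(y,\omega,\alpha,\text{transcript},\text{randomness})$; this follows from the hypothesis that $\mathsf D$'s instruments are Borel in its classical data, composed with the Borel map $y\mapsto\psi_y$. Second, the worst-case cost bound holds on every branch, including adaptive query lengths, which is handled by the self-delimiting encoding and the total budget $L$. Third, the private quantum randomness is legitimately replaced by classical sampling of outcomes, which is fine since Alice is computationally unbounded.
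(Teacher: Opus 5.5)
Your proposal is correct and follows the paper's proof. Alice runs $\mathsf D$ on a challenge she prepares from her classical description. Bob answers each classical query with one bit, and up front sends the advice together with a sign bit $\sigma(X)$, which is where the $+1$ goes. An unbounded classical Alice samples $\mathsf D$'s measurement outcomes, so the protocol costs $L+a+1$ bits and has advantage $\frac12\E_X\Adv(\mathsf D;X)$.

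The only difference is which advice Bob sends. In the paper he sends the advantage-maximizing advice $\alpha^*(X)$, taken as the lexicographically first maximizer so that it is Borel, rather than $\mathsf D$'s own $\alpha(X)$. This bounds $\E_X\max_\alpha\Adv$ rather than $\E_X\Adv$ for a single advice map, and that stronger form is what \Cref{lem:fixing} and \Cref{thm:EFI-security} later rely on. For the statement as written, your choice is enough.
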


\begin{proof}
We build a protocol for $\Pi_X$ in which Alice plays $\mathsf D$ and Bob holds $X$.  The advice is not a
query, so it must be paid for separately, and Bob supplies it before the simulation begins.

Bob holds $X$ and is unbounded, so he can compute an advice string $\alpha^*(X)$ that maximizes $\mathsf D$'s
advantage.  We take the lexicographically first among the finitely many maximizers, which makes
$X\mapsto\alpha^*(X)$ Borel.  He also computes a sign bit
\[
  \sigma(X)\ \eqdef\ \mathbf 1\big[\,\Pr[D_{\alpha^*}(\rho_0)=1]>\Pr[D_{\alpha^*}(\rho_1)=1]\,\big] ,
\]
and sends the pair $(\alpha^*,\sigma)$, at a cost of at most $a+1$ bits.

Alice now receives the classical description of the challenge and of $\omega(X)$, prepares the
corresponding states, and runs $D_{\alpha^*}$.  All of her quantum computation is local: unitaries,
weak measurements, and coherent reuse of a post-measurement state are all operations on her own
registers.

The queries are the only traffic.  Whenever $\mathsf D$ issues a classical query $z$ to $\cA_X$, Alice sends
$z$ to Bob; Bob evaluates the deterministic bit $\cA_X(z)$ and returns it.  This exchange is the only
place where we use the hypothesis that the answers are classical bits.  Alice then resumes $\mathsf D$, takes
its output as her guess for $b$, and flips that guess if $\sigma=1$.

Finally, a quantum Alice exchanging classical messages is simulated by a
classical one.  From the classical description of her state and the history so far, an unbounded
classical Alice computes the exact conditional distribution of $\mathsf D$'s next query and samples from it.
So the result is a classical randomized protocol, and we have only given Alice more power than she
would otherwise have.

It remains to check that the simulation is faithful, and to count the bits.  Alice's simulated view is
identical in distribution to $\mathsf D$'s real view, since the challenge is a genuine sample and every
oracle answer matches the true truth table.  Hence for every fixed $X$ the protocol is correct with
probability at least $\frac12+\frac12\Adv(\mathsf D;X)$, the sign bit having converted the absolute value in
$\Adv(\mathsf D;X)$ into a one-sided quantity.  The communication is the charged transcript plus the advice, at most
$L+a+1$ bits.  Averaging over $X$ and comparing with $\beta_\Pi$ gives
\[
  \frac12\E_X[\Adv(\mathsf D;X)]\le\beta_\Pi(L+a+1),
\]
as claimed.
\end{proof}

The hypotheses of \Cref{thm:transcript-security} deserve two comments, one on what they permit and
one on what they exclude.

\paragraph{Assumptions on the oracle.}  \Cref{thm:transcript-security} makes no
assumption on $\cA_X$ beyond its being a deterministic function of classical strings.  It may be
uncomputable, and it may encode $X$ in full.  What the theorem bounds is not the oracle but how much
of it reaches the adversary, and it is for this reason that the counting oracle of \Cref{sec:oracle}
may be computationally very powerful without weakening security, as \Cref{sec:rank} discusses.

The requirement is tight, in the sense that classical \emph{output} alone would not suffice.  To see
this, consider an oracle that takes the unknown challenge as quantum \emph{input}, measures
$\{P_H,P_{H^\perp}\}$, and returns one classical bit.  Its output would be classical, and yet it
would distinguish the pair with certainty.  Coherent access, quantum advice depending on the oracle,
and an oracle sharing entanglement with the adversary are likewise not classical transcripts; nor,
without further work, are quantum reference copies that depend on $X$.

\paragraph{Adaptivity and the number of queries.}  Our bound depends only on the
\emph{total transcript length}.  In a two-way protocol adaptivity is free, and $q$ queries of $\ell$
bits amount to $q(\ell+1)$ bits once the answers are counted.  Handling polynomially many fully
adaptive queries costs us no additional work: it follows from the form of the reduction alone.

Set this beside a one-coherent-query bound such as~\cite{LMW24}.  The comparison is not a question
of one result being stronger than the other.  The two resources are formally
incomparable.  A coherent query is depth one at unbounded width, whereas polynomially many adaptive
classical queries are depth polynomial at width one, and each is stronger than the other on some
task~\cite{AA15,CFHL21}.  For the purposes of synthesis, classical queries are the easy
case (\Cref{obs:classical-synthesis}).

\subsection*{The quantum upper bound}
\label{sec:converse}

\Cref{thm:transcript-security} converts a communication lower bound into security, and its bound
$2\beta_\Pi$ certifies nothing once $\Pi_X$ has a cheap classical protocol, so the method applies only
to pairs whose associated problem is classically hard.  The complementary requirement, that the holder
of the hidden object be able to use the pair, is automatic.

\begin{proposition}[Quantum communication always suffices]
\label{prop:converse}
Let $X$ be a hidden random object, let $(\rho_0(X),\rho_1(X))$ be states on $\bbC^d$, and let $\Pi_X$
be the associated problem of \Cref{def:induced} with no side information. If
$\E_X\td(\rho_0(X),\rho_1(X))\ge\delta$, then $\Pi_X$ has a protocol of advantage at least $\delta/2$
in which Alice sends $\lceil\log_2d\rceil$ qubits and Bob returns one classical bit.
\end{proposition}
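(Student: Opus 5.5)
The protocol is the obvious one: Alice prepares the challenge as a quantum state and ships it, and Bob, who knows $X$, performs the Helstrom measurement for $(\rho_0(X),\rho_1(X))$ and returns his guess. The advantage is then half the average trace distance.

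Concretely, Alice receives the classical description $y\sim\kappa_b(X)$ and prepares $\ket{\psi_y}\in\bbC^d$, which fits in $\lceil\log_2d\rceil$ qubits after padding $\bbC^d$ into a power-of-two space. She sends those qubits. Conditioned on $X$ and $b$, and averaged over $y$, the state Bob holds is exactly $\rho_b(X)$. This is the defining property of the preparation kernel in \Cref{def:induced}.

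Bob computes the Hermitian operator $\rho_0(X)-\rho_1(X)$ and takes $E_X$ to be its projector onto the nonnegative eigenspace. He measures $\{E_X,\Id-E_X\}$ and returns the classical bit $0$ on the first outcome and $1$ otherwise; Alice outputs that bit. For fixed $X$ the success probability is
\[
\tfrac12\tr(E_X\rho_0)+\tfrac12\tr((\Id-E_X)\rho_1)=\tfrac12+\tfrac12\tr\big(E_X(\rho_0-\rho_1)\big)=\tfrac12+\tfrac12\td(\rho_0(X),\rho_1(X)),
\]
which is the standard Helstrom computation: the positive part of a traceless Hermitian operator has trace equal to half its trace norm. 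The advantage for this $X$ is therefore $\frac12\td(\rho_0(X),\rho_1(X))$. Averaging over $X$ and applying the hypothesis gives at least $\delta/2$.

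The only step needing care is measurability. The map $X\mapsto E_X$ must be Borel so that Bob's strategy is a legitimate protocol, and averaging over $X$ must make sense. Since $\rho_b(X)$ is obtained by integrating a Borel kernel, it is Borel in $X$. Because the spectral projector onto the nonnegative part is not continuous, I would instead take $E_X=\mathbf 1_{[0,\infty)}(\rho_0-\rho_1)$ via Borel functional calculus, which is a Borel function of a Hermitian matrix. That is all the measurability the argument needs.
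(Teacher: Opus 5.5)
Your proposal is correct and follows the paper's proof. Alice sends one prepared sample on $\lceil\log_2 d\rceil$ qubits, Bob applies the Helstrom measurement and returns its outcome, and linearity in the received state followed by averaging over $X$ gives advantage $\frac12\E_X\td(\rho_0(X),\rho_1(X))\ge\delta/2$; your explicit Helstrom computation and the Borel-measurability remark on $E_X$ are details the paper leaves implicit, and both are sound.
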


\begin{proof}
Alice prepares from her classical description one sample of $\rho_b(X)$ on $\lceil\log_2d\rceil$
qubits and sends it.  Bob, who holds $X$ and is unbounded, applies the Helstrom measurement for the
pair $(\rho_0(X),\rho_1(X))$ and returns its outcome.  Since the success probability is affine in the
state received, the average over Alice's sampling randomness is the success probability on
$\rho_b(X)$, namely $\frac12+\frac12\td(\rho_0(X),\rho_1(X))$; averaging over $X$ gives advantage at
least $\delta/2$, as required.
\end{proof}

Together the two statements give a general criterion, which we state on its own since hiding a
subspace is only one way to apply it.  It mentions neither subspaces, nor the counting oracle, nor
Vector-in-Subspace.

\begin{corollary}[A communication gap yields a pair secure against classical transcripts]
\label{cor:recipe}
Let $X$ be a hidden random object and $(\rho_0(X),\rho_1(X))$ a pair of states on $\bbC^d$ with
$\E_X\td(\rho_0(X),\rho_1(X))\ge\delta$, and let $\Pi_X$ be the associated problem of
\Cref{def:induced}.  Suppose that $\beta_\Pi(\ell)\le\gamma(\ell)$ for a function $\gamma$.  Then for
\emph{every} deterministic oracle $\cA_X$, however powerful, every distinguisher whose dependence on
$X$ runs through at most $L$ bits of classical query traffic with $\cA_X$ and $a$ bits of classical
advice has average advantage at most $2\gamma(L+a+1)$; while the pair itself is usable, in the sense
that $\Pi_X$ has a $\lceil\log_2d\rceil$-qubit quantum protocol of advantage $\delta/2$.
\end{corollary}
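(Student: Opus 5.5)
The corollary combines \Cref{thm:transcript-security} with \Cref{prop:converse}, so I will prove its two halves separately and then put them together.

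For the security half, fix an arbitrary deterministic oracle $\cA_X$ and a distinguisher $\mathsf D$ whose dependence on $X$ runs through at most $L$ bits of classical query traffic and $a$ bits of classical advice. I will check that $\mathsf D$ meets the hypotheses (i)--(iii) of \Cref{thm:transcript-security}.
\begin{itemize}
\item Hypothesis (i): the challenge is drawn from $\rho_b(X)$ through the preparation kernels that are part of the data of $\Pi_X$. Side information is either absent or declared as $\omega(X)$.
\item Hypotheses (ii) and (iii): these are exactly the assumed bounds on query traffic and advice.
\end{itemize}
The theorem then gives $\E_X\Adv(\mathsf D;X)\le 2\beta_\Pi(L+a+1)$. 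Substituting the assumed bound $\beta_\Pi(\ell)\le\gamma(\ell)$ at $\ell=L+a+1$ yields $2\gamma(L+a+1)$.

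The only subtlety is the standing measurability requirement, that the oracle's bits and the distinguisher's instruments be Borel. I read ``distinguisher'' in the corollary as carrying the same Borel conventions as the theorem, and I will say so explicitly. The phrase ``however powerful'' needs no extra work, since the theorem already allows uncomputable truth tables.

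For the usability half, apply \Cref{prop:converse} directly to the hypothesis $\E_X\td(\rho_0(X),\rho_1(X))\ge\delta$. This gives a protocol in which Alice sends $\lceil\log_2 d\rceil$ qubits, Bob returns one classical bit, and the advantage is $\delta/2$. That is the claimed quantum protocol.

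The proposition is stated with no side information. If $\Pi_X$ carries side information $\omega(X)$, Alice may simply ignore it, which does not lower the advantage. So the statement applies to the problem of \Cref{def:induced} as given.

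I expect no real obstacle here. The only point needing care is matching the corollary's informal phrasing, ``dependence on $X$ runs through'', to the precise list (i)--(iii) and the Borel conditions of \Cref{thm:transcript-security}.
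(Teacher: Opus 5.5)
Your proposal is correct and matches the paper's proof. The paper likewise obtains the first assertion from \Cref{thm:transcript-security} by substituting $\gamma$ for $\beta_\Pi$, and the second directly from \Cref{prop:converse}. Your remarks on the Borel conventions, and on Alice ignoring any side information $\omega(X)$, are harmless extra care that the paper leaves implicit.
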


\begin{proof}
The first assertion is \Cref{thm:transcript-security} with $\beta_\Pi$ replaced by its upper bound
$\gamma$; the second is \Cref{prop:converse}.
\end{proof}

So any two-party problem that is expensive for classical communication at small advantage, and whose
two cases are statistically far, gives a pair hidden from every classical-transcript adversary.  The
work in applying it is finding an $X$ for which the classical bound is both strong and
provable at small advantage, and \Cref{sec:prelim-vsp} explains why $\VSP$ is the example we know.

\paragraph{Limitations of the reduction.}  \Cref{prop:converse} is the reason no
communication argument can ever reach coherent queries, and it says more than the familiar
observation that quantum communication is sometimes cheaper.  For \emph{every} statistically far
pair, the associated problem is easy for quantum communication at logarithmic cost, with the holder
of the hidden object performing the measurement.  The gap we require is not, then, a fortunate
accident of $\VSP$.  Rather, the classical side has to be hard while the quantum side, which is
always easy, remains inaccessible.  \Cref{sec:lift} explains why the easy quantum protocol is
nevertheless not an attack, and \Cref{sec:coherent} supplies the query-complexity bounds that replace
the unavailable communication argument.

The reduction is also one-directional.  It hands Alice a classical description of her challenge, which the distinguisher it simulates does not
have, and a classical protocol may exploit that description in ways no holder of a single copy can
reproduce.  Indeed, if the descriptions of the two branches differ, a zero-communication protocol
reads $b$ straight off its input, whereas a one-copy distinguisher is bounded by the trace distance.
Classical hardness of $\Pi_X$ is thus sufficient for security against classical transcripts, but
it is not necessary, and the security of a pair can exceed what \Cref{thm:transcript-security}
certifies.  For the half-subspace pair the two receivers provably differ, as \Cref{sec:coherent}
shows.

\section{The oracle and the primitives it removes}
\label{sec:oracle}

The requirements on the oracle come from \Cref{thm:transcript-security}.  We want an oracle that
answers every probability-estimation question, so that one-way puzzles cannot survive it, and one
that answers in classical bits, so that the theorem applies to it.  This section builds such an
oracle and proves the first of the two halves of our result: relative to it, classical-output
hardness of search and of decision both disappear.

Fix $n=n(\lambda)=2^{\lambda}$.  For each $\lambda$, independently across $\lambda$, sample a
Haar-random half-dimensional real subspace $H_\lambda\subseteq\bbR^{n(\lambda)}$ and set
$\rho_{b,\lambda}$ as in \eqref{eq:pair}.  The supports are orthogonal, so that
$\td(\rho_{0,\lambda},\rho_{1,\lambda})=1$.

\subsection{The state source}
\label{sec:source}

\begin{model}[State source $\cR$]
\label{model:source}
On input $(b,1^\lambda)$, $\cR$ returns one \emph{fresh} copy of $\rho_{b,\lambda}$: it samples a Haar-random unit vector from $H_\lambda$ (if $b=0$) or $H_\lambda^\perp$ (if $b=1$) and returns that pure state, with the sampling randomness traced out. The input registers are measured in the computational basis before the sample is drawn, so that $\cR$ is the channel
\[
  \xi\ \longmapsto\ \sum_{b,\lambda}\bra{b,1^\lambda}\xi\ket{b,1^\lambda}\cdot\rho_{b,\lambda},
\]
a completely positive trace-preserving map on classical-input registers, extended to malformed inputs by returning a fixed state.
\end{model}

We shall use two properties of \Cref{model:source} later.  The first is that $\cR$ is a legitimate
channel even when its input register is entangled with the caller's workspace, because the dephasing
makes the index classical.  A call with a superposed index is a call with a classical index
drawn from the induced distribution.  Precisely, on a joint state $\Xi_{WI}$ of the caller's
workspace $W$ and the input register $I$,
\[
  (\mathrm{id}_W\otimes\cR)(\Xi_{WI})\ =\ \sum_i\big(\bra i_I\Xi_{WI}\ket i_I\big)\otimes\rho_i ,
\]
so the workspace stays correlated with the dephased index.  There is consequently no ambiguity of
Stinespring dilation to resolve, and no access mode for $\cR$ beyond sample access.  Note that
after $t$ adaptive calls the retained joint state need not be a mixture of products, since a later
index may depend on an earlier outcome; the domination below does not assume that it is.

The second property is that the output is a fixed state for each index, so that $t$ calls at security
parameter $\lambda$ yield at most $t$ copies of $\rho_{0,\lambda}$ and $\rho_{1,\lambda}$ in total.
This is dominated by granting $t$ independent copies of each in advance and then, after each dephased
index, swapping in the next unused copy of the corresponding state, which reproduces the joint state
exactly.  That is the form in which \Cref{sec:reference} charges them.

The source doubles as the public EFI generator: $G(1^\lambda,b)$ calls $\cR(b,1^\lambda)$.  A
distinguisher may make additional calls on both branches, and \Cref{sec:reference} is where
we control them.

\paragraph{Comparison with the CHRS model.}  Each call to $\cR$ re-samples, so $t$
calls yield the product $\rho_b^{\otimes t}$.  This differs from the CHRS
model~\cite{CCS24,BCN24,MNY23}, where a single pure $\ket\phi$ is fixed at initialization and
returned on every call, giving the permutation-invariant and strongly correlated
$\E_\phi[\proj{\phi}^{\otimes t}]$.  A swap test tells the two apart already at $t\ge2$.  The
distinction does not affect the separation, since \Cref{sec:single} replaces $\cR$ by a Boolean
function altogether.

\subsection{The counting oracle}
\label{sec:rank}

The second layer answers the probability-estimation questions that, by~\cite{CGGH25,HM24},
characterize $\OWPuzz$s.  It is an \emph{exact probability-bit} oracle: it returns individual binary
digits of an output probability, so that a polynomial number of requests yields an
inverse-exponentially accurate estimate.  This is considerably stronger than the approximate counting
of Stockmeyer~\cite{Sto83}, which it subsumes, and the strength is deliberate, since
\Cref{thm:transcript-security} is indifferent to it.

There is one obstacle to defining such an oracle, and it is not a technicality.  The samplers whose
probabilities we estimate are themselves oracle-aided, and so may call the very oracle we are
defining.  The following example shows that unrestricted self-reference has no consistent solution at
all.

\begin{example}[Self-reference is inconsistent]
\label{ex:liar}
Suppose a single oracle $\Count$ answered output-probability bits of \emph{all} oracle-aided samplers, including those querying $\Count$. Consider the sampler $C$ that queries $\Count$ for the leading bit of $\Pr[C(1^\lambda)=0]$ and outputs $1$ if that bit is $1$, else $0$. If $\Count$ reports $1$ (a value $\ge\frac12$), then $C$ outputs $1$ deterministically, so the true probability of output $0$ is $0$, whose leading bit is $0$, a contradiction. If $\Count$ reports $0$, then $C$ outputs $0$ deterministically, so the probability of output $0$ is $1$, whose leading bit is $1$, again a contradiction.
\end{example}

Two other remedies suggest themselves, and we take neither.  One is to
look for a fixed point rather than to forbid the self-reference, in the manner of Kleene's recursion
theorem; \Cref{ex:liar} rules that out, since the obstruction is not a matter of finding the right
construction but of there being no consistent assignment.  The other is to weaken the oracle
to approximate counting, which is consistent because an approximate answer need not be the exact one
it perturbs.  That works, but at the cost of an error parameter, which would then have to be tracked
across the levels of the recursion and through every downstream estimate.  We prefer the bookkeeping
of ranks and exact answers.

The remedy we take is a well-founded rank, much as the polynomial hierarchy is stratified by
alternation depth.  Each counting query carries a rank $r\in\mathbb N$, and a rank-$r$ query may ask only about
samplers whose own oracle calls have rank strictly below $r$.  We can then define the family by
ordinary recursion on $r$, and no fixed-point theorem is needed, because the relation ``query $p$ may
refer to query $q$'' requires $\mathrm{rank}(q)<\mathrm{rank}(p)$ and so admits no cycles and
no infinite descending chains.

One point about the rank condition needs care, and it determines how we state the condition.  The
oracle has to be able to decide whether a query is admissible, and it has to do so even when the
sampler it is asked about computes its own ranks while running, possibly in superposition, in which
case there is no single behavior to inspect.  We therefore impose the condition on the description
the oracle is handed, rather than on the behavior of the sampler that description defines.  Concretely,
a sampler is \emph{admissible} as the argument of a rank-$r$ query when every counting gate in its
description takes one of two forms, each of which can be recognized by reading the code.

The first form carries a rank $s<r$ written into the description, and admissibility is then decided
by reading it.  A sampler uses the second form when it has to choose a layer at runtime.  That
form carries a \emph{bound} $J<r$, also written into the description, and it receives the rank itself
as runtime data, which the computation may have produced adaptively or coherently.  Such a gate
forwards the query to $\Count_s$ when the rank $s$ it receives at runtime is well formed with $s\le
J$, and returns $0$ otherwise.  Since the bound is fixed in the description while the rank is not, a
gate of the second form accesses only $\cO_{<r}$ whatever the computation does, and this too can be
decided by inspecting the code.

A sampler that computes an index $i$ and needs the answer of the $i$-th layer is accordingly given a
gate of the second form rather than unrestricted access, and queries whose rank exceeds $J$ return
$0$.  Choosing $J$ above every rank the sampler can write costs it nothing, and that is the situation
in every use below.  Throughout, samplers are \emph{clocked}: a description carries an explicit
polynomial step bound after which the computation halts, so the description alone bounds every rank
its sampler can write.

Write $\mathsf{Norm}_J$ for the compiler that replaces each call to the joined oracle by a gate of
the second form with bound $J$.  When $J$ is at least every well-formed rank that the original
clocked computation can write, $\mathsf{Norm}_J$ preserves its induced channel exactly, including for
adaptive and coherent calls, and with only polynomial overhead.  The reason is that the two gates
agree as maps on every basis state the computation can produce: a well-formed rank at most $J$ is
forwarded unchanged, and by the choice of $J$ no other rank ever occurs, so the substitution changes
no amplitude.  The overhead is the comparison against $J$.

One further design choice remains.  We make the oracle \emph{bit-valued}, returning bits of the
\emph{exact} output probability rather than an approximation.  This has three consequences: it removes
a parameter; it removes any need to argue that approximation errors compound across ranks, since
exact bits feed into exact bits; and, as \Cref{sec:coherent} explains, it keeps the coherent version
of the oracle an ordinary Boolean oracle.

\begin{definition}[Bit convention]
\label{def:bit}
For $p\in[0,1)$ let $p=\sum_{k\ge1}\mathrm{bit}_k(p)2^{-k}$ be the binary expansion not ending in all $1$s; for $p=1$ set $\mathrm{bit}_k(1)=1$ for all $k$. Then $\widetilde p_t=\sum_{k\le t}\mathrm{bit}_k(p)2^{-k}$ satisfies $0\le p-\widetilde p_t\le2^{-t}$ for every $p\in[0,1]$, and $\mathrm{bit}_1(p)=1$ iff $p\ge\frac12$.
\end{definition}

\begin{definition}[The oracle $\cO=(\cR,\Count)$]
\label{def:oracle}
Let $\cO_{<r}\eqdef\cR\cup\bigcup_{0\le r'<r}\Count_{r'}$, so that $\cO_{<0}=\cR$. For $r\ge0$, on input $(\langle C\rangle,1^\lambda,x,k)$ where $C$ is a clocked $\QPT$ sampler with classical output \emph{whose oracle access is to $\cO_{<r}$ only} in the sense above,
\[
  \Count_r\big(\langle C\rangle,1^\lambda,x,k\big)\ =\ \mathrm{bit}_k\!\Big(\Pr\big[C^{\cO_{<r}}(1^\lambda)=x\big]\Big).
\]
The index $k$ is written in unary, so a query of polynomial length requests polynomially many bits.
The full oracle is $\cO=(\cR,\Count)$ with $\Count(\langle1^r,q\rangle)\eqdef\Count_r(q)$, the unary-tagged Boolean join. Malformed tags, unclocked descriptions, and descriptions violating the rank condition receive answer $0$, so $\Count$ is a total Boolean oracle on classical strings, and all unary tags are charged to the query cost.
\end{definition}

\begin{lemma}[Well-definedness]
\label{lem:welldef}
\Cref{def:oracle} determines a unique oracle, and every bit of $\Count$ is a Borel function of the
full sequence $\mathbf H=\{H_\lambda\}$, and of $\{(H_\lambda,S_\lambda)\}$ in \Cref{model:single}, a sampler at
one security parameter being permitted to query another.
\end{lemma}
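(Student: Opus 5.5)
The plan is an induction on the rank $r$, proving two things together at each level: that $\Count_r$ is uniquely determined, and that each of its bits is a Borel function of $\mathbf H$. The base object is $\cO_{<0}=\cR$. This is a fixed family of channels, and each output $\rho_{b,\lambda}=2P_{H_\lambda}/n$ or $2P_{H_\lambda^\perp}/n$ depends continuously, hence Borel-measurably, on $H_\lambda$. For the inductive step, suppose $\cO_{<r}$ is uniquely determined as a family of channels whose classical parts (the bits of $\Count_{r'}$ for $r'<r$) are Borel in $\mathbf H$. Take a query $(\langle C\rangle,1^\lambda,x,k)$ to $\Count_r$. Admissibility is a syntactic property of the description (clocked, with every counting gate carrying a written rank $s<r$ or a bound $J<r$), so it does not depend on $\mathbf H$. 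Inadmissible or malformed queries get the constant answer $0$. For admissible $C$, the sampler touches only $\cO_{<r}$, so $\Pr[C^{\cO_{<r}}(1^\lambda)=x]$ is a well-defined number fixed by the already-determined $\cO_{<r}$. Then $\Count_r$ is defined outright, with no circularity, and uniqueness follows by induction. Since the relation ``may refer to'' strictly lowers rank and ranks lie in $\mathbb N$, the recursion is well-founded.

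For measurability I would expand the output probability. A clocked sampler makes at most $T=\poly(\lambda)$ steps, so its action is a finite composition of fixed gates, dephased calls to $\cR$ at finitely many security parameters, and Boolean gates to $\Count_{s}$ with $s<r$ (forwarded or zeroed according to the bound $J$). First, $\Pr[C=x]$ is a finite sum over the finitely many classical query-answer patterns the clocked computation can encounter. Each term is a polynomial expression in the entries of finitely many projectors $P_{H_\mu}$. This polynomial is multiplied by indicators that the relevant $\Count_s$ bits take the values in the pattern. Here the entry polynomial is continuous in $\mathbf H$ under the product topology on the product of Grassmannians, and the indicators are Borel by induction. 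Note that a coherent call to $\Count_s$ is just a unitary controlled on a finite truth-table restriction, so the sum over patterns still covers it. Second, $p\mapsto\mathrm{bit}_k(p)$ is Borel on $[0,1]$, being a finite union of half-open intervals by \Cref{def:bit}. So each bit of $\Count_r$ is Borel. The same argument runs verbatim with $\{(H_\lambda,S_\lambda)\}$ in place of $\mathbf H$. Querying other security parameters is harmless, because a clocked sampler names only finitely many of them.

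The step I expect to need the most care is the coherent one. A sampler may query $\Count_s$ in superposition over inputs whose rank is computed at runtime. I must check two things there. One is that only finitely many oracle bits affect the output, which holds because clocking bounds query length. The other is that the gate of the second form accesses only layers below $r$. Both follow from the description-level rank convention and the normalization $\mathsf{Norm}_J$. Once finiteness is in place, the output probability is a finite polynomial in the relevant truth-table bits and the projector entries, and Borel measurability follows.
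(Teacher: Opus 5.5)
Your treatment of the source model is the paper's proof. You use well-founded recursion on the rank for uniqueness, and the same induction for measurability: the output probability of a clocked sampler is a finite sum of Borel indicators of lower-rank bits times coefficients continuous in the projectors, and you then compose with $\mathrm{bit}_k$. Your observation that a coherent call touches only a finite truth-table restriction is the right justification for the ``finite sum'' the paper asserts. (Under the convention $\mathrm{bit}_k(1)=1$, the preimage of $1$ also contains the point $\{1\}$, but it is still Borel.)

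The gap is in the last clause, where you say the argument runs verbatim with $\{(H_\lambda,S_\lambda)\}$ in place of $\mathbf H$. It does not.

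\begin{itemize}
\item In \Cref{model:single} there is no state source. The base layer is the Boolean function $\cG(1^\lambda,b,r,x)=f_{\ket{\tilde u_{b,r}}}(x)$, so your base step, with output probabilities polynomial in the entries of $P_{H_\mu}$, has nothing to apply to.
\item What you need instead is that each bit of $\cG$ is a Borel function of $S_\lambda$, and that is not automatic. The assignment $\ket\psi\mapsto f_{\ket\psi}$ is a selection made inside Rosenthal's proof with no measurability guarantee. The paper rounds to the finite net $\mathcal N_\lambda$ precisely for this reason.
\item The missing step is this. The rounding $u_{b,r}\mapsto\tilde u_{b,r}$ is constant on the cells of a fixed finite measurable partition, hence Borel. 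The map $\tilde u\mapsto f_{\ket{\tilde u}}(x)$ is a function on a finite set. So their composition is Borel in $S_\lambda$.
\end{itemize}

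With that base case your induction goes through unchanged. The coefficients in your finite expansion simply become constants indexed by assignments to the relevant bits of $\cG$, rather than polynomials in projector entries.
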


\begin{proof}
Read without the rank condition, \Cref{def:oracle} would ask for a fixed point, and \Cref{ex:liar}
shows that unrestricted self-reference admits no such assignment.  With the rank in place there is no
fixed point to find, only an ordinary recursion, and we run it.

We argue by strong induction on $r$.  The base $\cO_{<0}=\cR$ is given.  Suppose, then, that
$\Count_{r'}$ is defined for all $r'<r$.  Then $\cO_{<r}$ is determined, and for any admissible $C$
the quantity $\Pr[C^{\cO_{<r}}(1^\lambda)=x]$ is the output probability of a fixed algorithm with a
fixed, already-defined oracle.  It is a well-defined real, and so are its bits.  Invalid
inputs have the specified answer $0$.  Crucially, no reference to rank $\ge r$ occurs in the clause
defining $\Count_r$, so the recursion never reaches back into the layer it is defining.  Every layer,
and hence the whole family, is determined uniquely.

That leaves measurability, which follows by the same induction, in three steps.

At the base, a call to $\cR$ returns the fixed state $\rho_{b,\lambda}=2P_{H}/n$ or its complement, so
the output probability of a fixed clocked circuit with sample access alone is a polynomial in the
entries of $P_H$, and hence continuous in $H$.

At rank $r$, suppose the oracle bits the circuit reads below $r$ are Borel in $H$.  Its output
probability is then a finite sum of products of those bits, with coefficients that are themselves
continuous in $H$, and hence Borel; and $\mathrm{bit}_k$ of a Borel function is again Borel.

For \Cref{model:single}, finally, the base layer is Borel in $(H,S)$ because the rounding map is
constant on the cells of a finite measurable partition, and the same induction applies.
\end{proof}

\paragraph{The computational power of $\Count$.}  The security proof does
not require that $\Count$ be weak, and the ranked join is in fact strong.  Because ranks are unary
and uniformly addressable, a decreasing rank can serve as the recursion counter of an alternating
computation, and we expect that for every fixing of the hidden subspaces the tagged Boolean join
satisfies $\mathsf P^{\Count}=\PSPACE^{\Count}$, whence every classical-input class in between
collapses to it.  With binary tags the same argument would reach $\EXPTIME$, which shows that this is
a property of the encoding.  We do not use the claim anywhere, and so do not prove it.

None of this matters for security.  \Cref{thm:transcript-security} already covers an \emph{arbitrary,
even uncomputable} truth table, so the computational power of $\Count$ is irrelevant to EFI security
\emph{by construction}.  What is scarce is not the power of the oracle, but the amount and the type
of information about $H$ that reaches the adversary.

We should stress what this does not show.  It is no evidence that EFI pairs can coexist with the
unrelativized equality $\mathsf P=\PSPACE$, nor with a single classical oracle: efficient generation
still uses the state source $\cR$, and \Cref{thm:EFI-security} permits only classical access to
$\Count$.  The question of~\cite{KQST23,MetaEFI25} remains open, and our techniques do not address
it, since replacing the Haar source by a short classical description of $H$ would invalidate the
proof outright, the holder of $H$ being free to send that description.

\subsection{Nonexistence of one-way puzzles}
\label{sec:no-owpuzz}

\begin{lemma}[Probability estimation is easy relative to $\cO$]
\label{lem:pe-easy}
Every $\QPT^{\cO}$-samplable classical distribution admits efficient probability estimation relative to $\cO$, using one integer, a bound on the ranks the sampler can write, that is read off its explicit step bound. The estimate is $2^{-t}$-additively accurate for any $t=\poly(\lambda)$, with success probability $1$, using $t$ classical queries.
\end{lemma}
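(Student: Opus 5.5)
The plan is to reduce the estimation task to reading exact bits from the appropriate layer of $\Count$. I will then invoke the bit convention of \Cref{def:bit} for the accuracy guarantee. Fix a $\QPT^{\cO}$ sampler $C$ with classical output, given as a clocked description. Its oracle calls go to the joined oracle $\cO$, so as written $C$ need not be admissible for any single layer.

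The first step is to make it admissible. Because $C$ is clocked, its explicit step bound $T$ bounds the length of any unary rank tag it can write. I therefore set $J\eqdef T$ (the one integer referred to in the statement) and compile $C'\eqdef\mathsf{Norm}_J(C)$. By the discussion preceding \Cref{def:oracle}, $\mathsf{Norm}_J$ preserves the induced channel of $C$ exactly, including adaptive and coherent calls, with polynomial overhead. Every counting gate of $C'$ is of the second form with bound $J$, so $C'$ accesses only $\cO_{<J+1}$ and is an admissible argument for a rank-$(J+1)$ query. Calls to $\cR$ are unaffected, since $\cR\subseteq\cO_{<r}$ for every $r$. Consequently $\Pr[C'^{\cO_{<J+1}}(1^\lambda)=x]=\Pr[C^{\cO}(1^\lambda)=x]\eqdef p_x$ for every $x$.

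The second step is the estimation itself. The estimator, on input $x$, makes the $t$ classical queries $\Count(\langle1^{J+1},(\langle C'\rangle,1^\lambda,x,1^k)\rangle)$ for $k=1,\dots,t$. It receives $\mathrm{bit}_k(p_x)$ and outputs $\widetilde p_t=\sum_{k\le t}\mathrm{bit}_k(p_x)2^{-k}$. By \Cref{def:bit}, $0\le p_x-\widetilde p_t\le2^{-t}$ deterministically, so the success probability is $1$. Each query has length polynomial in $|\langle C\rangle|$, $T$, $|x|$ and $t$, since the unary tags cost $J+1$ and $k\le t$. The estimator is therefore a classical polynomial-time algorithm, and it makes classical queries only.

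The main point needing care is the exactness claim for $\mathsf{Norm}_J$ in the coherent case. The compiled gate must agree with the original joined-oracle call on every basis state the computation can reach. It suffices to observe that any well-formed rank tag written within $T$ steps has length at most $T=J$, so it is forwarded unchanged. Malformed tags receive answer $0$ under both gates by \Cref{def:oracle}. The two unitaries therefore coincide on the reachable subspace, and the output distributions are equal exactly rather than approximately.
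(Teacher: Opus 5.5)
Your proof is correct and follows the paper's argument. You bound every writable rank by the clocked step bound, apply $\mathsf{Norm}_J$ to obtain an admissible argument to $\Count_{J+1}$ with an identical output distribution, and read off $t$ exact bits whose truncation is $2^{-t}$-accurate by \Cref{def:bit}. Your extra check that malformed tags receive answer $0$ under both gates spells out the exactness property of $\mathsf{Norm}_J$, which the paper instead cites from \Cref{sec:rank}.
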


\begin{proof}
Fix an oracle-aided sampler $C^{\cO}$ with classical output.  The only obstacle is that $C$ is not,
by itself, an admissible argument to any single layer of $\Count$, and the proof consists in making
it one.

On inputs of length $\lambda$ the sampler makes polynomially many oracle calls, each carrying a
unary rank; let $J_C(\lambda)$ bound all ranks it can write at input length $\lambda$.  That integer is at most the sampler's step bound, since a rank is written in unary during the
computation, so we can read it off the clocked description.  Let $C_\lambda$ be the
clocked specialization running $C^{\cO}(1^\lambda)$ and $\widehat
C_\lambda\eqdef\mathsf{Norm}_{J_C(\lambda)}(C_\lambda)$.

We will show that $\widehat C_\lambda$ is an admissible argument to $\Count_{J_C(\lambda)+1}$ whose
output distribution is exactly that of $C^{\cO}(1^\lambda)$, after which the estimate can be read off
the oracle one bit at a time.

Admissibility is by inspection: $\widehat C_\lambda$ has polynomial description length, and its code
certifies that it accesses only $\cO_{<J_C(\lambda)+1}$.  The output distribution is unchanged for a
different reason: $J_C(\lambda)$ bounds every rank $C$ can write, so the normalization preserves the
induced channel, by the property of $\mathsf{Norm}_J$ recorded in \Cref{sec:rank}.

Now we run the estimator.  On input $x$ it constructs $\langle\widehat C_\lambda\rangle$ and
queries
\[
  \Count_{J_C(\lambda)+1}\big(\langle\widehat C_\lambda\rangle,1^\lambda,x,k\big),\qquad k=1,\dots,t,
\]
outputting the truncation $\widetilde p_t$.  By \Cref{def:bit},
$|\widetilde p_t-\Pr[C^{\cO}(1^\lambda)=x]|\le2^{-t}$ for \emph{every} value, including $0$ and $1$,
which is what the lemma asserts.
\end{proof}

The strength of \Cref{lem:pe-easy} is that $t$ classical queries give additive error $2^{-t}$, so
\emph{exponentially} small error is available at polynomial cost.  So the next theorem admits a direct and
elementary attack, and we need not assume that the characterization of~\cite{CGGH25} relativizes.

Two arguments below replace a quantum party by a classical one that samples from these estimates,
one bit at a time: the puzzle attack of \Cref{app:attack} and the interactive simulation of
\Cref{sec:interactive}.  Both pay for the estimator in the same way, and we isolate that
accounting here, in two steps, so that neither has to repeat it.

The first step is local.  A simulator that must produce the next bit knows two truncated
probabilities and samples from their ratio, and the question is how far that is from the true
conditional law.

\begin{lemma}[Sampling from a truncated ratio]
\label{lem:ratio}
Let $p_0,p_1\ge0$ with $p\eqdef p_0+p_1\le1$, and let $\eta=2^{-t}>0$.  Suppose
$0\le\tilde p_y\le p_y$ and $p_y-\tilde p_y\le\eta$ for $y\in\{0,1\}$, write
$\tilde S\eqdef\tilde p_0+\tilde p_1$, and
let $\tilde\pi$ be the law on $\{0,1\}$ that gives $y$ probability $\tilde p_y/\tilde S$ when
$\tilde S>0$, rounded to a multiple of $2^{-2t}$.  If $p\ge4\eta$ then $\tilde S>0$, the true
conditional law $\pi(y)\eqdef p_y/p$ is defined, and
\[
  \TV(\pi,\tilde\pi)\ \le\ \frac{4\eta}p .
\]
\end{lemma}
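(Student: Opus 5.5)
The plan is to split the error into the ratio error and the rounding error, bound each by an elementary estimate, and combine them with the triangle inequality.

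\emph{Positivity.} Since $\tilde p_y\ge p_y-\eta$, we have $\tilde S\ge p-2\eta\ge p/2>0$ whenever $p\ge4\eta$. This settles the first two assertions: $\tilde S>0$, and $\pi$ is defined because $p>0$.

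\emph{Ratio error.} Let $\pi'(y)=\tilde p_y/\tilde S$ be the unrounded law. On $\{0,1\}$, $\TV(\pi,\pi')=|\pi(0)-\pi'(0)|$. Write
\[
  \frac{\tilde p_0}{\tilde S}-\frac{p_0}{p}
  =\frac{(\tilde p_0-p_0)\,p-p_0(\tilde S-p)}{p\,\tilde S}.
\]
The first numerator term is at most $\eta p$ in absolute value. For the second, use $p_0\le p$ and $|\tilde S-p|\le2\eta$. This gives $|\pi(0)-\pi'(0)|\le 3\eta p/(p\tilde S)=3\eta/\tilde S\le 6\eta/p$.

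That is a factor $3/2$ too weak, so the expression needs sharpening. Both errors are one-sided: $\tilde p_y\le p_y$. Writing $d_y=p_y-\tilde p_y\in[0,\eta]$, the numerator is $-d_0p+p_0(d_0+d_1)=-d_0p_1+p_0d_1$. Its absolute value is at most $\eta\max(p_0,p_1)\le\eta p$, since the two terms have opposite signs. Hence $\TV(\pi,\pi')\le\eta p/(p\tilde S)=\eta/\tilde S\le 2\eta/p$.

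\emph{Rounding error.} Rounding $\pi'(0)$ to a multiple of $2^{-2t}$, with $\pi'(1)$ set to the complement, moves the law by at most $2^{-2t}=\eta^2\le\eta\le\eta/p$, using $p\le1$.

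\emph{Combination.} The triangle inequality gives $\TV(\pi,\tilde\pi)\le 2\eta/p+\eta/p\le 4\eta/p$.

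The main obstacle is the constant in the ratio step. A naive two-sided bound gives $6\eta/p$ and overshoots the claimed $4\eta/p$. The fix is the cancellation above, which uses the one-sidedness $\tilde p_y\le p_y$ from \Cref{def:bit}, together with $\tilde S\ge p/2$. Everything else is routine.
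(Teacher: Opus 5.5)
Your proof is correct and follows the paper's own argument. Both rearrange the numerator to $-d_0p_1+p_0d_1$, which bounds the ratio error by $\eta/\tilde S$; both use $\tilde S\ge p-2\eta\ge p/2$; and both charge $2^{-2t}\le\eta$ for the rounding. The only difference is bookkeeping: the paper writes the total as $2\eta/\tilde S$, and you write it as $3\eta/p$.

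One small correction to your commentary. The bound $|{-d_0p_1}+p_0d_1|\le\eta p$ needs only $|d_y|\le\eta$, not the sign of $d_y$. So the improvement over your naive $6\eta/p$ comes from the algebraic cancellation of the $d_0p_0$ terms, not from one-sidedness.
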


\begin{proof}
Both laws live on two points, so their total variation distance is the deviation at $y=0$.  Put
$e_y\eqdef p_y-\tilde p_y\in[0,\eta]$.  Before the rounding, a direct computation gives
\[
  \Big|\frac{p_0}p-\frac{\tilde p_0}{\tilde S}\Big|
  \ =\ \frac{\big|e_0p_1-e_1p_0\big|}{p\,\tilde S}\ \le\ \frac{\eta}{\tilde S} ,
\]
the numerator being at most $\eta\max\{p_0,p_1\}\le\eta p$.  The rounding adds at most $2^{-2t}$, and
$2^{-2t}\le\eta/\tilde S$ because $\tilde S\le1$, so $\TV(\pi,\tilde\pi)\le2\eta/\tilde S$.

The denominator is itself close to $p$, since $\tilde S\ge p-2\eta$.  So $p\ge4\eta$ gives
$\tilde S\ge p/2>0$, and the displayed bound follows.
\end{proof}

The second step is global.  A history of small probability is cheap because it is
rare; a history of large probability is cheap because \Cref{lem:ratio} is accurate on it; and one
threshold separates the two.

\begin{lemma}[Threshold accounting]
\label{lem:threshold}
Let $\cH$ be a set of at most $K$ elements, and let $P,V\colon\cH\to[0,1]$ satisfy
$\sum_{h\in\cH}V(h)P(h)=1$.  Let $\eta>0$ with $K^{-1}\ge4\eta$, and let $\tau\colon\cH\to[0,1]$
satisfy
\[
  \tau(h)\ \le\ \frac{4\eta}{P(h)}\qquad\text{whenever }P(h)\ge4\eta .
\]
Then
\[
  \sum_{h\in\cH}V(h)\,P(h)\,\tau(h)\ \le\ 2\sqrt{4\eta K} .
\]
\end{lemma}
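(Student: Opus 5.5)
The bound is weak enough that I do not expect to need an optimized threshold.  The hypotheses of \Cref{lem:threshold} will be used to show that every summand is individually at most $4\eta$.  The total is then at most $4\eta K$, which is below $2\sqrt{4\eta K}$ because $4\eta K\le1$.  So the plan is a pointwise estimate followed by a comparison of a number in $[0,1]$ with its square root.

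First, I split $\cH$ at the threshold $4\eta$ that already appears in the hypothesis on $\tau$.  For $h$ with $P(h)\ge4\eta$, the assumption gives $\tau(h)\le4\eta/P(h)$, so that
\[
  V(h)P(h)\tau(h)\ \le\ V(h)\cdot4\eta\ \le\ 4\eta ,
\]
using only $V(h)\le1$.  For $h$ with $P(h)<4\eta$ I use nothing beyond $V,\tau\le1$, which gives $V(h)P(h)\tau(h)\le P(h)<4\eta$.  In both cases the summand is at most $4\eta$; equivalently, $V(h)P(h)\tau(h)\le V(h)\min\{P(h),4\eta\}$.

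Second, I sum over the at most $K$ elements of $\cH$, which gives $\sum_hV(h)P(h)\tau(h)\le4\eta K$.  The hypothesis $K^{-1}\ge4\eta$ says exactly that $x\eqdef4\eta K\in[0,1]$.  On that interval $x\le\sqrt x\le2\sqrt x$, and this is the stated bound.

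There is no real obstacle here.  The only point to note is that the normalization $\sum_hV(h)P(h)=1$ is not used in this route; it is in the statement because the downstream applications, such as \Cref{lem:tv}, produce it.  If the square-root form is meant to survive later uses with $K$ replaced by something larger, a fallback is available.  Split at a general threshold $\theta\ge4\eta$: the mass above $\theta$ contributes at most $4\eta/\theta$ by the normalization, and the mass below contributes at most $K\theta$.  Optimizing at $\theta=\sqrt{4\eta/K}$ gives $2\sqrt{4\eta K}$ directly.  I would record this second argument as a remark, since it explains the form of the constant.
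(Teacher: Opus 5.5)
Your argument is correct, and it takes a different and shorter route than the paper's.

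The paper splits $\cH$ at the balanced threshold $\theta=\sqrt{4\eta/K}$, which is at least $4\eta$ precisely because $K^{-1}\ge4\eta$. Elements with $P(h)<\theta$ each contribute at most $P(h)<\theta$, so together at most $K\theta$. For elements with $P(h)\ge\theta$ the paper uses the normalization, $\sum_{h\ \mathrm{heavy}}V(h)\le\theta^{-1}\sum_hV(h)P(h)=\theta^{-1}$, so they contribute at most $4\eta/\theta$. Equating the two terms gives $2\sqrt{4\eta K}$.

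You instead split at $4\eta$, bound every summand by $4\eta$, and count the summands. This never uses $\sum_hV(h)P(h)=1$. It also proves the stronger estimate $4\eta K$, which is the square of $\sqrt{4\eta K}$. The reason is that in the lemma's regime $4\eta K\le1$, the trivial count $\sum_hV(h)\le K$ beats the normalization bound $\theta^{-1}=\sqrt{K/(4\eta)}$. So the threshold optimization gains nothing here.

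Fed into \Cref{lem:tv} and \Cref{thm:interactive}, your bound would improve the final errors from $4m\,2^{-3\lambda/2}$ and $4N\,2^{-3\lambda/2}$ to $4m\,2^{-3\lambda}$ and $4N\,2^{-3\lambda}$.

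Your fallback remark is exactly the paper's proof. It explains the shape of the stated constant, but it does not protect you when $K$ is larger. It needs $\theta\ge4\eta$, i.e.\ $4\eta K\le1$, to apply the hypothesis on $\tau$ to the heavy elements. Your main bound $4\eta K$ holds with no such restriction.
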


\begin{proof}
Put $\theta\eqdef\sqrt{4\eta/K}$.  Then $\theta\ge4\eta$, since
$\theta^2=4\eta/K\ge16\eta^2$ by hypothesis.  Split the sum at $\theta$.

A \emph{light} $h$, one with $P(h)<\theta$, contributes at most $V(h)P(h)\tau(h)\le P(h)<\theta$,
using only $V,\tau\le1$.  There are at most $K$ elements in all, so the light ones contribute at most
$K\theta$ together.

A \emph{heavy} $h$, one with $P(h)\ge\theta$, has $P(h)\ge4\eta$, so the hypothesis applies to it,
and the heavy ones contribute
\[
  \sum_{h\ \mathrm{heavy}}V(h)P(h)\cdot\frac{4\eta}{P(h)}\ =\ 4\eta\sum_{h\ \mathrm{heavy}}V(h)
  \ \le\ \frac{4\eta}\theta\sum_{h\ \mathrm{heavy}}V(h)P(h)\ \le\ \frac{4\eta}\theta .
\]
The choice of $\theta$ makes the two ranges equal, each being $\sqrt{4\eta K}$, and adding them gives
the bound.
\end{proof}

\begin{theorem}[No one-way puzzles relative to $\cO$]
\label{thm:no-owpuzz}
Fix any choice of $\{H_\lambda\}$. For every candidate one-way puzzle $(\Samp^{\cO},\Ver)$ satisfying correctness there is a polynomial-time adversary $\cA$ (a \emph{classical} algorithm making classical queries to $\Count$) such that
\[
  \Pr_{(k,s)\gets\Samp(1^\lambda)}\big[\Ver\big(\cA(1^\lambda,s),s\big)=1\big]\ \ge\ 1-\negl(\lambda).
\]
In particular $\OWPuzz$s do not exist relative to $\cO$, including inefficiently verifiable ones, for every fixing of the oracle's randomness.
\end{theorem}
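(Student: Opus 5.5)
The attack samples a key from the true conditional law of keys given the puzzle, one bit at a time, with every conditional probability read off the counting oracle through \Cref{lem:pe-easy}.

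Fix $\{H_\lambda\}$ and a candidate $(\Samp^{\cO},\Ver)$ satisfying correctness. Assume, after padding with a standard self-delimiting encoding, that $\Samp(1^\lambda)$ outputs $(k,s)$ with $k\in\{0,1\}^m$ for a polynomial $m$. For each prefix $k_{\le i}$, define the clocked sampler $C_{s,k_{\le i}}$ that runs $\Samp(1^\lambda)$ and outputs the indicator that its output has puzzle $s$ and key prefix $k_{\le i}$. Equivalently, output probabilities of the marginal sampler ``run $\Samp$, output $(s,k_{\le i})$'' can be queried directly.

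Write $P(s,k_{<i})=\Pr[\Samp\text{ outputs puzzle }s\text{ with key prefix }k_{<i}]$. Then $p_y=P(s,k_{<i}y)$ for $y\in\{0,1\}$ satisfy $p_0+p_1=P(s,k_{<i})$. The true conditional law of the next key bit is $p_y/(p_0+p_1)$.

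The adversary $\cA(1^\lambda,s)$ proceeds as follows, for $i=1,\dots,m$:
\begin{itemize}
\item it obtains $t$-bit truncations $\tilde p_0,\tilde p_1$ via \Cref{lem:pe-easy}, which is exact to within $\eta=2^{-t}$ from below by \Cref{def:bit};
\item it samples $k_i$ from the rounded ratio law $\tilde\pi$ of \Cref{lem:ratio}, outputting $0$ if $\tilde S=0$.
\end{itemize}
All queries are classical, the procedure is a classical polynomial-time algorithm, and the rank bound $J$ is read off the clocked description of $\Samp$. It never runs $\Ver$.

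To compare with the honest law, run a hybrid over $i=0,\dots,m$. In hybrid $i$, the first $i$ key bits come from the honest conditional law and the remaining bits come from $\cA$'s rule. By \Cref{lem:ratio}, the step from hybrid $i$ to hybrid $i+1$ changes the joint law of $(s,k)$ in total variation by at most $\sum_h P(h)\,\tau(h)$. Here $h=(s,k_{<i})$ ranges over histories, and $\tau(h)\le\TV(\pi,\tilde\pi)\le 4\eta/P(h)$ whenever $P(h)\ge 4\eta$, and $\tau(h)\le 1$ otherwise.

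This is exactly the setting of \Cref{lem:threshold} with $V\equiv1$ on histories of polynomial length. The number of possible $(s,k_{<i})$ is at most $K=2^{\ell}$ with $\ell=\poly(\lambda)$ the output length. The lemma bounds each step by $2\sqrt{4\eta K}$. Choosing $t=\ell+2\lambda+2\log m+O(1)$ makes the total $m\cdot2\sqrt{4\cdot2^{-t}2^{\ell}}\le2^{-\lambda}$.

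So $\TV\big((s,\cA(s)),(s,k)\big)\le2^{-\lambda}$. Since $\Ver(\cdot,s)$ is a fixed function of the pair, its acceptance probability changes by at most that amount. Correctness then gives acceptance probability at least $1-\negl(\lambda)-2^{-\lambda}$.

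The main obstacle is the bookkeeping: the number of histories is exponential, and the estimator is inaccurate precisely on low-probability histories. \Cref{lem:threshold} resolves this, provided $t$ is taken large relative to the total output length $\ell$ rather than relative to $m$. The other obstacle is checking that each $C_{s,k_{\le i}}$ is an admissible argument to $\Count$. I would get this by applying $\mathsf{Norm}_{J}$ with $J$ bounding the ranks $\Samp$ writes, and querying at rank $J+1$, exactly as in the proof of \Cref{lem:pe-easy}.

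Because none of this depends on how $H_\lambda$ is fixed, the conclusion holds for every fixing.
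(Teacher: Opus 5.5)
Your proposal is correct and follows the paper's proof in \Cref{app:attack} essentially step for step. Both use bit-by-bit conditional sampling from $t$-bit truncations of the prefix samplers normalized to rank $J+1$, a hybrid over the $m$ key bits, and \Cref{lem:ratio} with \Cref{lem:threshold} at $V\equiv1$ to bound each step. The one difference is in the parameters: you take $K=2^{\ell}$ with $\ell$ the total output length and $t=\ell+2\lambda+2\log m+O(1)$, whereas the paper takes $K=2^{\ell+j}$ with $\ell$ the puzzle length and $t=m+\ell+3\lambda$. Both choices satisfy $K^{-1}\ge4\eta$ and give total error at most $2^{-\lambda}$.
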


\Cref{app:attack} gives the proof.  The adversary samples the key bit by bit, from an estimate of the
true conditional distribution of keys given the puzzle, using \Cref{lem:pe-easy} on the prefix
samplers ``run $\Samp$ and output $(s,k_1\cdots k_i)$''.  Because the estimates are exponentially
accurate, the joint distribution of $(s,\cA(s))$ lies within $2^{-\lambda}$ of the honest one, and so
the unbounded verifier accepts with probability $1-\negl(\lambda)$.  Notice that the attack never
runs $\Ver$; that is why we need no assumption on the verifier.

\begin{corollary}[Consequences for other primitives]
\label{cor:collateral}
Relative to $\cO$, the following do not exist: \emph{multi-copy} pseudorandom states and pseudorandom unitaries, pure-output one-way state generators, and quantum money mini-schemes with pure banknotes (each implies $\OWPuzz$s~\cite{KT24,BJ24,KT25,MY22}). Moreover, since the adversary of \Cref{thm:no-owpuzz} is a classical polynomial-time algorithm making classical queries, even \emph{classically-secure} $\OWPuzz$s do not exist relative to $\cO$.
\end{corollary}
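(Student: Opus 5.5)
The corollary has two halves, and I would prove them separately, each time leaning on \Cref{thm:no-owpuzz}.

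For the first half I argue by contraposition. Suppose one of the listed primitives exists relative to $\cO$: multi-copy PRS or PRU, pure-output $\OWSG$s, or mini-scheme quantum money with pure banknotes. The cited constructions \cite{KT24,BJ24,KT25,MY22} turn it into a one-way puzzle. For each one I would check three things:
\begin{itemize}
\item the sampler of the resulting puzzle is a $\QPT$ algorithm calling the primitive's generator, so it is a $\QPT^{\cO}$ sampler with classical output;
\item the verifier may be inefficient, which \Cref{def:owpuzz} permits;
\item the security reduction is black-box, so it relativizes to $\cO$.
\end{itemize}
The resulting pair $(\Samp^{\cO},\Ver)$ satisfies correctness. \Cref{thm:no-owpuzz} then supplies an adversary breaking it, and pulling this back through the reduction breaks the original primitive.

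The links run as follows. PRU give PRS by applying the unitary to $\ket0$. Multi-copy PRS give pure-output $\OWSG$s. Pure-output $\OWSG$s give $\OWPuzz$s \cite{KT24}. Pure-banknote money gives $\OWSG$s or $\OWPuzz$s via \cite{BJ24,KT25,MY22}.

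The main obstacle is confirming that each reduction is relativizing in the sense needed. Concretely, the adversary of \Cref{thm:no-owpuzz} must be usable by the primitive's attacker as a subroutine. That adversary is a classical polynomial-time algorithm making classical queries, so it is in particular $\QPT^{\cO}$, and any black-box reduction that invokes a puzzle-breaker as an oracle-aided $\QPT$ subroutine goes through. I would also check that the reductions need only polynomially many copies. This is why the statement says ``multi-copy'': single-copy PRS do not imply $\OWPuzz$s this way.

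The second half is immediate from the form of the attack. A classically-secure $\OWPuzz$ only asks for security against classical polynomial-time adversaries, here with classical oracle access. The adversary of \Cref{thm:no-owpuzz} is exactly such an algorithm, and it succeeds with probability $1-\negl(\lambda)$ against every correct candidate, for every fixing of $\{H_\lambda\}$.
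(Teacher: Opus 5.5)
Your proposal is correct and follows the paper's route. The paper also composes the cited black-box, relativizing implications from each listed primitive to $\OWPuzz$s (for instance $\text{PRS}\Rightarrow\text{pure }\OWSG\Rightarrow\OWPuzz$) with \Cref{thm:no-owpuzz}, and it reads off the classically-secure case from the fact that the attacker there is a classical polynomial-time algorithm making classical queries, for every fixing of $\{H_\lambda\}$.
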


\paragraph{The scope of these inferences.}  The word ``multi-copy'' in \Cref{cor:collateral} is
essential, for the following reason.  By~\cite[Thm.~1.1]{MetaEFI25}, EFI pairs
exist if and only if non-uniform $\OnePRS$ with advice size $O(\log\lambda)$ exist.  Since
\Cref{thm:separation-source} keeps an EFI pair in its access model, non-uniform $\OnePRS$ with
$O(\log\lambda)$ advice must survive relative to $\cO$ in that same model.  This is
consistent, because multi-copy PRS imply $\OWPuzz$s and are removed for that reason, whereas
single-copy pseudorandomness is not known to imply them, and here does not.  It follows that $\cO$
separates multi-copy from non-uniform single-copy pseudorandomness, which is the separation
of~\cite{CCS24}, in the classical-query model of \Cref{model:access}.

That inference is conditional on the equivalence of~\cite{MetaEFI25} relativizing to this access
model, in which the $\OnePRS$ it yields is secure.  The other half, by contrast, is robust.  The
nonexistence of multi-copy PRS survives the model restriction, since the chain
$\text{PRS}\Rightarrow\text{pure }\OWSG\Rightarrow\OWPuzz$ turns our classical-query $\OWPuzz$
attacker into a classical-query PRS attacker, while the candidate generator remains free to query
$\Count$ coherently, which \Cref{def:oracle} permits.

Two further consequences carry conditions of the same kind.  Quantum lightning does not survive
$\cO$ for the standard notion, since lightning yields a quantum money mini-scheme~\cite[\S3.2]{Zha19}
and mini-schemes imply $\OWPuzz$s~\cite[Cor.~1.1]{KT25}, and both links relativize.  The mini-scheme
of~\cite{KT25} has a \emph{pure} banknote, however, so the mixed-banknote notion of~\cite{AarChr12}
is not covered, and \Cref{sec:discussion} returns to that gap.  Unconditional, on the other hand, is
the statement that $\cO$ admits an EFI pair and no classical-output quantum advantage, by
\Cref{thm:EFI-security} together with \Cref{cor:no-poq}; \Cref{sec:interactive} compares it with the
route through~\cite[Thm.~1.1]{MSY25}.

\subsection{Nonexistence of \texorpdfstring{$\QEFID$}{QEFID} pairs}
\label{sec:no-qefid}

The attack of \Cref{thm:no-owpuzz} removes the classical-output \emph{one-wayness} in the $\OWPuzz$
definition, and the same oracle removes classical-output \emph{indistinguishability} as well.

The qualitative conclusion here is not new, so we first state the result that already implies it.
By \cite[Lem.~8]{CGG24}, restated as~\cite[Thm.~6.3]{BMM24}, $\QEFID$ pairs imply $\OWPuzz$s in a
black-box manner, so \Cref{thm:no-owpuzz} \emph{already} rules out $\QEFID$ pairs relative
to $\cO$ through that implication.  What the direct argument below adds is quantitative.  Its
distinguisher is classical polynomial time with classical queries; its advantage is the full
statistical distance up to $2^{-\lambda}$, rather than merely non-negligible; the conclusion holds
for every fixing of $\{H_\lambda\}$; and no appeal to the relativization of~\cite{CGG24} is needed.
The proposition says that relative to $\cO$ the computational and the statistical theories of
quantum-samplable classical distributions coincide.

\begin{proposition}[Statistical and computational indistinguishability coincide for classical-output samplers]
\label{prop:no-qefid}
Fix any choice of $\{H_\lambda\}$. Let $S_0,S_1$ be $\QPT$ oracle samplers with classical outputs of
length $\ell=\ell(\lambda)=\poly(\lambda)$, and write $\cD_b$ for the distribution of
$S_b^{\cO}(1^\lambda)$, suppressing $\lambda$. There is a classical polynomial-time adversary $\mathsf D$, making
$O(\ell+\lambda)$ classical queries to $\Count$, whose
distinguishing advantage between $\cD_0$ and $\cD_1$ satisfies, for all sufficiently large
$\lambda$,
\[
  \Adv(\mathsf D)\ \ge\ \TV\big(\cD_0,\cD_1\big)-2^{-\lambda} .
\]
\end{proposition}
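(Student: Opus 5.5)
The plan is to implement the optimal likelihood-ratio test, computing the two likelihoods with the counting oracle through \Cref{lem:pe-easy}, and to charge the estimation error only to outputs on which the test can err. On input a sample $x\in\{0,1\}^\ell$, the distinguisher $\mathsf D$ first constructs the normalized descriptions $\widehat{(S_0)}_\lambda$ and $\widehat{(S_1)}_\lambda$ from the proof of \Cref{lem:pe-easy}. It reads the rank bounds off their explicit step bounds, which is where the single integer per sampler is used. It then queries $t$ bits of $p_0(x)\eqdef\Pr[\cD_0=x]$ and of $p_1(x)\eqdef\Pr[\cD_1=x]$, obtaining truncations $\tilde p_0(x),\tilde p_1(x)$. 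It outputs $0$ if $\tilde p_0(x)>\tilde p_1(x)$ and $1$ otherwise. This is classical polynomial-time work plus $2t$ classical queries, and we will take $t=\ell+\lambda+1$, so the query count is $O(\ell+\lambda)$ as claimed.

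For the analysis, let $A\eqdef\{x: p_0(x)>p_1(x)\}$, so that the ideal test ``output $0$ iff $x\in A$'' has advantage exactly $\sum_{x\in A}(p_0(x)-p_1(x))=\TV(\cD_0,\cD_1)$. Let $\widehat A$ be the acceptance region of $\mathsf D$. Then
\[
  \Adv(\mathsf D)\ \ge\ \sum_{x\in\widehat A}\big(p_0(x)-p_1(x)\big)\ =\ \TV(\cD_0,\cD_1)-\sum_{x\in A\triangle\widehat A}\big|p_0(x)-p_1(x)\big| .
\]
By \Cref{def:bit}, each truncation satisfies $0\le p_b-\tilde p_b\le2^{-t}$. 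Hence any $x$ on which the estimated order disagrees with the true order, ties included, has $|p_0(x)-p_1(x)|\le2^{-t}$. Since there are at most $2^\ell$ strings, the error sum is at most $2^{\ell-t}=2^{-\lambda-1}$, which gives the stated bound with room to spare. No threshold accounting as in \Cref{lem:threshold} is needed here, because the test is deterministic given the estimates and the error is additive per string.

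The only step needing care is that the estimates are legitimate oracle answers: the samplers $S_b$ may query $\cO$, including $\Count$ at ranks chosen at runtime, so they must be normalized to admissible arguments of one layer. This is exactly what \Cref{lem:pe-easy} provides via $\mathsf{Norm}_J$ and the rank bound $J_{S_b}(\lambda)$. That lemma also guarantees that the output distribution is preserved exactly and that success holds with probability $1$ for every fixing of $\{H_\lambda\}$. The remaining point is bookkeeping: query length stays polynomial because $k\le t$ is unary and the rank tags are unary and polynomial. The ``sufficiently large $\lambda$'' clause only absorbs the finitely many small $\lambda$ at which the polynomial bounds on $\ell$ and on the step bounds have not yet taken effect.
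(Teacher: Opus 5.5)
Your proposal is correct and follows essentially the same route as the paper: implement the likelihood-ratio test from $t$-bit truncations supplied by \Cref{lem:pe-easy}, and charge the loss only to the at most $2^\ell$ strings on which the estimated and true orders can disagree. The only difference is cosmetic. The paper compares with a margin $\eta=2^{-t}$, bounds the disagreement set by $0<p_1-p_0\le2\eta$, and takes $t=\ell+\lambda+2$. You observe instead that both truncations err in the same (downward) direction, so the estimated difference is off by at most $\eta$; this lets you drop the margin and save one bit of precision.
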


\begin{proof}
An adversary that knew the two output probabilities exactly would apply the likelihood-ratio test and
would achieve the full statistical distance.  Our adversary knows them only to inverse-exponential accuracy,
and the proof shows that this costs it almost nothing.  Write
\[
  p_b(x)\ \eqdef\ \Pr\big[S_b^{\cO}(1^\lambda)=x\big],\qquad b\in\{0,1\},\ x\in\{0,1\}^{\ell},
\]
for those probabilities.  The exact test outputs $1$ precisely when $p_1(x)>p_0(x)$, and its advantage
is
\begin{equation}
  \sum_x\max\big\{p_1(x)-p_0(x),\,0\big\}\ =\ \TV\big(\cD_0,\cD_1\big),
  \label{eq:lrt}
\end{equation}
which no function of $x$ improves on.

Now fix the precision $t\eqdef\ell+\lambda+2$ and write $\eta\eqdef2^{-t}$.  On input $x$ the
adversary applies \Cref{lem:pe-easy} to each of $S_0$ and $S_1$, obtaining truncations
$\widetilde p_0(x)$ and $\widetilde p_1(x)$ that satisfy
\[
  0\ \le\ p_b(x)-\widetilde p_b(x)\ \le\ \eta\qquad(b=0,1),
\]
and it outputs
\[
  g(x)\ \eqdef\ \mathbf 1\big[\,\widetilde p_1(x)>\widetilde p_0(x)+\eta\,\big].
\]
The margin $\eta$ in the comparison is there to absorb the truncation error.  Computing the two truncations takes $2t=O(\ell+\lambda)$ classical queries, and we read the single
integer bounding the ranks of both samplers off their clocked descriptions, exactly as in
\Cref{lem:pe-easy}.

We will show that $g$ and the exact rule $g^*(x)\eqdef\mathbf 1[p_1(x)>p_0(x)]$ can disagree only
when the two probabilities lie within $2\eta$ of each other.  Suppose first that
$p_1(x)>p_0(x)+2\eta$.  Then
\[
  \widetilde p_1(x)\ \ge\ p_1(x)-\eta\ >\ p_0(x)+\eta\ \ge\ \widetilde p_0(x)+\eta ,
\]
so $g(x)=1=g^*(x)$.  Suppose instead that $p_1(x)\le p_0(x)$.  Then
\[
  \widetilde p_1(x)\ \le\ p_1(x)\ \le\ p_0(x)\ \le\ \widetilde p_0(x)+\eta ,
\]
so $g(x)=0=g^*(x)$.  The two rules agree, then, outside the set
\[
  B\ \eqdef\ \big\{\,x\ :\ 0<p_1(x)-p_0(x)\le2\eta\,\big\},
\]
as claimed.

That leaves $B$.  Each $x\in B$ costs at most $|p_1(x)-p_0(x)|\le2\eta$ of advantage
relative to the exact rule, and $B$ contains at most $2^{\ell}$ points.  Subtracting this loss from
\eqref{eq:lrt},
\[
  \Adv(\mathsf D)\ \ge\ \TV\big(\cD_0,\cD_1\big)\ -\ 2^{\ell}\cdot2\eta
  \ =\ \TV\big(\cD_0,\cD_1\big)-2^{\ell+1-t}
  \ =\ \TV\big(\cD_0,\cD_1\big)-2^{-\lambda-1},
\]
which is the stated bound.
\end{proof}

\begin{corollary}[Nonexistence of $\QEFID$ pairs and of classical-output EFI pairs]
\label{cor:no-qefid}
Relative to $\cO$, and for every fixing of $\{H_\lambda\}$, the following hold.
\begin{enumerate}[label=(\roman*),leftmargin=2.2em]
\item $\QEFID$ pairs do not exist, and the attacker is a classical polynomial-time algorithm making
  classical queries, so classically-secure $\QEFID$ pairs do not exist either.
\item No EFI pair relative to $\cO$ has both states diagonal in a common basis reachable by a $\QPT$
  isometry whose $\Count$-queries are made on registers in computational-basis states and which
  retains all its registers, so that it preserves trace distance. In particular the pair of \Cref{thm:EFI-security}
  cannot be replaced by any classical ensemble.
\end{enumerate}
\end{corollary}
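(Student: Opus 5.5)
Part (i) follows from \Cref{prop:no-qefid}; part (ii) reduces to part (i).

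For (i), a $\QEFID$ pair relative to $\cO$ is a pair of $\QPT^{\cO}$ samplers $S_0,S_1$ with classical outputs of polynomial length $\ell$. Their distributions satisfy $\TV(\cD_0,\cD_1)\ge\delta(\lambda)$ for an inverse polynomial $\delta$ and are claimed computationally indistinguishable. I will apply \Cref{prop:no-qefid} to this pair, for whatever fixing of $\{H_\lambda\}$ is in force. It yields a classical polynomial-time distinguisher making $O(\ell+\lambda)$ classical queries to $\Count$, with advantage at least $\delta(\lambda)-2^{-\lambda}$. That advantage is non-negligible, which contradicts computational indistinguishability. The distinguisher is classical and its queries are classical, so it also breaks the classically-secure variant. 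The one point to check is that the rank bound used by \Cref{lem:pe-easy} comes from the clocked descriptions of $S_0,S_1$ alone. That holds by the remark in the proof of \Cref{prop:no-qefid}, so the attack is uniform and needs no advice.

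For (ii), suppose an EFI pair $(\rho_0,\rho_1)$ relative to $\cO$ has both states diagonal in a basis $\{V^\dagger\ket x\}$. Here $V$ is implemented by a $\QPT$ isometry whose $\Count$-queries act on computational-basis registers and which keeps all of its registers. I will turn the pair into a $\QEFID$ pair. Let $S_b$ run the generator $G(1^\lambda,b)$, apply $V$, and measure in the computational basis. Since $\rho_b$ is diagonal in this basis, $V\rho_bV^\dagger$ is diagonal in the computational basis. So the measurement returns a classical distribution $\cD_b$ with $\TV(\cD_0,\cD_1)=\td(V\rho_0V^\dagger,V\rho_1V^\dagger)$. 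This equals $\td(\rho_0,\rho_1)$, because $V$ is an isometry and no register is discarded. Farness of the EFI pair therefore gives $\TV(\cD_0,\cD_1)\ge\delta$. The samplers $S_b$ are $\QPT^{\cO}$, since $G$ and $V$ are and $V$'s $\Count$-queries are legitimate oracle calls of the kind \Cref{def:oracle} allows.

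By part (i), or directly by \Cref{prop:no-qefid}, some distinguisher tells $\cD_0$ from $\cD_1$ with non-negligible advantage. Composing it with "apply $V$, measure" gives a $\QPT^{\cO}$ distinguisher for $(\rho_0,\rho_1)$ with the same advantage. This breaks computational indistinguishability. The final clause about the pair of \Cref{thm:EFI-security} is the special case of a trivial $V$, where the ensemble is already classical.

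The main obstacle is the trace-distance-preservation step in (ii), together with checking that the access model is respected. The resulting distinguisher queries $\Count$ only through $V$ and through the classical queries of \Cref{prop:no-qefid}, and it only queries on computational-basis registers. So it falls inside the class against which the pair is claimed secure. The hypothesis that $V$ queries on basis-state registers is needed here: it makes $V$'s calls classical, so the composed attacker stays classical-query.
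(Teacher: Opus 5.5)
Your proposal is correct and follows the paper's proof essentially step for step. Part~(i) applies \Cref{prop:no-qefid} with farness $\delta$ to get advantage $\delta-2^{-\lambda}$. Part~(ii) builds the samplers ``run $G$, apply $V$, measure'', uses that measuring in a common diagonalizing basis preserves trace distance, and notes that the composed attacker makes only classical queries and so lies in the class of \Cref{model:access}; the final clause is the case $V=\Id$, exactly as in the paper.
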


\begin{proof}
\emph{(i)} Suppose $(\cD_0,\cD_1)$ is a $\QEFID$ pair with samplers $S_0,S_1$ and farness
$\TV(\cD_0,\cD_1)\ge\delta(\lambda)$ for an inverse polynomial $\delta$.  Then \Cref{prop:no-qefid}
gives a distinguisher of advantage $\delta(\lambda)-2^{-\lambda}$, which is not negligible.  That
distinguisher is a classical polynomial-time algorithm making classical queries, so classically-secure
pairs are excluded by the same computation.  (Alternatively, and without the quantitative conclusion,
combine \cite[Lem.~8]{CGG24} with \Cref{thm:no-owpuzz}.)

\emph{(ii)} Suppose $(\rho_{0,\lambda},\rho_{1,\lambda})$ is an EFI pair relative to $\cO$, and let
$V_\lambda$ be a $\QPT$ isometry that diagonalizes both branches, in the sense that
$V_\lambda\rho_{b,\lambda}V_\lambda^\dagger$ is diagonal in the computational basis for both $b$.  We
ask of $V_\lambda$ only that it query $\Count$ on registers in computational-basis states, and that it
retain all of its registers.

Let $S_b$ run the EFI generator on branch $b$, apply $V_\lambda$, and measure.  Then $S_0,S_1$ are
$\QPT^\cO$ samplers with classical output.  A measurement in a basis that diagonalizes both states
preserves trace distance, so
\[
  \TV(\cD_0,\cD_1)\ =\ \td(\rho_{0,\lambda},\rho_{1,\lambda})\ \ge\ \delta(\lambda) ,
\]
so part~\emph{(i)} applies.  The distinguisher that applies $V_\lambda$, measures, and
then runs the adversary of \Cref{prop:no-qefid} makes only classical queries, so it lies in the class
of \Cref{model:access} against which the pair is secure, and its advantage is not negligible.  The second assertion of~\emph{(ii)} is the case $V_\lambda=\Id$.  A classical ensemble that was an EFI
pair relative to $\cO$ would be a pair of $\QPT^\cO$-samplable classical distributions, statistically
far apart and yet indistinguishable by every distinguisher of \Cref{model:access} -- a class that
contains the attacker of part~\emph{(i)}.
\end{proof}

\subsection{Interactive protocols with classical messages}
\label{sec:interactive}

\Cref{thm:no-owpuzz} and \Cref{prop:no-qefid} remove classical-output hardness from two fixed
experiments, one of search type and one of decision type.  The same estimator, however, removes it
from every experiment of that kind at once, interactive ones included, and that is what we prove
here.  Relative to $\cO$, a quantum polynomial-time party has no advantage over a classical one,
provided that everything the party sends and receives is classical.

\begin{model}[Interactive party with classical messages]
\label{model:interactive}
A \emph{quantum party} $\mathsf Q$ is a $\QPT^{\cO}$ algorithm carrying an internal quantum state
across $R=\poly(\lambda)$ rounds. In round $j$ it receives a classical message $c_j$, applies a
$\QPT^{\cO}$ channel to its state together with that message, and measures a designated register to
produce the classical message $d_j$, which it sends. Its oracle calls inside a round may be coherent.
A \emph{counterparty} $\mathsf V$ is any process, of unbounded computational power and with an
internal state of any kind, whose messages are classical and whose $c_j$ is produced from
$c_1,d_1,\dots,c_{j-1},d_{j-1}$ together with its own private randomness.  The two parties share no
initial entanglement and no correlated setup, and $\mathsf V$ may itself query $\cO$.  Truncating
each incoming message to the prefix that $\mathsf Q$ reads leaves $\mathsf Q$'s behavior unchanged, so
we take all messages to have fixed polynomial lengths.  We write $A$ for the total number of incoming
bits $\mathsf Q$ reads; its clock bounds that number, which depends on $\mathsf Q$ alone.
We write $N$ for the total length of $\mathsf Q$'s messages. We write $\cT(\mathsf Q,\mathsf V)$ for the
distribution of the full transcript $(c_1,d_1,\dots,c_R,d_R)$.
\end{model}

\begin{theorem}[Interactive collapse]
\label{thm:interactive}
Fix any choice of $\{H_\lambda\}$. For every quantum party $\mathsf Q$ of \Cref{model:interactive}
there is a \emph{classical} probabilistic polynomial-time party $\widehat{\mathsf Q}$, making
$\poly(\lambda)$ classical queries to $\Count$, such that
for \emph{every} counterparty $\mathsf V$ and all sufficiently large $\lambda$,
\[
  \TV\big(\cT(\widehat{\mathsf Q},\mathsf V),\ \cT(\mathsf Q,\mathsf V)\big)\ \le\ 2^{-\lambda}.
\]
The bound is uniform in $\mathsf V$: neither a bound on its complexity nor access to its description
is used anywhere.
\end{theorem}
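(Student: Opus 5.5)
The plan is to make $\widehat{\mathsf Q}$ produce $\mathsf Q$'s outgoing messages one bit at a time. Each bit is sampled from an estimate of its true conditional law given the transcript so far, and the estimate is obtained from $\Count$ through \Cref{lem:pe-easy}. The error is then charged with \Cref{lem:ratio} locally and with \Cref{lem:threshold} globally, as in the puzzle attack of \Cref{app:attack}.

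\textbf{Step 1: the prefix samplers.} Fix a partial history consisting of incoming messages $c_1,\dots,c_j$ and a prefix $d_{\le i}$ of $\mathsf Q$'s outgoing bits, where $i$ counts outgoing bits and $j$ is the round containing bit $i$. Let $C_{c,i}$ be the clocked sampler that has $c_1,\dots,c_j$ hardcoded, runs $\mathsf Q$ feeding it $c_j$ at the start of round $j$, and outputs the first $i$ bits $\mathsf Q$ sends.

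Because $\mathsf Q$ processes round $j$ only after receiving $c_j$, the output law of $C_{c,i}$ on $d_{\le i}$ does not depend on which later messages are hardcoded. Write this law as $P(d_{\le i}\mid c)$. Its description is polynomial in $\lambda$, and $\mathsf Q$'s step bound yields a rank bound $J$, so \Cref{lem:pe-easy} applied to $\mathsf{Norm}_J(C_{c,i})$ returns $2^{-t}$-truncations with $t$ classical queries. Coherent calls inside $\mathsf Q$ are harmless here, since they live inside the sampler.

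$\widehat{\mathsf Q}$ then samples the next bit $y$ from the rounded ratio of the truncations of $P(d_{\le i}y\mid c)$, $y\in\{0,1\}$, as in \Cref{lem:ratio}. After the last bit of a round it sends the message and waits for $c_{j+1}$.

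\textbf{Step 2: the factorization.} Since there is no shared entanglement or setup, and $\mathsf V$'s messages depend only on the past transcript and its private randomness, every transcript $h$ has probability
\[
\Pr_{\cT(\mathsf Q,\mathsf V)}[h]=V(h)\,P(d\mid c),
\]
where $V(h)$ is the product of $\mathsf V$'s conditional message probabilities. This is a causal-conditioning identity. I will prove it by induction on rounds, writing $\mathsf Q$'s state after each measurement as a subnormalized conditional state. It is the one place where the model's ``no correlated setup'' hypothesis is used.

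\textbf{Step 3: the hybrid.} Consider hybrids $\mathsf Q_i$ that use the estimator for the first $i$ outgoing bits and the true $\mathsf Q$ conditionals afterwards. Consecutive hybrids differ by
\[
\sum_h V(h)\,P(h)\,\tau(h),
\]
where the sum ranges over histories $h$ at step $i$. Here $P(h)=P(d_{\le i}\mid c)$ satisfies $\sum_h V(h)P(h)=1$, and $\tau(h)$ is the total-variation error at that step. By \Cref{lem:ratio}, $\tau(h)\le 4\eta/P(h)$ whenever $P(h)\ge4\eta$. The histories number at most $K=2^{A+N}$ (this uses the fixed-length convention), so \Cref{lem:threshold} bounds each step by $2\sqrt{4\eta K}$. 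Summing over $N$ steps gives total error at most
\[
2N\sqrt{4\cdot2^{A+N-t}},
\]
which is at most $2^{-\lambda}$ for $t=A+N+2\lambda+2\log_2 N+O(1)$. This $t$ is polynomial, so the running time and the number of queries stay polynomial.

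Nothing about $\mathsf V$ enters the choice of $t$ or the simulator, which gives the claimed uniformity in $\mathsf V$. The adaptive interleaving with $\mathsf V$ does not break the telescoping, because $\mathsf V$'s kernel is identical in every hybrid.

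\textbf{Main obstacle.} I expect the main obstacle to be Step 2. The difficulty is justifying rigorously that a quantum $\mathsf Q$ with persistent internal state, mid-round measurements and coherent oracle calls induces a well-defined causal strategy $P(\cdot\mid c)$ that is realized by the hardcoded-message samplers. The identity must hold even though later incoming messages are adaptively chosen by an unbounded $\mathsf V$. I would handle this by writing $\mathsf Q$ as a sequence of instruments and showing that the joint law is the composition of those instruments with $\mathsf V$'s kernels.
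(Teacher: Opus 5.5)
Your approach is the paper's: the hardcoded-message prefix samplers normalized by $\mathsf{Norm}_J$, the factorization $\Pr[h]=V(h)P(d\mid c)$ in which the counterparty's factor cancels from each conditional, \Cref{lem:ratio} applied bit by bit, and \Cref{lem:threshold} with $K=2^{A+N}$ at each hybrid step. Step~3, however, contains a concrete error: your hybrids are ordered the wrong way round for the bound you then claim.

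You let $\mathsf Q_i$ use the estimator on the first $i$ outgoing bits and the true conditionals afterwards. Then $\mathsf Q_i$ and $\mathsf Q_{i+1}$ share the continuation, so by data processing their distance is at most $\E_h[\tau(h)]$, where $h$ is the history before bit $i+1$. That history is generated by the estimator together with $\mathsf V$. Its law is therefore $V(h)\widetilde P(h)$, with $\widetilde P$ the product of the simulator's rounded ratios, and not $V(h)P(h)$.

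\Cref{lem:ratio} controls $\tau(h)$ only through the true mass, $\tau(h)\le4\eta/P(h)$. The ratio $\widetilde P(h)/P(h)$ is not bounded pointwise. For example, when both truncations at a low-mass prefix vanish, the simulator's next bit is fixed whatever the true conditional is. So the weights in the sum you actually obtain are not the masses that appear in the per-history estimate, and \Cref{lem:threshold} does not apply to it.

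The repair is to reverse the chain, as the paper does here and as \Cref{app:attack} does for the puzzle attack you cite. Let the $i$-th hybrid draw the first $i$ outgoing bits from $\mathsf Q$'s true conditionals and the remaining ones from $\widetilde\pi$. Adjacent hybrids then share the continuation kernel, namely the simulator's rule together with $\mathsf V$'s unchanged kernel. The history before the switched bit is then distributed exactly as in the real interaction, with law $V(h)P(h)$ and $\sum_hV(h)P(h)=1$.

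With that ordering, your per-step estimate $2\sqrt{4\eta K}$ and your total $2N\sqrt{4\eta\,2^{A+N}}$ go through verbatim with your choice of $t$. You could keep your ordering only by also paying, at each step, the $\TV$ between the estimator-driven and the real history laws. Bounding that term requires the correctly ordered chain anyway.
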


\begin{proof}
The simulator draws each of $\mathsf Q$'s outgoing bits from an estimate of the conditional law that
$\mathsf Q$ itself uses, and the work lies in showing that these laws are visible to $\Count$ and that
the estimation error does not accumulate over the $N$ bits.  First, each conditional is a ratio of two
output probabilities of admissible samplers, and so is available from $\Count$.  Second, the
truncated ratio we can sample from at polynomial cost.  Third, a hybrid over the outgoing bits, with
a threshold separating the histories of small probability from those on which the estimates are
relatively accurate, keeps the accumulated error below $2^{-\lambda}$.

\emph{Step 1: the conditional law of each outgoing bit is a ratio of two output probabilities that
$\Count$ reports.} Regard the concatenation of $\mathsf Q$'s
messages as a single string of $N$ bits, and for a bit index $i$ let $j(i)$ be the round it belongs
to. For an incoming sequence $c=(c_1,\dots,c_{j(i)})$ let $C_{c,i}$ be the sampler that runs
$\mathsf Q(1^\lambda)$, feeds it $c_1,\dots,c_{j(i)}$ as the incoming messages, and outputs the first
$i$ bits that $\mathsf Q$ sends.

Feeding fixed messages irrespective of what $\mathsf Q$ replies is legitimate, since the result is
again a fixed clocked $\QPT^{\cO}$ algorithm with classical output, of polynomial description length
because $|c|\le A=\poly(\lambda)$.  So, with $J$ bounding the ranks $\mathsf Q$ can write,
$\mathsf{Norm}_J(C_{c,i})$ is an admissible argument to $\Count_{J+1}$ by the reasoning of
\Cref{lem:pe-easy}, and we read $J+1$ off $\mathsf Q$'s clocked description.

Write $P_c(y)\eqdef\Pr[C_{c,|y|}^{\cO}(1^\lambda)=y]$.  Notice that these are consistent:
\[
  P_c(y)\ =\ P_c(y0)+P_c(y1) .
\]
When $y0$ and $y1$ lie in the same round as $y$, this is the marginal of a single sampler.  When
they open a new round it holds for a different reason: the first $|y|$ bits are sent before
$c_{j(|y|+1)}$ arrives, and so cannot depend on it.  Consequently, for a history whose incoming part
is $c$ and whose outgoing part is $y$, the true conditional law of the next outgoing bit is
\[
  \pi_{c,y}(\beta)\ =\ \frac{P_c(y\beta)}{P_c(y)}\qquad\text{whenever }P_c(y)>0 .
\]

This last identity is where we use the absence of a correlated setup.  By induction on the rounds, on
every fixed transcript branch the joint state is a product of $\mathsf Q$'s and $\mathsf V$'s
sub-normalized branch states, each party acting on its own registers and learning of the other only
through classical messages.  The probability of a history thus factorizes as
$\Pr[c,y]=V(c,y)\,P_c(y)$, where $V(c,y)$ is the product of $\mathsf V$'s conditional probabilities of
its own messages, and likewise $\Pr[c,y\beta]=V(c,y)\,P_c(y\beta)$.  The conditional law of the next
outgoing bit is the ratio of the two, and the counterparty's factor cancels.

After Step 1 every number the simulator needs is the answer to a query about a single admissible
sampler, and it can be obtained to any polynomial precision.

\emph{Step 2: the simulator sends every outgoing bit drawn from the truncated ratio, at polynomial
cost.} Fix the precision $t\eqdef A+N+3\lambda$ and put $\eta\eqdef2^{-t}$.
The party $\widehat{\mathsf Q}$ keeps the history in the clear.  To produce its $i$-th outgoing bit,
with incoming messages $c$ and own prefix $y$, it does the following.
\begin{itemize}[leftmargin=1.6em,itemsep=2pt,topsep=3pt]
\item It queries $\Count$ for the first $t$ bits of $P_c(y0)$ and of $P_c(y1)$, obtaining truncations
  with $0\le P_c(y\beta)-\widetilde P_c(y\beta)\le\eta$.
\item If both truncations vanish, it sends $0$.
\item Otherwise it sends $\beta$ with probability
  $\widetilde P_c(y\beta)/(\widetilde P_c(y0)+\widetilde P_c(y1))$, rounded to a multiple of $2^{-2t}$
  so that $2t$ fair coins realize it exactly.
\end{itemize}
Call that rule $\widetilde\pi_{c,y}$.  The rounding costs $2^{-2t}$ in total variation, which
\Cref{lem:ratio} already absorbs into its constant.  The whole interaction costs
$2Nt=\poly(\lambda)$ classical queries.

So far we have shown that $\widehat{\mathsf Q}$ is a classical probabilistic polynomial-time party of
the kind the statement asks for, and it remains to compare the transcript it produces with the real
one.

\emph{Step 3: the hybrid over the $N$ outgoing bits loses at most $2^{-\lambda}$ in all.} The rule of
Step~2 is the one \Cref{lem:ratio} analyses, at $p_\beta=P_c(y\beta)$, so whenever $P_c(y)\ge4\eta$,
\[
  \TV\big(\pi_{c,y},\widetilde\pi_{c,y}\big)\ \le\ \frac{4\eta}{P_c(y)} ,
\]
and trivially $\TV\le1$ elsewhere.  For $0\le i\le N$ let $\cD_i$ be the interaction in which $\mathsf Q$'s
first $i$ outgoing bits are drawn from the true conditionals and the remaining ones from
$\widetilde\pi$, with $\mathsf V$ unchanged throughout, so that $\cD_N=\cT(\mathsf Q,\mathsf V)$ and
$\cD_0=\cT(\widehat{\mathsf Q},\mathsf V)$.  Adjacent hybrids use the same rule at every bit except
the $(i+1)$-st, and the history before that bit is distributed as in the real interaction, so
\[
  \TV(\cD_i,\cD_{i+1})\ \le\ \E_{(c,y)}\big[\TV(\pi_{c,y},\widetilde\pi_{c,y})\big] ,
\]
the expectation being over the real law of the history $(c,y)$ before bit $i+1$.

That expectation is exactly what \Cref{lem:threshold} bounds, and the chain rule supplies its
hypotheses.  Write
\[
  \Pr[c,y]\ =\ V(c,y)\,P_c(y) ,
\]
where $V(c,y)$ collects $\mathsf V$'s conditional probabilities of the messages in $c$.  Then
$V(c,y)\le1$, and $\sum_{c,y}V(c,y)P_c(y)=1$ because the histories at a fixed step partition the
probability space.  There are at most $2^{A+N}$ histories, so we take $K=2^{A+N}$; and
$K^{-1}\ge4\eta$ holds comfortably, since $\eta=2^{-(A+N+3\lambda)}$.  The lemma therefore gives
$\E_{(c,y)}[\TV]\le2\sqrt{4\eta\,2^{A+N}}$.  Summing over the $N$ steps,
\[
  \TV\big(\cT(\widehat{\mathsf Q},\mathsf V),\cT(\mathsf Q,\mathsf V)\big)
  \ \le\ 2N\sqrt{4\eta\,2^{A+N}}\ =\ 4N\,2^{-3\lambda/2}\ \le\ 2^{-\lambda}
\]
for all large $\lambda$, since $N=\poly(\lambda)$.  Notice that the only properties of $\mathsf V$ used
anywhere are $V(c,y)\le1$ and the normalization above, so the bound holds for every counterparty,
which is what the theorem asserts.
\end{proof}

Everything $\mathsf V$ computes is a function of the transcript and of its own randomness, so the
conclusion transfers to any quantity read off at the end of the interaction.  We state the two cases
we use.

\begin{corollary}[No quantum advantage over classical communication]
\label{cor:no-poq}
Relative to $\cO$, and for every fixing of $\{H_\lambda\}$:
\begin{enumerate}[label=(\roman*),leftmargin=2.2em]
\item every classical distribution samplable in $\QPT^{\cO}$ is samplable to within $2^{-\lambda}$ in
  total variation by a classical probabilistic polynomial-time algorithm making classical queries to
  $\Count$; and
\item there is no inefficient-verifier proof of quantumness.  For every $\QPT^{\cO}$ prover and every
  verifier of unbounded computational power exchanging classical messages with it, there is a
  classical probabilistic polynomial-time prover, making classical queries to $\Count$, whom that
  verifier accepts with probability differing by at most $2^{-\lambda}$.
\end{enumerate}
\end{corollary}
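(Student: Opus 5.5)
Both parts follow from \Cref{thm:interactive} by choosing the counterparty appropriately and then reading the conclusion off the transcript.

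For part~(i), I would treat a $\QPT^{\cO}$ sampler $S$ with classical output of length $\ell=\poly(\lambda)$ as a quantum party $\mathsf Q$ in \Cref{model:interactive}. This party has a single round with an empty incoming message ($A=0$), and it sends the output of $S$ as its one message $d_1$ ($N=\ell$). The counterparty $\mathsf V$ sends nothing and does nothing. The transcript is then just $d_1$, so $\cT(\mathsf Q,\mathsf V)$ is exactly the output distribution of $S^{\cO}(1^\lambda)$. \Cref{thm:interactive} gives a classical probabilistic polynomial-time party $\widehat{\mathsf Q}$, making classical queries to $\Count$, whose transcript lies within $2^{-\lambda}$ of it in total variation. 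I would take the classical sampler to be ``run $\widehat{\mathsf Q}$ against the silent counterparty and output its message''. One bookkeeping point needs care: the incoming messages are empty, so the samplers $C_{c,i}$ are just prefix samplers of $S$. This is exactly the situation in which Step~1 of that proof applies with $V\equiv1$.

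For part~(ii), I would fix a $\QPT^{\cO}$ prover $\mathsf Q$ and an unbounded verifier $\mathsf V$ that exchange classical messages, with no shared setup, as \Cref{model:interactive} requires. The verifier's accept bit is produced by $\mathsf V$ from the transcript together with its private randomness and its internal state. Conditioned on the transcript, that state evolves independently of the prover. So there is a map $\tau\mapsto a(\tau)\in[0,1]$ from transcripts to acceptance probabilities, and this map depends only on $\mathsf V$. The acceptance probability against any prover is then $\E_{\tau\sim\cT}[a(\tau)]$. For two transcript laws this expectation differs by at most their total variation distance, because $0\le a\le1$. Applying \Cref{thm:interactive} with this same $\mathsf V$ bounds the gap by $2^{-\lambda}$. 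The bound holds for every verifier at once, since the simulator $\widehat{\mathsf Q}$ does not depend on $\mathsf V$. If the verifier's final decision comes after a last prover message, I would append an empty incoming round so that the decision is read off a complete transcript.

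The main obstacle is mild. I have to justify that the verifier's decision is a function of the transcript and of randomness independent of the prover. This is where the no-entanglement hypothesis enters. The factorization $\Pr[c,y]=V(c,y)P_c(y)$ in Step~1 already yields it: conditioned on each transcript branch, $\mathsf V$'s residual state is its own sub-normalized branch state. I would also note that when $\mathsf V$ queries $\cO$, including coherently, nothing changes, because those queries act on its own registers. Finally, the qualifier ``inefficient-verifier'' needs no extra argument, since the bound is uniform in $\mathsf V$.
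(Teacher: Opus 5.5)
Your proposal is correct and follows the paper's proof essentially verbatim. Part~(i) is \Cref{thm:interactive} with no incoming messages ($A=0$), and part~(ii) applies that theorem with the verifier as counterparty and bounds the gap between the two values of $\sum_\pi\Pr[\pi]\,a(\pi)$ by the total variation distance, using $0\le a\le1$ and the fact that $a$ depends on the verifier alone. The extra empty round you append is unnecessary but harmless: the transcript already ends with the prover's last message, and the verifier's decision is a function of it.
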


\begin{proof}
\emph{(i)} Apply \Cref{thm:interactive} with no incoming messages, so that $A=0$ and the transcript is
the sampler's own output.

\emph{(ii)} Apply \Cref{thm:interactive} with $\mathsf V$ the verifier.  Write $a(\pi)\in[0,1]$ for
its acceptance probability conditional on a transcript $\pi$.  That is the same function in the real
and in the simulated experiment, since the verifier's own process, given the messages it sends and
receives, is identical in both.  So the two acceptance probabilities are
\[
  \sum_\pi\Pr\nolimits_{\mathrm{real}}[\pi]\,a(\pi)
  \qquad\text{and}\qquad
  \sum_\pi\Pr\nolimits_{\mathrm{sim}}[\pi]\,a(\pi) .
\]
Since $0\le a\le1$, their difference is at most
$\sup_{0\le f\le1}|\E_{\mathrm{real}}f-\E_{\mathrm{sim}}f|$, which is the total variation distance
between the two transcript laws, and the theorem bounds that by $2^{-\lambda}$.
\end{proof}

Part~(ii) is stated in general form because a proof of quantumness is only the simplest instance.
The simulation is uniform in the counterparty, so it applies to any interactive protocol whose
messages are classical.  Relative to $\cO$, then, an efficient
\emph{quantum} prover confers no advantage over an efficient \emph{classical} one in such a protocol:
whatever an unbounded verifier can be convinced of by a $\QPT^{\cO}$ prover exchanging classical
messages, it can be convinced of by a classical polynomial-time prover making classical queries, with
acceptance probabilities differing by at most $2^{-\lambda}$.  The classical-message restriction is
essential: a prover sending quantum messages is outside the scope of
\Cref{thm:interactive}, as is a prover sharing entanglement with the verifier in advance.

\paragraph{Scope of the collapse.}  \Cref{cor:no-poq}(i) contains the sampling half of
\Cref{sec:no-owpuzz} and \Cref{sec:no-qefid}, and in its own case \Cref{prop:no-qefid} sharpens it,
since that proposition achieves the full statistical distance rather than a simulation.
\Cref{thm:no-owpuzz} is not an instance, however.  The attack there has to sample a key
\emph{conditioned on} a puzzle that the honest sampler itself produced, which is the same estimator
applied to conditional rather than to joint distributions.  What \Cref{thm:interactive} adds beyond
both is the interactive case, and with it part~(ii), which no single-message statement reaches.

Part~(ii) can also be obtained along a second route, by combining the characterization of
inefficient-verifier proofs of quantumness by classically-secure $\OWPuzz$s~\cite[Thm.~1.1]{MSY25}
with \Cref{cor:collateral}.  We prefer the direct route here because it assumes nothing about
relativization, holds for every fixing of $\{H_\lambda\}$, and is not restricted to uniform
adversaries.

\medskip

\Cref{thm:no-owpuzz}, \Cref{cor:no-qefid} and \Cref{thm:interactive} are three faces of a single
statement.  A $\OWPuzz$ is classical-output hardness of \emph{search}, a $\QEFID$ pair is
classical-output hardness of \emph{decision}, and a proof of quantumness is classical-output hardness
of \emph{interaction}.  What we have shown is that $\Count$ removes all three, for every fixing of
the hidden subspaces and against unbounded verifiers.  What remains available relative to $\cO$ is
hardness of a different kind, the indistinguishability of two quantum states, and
\Cref{thm:EFI-security} shows that this remainder is not empty.  The two layers therefore fit together rather
than merely coexist: the counting oracle trivializes the classical-output theory, and the state
source supplies what is left, a pair of states whose distinguishing measurement is
itself the secret.

Nor are the two halves independent facts about $\cO$.  By the equivalence recalled in
\Cref{sec:prelim-primitives}, classical-output search hardness and classical-output decision hardness
cannot be separated, so an oracle removing one necessarily removes the other.  What our two proofs
add is that both fail \emph{directly}, by a classical polynomial-time attack, for every fixing of the
hidden subspaces, with no implication needed in between.  One further consequence follows.
Since $\QEFID$ pairs trivially yield EFI pairs, by regarding the distributions as diagonal density
operators, $\cO$ also separates EFI pairs from $\QEFID$ pairs, in the classical-query model of
\Cref{model:access} for the EFI direction and unconditionally for the $\QEFID$ direction.  By that
same equivalence this is a restatement of \Cref{thm:separation-source} rather than a new separation.
It is, finally, the opposite orientation to~\cite{BMM24}, who give a unitary oracle with $\QEFID$
pairs but no $\OWSG$s and hence \emph{retain} the classical-output primitive.

\section{EFI security}
\label{sec:security}

We now apply \Cref{thm:transcript-security} to the half-subspace pair.  One obstacle remains.
Since the generator is public, a distinguisher may obtain reference copies, and those copies are
quantum side information depending on $H$, which is exactly what the hypotheses of
\Cref{thm:transcript-security} do not cover.  \Cref{sec:reference} reduces them to
classical side information first.  After that the associated problem is $\VSP$, and the theorem
applies.

\subsection{Reference copies and the residual instance}
\label{sec:reference}

The idea is to give the adversary something classical that is at least as useful as the copies
themselves.  A basis of the span of $q$ Haar vectors in $H$ determines their joint law, so the copies
can be resampled from the span alone.  In other words, the span is at least as good to the adversary
as the copies were, and unlike them it is a classical object.

Having replaced the copies by the two spans, we then condition on those spans.  What is left is a
Vector-in-Subspace instance in the orthogonal complement, of dimension $n-2q$, together with one
branch of weight $2q/n$ on which we concede everything.  Reference copies cost us that
branch and a loss of dimension, and nothing else.

Fix $q<n/4$ and sample independent uniformly random $q$-dimensional $A\subseteq H$ and
$B\subseteq H^\perp$.  Set
\[
  K=(A\oplus B)^\perp,\quad \dim K=n-2q,\qquad H_A=H\cap A^\perp,\qquad H_B=H^\perp\cap B^\perp.
\]

\begin{lemma}[Geometry of the residual]
\label{lem:residual}
The decompositions $H=A\oplus H_A$ and $H^\perp=B\oplus H_B$ are orthogonal, both $H_A$ and $H_B$ lie
inside $K$ and have dimension $(n-2q)/2$, and within $K$ they are orthogonal complements of each
other. Moreover, conditioned on $(A,B)$, the subspace $H_A$ is Haar-random of dimension $(n-2q)/2$
inside $K$.
\end{lemma}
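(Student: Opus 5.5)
The plan is to split the statement into a linear-algebra part, which is deterministic, and a distributional part, which follows from rotation invariance.

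\emph{Linear algebra.} Since $A\subseteq H$, the space $H$ splits orthogonally as $A\oplus(H\cap A^\perp)=A\oplus H_A$. In the same way $B\subseteq H^\perp$ gives $H^\perp=B\oplus H_B$. Both dimension counts equal $n/2-q=(n-2q)/2$. Next, $H_A\subseteq H\subseteq B^\perp$ because $B\subseteq H^\perp$, and $H_A\subseteq A^\perp$ by definition, so $H_A\subseteq(A\oplus B)^\perp=K$. The symmetric argument gives $H_B\subseteq K$. Further, $H_A\perp H_B$ because $H\perp H^\perp$, and
\[
\dim H_A+\dim H_B=n-2q=\dim K .
\]
Hence $H_A$ and $H_B$ are orthogonal complements inside $K$. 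Equivalently, one can read this off the decomposition $\bbR^n=A\oplus H_A\oplus B\oplus H_B$.

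\emph{Distribution.} I will use invariance of the joint law of $(H,A,B)$. The subspace $H$ is Haar, and given $H$ the spaces $A,B$ are uniform $q$-dimensional subspaces of $H$ and of $H^\perp$, drawn independently. For every $g\in\On$ the triple $(gH,gA,gB)$ has the same law as $(H,A,B)$. I will check this by noting that $gH$ is Haar, and that $g$ carries the uniform measure on $q$-subspaces of $H$ to the uniform measure on $q$-subspaces of $gH$ (and likewise for $H^\perp$).

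Now fix $(A,B)$ and restrict to the stabilizer $G_{A,B}=\{g\in\On:gA=A,\ gB=B\}$. This group contains $\mathrm O(K)$, acting as the identity on $A\oplus B$. The joint invariance then shows that the conditional law of $H$ given $(A,B)$ is invariant under $\mathrm O(K)$. Since $H=A\oplus H_A$ with $A$ fixed, the conditional law of $H_A$ inside $K$ is also $\mathrm O(K)$-invariant. This group acts transitively on the Grassmannian of $(n-2q)/2$-dimensional subspaces of $K$, which carries a unique invariant probability measure, so $H_A$ is Haar there.

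\emph{Main obstacle.} The delicate step is making ``conditioned on $(A,B)$'' rigorous, because $(A,B)$ is a continuous random variable. I will handle it with a regular conditional distribution. Invariance of the joint law implies that for almost every $(A,B)$ the conditional kernel is $\mathrm O(K)$-invariant; a countable dense subset of $\mathrm O(K)$ together with weak continuity upgrades this to invariance under the whole group.

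A cleaner alternative avoids conditioning altogether by constructing the joint law directly. Sample $(A,B)$ from its marginal, which is a pair of orthogonal uniformly random $q$-dimensional subspaces. Then set $H=A\oplus H'$, where $H'$ is Haar of half dimension in $K=(A\oplus B)^\perp$. Invariance shows that this construction reproduces the law of $(H,A,B)$, and in it the conditional law of $H_A=H'$ given $(A,B)$ is Haar by construction.
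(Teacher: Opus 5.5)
Your proposal is correct and follows the paper's proof essentially step for step. The linear algebra is the same: the orthogonal decomposition, the dimension count, and the inclusion $H_A\subseteq A^\perp\cap B^\perp=K$. The distributional part is also the same: you extend $h\in\mathrm O(K)$ to an element of $\On$ that fixes $A$ and $B$, and joint $\On$-invariance then makes the conditional law of $H_A$ an $\mathrm O(K)$-invariant, hence Haar, measure. Your extra care about conditioning on the continuous variable $(A,B)$ goes beyond the paper, which treats that step informally. So does your alternative construction, which becomes fully rigorous once you note that $\On$ acts transitively on configurations $(H,A,B)$, so the invariant law on them is unique.
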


\begin{proof}
Since $A\subseteq H$, the orthogonal complement of $A$ inside $H$ is $H\cap A^\perp=H_A$, so that
$H=A\oplus H_A$ with $\dim H_A=n/2-q$. The same argument applied to $B\subseteq H^\perp$ gives
$H^\perp=B\oplus H_B$.

Next, both residual subspaces lie in $K$.  Indeed $H_A\subseteq A^\perp$ by definition, while
$B\subseteq H^\perp$ gives $B^\perp\supseteq H\supseteq H_A$, so that
\[
  H_A\ \subseteq\ A^\perp\cap B^\perp\ =\ K ,
\]
and symmetrically $H_B\subseteq K$. Since $H_A\subseteq H$ and $H_B\subseteq H^\perp$, the two are
orthogonal to each other, and $\dim H_A+\dim H_B=n-2q=\dim K$, so they are orthogonal complements
inside $K$.

What is left is to identify the conditional law of $H_A$ given $(A,B)$, and this is the only place where
we use the joint $\On$-invariance of $(H,A,B)$. Given $h\in\mathrm O(K)$, extend it to $g\in\On$ acting
as the identity on $A$ and on $B$ and as $h$ on $K$. Then $g$ stabilizes $(A,B)$ setwise, so
$(gH,A,B)$ has the same conditional law as $(H,A,B)$ given $(A,B)$. Since $g$ preserves $A^\perp$ and
$H_A\subseteq K$, we have $gH_A=gH\cap A^\perp=h(H_A)$. The conditional law of $H_A$ is thus
$\mathrm O(K)$-invariant, and hence Haar.
\end{proof}

\begin{lemma}[Reference-copy reduction]
\label{lem:reference}
Let an adversary be given the classical descriptions of $A$ and $B$, and then one challenge from
$\rho_0$ or $\rho_1$ with a uniform bit.  Consider any strategy with Borel message functions whose
remaining classical communication with the holder of $H$ is at most $L$ bits.  Its distinguishing
advantage, averaged over $H$ and over $(A,B)$, is at most $\frac{2q}{n}+2\beta(n-2q,L+1)$.  The
additional bit here is the orientation of the gap, which the holder of $H$ sends.
\end{lemma}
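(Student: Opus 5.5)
Conditioned on $(A,B)$, the challenge splits into a branch we concede and a fresh $\VSP$ instance in $K$. Fix $(A,B)$ and write the challenge vector as $u=u_{\parallel}+u_K$, where $u_\parallel\in A\oplus B$ and $u_K\in K$. On $b=0$, $u$ is uniform on the sphere of $H=A\oplus H_A$. By rotation invariance within $H$, its component along $A$ carries an expected squared weight $q/(n/2)=2q/n$. After normalization, the direction of $u_K$ is uniform on the sphere of $H_A$, and this direction is independent of the split of norm between the two pieces. The case $b=1$ is symmetric, with $B$ and $H_B$ in place of $A$ and $H_A$.

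The clean way to exploit this is to write $\rho_{0,H}$ as a mixture. We have $P_H=P_A+P_{H_A}$, so
\[
\rho_{0,H}=\tfrac{2q}{n}\cdot\tfrac{P_A}{q}+\big(1-\tfrac{2q}{n}\big)\cdot\tfrac{2P_{H_A}}{n-2q},
\]
and likewise $\rho_{1,H}=\tfrac{2q}{n}\cdot\tfrac{P_B}{q}+(1-\tfrac{2q}{n})\cdot\tfrac{2P_{H_B}}{n-2q}$. The mixing weight $2q/n$ is the same in both cases and does not depend on $b$. I would therefore pass to a challenger who first flips a coin of bias $2q/n$. On heads it hands over a uniform vector from $A$ or $B$; we simply concede this branch, whose contribution to the advantage is at most $2q/n$. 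On tails it hands over a uniform vector from $H_A$ or $H_B$. Replacing the vector by its classical description only helps Alice, as in \Cref{def:induced}.

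On the tails branch, the adversary holds the classical data $(A,B)$ and a uniform unit vector from $H_A$ or $H_B$, and it exchanges at most $L$ bits with the holder of $H$. By \Cref{lem:residual}, conditioned on $(A,B)$, the pair $(H_A,H_B)$ is a Haar half-subspace of $K$ together with its complement. Identifying $K$ with $\bbR^{n-2q}$ through a Borel choice of basis determined by $(A,B)$, this is exactly the $\VSP_{n-2q}$ distribution of \Cref{def:vsp}. Alice's extra knowledge of $(A,B)$ is then just public randomness shared by both parties. Bob can also compute it, since he may receive $A$ and $B$ as part of $X$, or we can simply hand $(A,B)$ to both players. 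For each fixed $(A,B)$ we obtain a protocol for $\VSP_{n-2q}$; averaging over $(A,B)$ gives a shared-randomness protocol.

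Turning the distinguisher's advantage into a protocol advantage follows the argument in the proof of \Cref{thm:transcript-security}. Bob computes the sign bit $\sigma$ telling which direction the gap goes for the given $(H,A,B)$ and sends it. This costs one extra bit, so the cost is $L+1$, and the tails-branch advantage is at most $2\beta(n-2q,L+1)$. Combining the two branches with weights $2q/n$ and $1-2q/n$ then bounds the total by $\tfrac{2q}{n}+2\beta(n-2q,L+1)$.

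The main obstacle is making the branch-splitting rigorous when the distinguisher measures the actual state rather than receiving the branch label. I would handle this by granting the label to the distinguisher as a stronger oracle. Its advantage is then a convex combination of the two branch advantages, since the challenge is linear in the mixture and the weights do not depend on $b$. The remaining care points are checking that the sign bit is taken after conditioning (it can be defined on the tails branch alone) and that the message functions remain Borel.
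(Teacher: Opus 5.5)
Your proposal is correct and follows the paper's proof essentially step for step: the same decomposition of $\rho_b$ into a known-subspace branch of $b$-independent weight $2q/n$ and a residual branch, conceding the first by affinity of the acceptance probability, identifying the second with $\VSP_{n-2q}$ via \Cref{lem:residual} with $(A,B)$ hard-wired so that Bob can rebuild $H=A\oplus H_A$, and paying one extra bit for the orientation of the gap. The only detail the paper adds is the degenerate case $q=0$, where $P_A/q$ is undefined and the residual branch is simply the whole challenge.
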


\begin{proof}
If $q=0$ there is nothing to condition on: the challenge is the residual branch itself, and the
argument given below for that branch yields $2\beta(n,L+1)$ directly.  We may therefore assume
$q\ge1$.

Consider first why handing the adversary the spans concedes at least as much as handing it the
copies.  The span of $q$ i.i.d.\ Haar vectors in $H$ is almost surely a uniformly random
$q$-dimensional subspace $A\subseteq H$.  By $\On$-invariance, the conditional law of the vectors
given their span depends on nothing but $A$: it is a fixed distribution on spanning $q$-tuples,
rotated into $A$.  So an adversary holding a basis of $A$ can resample the reference copies locally
with the correct joint distribution, adaptively and one at a time if they were requested that way.

Now condition on $(A,B)$.  By \Cref{lem:residual} and $P_H=P_A+P_{H_A}$,
\[
  \frac{2P_H}{n}=\frac{2q}{n}\cdot\frac{P_A}{q}+\Big(1-\frac{2q}{n}\Big)\cdot\frac{P_{H_A}}{(n-2q)/2},
\]
and symmetrically for $\rho_1$ with $(B,H_B)$.  On each branch the challenge is a mixture:
with probability $2q/n$ the maximally mixed state on the \emph{known} subspace, and with probability
$1-2q/n$ the maximally mixed state on the residual.

The advantage splits along that mixture, because for a fixed strategy and fixed randomness the
acceptance probability is affine in the input density operator, being a composition of fixed
channels, of measurements, and of the holder's fixed responses.  On the known branch we bound the
advantage by $1$, paying $2q/n$.

Now for the residual branch.  There the challenge is maximally mixed on $H_A$ or on
$H_B=K\cap H_A^\perp$, inside $K$ of dimension $n-2q$, with $H_A$ Haar of half dimension.  A pure
version of that state is a uniform unit vector in $H_A$ or in its complement within $K$, so deciding
which of the two, using $L$ bits of interaction with the holder of $H$, is an $L$-bit protocol for
$\VSP_{n-2q}$.

One bit separates that from $\beta$, namely the orientation: the distinguishing advantage is the
absolute gap between the two branches, whereas $\beta$ is one-sided.  As in the proof of
\Cref{thm:transcript-security}, we let the holder of $H$ send one further bit recording the sign of
that gap.  The strategy then becomes an $(L+1)$-bit protocol whose advantage is half the
residual-branch distinguishing advantage, by the conventions of \Cref{sec:prelim-cc}.  So this branch
contributes at most $2\beta(n-2q,L+1)$, which together with the $2q/n$ already paid is what the lemma
asserts.
\end{proof}

\begin{corollary}[The associated problem for the EFI game]
\label{cor:reference-vsp}
For $q<n/4$ there is a universal $C>0$ such that every strategy with at most $q$ reference copies per side and at most $L$ bits of classical interaction with the holder of $H$ has average distinguishing advantage at most
\[
  \frac{2q}{n}+C\sqrt{\frac{L+2}{(n-2q)^{1/3}}}.
\]
\end{corollary}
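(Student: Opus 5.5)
The corollary is \Cref{lem:reference} combined with the small-advantage bound \Cref{lem:vsp-small}, so the plan is to check that the hypotheses of the lemma are met and then substitute. First I would reduce a strategy with at most $q$ reference copies per side to one that holds the classical spans $A\subseteq H$ and $B\subseteq H^\perp$. By the second property of \Cref{model:source}, any $t$ adaptive calls to $\cR$ are dominated by granting $q$ independent copies of each state in advance and swapping them in after each dephased index. By the resampling argument at the start of the proof of \Cref{lem:reference}, those $q$ copies per side can be regenerated locally from bases of $A$ and $B$ with the correct joint law. If the strategy used fewer than $q$ copies on some side, giving it the full $q$-dimensional span only helps it. So the advantage of the original strategy is at most that of a strategy that is handed $(A,B)$ classically, receives one challenge, and exchanges at most $L$ bits with the holder of $H$. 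Its message functions are Borel because the resampling is a Borel map of a basis.

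Next, \Cref{lem:reference} bounds the average advantage of that strategy by $\frac{2q}{n}+2\beta(n-2q,L+1)$. The hypothesis $q<n/4$ ensures $n-2q>n/2>0$. The dimension $n-2q$ is even because $n$ is even, so \Cref{lem:residual} and the $\VSP_{n-2q}$ distribution make sense.

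Finally, I would apply \Cref{lem:vsp-small} with dimension $n-2q$ and cost $L+1$:
\[
2\beta(n-2q,L+1)\le 2C_0\sqrt{(L+2)/(n-2q)^{1/3}}.
\]
Setting $C=2C_0$ gives the stated bound.

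The only step needing care is the first, the domination of adaptive calls by a fixed supply of $q$ copies per side. In particular, the per-side budget should count calls on each branch separately, and adaptively chosen indices must not let the adversary exceed $q$ on either branch. I would handle this exactly as in the swap-in argument after \Cref{model:source}, counting calls branch by branch. The remaining steps are direct substitution.
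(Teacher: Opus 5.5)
Your proposal is correct and follows the paper's own proof: replace the reference copies by the classical spans $(A,B)$, invoke \Cref{lem:reference} to get $\frac{2q}{n}+2\beta(n-2q,L+1)$, then apply \Cref{lem:vsp-small} in the even dimension $n-2q>n/2$ and absorb the factor $2$ into $C$. The paper also says explicitly that \Cref{lem:vsp-small} is applied separately for each fixed $(A,B)$, hard-wired into the protocol, and the bound is then averaged over $(A,B)$; your appeal to \Cref{lem:reference} already covers this step.
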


\begin{proof}
Give the adversary the spans, which only increases its power and replaces the quantum reference
copies by classical side information.  \Cref{lem:reference} then bounds the average advantage by
$\frac{2q}{n}+2\beta(n-2q,L+1)$, and \Cref{lem:vsp-small} bounds the second term, applied with
$n-2q>n/2$.  We make that application for each fixed $(A,B)$ separately, with $(A,B)$ hard-wired into
the protocol, and averaging over $(A,B)$ then preserves the bound.  Absorbing the factor $2$ into $C$
gives the stated estimate.
\end{proof}

\subsection{The security theorem}
\label{sec:security-thm}

Everything above bounds an advantage \emph{averaged} over the random subspaces, whereas an oracle
separation needs a \emph{single} sequence $\{H_\lambda\}$ that defeats every adversary at once.  The
passage from one to the other is the same in all three places where we make it, here and in
\Cref{cor:hybrid-fixed} and \Cref{thm:single}, so we record it once.  It has two ingredients: the
adversaries must form a countable family, and the maximization over advice must sit \emph{inside} the
expectation.

\begin{lemma}[From average security to a fixed object]
\label{lem:fixing}
Let $\mathbf X=\{X_\lambda\}$ be a sequence of random objects and $\{N_i\}_{i\in\mathbb N}$ a
countable family of machines.  Write $\Adv^*_{N_i}(\lambda;\mathbf X)$ for the advantage of $N_i$ at
security parameter $\lambda$, \emph{maximized over its advice strings}, and suppose that for every
$i$,
\[
  \mu_i(\lambda)\ \eqdef\ \E_{\mathbf X}\big[\Adv^*_{N_i}(\lambda)\big]\ =\ \negl(\lambda) .
\]
Then, with probability one over $\mathbf X$, every $i$ has a threshold beyond which
$\Adv^*_{N_i}(\lambda)\le\lambda^{2}\mu_i(\lambda)$, which is again negligible, and which is
$2^{-\Omega(\lambda)}$ whenever $\mu_i$ is.  For such
a fixed $\mathbf X$, every non-uniform adversary whose machine lies in the family has negligible
advantage, however its advice depends on $\mathbf X$.
\end{lemma}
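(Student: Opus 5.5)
The plan is a first-moment argument followed by Borel--Cantelli, applied separately to each machine and then combined over the countable family.

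Fix $i$. The quantity $\Adv^*_{N_i}(\lambda;\mathbf X)$ is a nonnegative random variable depending only on $X_\lambda$. It is a maximum over the finitely many advice strings of the permitted length at security parameter $\lambda$, so it is measurable whenever each individual advantage is, and I take that measurability from the Borel hypotheses used earlier. Markov's inequality gives
\[
  \Pr_{\mathbf X}\big[\Adv^*_{N_i}(\lambda)>\lambda^2\mu_i(\lambda)\big]\ \le\ \lambda^{-2}
\]
whenever $\mu_i(\lambda)>0$. When $\mu_i(\lambda)=0$, the advantage vanishes almost surely, so the bad event is null. Since $\sum_\lambda\lambda^{-2}<\infty$, the first Borel--Cantelli lemma shows that, almost surely, only finitely many $\lambda$ are bad for $N_i$. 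This is exactly the per-machine threshold the statement asks for. Independence across $\lambda$ is not needed for this direction of Borel--Cantelli.

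Next I intersect over $i$. There are countably many machines, each with an almost-sure event, and a countable intersection of probability-one events still has probability one. On that intersection every $N_i$ has a threshold $\lambda_i$ beyond which $\Adv^*_{N_i}(\lambda)\le\lambda^2\mu_i(\lambda)$. Two routine checks finish this step. First, $\lambda^2\mu_i$ is negligible because $\mu_i$ is, since multiplying by a polynomial preserves negligibility. Second, $\lambda^2 2^{-c\lambda}\le 2^{-c\lambda/2}$ for large $\lambda$, so the exponential case is preserved with a smaller constant.

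The final sentence of the lemma concerns non-uniform adversaries whose advice depends on the fixed $\mathbf X$. Here I have to put the maximization inside the expectation. For a fixed $\mathbf X$ in the good event, any advice sequence $\alpha_\lambda$, however it is chosen as a function of $\mathbf X$, achieves advantage at most $\Adv^*_{N_i}(\lambda;\mathbf X)$ at each $\lambda$, because the latter is by definition the maximum over advice strings. So the bound transfers pointwise, with no need for a union bound over advice sequences, of which there are uncountably many.

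I expect the main obstacle to be this last point rather than the probability estimates. The hypothesis has to be $\E[\max_\alpha\Adv]$ and not $\max_\alpha\E[\Adv]$. If it were the latter, an advice choice tailored to $\mathbf X$ after fixing would escape the almost-sure event, since the good event would depend on the advice. I will therefore state explicitly that $\mu_i$ is defined with the maximum inside. I will also note that the reductions feeding this lemma, such as \Cref{thm:transcript-security} with Bob sending the optimal $\alpha^*(X)$, already bound that inner-maximized quantity. The only other delicate point is countability: the family of machines must be countable, for instance through a discretized gate set, for the intersection step to work, and the lemma takes this as a hypothesis.
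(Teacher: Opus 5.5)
Your proposal matches the paper's proof: Markov's inequality at the threshold $\lambda^2\mu_i(\lambda)$, the first Borel--Cantelli lemma (which needs no independence), a countable intersection over the machines, and pointwise domination of any advice choice by $\Adv^*_{N_i}$, which works because the maximum over advice sits inside the expectation. One small aside is harmless: in the paper's setting $\Adv^*_{N_i}(\lambda)$ depends on all of $\mathbf X$, not only on $X_\lambda$, since $\Count$ at one security parameter may refer to samplers at another, but your argument never uses dependence on $X_\lambda$ alone.
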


\begin{proof}
Markov's inequality at the threshold $\lambda^{2}\mu_i(\lambda)$ gives
\[
  \Pr\nolimits_{\mathbf X}\big[\Adv^*_{N_i}(\lambda)>\lambda^{2}\mu_i(\lambda)\big]\ \le\ \lambda^{-2} ,
\]
which is summable in $\lambda$.  The first Borel--Cantelli lemma does not require independence, so
almost surely only finitely many $\lambda$ violate the threshold.  Intersecting these
probability-one events over the countably many $i$ leaves an event of probability one on which every
$N_i$ obeys it.

The last sentence follows because the maximization over advice sits inside $\mu_i$.  A non-uniform
adversary is a pair (machine, advice sequence) whose machine is some $N_i$, and its advantage at
every $\lambda$ is at most $\Adv^*_{N_i}(\lambda)$, whatever advice it uses and however that advice
was chosen once $\mathbf X$ was fixed.
\end{proof}

The polynomial threshold is not essential.  Any $g(\lambda)\mu_i(\lambda)$ with
$\sum_\lambda g(\lambda)^{-1}<\infty$ would serve.  We take $g$ polynomial rather than, say, $g=\mu_i^{-1/2}$, because a polynomial factor
leaves the exponent of the expected advantage intact, and that is the exponent \Cref{sec:exact-rate}
compares with the attack.

Countability is the other ingredient, and it is the reason we enumerate \emph{machines} rather than
(machine, advice-sequence) pairs, of which there are uncountably many.  That enumeration suffices
only because $\max_\alpha$ sits inside $\E_{\mathbf X}$, and the reduction of
\Cref{thm:transcript-security} accommodates this at no charge: the holder of the hidden object
sends the best advice, together with a sign bit.

\begin{theorem}[EFI security]
\label{thm:EFI-security}
With probability one over $\{H_\lambda\}$, the ensembles
$\{\rho_{0,\lambda}\},\{\rho_{1,\lambda}\}$ form an EFI pair relative to $\cO$.  They are secure
against every $\QPT$ distinguisher that has the public source $\cR$, adaptive classical access to
$\Count$, and classical advice of any polynomial length depending arbitrarily on the fixed oracle.
\end{theorem}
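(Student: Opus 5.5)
Efficient generation and farness are immediate: $G(1^\lambda,b)$ makes one call to $\cR(b,1^\lambda)$, and the supports of $\rho_{0,\lambda},\rho_{1,\lambda}$ are orthogonal, so $\td=1$ for every fixing of $\{H_\lambda\}$. All the work is in computational indistinguishability. I will first bound the advantage averaged over $H_\lambda$ for each fixed machine, with the advice maximized inside the expectation. Then \Cref{lem:fixing} passes to a single sequence $\{H_\lambda\}$. Since $\QPT$ machines over a fixed discretized gate set form a countable family, \Cref{lem:fixing} applies directly once the averaged bound is negligible.

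Fix a machine $N$ running in time $p(\lambda)$ with advice of length $a(\lambda)\le p(\lambda)$. At parameter $\lambda$ it makes at most $q\le p(\lambda)$ calls to $\cR$ at that parameter. By the domination in \Cref{sec:source}, these calls are dominated by $q$ prefabricated copies of each state, swapped in after each dephased index. Its classical queries to $\Count$ have total query-plus-answer length $L\le 2p(\lambda)$.

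Two sources of $H_\lambda$-dependence need attention. The first is calls to $\cR$ at other security parameters $\lambda'\ne\lambda$. These are independent of $H_\lambda$, since the $H_{\lambda'}$ are drawn independently, so they count as resources independent of $X=H_\lambda$ and may be hard-wired by averaging over $\{H_{\lambda'}\}_{\lambda'\neq\lambda}$. The second is the $\Count$ answers, which depend on the whole sequence $\mathbf H$. I take $X=\mathbf H$, or equivalently let Bob hold all of $\mathbf H$. The associated problem still has Alice's challenge governed by $H_\lambda$ alone, with the other coordinates independent, so any protocol for it conditions on them and yields a protocol for $\VSP$. By \Cref{lem:welldef} each bit of $\Count$ is Borel in $\mathbf H$, which gives the measurability \Cref{thm:transcript-security} requires.

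Next I replace the reference copies by the spans $A,B$, following \Cref{lem:reference}. The adversary is then a strategy with classical side information $(A,B)$, $L$ bits of query traffic, and $a$ bits of advice. Running the simulation of \Cref{thm:transcript-security} inside the residual instance, with Bob first sending the best advice and a sign bit, gives a protocol of cost $L+a+O(1)$. By \Cref{cor:reference-vsp}, the averaged maximized advantage is at most
\[
\frac{2q}{n}+C\sqrt{\frac{L+a+3}{(n-2q)^{1/3}}}\ \le\ \poly(\lambda)\,2^{-\lambda/6},
\]
since $n=2^\lambda$. This quantity is negligible, indeed $2^{-\Omega(\lambda)}$, so \Cref{lem:fixing} finishes the proof with probability one over $\{H_\lambda\}$.

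The main obstacle is the splice between \Cref{lem:reference} and \Cref{thm:transcript-security}. The residual-branch reduction has to treat the quantum workspace, the interleaving of dephased $\cR$ calls with adaptive classical queries, and the advice together. It must also remain a legitimate Borel protocol in which Alice is given a classical description of the challenge. I will handle this by first moving all $\cR$ calls into up-front copies, then conditioning on $(A,B)$ and splitting the challenge affinely, and only afterwards invoking the simulation argument on the residual branch.
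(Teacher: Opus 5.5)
Your proposal is correct and follows the paper's proof essentially step for step. The steps match: a countable family of discretized machines, the advice maximized inside the expectation with Bob sending the best advice, the $\cR$ calls dominated by up-front copies and then replaced by spans, and finally \Cref{cor:reference-vsp} followed by \Cref{lem:fixing}. Two differences are cosmetic: you condition on the subspaces at other security parameters where the paper draws them as shared public randomness, and your separate sign bit is already the orientation bit charged inside \Cref{lem:reference}, so your $L+a+3$ is a harmless constant off the paper's $L+a+2$.
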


\begin{proof}
Efficient generation and farness are immediate: $G(1^\lambda,b)$ calls $\cR(b,1^\lambda)$, and
$\rho_{0,\lambda}\rho_{1,\lambda}=0$ gives $\td=1$.  For indistinguishability there is one point to
arrange.  \Cref{cor:reference-vsp} bounds an \emph{expected} advantage over the random subspaces,
whereas \Cref{lem:fixing} needs that expectation for each member of a countable family.  Producing
such a family, and bounding its expectations, is all that is left to do.

To begin, fix a finite universal gate set and replace each machine's gates by approximations over
it, which by the Solovay--Kitaev theorem costs only polynomial overhead.  Up to a negligible change
in advantage, every $\QPT$ machine is then captured by a \emph{countable} family $\{N_i\}$ of oracle machines with
explicit polynomial bounds $t_i(\lambda)$ on running time, advice length, sample calls, and
transcript length.

Fix such a machine $N_i$ and let $\Adv^*_{N_i}(\lambda;\mathbf H)$ denote the \emph{best-advice}
advantage, the maximum over advice strings of length $\le t_i(\lambda)$.  We will show that its
expectation over $\mathbf H$ is $2^{-\Omega(\lambda)}$.

Place $N_i$ in the two-party game of \Cref{thm:transcript-security} at security parameter $\lambda$,
with Alice running the machine and Bob holding $H_\lambda$, and write $L$ for the length of the
resulting transcript of classical $\Count$ queries and answers.  Subspaces at other security
parameters are independent of the challenge, so Alice and Bob sample those from shared public
randomness at no cost, which makes every $H_\lambda$-independent resource public.  Bob then sends the
maximizing advice string, at a cost of $a\le t_i(\lambda)$ bits; the sign bit that converts the
two-sided advantage into a one-sided one is the orientation bit already charged inside
\Cref{lem:reference}.

What results is a strategy of the kind \Cref{cor:reference-vsp} bounds, with $q\le t_i(\lambda)$
reference copies per side and $L+a\le2t_i(\lambda)$ bits of interaction, so that its budget there is
$L+a+1\le2t_i(\lambda)+1$.  Hence, with $n=2^{\Omega(\lambda)}$,
\[
  \mu_i(\lambda)\eqdef\E_{\mathbf H}\big[\Adv^*_{N_i}(\lambda)\big]\ \le\ \frac{2t_i}{n}+C\sqrt{\frac{2t_i+2}{(n-2t_i)^{1/3}}}\ =\ 2^{-\Omega(\lambda)} .
\]
That is the hypothesis of \Cref{lem:fixing}, with $\mathbf X=\mathbf H$, and the lemma supplies the
rest: a probability-one set of sequences $\{H_\lambda\}$ on which every $N_i$, with whatever advice,
has negligible advantage.
\end{proof}

\begin{theorem}[The separation in the source model]
\label{thm:separation-source}
There is an oracle $\cO=(\cR,\Count)$, consisting of a sample-access state source and a Boolean
counting oracle, such that
\begin{enumerate}[label=(\roman*),leftmargin=2.2em]
  \item one-way puzzles, including inefficiently verifiable and classically-secure ones, do not exist
    relative to $\cO$, and neither do $\QEFID$ pairs; and
  \item the half-subspace pair $(\rho_{0,\lambda},\rho_{1,\lambda})$ is a secure EFI pair relative to
    $\cO$ against every non-uniform $\QPT$ distinguisher that calls $\cR$ polynomially many times,
    makes $\poly(\lambda)$ adaptive \emph{classical} queries to $\Count$, then makes one
    \emph{coherent} query to $\Count$ of width $2^{\poly(\lambda)}$, and receives $\poly(\lambda)$ bits of
    classical advice depending arbitrarily on the fixed oracle.
\end{enumerate}
\end{theorem}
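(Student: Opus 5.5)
Part~(i) needs no new argument.  Take $\cO=(\cR,\Count)$ of \Cref{def:oracle}, with $\{H_\lambda\}$ drawn independently as in \Cref{sec:oracle}.  \Cref{thm:no-owpuzz} removes one-way puzzles for every fixing of the subspaces, including inefficiently verifiable ones.  \Cref{cor:collateral} removes the classically-secure ones as well, because the attacker is classical.  \Cref{cor:no-qefid}(i) removes $\QEFID$ pairs.  All three hold for every fixing, so they survive any probability-one restriction imposed in part~(ii).

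For part~(ii), efficient generation and $\td=1$ are as in \Cref{thm:EFI-security}, and the plan is to reuse that proof's fixing argument.  Discretize the gate set to obtain a countable family $\{N_i\}$ with polynomial bounds $t_i$.  For each $N_i$, bound the best-advice expected advantage $\mu_i(\lambda)$ by $2^{-\Omega(\lambda)}$.  Then invoke \Cref{lem:fixing}, intersecting over $i$, to get a probability-one set of sequences $\{H_\lambda\}$.  Everything reduces to the per-machine expectation bound in the presence of the final coherent query.

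I would take that bound from the hybrid theorem announced in the overview (\Cref{thm:hybrid}, with its fixed-oracle form \Cref{cor:hybrid-fixed}), which covers exactly this access pattern.  Its proof has four steps.

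\begin{enumerate}[label=(\arabic*),leftmargin=2.2em]
\item Replace the reference copies by the spans $(A,B)$, as in \Cref{lem:reference}.  This pays $2q/n$ for the known branch and leaves a residual $\VSP_{n-2q}$ instance with the Haar law of \Cref{lem:residual}.
\item Record the classical transcript $\pi$, of at most $L+a+1$ bits including advice and sign bit, through the fixed operators $V_\pi$ and indicators $\chi_\pi(H)$ of \Cref{lem:record}.  The post-transcript state is then a convex combination with weights $w_\pi$ summing to one against $\chi_\pi$.
\item Split the remaining bias into two parts.  The first is the bias of the same algorithm with an $H$-independent final query (the truth table frozen to a fixed pattern).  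This is a purely classical-transcript strategy, so \Cref{cor:reference-vsp} and \Cref{lem:vsp-small} bound it by $C\sqrt{(L+a+2)/n^{1/3}}$.
\item The second part is the increment from making the final truth table $H$-dependent.  For each fixed $\pi$ it is a one-query quadratic form $Z_H(f)$, which the weighted relaxation collapses to a single operator norm $\opnorm{G_{H,\pi}}$.  Its expectation is $O(\sqrt{\log(2M)/n})$ by \Cref{thm:one-query} via \Cref{lem:khintchine}.
\end{enumerate}

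The main obstacle is step~(4) under conditioning, because the transcript is correlated with $H$.  I would use the fact that $\opnorm{G_{H,\pi}}$ is an $O(n^{-1/2})$-Lipschitz function of the Haar unitary defining $R_H$, so it is subgaussian about its mean.  A maximal inequality over the at most $2^{L+a+2}$ transcripts then gives $\E\max_\pi\opnorm{G_{H,\pi}}\le C\big(\sqrt{\log(2M)/n}+\sqrt{(L+a)/n}\big)$.  Summing this with the $w_\pi$ bounds the increment irrespective of which $\chi_\pi$ fires.  Since $M=2^{\poly(\lambda)}$ and $L,a=\poly(\lambda)$, while $n=2^\lambda$, all terms are $2^{-\Omega(\lambda)}$.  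That verifies the hypothesis of \Cref{lem:fixing}.  The advice maximum sits inside the expectation because Bob can send the optimal advice, as in \Cref{thm:transcript-security}.
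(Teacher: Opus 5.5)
Your proposal is correct and is the paper's proof. Part~(i) is \Cref{thm:no-owpuzz} with \Cref{cor:no-qefid}, valid for every fixing, and part~(ii) is \Cref{cor:hybrid-fixed}, that is, \Cref{thm:hybrid} fed into \Cref{lem:fixing}; your four steps reproduce the proof of \Cref{thm:hybrid}: an $H$-independent final query bounded by the communication argument, plus an increment bounded record by record via \Cref{lem:record}, \Cref{lem:subnorm} and a maximal inequality. The only reordering is that you condition on the spans before splitting, as \Cref{app:sharp} does, which pays $2q/n$ instead of $6q/n$.

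Two small corrections. First, for the sub-normalized map $WV_\pi$ you need the form $\max_f|Z_H(f)|\le w_\pi\,\phi_\pi(H)$ of \Cref{lem:subnorm} rather than \Cref{thm:one-query} as stated. Second, $\phi_\pi=\opnorm{G_{H,\pi}}$ is $O(1)$-Lipschitz in the Haar unitary (constant $2$ along geodesics), not $O(n^{-1/2})$-Lipschitz; the curvature of the group is what supplies subgaussian fluctuations at scale $n^{-1/2}$, and that is exactly the scale your $\sqrt{(L+a)/n}$ term uses.
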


\begin{proof}
\emph{(i)} $\OWPuzz$s fail for \emph{every} fixing by \Cref{thm:no-owpuzz}, as do $\QEFID$ pairs by
\Cref{cor:no-qefid}.

\emph{(ii)} EFI security in the stated access model holds on a probability-one set of fixings.
Without the final coherent query this is \Cref{thm:EFI-security}, which we proved above.  With that
query it is \Cref{cor:hybrid-fixed}, whose proof we defer to \Cref{sec:hybrid}, since it needs the
one-query bounds of \Cref{sec:coherent}.

Choosing a fixing in the intersection, which again has probability one, gives an oracle for which
both parts hold.
\end{proof}

Part~(ii) permits at most one coherent query to $\Count$, so this is a separation in the
classical-query model (\Cref{model:access}) extended by a single coherent query, and the fully
coherent case is \Cref{conj:lift}. \Cref{thm:separation-source} is an intermediate statement rather
than the final one: its oracle still contains the state source $\cR$. \Cref{sec:single} replaces
$\cR$ by a Boolean function, leaving the access model unchanged, and the resulting \Cref{thm:single}
is the main separation of the paper and the formal form of Theorem~A.

\paragraph{Quantum advice.}  \Cref{thm:transcript-security} charges classical advice to the
transcript.  Quantum advice has no counterpart there, since it corresponds to a one-way
\emph{quantum} message from the holder of $H$, so two separate theorems cover it instead.
\Cref{cor:quantum-advice} rules out $\poly(\lambda)$ qubits of arbitrarily $H$-dependent advice
against a holder of one challenge copy; it does so by converting the advice into a single Boolean
phase query, through Rosenthal's synthesis~\cite{Rosenthal23}, and then applying
\Cref{thm:one-query}.  \Cref{cor:hybrid-qadvice}, meanwhile, admits $O(\lambda)$ qubits alongside the
full classical budget of \Cref{model:hybrid}, by charging a net of advice states to that budget.
What is left over is polynomial-size quantum advice together with classical queries, and under the
same conversion that becomes a coherent query placed \emph{before} the classical ones, which is
precisely the ordering \Cref{sec:hybrid} identifies as the limit of the method.

The natural limit of the question is one-way \emph{quantum} communication for $\VSP$ from the holder
of $H$ to a holder of the classical $u$, and that is settled at $\Theta(n)$ by
\Cref{prop:reverse-oneway}.  It is a different model from the one our advice statements need, and
much stronger than anything we use here.

\subsection{The optimal advantage of a classical adversary}
\label{sec:exact-rate}

\Cref{thm:EFI-security} gives advantage $2^{-\Omega(\lambda)}$, and it is natural to ask whether the
pair is in fact perfectly hidden from $\Count$.  It is not, and for the natural class of classical
attacks we can determine the rate exactly.  The following fixes it at $n^{-1/2}$ from both sides.

\begin{model}[Measure-then-count adversary]
\label{model:mtc}
The adversary fixes a finite POVM $\{E_k\}_{k\in\cK}$ on $\bbC^n$, of any size and \emph{not} depending on
$H$, measures its single challenge copy with it, and outputs a bit computed from the outcome $k$
together with the exact values $\{\tr(E_{k'}P_H)\}_{k'\in\cK}$. No reference copies are used, and no
bound is placed on the post-processing.
\end{model}

The values in \Cref{model:mtc} are a relaxation of what $\Count$ provides: for an efficiently implementable
POVM, the sampler that calls $\cR(0,1^\lambda)$, measures, and outputs the label $k$ has output
probability $\frac2n\tr(E_kP_H)$, so \Cref{lem:pe-easy} returns its bits.

\begin{proposition}[The exact advantage of a measure-then-count adversary]
\label{prop:exact-rate}
Let $H\subseteq\bbR^n$ be Haar-random of dimension $n/2$, with $n\ge4$.
\begin{enumerate}[label=(\roman*),leftmargin=2.2em]
\item \emph{(Upper bound, every POVM.)} Every measure-then-count adversary has
  $\E_H[\Adv]\le2/\sqrt{\pi n}$.
\item \emph{(Lower bound, one basis and $\poly(\lambda)$ queries.)} Put
  $t=2\lceil\log_2n\rceil+\lambda$.  There is an adversary achieving
  \[
    \E_H[\Adv]\ \ge\ 2(\pi n)^{-1/2}\sqrt{1-2/n}-4n^{-1}2^{-\lambda} ,
  \]
  namely the one that measures its challenge in the computational basis to obtain $j$, queries
  $\Count$ for the first $t$ bits of
  $\Pr[\cR(0,1^\lambda)\text{ measured in the computational basis}=j]$, and outputs $1$ if that
  truncation is below $\frac1n-2^{-t}$.  Apart from the single computational-basis measurement of its
  challenge it is classical polynomial time, and it makes $t=\poly(\lambda)$ classical queries, since
  $\log_2n=\poly(\lambda)$ for any source a $\QPT$ algorithm can call.
\end{enumerate}
In particular the two bounds coincide asymptotically: the exact rate is
$2(\pi n)^{-1/2}\big(1+O(n^{-1})\big)$, attained by the computational basis.  Of the two
deficits in the lower bound, the factor $\sqrt{1-2/n}$ contributes $O(n^{-1})$ relatively and the
finite precision contributes $O(2^{-\lambda}n^{-1/2})$, which at $n=2^{\lambda}$ is $O(n^{-3/2})$, so
the first of them is what the rate records.
\end{proposition}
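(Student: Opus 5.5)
We would prove the upper bound (i) by reducing the adversary's advantage, for each fixed $H$, to a sum of linear statistics of $R_H$, then reduce that sum to a one-dimensional Beta integral, and we would obtain the lower bound (ii) by evaluating the same integral together with a truncation estimate in the style of \Cref{prop:no-qefid}.

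\emph{Step 1: fixed $H$.} Write $a_k=\tr(E_kP_H)$. Under branch $b=0$ the outcome $k$ has probability $2a_k/n$, and under $b=1$ it has probability $2(\tr E_k-a_k)/n$. The adversary knows every $a_{k'}$ exactly and may post-process arbitrarily, so its best advantage is the total variation distance between these two laws on $\cK$. By \eqref{eq:bias-linear} this is
\[
  \frac1n\sum_k\big|\tr(E_kR_H)\big| .
\]
No adversary in \Cref{model:mtc} exceeds this, and the likelihood-ratio rule attains it.

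\emph{Step 2: reduction to a Beta integral.} Decompose each effect spectrally as $E_k=\sum_j w_{kj}\proj{v_{kj}}$ with unit vectors $v_{kj}$. The triangle inequality gives $|\tr(E_kR_H)|\le\sum_j w_{kj}|\bra{v_{kj}}R_H\ket{v_{kj}}|$, and the weights satisfy $\sum_{k,j}w_{kj}=\sum_k\tr E_k=n$. Hence $\E_H[\Adv]\le\max_v\E_H|\bra vR_H\ket v|$ over unit vectors $v\in\bbC^n$.

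Since $R_H$ is real symmetric, writing $v=x+iy$ gives $\bra vR_H\ket v=\bra xR_H\ket x+\bra yR_H\ket y$ with $\norm x^2+\norm y^2=1$. A second triangle inequality, together with rotation invariance, reduces to a real unit vector, for which $\bra vR_H\ket v=2\beta-1$ with $\beta=\bra vP_H\ket v\sim\mathrm{Beta}(n/4,n/4)$. So $\E_H[\Adv]\le\E|2\beta-1|$.

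\emph{Step 3: the one-dimensional computation.} Put $a=n/4$. The variable $s=2\beta-1$ has density proportional to $(1-s^2)^{a-1}$ on $[-1,1]$, and a direct integration gives
\[
  \E|s|=\frac{\Gamma(a+\tfrac12)}{a\sqrt\pi\,\Gamma(a)} .
\]
Gautschi's inequality $\Gamma(a+\tfrac12)/\Gamma(a)\le\sqrt a$ then yields $\E|s|\le1/\sqrt{\pi a}=2/\sqrt{\pi n}$, which is (i).

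For (ii), the computational basis realizes Step 1 with equality, since $E_j=\proj{e_j}$ is rank one and real. Its exact advantage is therefore $\E|s|$, and the matching lower bound $\Gamma(a+\tfrac12)/\Gamma(a)\ge\sqrt{a-\tfrac12}$ (also Wendel/Gautschi) gives at least $2(\pi n)^{-1/2}\sqrt{1-2/n}$.

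\emph{Step 4: finite precision.} By \Cref{lem:pe-easy}, the truncation $\widetilde p_j$ of $p_j=2a_j/n$ satisfies $0\le p_j-\widetilde p_j\le2^{-t}$. The threshold rule against $\frac1n-2^{-t}$ agrees with the exact rule $\mathbf 1[p_j<\frac1n]$ except when $|p_j-\frac1n|\le2\cdot2^{-t}$. On each such outcome it loses at most the difference of the two branch probabilities, namely $2|p_j-\tfrac1n|\le4\cdot2^{-t}$. Summing over the $n$ outcomes costs at most $4n2^{-t}=4n^{-1}2^{-\lambda}$ at $t=2\lceil\log_2n\rceil+\lambda$, exactly as in the proof of \Cref{prop:no-qefid}. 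The asymptotic statement then follows by comparing the two displayed bounds.

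The step we expect to need most care is Step 2, specifically the passage from complex eigenvectors to the real Beta law: one must check that the real/imaginary split loses nothing, which uses that $R_H$ is real. After that, the exact Gamma-ratio inequalities in Step 3 are what make the constant sharp rather than merely $O(n^{-1/2})$; a variance bound would only give $\sqrt{2/n}$.
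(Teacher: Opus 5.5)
Your proposal is correct and follows the paper's proof essentially step for step. The paper also reduces the advantage to $\frac1n\sum_k|\tr(E_kR_H)|$ via the likelihood-ratio test, handles complex eigenvectors by the same real/imaginary split (its \Cref{lem:mad}(c)), sandwiches the Gamma ratio in the $\mathrm{Beta}(n/4,n/4)$ integral, and uses the same window argument for the truncation. The differences are cosmetic: the paper derives both $\sqrt{a-\frac12}\le\Gamma(a+\frac12)/\Gamma(a)\le\sqrt a$ from log-convexity of $\Gamma$, and it evaluates the exact rule through the one-sided sum $\frac4n\sum_j|\xi_j|\mathbf 1[\xi_j<0]$ and the symmetry of the Beta law, where you use the total-variation identity directly. (Your upper bound $\Gamma(a+\frac12)/\Gamma(a)\le\sqrt a$ is Wendel's inequality rather than Gautschi's, whose standard form gives only $\sqrt{a+\frac12}$ there.)
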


Both parts rest on one computation, the mean absolute deviation of a single diagonal entry of $P_H$
about its mean $\frac12$: the upper bound sums it over the eigenvectors of an arbitrary POVM, and the
lower bound realizes it in the computational basis.  That computation is a Beta-integral, and it and
the finite-precision accounting are the whole of the proof; both are in \Cref{app:exact-rate}.

\begin{corollary}[A protocol of logarithmic length]
\label{cor:vsp-cheap}
For even $n\ge4$, $\beta\big(n,\lceil\log_2n\rceil+1\big)\ge(2\pi n)^{-1/2}$: Alice sends the index $j$ drawn with
probability $u_j^2$, and Bob replies with $\mathbf 1[(P_H)_{jj}<\frac12]$.
\end{corollary}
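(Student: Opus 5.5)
The plan is to analyse the stated protocol exactly.  It is a one-way protocol of cost $\lceil\log_2n\rceil+1$: Alice's message is the index $j$, and Bob's reply is a single bit, which also serves as the output.  Since $\beta$ is a supremum over protocols, a lower bound on this protocol's advantage suffices.  Correctness and advantage reduce to the same one-dimensional computation that drives \Cref{prop:exact-rate}.  That computation concerns the law of a diagonal entry $(P_H)_{jj}=\bra{e_j}P_H\ket{e_j}$.  For a Haar half-dimensional real subspace this law is $\mathrm{Beta}(n/4,n/4)$, symmetric about $\frac12$.

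First I would compute the success probability conditional on $H$ and on $b$.  If $b=0$ then $u$ is uniform on the sphere of $H$, and $\E[u_j^2\mid H]=\bra{e_j}\tfrac{2P_H}{n}\ket{e_j}=\tfrac2n(P_H)_{jj}$.  If $b=1$ the same expectation equals $\tfrac2n(1-(P_H)_{jj})$.  Thus the law of the message $j$ is the computational-basis outcome distribution of $\rho_{b,H}$.  Bob outputs the guess $1$ exactly when $(P_H)_{jj}<\tfrac12$.  With $x_j\eqdef(P_H)_{jj}$, the success probability given $H$ is
\[
  \tfrac12\sum_j\Big[\tfrac2n x_j\mathbf 1[x_j\ge\tfrac12]+\tfrac2n(1-x_j)\mathbf 1[x_j<\tfrac12]\Big]
  \;=\;\tfrac12+\tfrac1n\sum_j\big|x_j-\tfrac12\big|.
\]
Here I use $\sum_j x_j=n/2$.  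The advantage is therefore $\tfrac1n\sum_j|x_j-\tfrac12|$.  Averaging over $H$ and using exchangeability of the coordinates gives $\E_H|x_1-\tfrac12|$, the mean absolute deviation of $\mathrm{Beta}(n/4,n/4)$.

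The remaining step, which is the only real work, is to show $\E|x-\tfrac12|\ge(2\pi n)^{-1/2}$ for $x\sim\mathrm{Beta}(\alpha,\alpha)$ with $\alpha=n/4$.  I would invoke the closed form already used in \Cref{app:exact-rate}:
\[
  \E\big|x-\tfrac12\big|=\frac{1}{2^{2\alpha}\,\alpha\,B(\alpha,\alpha)}.
\]
The computation of \Cref{prop:exact-rate}(ii) shows this quantity is $(\pi n)^{-1/2}\sqrt{1-2/n}$ up to the precision correction, which is absent here because Bob knows $H$ exactly.  Indeed the bias there equals twice this mean deviation, giving $2(\pi n)^{-1/2}\sqrt{1-2/n}$ for $\Adv$.

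For $n\ge4$ we have $\sqrt{1-2/n}\ge1/\sqrt2$, so the advantage is at least $(\pi n)^{-1/2}/\sqrt2=(2\pi n)^{-1/2}$.  If the exact form is not at hand, the fallback is a Gautschi/Wendel bound on $\Gamma(2\alpha)/\Gamma(\alpha)^2$ via $B(\alpha,\alpha)$.  That bound gives $2^{2\alpha}\alpha B(\alpha,\alpha)\le\sqrt{\pi\alpha}\cdot c$ with an explicit constant.  This constant must be checked at the small case $n=4$, where $\alpha=1$, the law is uniform, and the mean deviation is $\tfrac14\ge(8\pi)^{-1/2}$.

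Two details I would flag.  First, the advantage notation: $\beta$ is one-sided $|\Pr[\text{correct}]-\tfrac12|$, and the protocol above is already correctly oriented, so no sign bit is needed.  Second, the message length: $j\in[n]$ costs $\lceil\log_2n\rceil$ bits and Bob's reply costs one more, matching the stated cost.
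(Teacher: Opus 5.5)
Your proposal is correct and follows the paper's argument: Alice's index has the law of the computational-basis outcome on $\rho_{b,H}$, so Bob's exact rule gives protocol advantage $\E_H|\xi_{e_1}|$, and \Cref{lem:mad}(a) finishes. Your pointwise identity $\Pr[\text{correct}\mid H]=\tfrac12+\tfrac1n\sum_j|x_j-\tfrac12|$ reaches that advantage a little more directly than halving the distinguishing advantage of \Cref{prop:exact-rate}(ii), and it needs no symmetry of the Beta law. One small correction: $(\pi n)^{-1/2}\sqrt{1-2/n}$ is only a lower bound on the mean absolute deviation of $\mathrm{Beta}(n/4,n/4)$, whose actual value is $(\pi n)^{-1/2}\big(1-\tfrac1{2n}+O(n^{-2})\big)$, but a lower bound is all the argument needs.
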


\begin{proof}
Alice's sampling step is exactly the computational-basis measurement of the challenge, carried out on
her classical description of it.  Indeed, for $u$ uniform on the unit sphere of $H$,
\[
  \E[u_j^2]\ =\ \frac{2(P_H)_{jj}}n ,
\]
and symmetrically in $H^\perp$.  So the index $j$ she sends has the law of the outcome in
\Cref{prop:exact-rate}(ii).  Bob, holding $H$, then applies the exact rule
$\mathbf 1[(P_H)_{jj}<\frac12]$ rather than a truncation of it, and the protocol reproduces that
distinguisher with no loss to finite precision.

One factor of two separates the two statements.  Part~(ii) computes the distinguishing advantage of
the exact rule as $2\,\E_H|\xi_{e_1}|$, whereas the advantage of a protocol is half of that: the
former measures the gap between the two branches, the latter the excess of the success probability
over $\frac12$.  The advantage here is thus $\E_H|\xi_{e_1}|$, which is at least $(2\pi n)^{-1/2}$ by
\Cref{lem:mad}(a).

Alice sends an index in $[n]$ and Bob a single bit, so the transcript carries $\lceil\log_2n\rceil+1$
bits, as claimed.
\end{proof}

Three consequences follow, on the dimension, on the security level, and on how much the
oracle is worth.

First, the dimension must be superpolynomial.  By \Cref{prop:exact-rate}(ii) a classical adversary
making $O(\lambda)$ queries already distinguishes the pair with advantage $\Omega(n^{-1/2})$, so
negligible security forces $n^{-1/2}=\negl(\lambda)$, that is, $n=\lambda^{\omega(1)}$.  Our choice
$n=2^{\Theta(\lambda)}$ is one convenient point in that range, and the proofs use nothing else
about it.  Indeed, every bound in this paper is negligible as soon as $n$ is superpolynomial.

Second, the security level is $2^{-\Theta(\lambda)}$, and the exponent is the quantity at stake.
\Cref{thm:EFI-security} proves $O(\poly(\lambda)\cdot n^{-1/6})$, the polynomial collecting the
adversary's own parameters together with the factor its Markov step charges for passing from the
expectation to the fixed oracle, and \Cref{prop:exact-rate} exhibits
$\Omega(n^{-1/2})$.  Both are $2^{-\Theta(\lambda)}$, so the separation itself is unaffected.  The residual gap does not come from the
imported bound.  A worst-case communication
bound of $\Omega(n^\alpha)$ enters \Cref{thm:EFI-security} only through \Cref{lem:vsp-small}, whose
amplification step costs a square root and returns $O(\sqrt L\,n^{-\alpha/2})$.  Matching
\Cref{prop:exact-rate} would require $\alpha=1$, and that is impossible, since Raz's
protocol decides $\VSP_n$ with $O(\sqrt n)$ bits~\cite{Raz99,Mon19}.  Even the optimal worst-case
bound $\Theta(\sqrt n)$ would give only $n^{-1/4}$.  The gap is thus a feature of the route rather
than of the pair, and it can be closed by leaving the route: \Cref{app:sharp} bounds the same
adversaries by $\poly(\lambda)\cdot n^{-1/2}$, using the linearity of $\rho_0-\rho_1$ in $R_H$ in
place of any communication bound.  What would close it \emph{within} the communication route is a
direct estimate of $\beta(n,L)$ in the small-advantage regime.  \Cref{prop:oneway-rate} supplies
such an estimate for one-way protocols, linear in the message length, and \Cref{cor:twoway-disc}
handles the two-way case for very short transcripts.  The general two-way question remains open, and
\Cref{sec:discussion} returns to it as a question about $\VSP$ rather than as one this construction
needs.

Third, the counting oracle is worth a single bit of information about $H$.  \Cref{lem:local-corr}
shows that a single $[-1,1]$-valued function of $H$ correlates with $R_H$ at scale $n^{-1/2}$, and
\Cref{prop:exact-rate}(i) shows that granting the adversary the \emph{exact} value of every outcome
probability of an arbitrary measurement does not exceed that scale.  The same $n^{-1/2}$ recurs in
\Cref{sec:coherent} as the value of one coherent query, and it is attained there as well:
\Cref{cor:one-query-tight} turns the attack of part~(ii) into a one-query adversary meeting
\Cref{thm:one-query} to within $\sqrt{\log n}$.

\subsection{The one-way rate for \texorpdfstring{$\VSP_n$}{VSP}}
\label{sec:oneway-rate}

\Cref{prop:exact-rate} bounds every measure-then-count adversary by $2/\sqrt{\pi n}$ with no bound
on what it is told, but its adversary holds one \emph{quantum} copy of the challenge, whereas Alice in
\Cref{def:vsp} holds $u$ as classical data and is strictly stronger (\Cref{sec:advice-is-a-query}); at
unbounded message length she can send $u$ itself and decide with certainty.  The next proposition
bounds the one-way rate for the classical-input sender.

\begin{proposition}[The one-way rate for $\VSP_n$]
\label{prop:oneway-rate}
Let $\beta_{\mathrm{ow}}(n,L)$ be the maximum advantage on the $\VSP_n$ distribution over one-way protocols
with shared randomness~\cite{KNR99}, in which Alice, holding the classical description of $u$, sends a
single message of at most $L$ bits and Bob, holding $H$, announces the guess. Then, for even $n\ge4$, there is a universal $C$ with
\[
  \frac1{\sqrt{2\pi n}}\ \le\ \beta_{\mathrm{ow}}\big(n,\lceil\log_2n\rceil+1\big),
  \qquad
  \beta_{\mathrm{ow}}(n,L)\ \le\ C\,\frac{L+1}{\sqrt n}\quad\text{for all }L\ge0 .
\]
Consequently the one-way randomized communication complexity of $\VSP_n$ at constant advantage is
$\Theta(\sqrt n)$: the bounds above give $\Omega(\sqrt n)$, and the upper bound is Raz's one-way
protocol~\cite{Raz99}, whose first published proof is~\cite[App.~A]{Mon19} and for which
Gosset and Smolin give a computationally efficient alternative~\cite{GS19}. The
constant-advantage threshold is not new; see \Cref{sec:oneway-rate}. What the display adds is an
upper bound at every message length, linear in $L$, on the exact promise and the Haar distribution.
\end{proposition}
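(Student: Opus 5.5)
The lower bound is \Cref{cor:vsp-cheap}, since that protocol is already one-way: Alice sends $j$, and Bob announces $\mathbf 1[(P_H)_{jj}<\frac12]$ himself instead of replying. So all the work is in the upper bound, which I would prove by a second-moment argument over the cells of Alice's message.

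\emph{Reduction to a cell-leakage bound.} By averaging over the shared randomness we may fix it, and then Alice's message partitions $S^{n-1}$ into at most $2^{L+1}$ measurable cells $\{S_m\}$. Given the cell $m$ and the subspace $H$, Bob's optimal advantage is
\[
  \tfrac12\big|\nu_H(S_m)-\nu_{H^\perp}(S_m)\big| .
\]
The rotation-invariant bit $b$ has prior $\frac12$, and the $m$-weighted average of these terms is the protocol's advantage. Write $f_m=\mathbf 1_{S_m}$ and $g_m(H)=\int f_m\,d\nu_H-\int f_m\,d\nu_{H^\perp}$. Then $\beta_{\mathrm{ow}}\le\frac14\sum_m\E_H|g_m(H)|$, with $\E_H g_m=0$ by symmetry. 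This is essentially the identity $\E_u f(u)=\frac12(\nu_H+\nu_{H^\perp})f$. Expand $f_m$ in spherical harmonics $f_m=\sum_k f_{m,k}$. The map $f\mapsto g(H)$ is rotation-equivariant, so by Schur's lemma the operator $\E_H[g(\cdot)\,g(\cdot)]$ is diagonal on harmonic degrees. Degree $0$ cancels between $H$ and $H^\perp$, and odd degrees vanish because both measures are symmetric. The degree-two eigenvalue is computed by taking $f=u_1^2-\frac1n$: then $g=\frac2n(2(P_H)_{11}-1)$, whose variance is of order $1/n^{2}\cdot n^{-1}\cdot n^{2}$. Normalising against $\norm{f}^2_{L^2(\sigma)}\asymp n^{-2}$ gives an eigenvalue of order $\frac1{n-1}$, as the overview announces. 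For higher even degrees $k$ I would show the eigenvalue decays like $n^{-k/2}$, either by Funk--Hecke applied to the zonal kernel of $\E_H[\nu_H\otimes\nu_H]$ or by computing moments of the Beta law of $\bra vP_H\ket v$.

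\emph{Level-$\ell$ inequality per cell.} Fix a cell with mass $\mu=\sigma(S_m)$. By Cauchy--Schwarz,
\[
  \E_H|g_m|\le\Big(\sum_{k\ge2}\lambda_k\norm{f_{m,k}}_2^2\Big)^{1/2}.
\]
A hypercontractive level-$k$ inequality on the sphere (Beckner/Bonami, for bounded $f$ of mean $\mu$) gives $\norm{f_{m,k}}_2^2\le \mu^2 (C\ln(e/\mu))^{k}$. Combined with $\lambda_k\lesssim n^{-k/2}$, the $k=2$ term dominates as long as $\ln(e/\mu)\ll\sqrt n$, and the result is
\[
  \E_H|g_m|\lesssim \mu\,\frac{\ln(e/\mu)}{\sqrt n}.
\]
For cells with $\ln(1/\mu)\gtrsim\sqrt n$ I would instead use the trivial bound $\E|g_m|\le2\mu$ and handle the series carefully.

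\emph{Summing over cells.} We need $\sum_m\mu_m\ln(e/\mu_m)\le\ln(e\,2^{L+1})$, which holds by concavity (an entropy bound) because there are at most $2^{L+1}$ cells. This gives $\beta_{\mathrm{ow}}\le C(L+1)/\sqrt n$. Cells that are too small for the series regime carry total mass at most $2^{L+1}e^{-c\sqrt n}$, which is negligible for $L\le\sqrt n$; for larger $L$ the claimed bound exceeds $1$ and holds trivially.

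The hard part will be the uniform decay $\lambda_k\lesssim(C/\sqrt n)^{k}$ together with a level-$k$ inequality whose constants are good enough to sum the series at every $\mu$. I would first try to get both from an explicit Funk--Hecke computation with Gegenbauer polynomials, using the Beta$(n/4,n/4)$ law of $\bra vP_H\ket v$ to evaluate the moments.
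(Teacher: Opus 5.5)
There is a genuine gap. Your reduction to $\frac14\sum_m\E_H|\mu_H(S_m)|$, the Schur diagonalization and Cauchy--Schwarz step, the degree-two eigenvalue $\frac1{n-1}$, the entropy sum and the disposal of tiny cells all agree with the paper. The step that is supposed to make degree two dominate fails as you set it up.

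The uniform decay $\lambda_k\lesssim(C/\sqrt n)^k$ that you name as the hard part is false. By \Cref{lem:second-moment}, the eigenvalue at degree $2j$ is $4\omega_{2j}$ (zero for even $j$), with $\omega_{2j}=\prod_{i=1}^{j}\frac{2i-1}{n-3+2i}$. This behaves like $(2j/(en))^{j}$ for $j\ll\sqrt n$ and decays only polynomially in $j$ once $j\gg n$. So for any universal $C$ it exceeds $(C^2/n)^j$ as soon as $j\gtrsim C^2$.

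Against that $j^j$ growth, a level-$k$ bound of the form $\mu^2(C\ln(e/\mu))^k$ is too weak. With $s=\ln(e/\mu)$, the $j$-th term is about $\mu^2(C'js^2/n)^j$, which grows super-exponentially in $j$. Even after capping $W_k\le\mu$ by Parseval, the per-cell estimate $\mu s/\sqrt n$ survives only for roughly $s\lesssim n^{1/3}$. But cells with $n^{1/3}\lesssim s\lesssim\sqrt n$ are exactly the cells that messages of length $L\in[n^{1/3},\sqrt n]$ produce, and they can carry all the mass. So you would lose the bound in that range, and with it the $\Omega(\sqrt n)$ endpoint.

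The missing ingredient is the sharp level-$\ell$ inequality $W_\ell(\mathbf 1_A)\le\alpha^2\big(C_0\ln(e/\alpha)/\ell\big)^\ell$. Its factor $\ell^{-\ell}$ is what cancels the $j^j$ in $\omega_{2j}$. The paper derives it from heat-semigroup hypercontractivity with the diffusion time tuned to $\ell$ (\Cref{lem:level-ell}). It holds only for $\ell\le2\ln(e/\alpha)$, plus a mild side condition coming from the $\ell^2$ term of the Laplacian eigenvalue.

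Split the degree sum at $j^*=\lceil s\rceil$:
\begin{itemize}
\item Below $j^*$, the terms become $\alpha^2\big(C_1s^2/(j(n-1))\big)^j$, which are geometric once $s\le c_0\sqrt n$.
\item Above $j^*$, you cannot use the level inequality at all. Instead, monotonicity of $\omega_{2j}$ and Parseval give $\sum_{j\ge j^*}\omega_{2j}W_{2j}\le\omega_{2j^*}\alpha$. This contributes an extra term $2\sqrt\alpha\,\theta(s)$ with $\theta(s)=(4s/(n-1))^{\lceil s\rceil/2}$ (\Cref{lem:cell-leak}).
\end{itemize}
That extra term carries $\sqrt\alpha$ rather than $\alpha\ln(e/\alpha)$, so the entropy bound does not sum it. The paper sums it separately, using that fewer than $e^k$ cells have $s_m\in[k,k+1)$. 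With this split in place, the rest of your outline goes through.
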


The lower bound is \Cref{cor:vsp-cheap}, whose protocol is one-way.  We prove the upper bound in
\Cref{app:rates}, which also collects the harmonic analysis on the sphere that it uses.  Here we
describe only the mechanism, deferring the two estimates that carry it.

Fix the shared randomness, so that the protocol is deterministic and Alice's message realizes a
partition $\{A_m\}$ of the sphere $S^{n-1}$ into at most $2^{L+1}$ cells.  Write
$\mu_H\eqdef\nu_H-\nu_{H^\perp}$ for the difference of the uniform probability measures on the unit
spheres of $H$ and $H^\perp$.  Since the two challenge distributions have the same marginal on $H$
and differ only in which of $\nu_H,\nu_{H^\perp}$ produces $u$, Bob's optimal decision given the
cell $A_m$ has advantage governed by $|\mu_H(A_m)|$, and the whole quantity to bound is
$\E_H\sum_m|\mu_H(A_m)|$.  So the question is how much a single cell of the sphere can distinguish
the two spheres, and how those contributions add up over a partition.

For one cell the answer comes from a second-moment computation.  The operator behind $\E_H[(\int
f\,d\mu_H)^2]$ is diagonalized exactly in \Cref{lem:second-moment} by the spherical harmonics, and
its eigenvalue at degree two is $\frac1{n-1}$, with the higher degrees decaying rapidly.  An
indicator of measure $\alpha$ has weight $O\big(\alpha^2\ln^2(e/\alpha)\big)$ at degree two, by a
level-$\ell$ inequality on the sphere of \Cref{lem:level-ell}, proved there in the style of the
Boolean-cube inequalities of harmonic analysis, so a single cell of measure $\alpha$ contributes $O(\alpha\ln(e/\alpha)/\sqrt n)$.

Summing over cells is then an entropy bound.  The measures $\alpha_m$ of the cells sum to $1$, so
$\sum_m\alpha_m\ln(e/\alpha_m)$ is an entropy and is at most $O(L)$ for a partition into $2^{L+1}$
parts.  The dependence on the message length is linear for this reason.  What we have to control in addition is the contribution of the higher degrees.  That is a tail, and
for small cells it must be suppressed; \Cref{lem:cell-leak} does this, and \Cref{app:oneway-proof}
combines the two ranges of cell size.

For two-way protocols the same second-moment computation gives a bound that is strong only for very
short transcripts.

\begin{corollary}[Two-way protocols with very short transcripts]
\label{cor:twoway-disc}
For even $n\ge4$ there is a universal $C>0$ with $\beta(n,L)\le C\,2^{L/2}/\sqrt n$ for every $L\ge0$.
\end{corollary}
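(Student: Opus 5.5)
Fix the shared randomness, so that the protocol is deterministic.  Its full transcripts then partition the input space into at most $2^{L+1}$ rectangles $A_\pi\times B_\pi$, with $A_\pi\subseteq S^{n-1}$ a measurable set of Alice's vectors and $B_\pi$ a measurable set of subspaces.  The output is a function of $\pi$, so the advantage is at most
\[
  \tfrac14\,\E_H\sum_\pi\mathbf 1[H\in B_\pi]\,\big|\mu_H(A_\pi)\big| ,
\]
where $\mu_H=\nu_H-\nu_{H^\perp}$ as in \Cref{sec:oneway-rate}, and the constant comes from the uniform bit.  The plan is to drop the indicator, since $\mathbf 1[H\in B_\pi]\le1$, and to bound each rectangle by its $A$-side alone.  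This loses the two-way structure, which is exactly why only very short transcripts are controlled.

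For a single measurable $A\subseteq S^{n-1}$, Cauchy--Schwarz gives $\E_H|\mu_H(A)|\le\big(\E_H[\mu_H(A)^2]\big)^{1/2}$.  The second moment is the quadratic form of the operator diagonalized in \Cref{lem:second-moment}, evaluated at $f=\mathbf 1_A$.  The plan is to use only that the whole spectrum is $O(1/n)$: the eigenvalue at degree two is $\frac1{n-1}$, the higher degrees decay, and the constants and odd degrees vanish because $\mu_H$ has total mass zero and both spheres are symmetric.  This gives $\E_H[\mu_H(A)^2]\le\frac{C}{n}\norm{\mathbf 1_A}_{L^2(\sigma)}^2\le\frac Cn$, and hence $\E_H|\mu_H(A)|\le C'/\sqrt n$ for every cell, with no level-$\ell$ inequality needed.

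Summing over at most $2^{L+1}$ transcripts gives $\beta(n,L)\le C''\,2^{L}/\sqrt n$.  To obtain the stated $2^{L/2}$, I would apply Cauchy--Schwarz across the cells rather than cell by cell:
\[
  \sum_\pi\E_H|\mu_H(A_\pi)|\le\sqrt{2^{L+1}}\,\Big(\sum_\pi\E_H[\mu_H(A_\pi)^2]\Big)^{1/2}\le\sqrt{2^{L+1}}\,\Big(\frac Cn\sum_\pi\sigma(A_\pi)\Big)^{1/2}.
\]
Here I use $\norm{\mathbf 1_A}_2^2=\sigma(A)$.  This works directly when the $A_\pi$ are disjoint.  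In a two-way protocol different transcripts can share overlapping $A$-sides, so the $A_\pi$ need not partition the sphere.  The step I expect to be the main obstacle is therefore securing $\sum_\pi\sigma(A_\pi)=O(1)$, or an adequate substitute.

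My first attempt at that step would keep the indicator and bound $\E_H\big[\mathbf 1[H\in B_\pi]\mu_H(A_\pi)^2\big]\le\E_H[\mu_H(A_\pi)^2]$.  I would then group transcripts by Alice's sequence of messages.  For each fixed sequence of Bob's replies, Alice's messages partition the sphere, so $\sum_\pi\sigma(A_\pi)\le 2^{L}$ in the worst case.  That only yields $2^{L}/\sqrt n$, which is already $C\,2^{L/2}\cdot 2^{L/2}/\sqrt n$.  If the sharper exponent resists, I would fall back to proving $O(2^{L}/\sqrt n)$ and check whether the disjointness argument recovers $2^{L/2}$ when the Bob-side indicators are used: for fixed $H$, the sets $\{A_\pi: H\in B_\pi\}$ do partition the sphere, since each $(u,H)$ has exactly one transcript.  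Then
\[
  \sum_\pi\mathbf 1[H\in B_\pi]\,\sigma(A_\pi)=1
\]
pointwise in $H$.  The remaining difficulty is that the indicator correlates with $\mu_H(A_\pi)$, so it cannot simply be pulled through the expectation of $\mu_H(A_\pi)^2$.  Resolving this correlation is where I expect the real work to lie.
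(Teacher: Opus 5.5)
Your proposal does not prove the statement. It only establishes $\beta(n,L)\le C\,2^{L}/\sqrt n$ and leaves the passage to $2^{L/2}$ open. The obstacle you name, that $\mathbf 1[H\in B_\pi]$ correlates with $\mu_H(A_\pi)$, is not a real obstruction. The error is discarding the indicator before summing over transcripts. Once it is gone, the only normalization left is $\sum_\pi\sigma(A_\pi)$, which is not $O(1)$ because the $A$-sides of a two-way protocol overlap. The restriction to short transcripts does not come from dropping the indicator. It comes from the factor $\sqrt K$ in a Cauchy--Schwarz over transcripts; dropping the indicator is what costs you the extra $2^{L/2}$.

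The missing step is a Cauchy--Schwarz \emph{in $H$}, one rectangle at a time, and correlation plays no role in it. Write $\gamma$ for the law of $H$. \Cref{lem:second-moment} gives $\E_H[\mu_H(A)^2]\le 4\omega_2\,\sigma(A)$ with $\omega_2=\frac1{n-1}$, so
\[
  \E_H\big[\mathbf 1_{B_\pi}(H)\,|\mu_H(A_\pi)|\big]\ \le\ \sqrt{\gamma(B_\pi)}\,\Big(\E_H\big[\mu_H(A_\pi)^2\big]\Big)^{1/2}\ \le\ \frac{2}{\sqrt{n-1}}\,\sqrt{\gamma(B_\pi)\,\sigma(A_\pi)} .
\]
Next apply Cauchy--Schwarz over the $K\le2^{L+1}$ transcripts:
\[
  \sum_\pi\sqrt{\gamma(B_\pi)\sigma(A_\pi)}\ \le\ \sqrt K\,\Big(\sum_\pi\gamma(B_\pi)\sigma(A_\pi)\Big)^{1/2}\ =\ \sqrt K .
\]
The last equality is your own pointwise identity $\sum_\pi\mathbf 1[H\in B_\pi]\sigma(A_\pi)=1$, integrated over $H$. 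Equivalently, the rectangles tile $S^{n-1}\times\mathrm{Gr}$, so their masses under the product measure $\sigma\otimes\gamma$ sum to $1$. The product measure is the right one even though $u$ and $H$ are correlated in the input law, because that correlation is already absorbed into $\mu_H$. With your prefactor $\frac14$ this gives $\beta(n,L)\le 2^{(L+1)/2}/(2\sqrt{n-1})$, which is the claim.

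This completed argument is the paper's proof. The paper starts from $\frac14\sum_\pi\big|\E_H[\mathbf 1_{B_\pi}(H)\mu_H(A_\pi)]\big|$, with the absolute value outside the expectation. Your starting point, with the absolute value inside, is a weaker but still valid upper bound, and it goes through by the same two applications of Cauchy--Schwarz.
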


\begin{proof}
Write $\gamma$ for the law of $H$, the rotation-invariant measure on the Grassmannian $\mathrm{Gr}$.
Fix the shared randomness, so that the protocol is deterministic, and let $\{A_t\times B_t\}$ be the
rectangles into which its transcripts partition $S^{n-1}\times\mathrm{Gr}$, of which there are
$K\le2^{L+1}$ (\Cref{sec:prelim-cc}).  On each transcript the output is a fixed bit, and the two
challenge branches contribute masses $\frac12\E_H[\mathbf 1_{B_t}(H)\nu_H(A_t)]$ and
$\frac12\E_H[\mathbf 1_{B_t}(H)\nu_{H^\perp}(A_t)]$ to it, so answering optimally on each transcript
gives
\begin{equation}
  \beta\ \le\ \tfrac14\sum_t\big|\E_H\big[\mathbf 1_{B_t}(H)\,\mu_H(A_t)\big]\big| .
  \label{eq:twoway-rect}
\end{equation}

Consider a single rectangle first.  \Cref{lem:second-moment} with Parseval and
$\omega_{2j}\le\omega_2=\frac1{n-1}$ gives $\E_H[\mu_H(A)^2]\le4\omega_2\sigma(A)$, so Cauchy--Schwarz
in $H$ gives
\[
  \big|\E_H\big[\mathbf 1_{B_t}(H)\,\mu_H(A_t)\big]\big|
  \ \le\ \sqrt{\gamma(B_t)}\;\sqrt{\E_H\big[\mu_H(A_t)^2\big]}
  \ \le\ \frac2{\sqrt{n-1}}\,\sqrt{\gamma(B_t)\,\sigma(A_t)} .
\]
Discarding $\sqrt{\gamma(B_t)\sigma(A_t)}\le1$ and summing over the $K$ transcripts already gives
$\beta\le2^{L}/\sqrt{n-1}$, but the factor $2^{L}$ is spurious.  It would be paid only if every
rectangle were as large as the whole space, whereas the $A_t\times B_t$ partition that space and so
$\sum_t\sigma(A_t)\gamma(B_t)=1$.  Our remedy is to keep the two measures instead of discarding them.

Cauchy--Schwarz over the $K$ transcripts then gives $\sum_t\sqrt{\gamma(B_t)\sigma(A_t)}\le\sqrt K$,
and substituting this and the previous display into \eqref{eq:twoway-rect},
\[
  \beta\ \le\ \frac{\sqrt K}{2\sqrt{n-1}}\ \le\ \frac{2^{(L+1)/2}}{2\sqrt{n-1}} ,
\]
as claimed.
\end{proof}

The two bounds available for two-way protocols are complementary. The second-moment bound of
\Cref{cor:twoway-disc} is the stronger for $L\le\frac23\log_2n$, and the corruption bound of
\Cref{lem:vsp-small} beyond that.

\paragraph{Prior work on the constant-advantage bound.}  An $\Omega(\sqrt n)$ lower bound for one-way
protocols at constant advantage is already known, by a reduction that Gosset and Smolin record and
attribute to Kothari~\cite[\S1]{GS19}.  The $\tfrac14$-Partial Matching problem, a variant of the
Hidden Matching problem introduced in~\cite{BJK04}, has one-way classical complexity $\Theta(\sqrt
n)$~\cite{GKKRW07}, and its instances embed into $\VSP_n$ as a state
$\ket\psi=n^{-1/2}\sum_i(-1)^{x_i}\ket i$ together with a projector $\Pi$ built from Bob's matching,
for which $\bra\psi\Pi\ket\psi$ takes the two values $\tfrac14$ and $\tfrac34$.

That reduction differs in two ways from what \Cref{prop:oneway-rate} proves.  First,
its instances are \emph{gapped}, with $\bra\psi\Pi\ket\psi$ bounded away from $0$ and from $1$, so
the bound is for a promise strictly weaker than the exact one of \Cref{def:vsp}.  Since exact
instances satisfy any gapped promise, a lower bound on the exact problem is the stronger of the two.
Second, the hard instances are the structured ones inherited from Partial Matching, and not the
rotation-invariant distribution of \Cref{def:vsp}, which is the distribution the security proof
requires.  What \Cref{prop:oneway-rate} contributes is the rate $O((L+1)/\sqrt n)$ at every
message length, for the exact promise and on the Haar distribution.  The reduction above gives the
constant-advantage endpoint for the \emph{gapped} promise; since exact instances satisfy that
promise, it does not by itself give the endpoint for \Cref{def:vsp}.

\paragraph{Relation to the $\Omega(\sqrt n)$ program for $\VSP$.}  That $\VSP_n$ should require
$\Omega(\sqrt n)$ bits, matching Raz, is an established target.  Klartag and Regev suggest the rate
$e^{-c\sqrt n}$ in place of $e^{-cn^{1/3}}$, in a form using the arithmetic mean of $\nu_H(A)$ and
$\nu_{H^\perp}(A)$, and Grupel~\cite[\S1]{Grupel17} sharpens this to a conjecture in which the
\emph{geometric} mean $\sqrt{\nu_H(A)\nu_{H^\perp}(A)}$ is at least $0.9\,\sigma(A)$ except with
probability $e^{-c\sqrt n}$, for every $A$ of measure at least $e^{-c'\sqrt n}$.  The geometric mean
is the essential change here.  The spherical cap that makes the Klartag--Regev rate tight satisfies
the geometric form, so the conjecture escapes the very example behind their assessment that
$\Omega(\sqrt n)$ is ``probably impossible using the rectangle bound''.  Grupel proves the conjecture
for sets depending on $O(\sqrt n)$ coordinates, and deduces $\Omega(\sqrt n)$ for two-way protocols
of total rank $O(\sqrt n)$, the rank counting the linear functionals of Alice's input that the
protocol evaluates; a variant of Raz's protocol lies in that class~\cite[Thm.~1.2,
Cor.~1.3]{Grupel17}.  The conjecture itself remains open.

\Cref{prop:oneway-rate} is incomparable to that result, and it does not require the conjecture.  It
places no rank restriction on Alice, whose message induces an arbitrary measurable partition, so the
total rank of an $L$-bit one-way protocol may be as large as $Ln$.  What replaces the conjecture is a
weaker demand on the concentration statement.  The rectangle method needs each cell to be atypical
with probability below $2^{-L}$, which forces a high-probability statement.  A one-way protocol, by
contrast, is charged $\sum_m\E_H|\mu_H(A_m)|$, a first moment over cells, and for that a
\emph{second}-moment computation suffices; \Cref{lem:second-moment} performs it exactly.
High-probability concentration is an artifact of interaction rather than of the geometry,
and the same observation places \Cref{cor:twoway-disc} at the limit of what a per-rectangle second
moment yields.  Neither Klartag and Regev~\cite{KR11} nor Grupel~\cite{Grupel17} treats the
small-advantage regime, which is the one the security proof uses.

\section{Coherent queries}
\label{sec:coherent}

The main theorem queries $\Count$ classically, which is what makes the communication argument
available: that argument charges the adversary for the classical bits it exchanges with the oracle,
and a superposition query is not a classical message.  We now ask how much survives under coherent
access.

Throughout this section $H\subseteq\bbC^n$ is Haar-random of dimension $n/2$, and we write $P=P_H$,
$R=2P-\Id$ and $\Delta=R/n$.  In the representation $R=UDU^\dagger$ we take the conjugating unitary
Haar on $\mathrm{SU}(n)$ rather than on $\Un$, which changes nothing, as \Cref{lem:su-so} below
records, and which is what the concentration inequality of \Cref{sec:concentration} requires.  The
arguments transfer to the real ensemble of the main construction with a change of universal constant,
with $\mathrm{SO}(n)$ in place of $\mathrm{SU}(n)$, since the bounds below use only that
$R=UDU^\dagger$ for $U$ Haar on a group of curvature $\Omega(n)$ and $D$ a fixed traceless
involution.  Advantage is measured as the bias \eqref{eq:bias-def}, half the distinguishing
advantage.  The adversary of \Cref{sec:onequery} holds one challenge copy and no reference copies, and we treat
the full EFI game in \Cref{sec:hybrid}.

One point about the model has to be settled before the estimates begin.  An oracle returning an exact
real \emph{value} in superposition is unsafe.  Write $v_x(P)$ for the real number such an oracle
returns at the query label $x$ when the hidden projector is $P$, and suppose, as the counting oracle
would, that these values include the matrix entries of $P$.  Querying the uniform superposition of
labels produces the state $\ket{\Phi_P}=M^{-1/2}\sum_x\ket x\ket{v_x(P)}$, and for two distinct
projectors drawn from the ensemble the values differ at every label almost surely, so the
corresponding states are orthogonal.  One coherent query would then write an exact record of the
hidden subspace.  Of course that model is an idealization: a register holding an exact real, with orthogonal states for
distinct values, spans a non-separable Hilbert space.  Rounding to any finite precision returns a
Boolean oracle of width $2^{\poly(\lambda)}$, and \Cref{thm:one-query} covers that, since it holds for
\emph{every} Boolean function of $H$.

\subsection{Concentration of measure on the classical compact groups}
\label{sec:concentration}

This subsection collects the probabilistic input, in the order in which the proof needs it: first the
elementary scalar fact, which already settles a special case and fixes the scale of every estimate
below, then the matrix statement, which the maximum over truth tables
requires.

Both rest on the same classical fact, that Lipschitz functions on a compact group of large curvature
concentrate.  In the Hilbert--Schmidt metric,~\cite[Thm.~5.17]{Meckes19} gives, uniformly for
$\mathrm{SO}(n),\mathrm{SU}(n),\Un$ and $\mathrm{Sp}(2n)$,
\begin{equation}
  \Pr\big[F\ge\E F+t\big]\ \le\ e^{-(n-2)t^2/(24\,\mathrm{Lip}(F)^2)}
  \label{eq:scalar-conc}
\end{equation}
for every Lipschitz $F$; applying this to $F$ and to $-F$, which has the same Lipschitz constant,
yields the two-sided bound $\Pr[|F-\E F|\ge t]\le2e^{-(n-2)t^2/(24\,\mathrm{Lip}(F)^2)}$ that we quote
below.  The scale $n^{-1/2}$ in \eqref{eq:scalar-conc} is the source of every $n^{-1/2}$ in this
section.

\begin{lemma}[A single linear statistic]
\label{lem:scalar-linear}
Let $R=UDU^\dagger$ with $U$ Haar on $\mathrm{SU}(n)$ or $\mathrm{SO}(n)$ and $D$ a fixed Hermitian
involution with $\tr D=0$.  For a fixed Hermitian matrix $A$ write $X_A\eqdef\frac1n\tr(AR)$.  Then $\E X_A=0$,
the map $U\mapsto X_A$ is Lipschitz with constant $2\norm A_2/n$ in the Hilbert--Schmidt metric, and
$X_A$ is subgaussian with variance proxy $O(\norm A_2^2/n^3)$.  Consequently, for any family of at
most $N$ fixed matrices $A$ with $\norm A_2\le\sqrt n$,
\[
  \E\,\max_A|X_A|\ =\ O\Big(\frac{\sqrt{\log(2N)}}{n}\Big).
\]
\end{lemma}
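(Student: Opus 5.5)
The proof has four steps: the mean, the Lipschitz constant, concentration via \eqref{eq:scalar-conc}, and a standard maximal inequality.

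\emph{Mean.} For the mean I use invariance. By Haar invariance, $\E_U[UDU^\dagger]$ commutes with every $U$ in the group. For $\mathrm{SU}(n)$, and for $\mathrm{SO}(n)$ with $n\ge3$, this forces it to be a scalar multiple of $\Id$. The scalar is $\tr D/n=0$. Hence $\E R=0$ and $\E X_A=\frac1n\tr(A\,\E R)=0$.

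\emph{Lipschitz constant.} For $U,V$ in the group, write
\[
UDU^\dagger-VDV^\dagger=(U-V)DU^\dagger+VD(U-V)^\dagger .
\]
Unitary invariance of the Hilbert--Schmidt norm and $\opnorm D=1$ give $\norm{UDU^\dagger-VDV^\dagger}_2\le2\norm{U-V}_2$. Cauchy--Schwarz, $|\tr(AY)|\le\norm A_2\norm Y_2$, then gives
\[
|X_A(U)-X_A(V)|\le\frac{2\norm A_2}{n}\norm{U-V}_2 .
\]
This is the claimed constant. Because $X_A$ is real-valued for Hermitian $A$ and $R$, the two-sided form of \eqref{eq:scalar-conc} applies with $\mathrm{Lip}=2\norm A_2/n$:
\[
\Pr\big[|X_A|\ge t\big]\le 2\exp\!\Big(-\frac{(n-2)\,n^2t^2}{96\,\norm A_2^2}\Big).
\]
For $n\ge4$ we have $n-2\ge n/2$, so the variance proxy is $O(\norm A_2^2/n^3)$. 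This subgaussian tail, with the mean already zero, is the second claim.

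\emph{Maximal bound.} Under $\norm A_2\le\sqrt n$, the variance proxy becomes $s^2=O(1/n^2)$. I would apply the standard maximal inequality for $N$ centered subgaussian variables, which needs no independence. Either bound $\E\max|X_A|$ through the moment generating function and Jensen, giving $\E\max|X_A|\le s\sqrt{2\log(2N)}$, or integrate the union bound $\Pr[\max|X_A|\ge t]\le\min\{1,2Ne^{-t^2/(2s^2)}\}$. Both give $O(\sqrt{\log(2N)}/n)$.

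The only delicate points are two. First, I must make sure the Lipschitz estimate is taken in the same Hilbert--Schmidt metric as \eqref{eq:scalar-conc}. Second, the mean-zero step needs irreducibility of the defining representation. That fails for $\mathrm{SO}(2)$, so the case $n=2$ should be excluded or treated separately; it is harmless because the statement is asymptotic. I expect no step to be a real obstacle; the argument is routine once \eqref{eq:scalar-conc} is in hand.
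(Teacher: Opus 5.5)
Your proof is correct and follows the paper's argument step for step. You use invariance plus irreducibility for $\E R=0$, the splitting $(U-V)DU^\dagger+VD(U-V)^\dagger$ with Cauchy--Schwarz for the Lipschitz constant $2\norm A_2/n$, and then \eqref{eq:scalar-conc} and the subgaussian maximal inequality, exactly as the paper does. Your caveat at $n=2$ is unnecessary: the commutant of $\mathrm{SO}(2)$ is $\{a\Id+bJ_0\}$, with $J_0$ the rotation by $\pi/2$, but $\E R$ is symmetric, so $b=0$ and $\E R=0$ still; in any case \eqref{eq:scalar-conc} is vacuous at $n=2$.
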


\begin{proof}
First, $\E X_A=0$.  Indeed, Haar measure is invariant under translation, so $\E R$ is unchanged by
conjugation by any element of the group.  The standard representation of $\mathrm{SU}(n)$ on
$\bbC^n$, and of $\mathrm{SO}(n)$ on $\bbR^n$, is irreducible, so $\E R$ must be a multiple of the
identity; and its trace is $\tr D=0$, whence $\E R=0$.

Next we bound the Lipschitz constant, from which the variance proxy will follow.  Cauchy--Schwarz in
the Hilbert--Schmidt inner product, and then $\opnorm D\le1$, give
\[
  |X_A(U)-X_A(U')|\ \le\ \tfrac1n\norm A_2\,\norm{UDU^\dagger-U'DU'^\dagger}_2\ \le\ \tfrac2n\norm A_2\,\norm{U-U'}_2 ,
\]
the last step by writing $UDU^\dagger-U'DU'^\dagger=(U-U')DU^\dagger+U'D(U-U')^\dagger$.  Substituting
this into \eqref{eq:scalar-conc} gives the stated variance proxy, the factor $n-2$ in the exponent
there supplying the third power of $n$.

Finally, we take the maximum over the family.  Under the hypothesis $\norm A_2\le\sqrt n$ every
member has variance proxy $O(n^{-2})$, so the maximal inequality for subgaussian variables, applied to
these at most $N$ variables, gives the last display.
\end{proof}

\Cref{lem:scalar-linear} already handles one regime completely.  By \eqref{eq:bias-def} the bias of
\emph{any} adversary, at any number of queries, is $\frac1n|\tr(E_fR)|$ for an effect $E_f$ on
$\bbC^n$, and $\norm{E_f}_2\le\sqrt{\tr E_f}\le\sqrt n$.  Taking $N=2^M$ in the
lemma bounds the maximum over all $2^M$ truth tables by $O(\sqrt M/n)$, which is negligible whenever
$M=O(n^{2-\eps})$.  The difficulty lies at large width, a point to which \Cref{sec:lift}
returns.

At the widths our model forces, $M$ is far larger than $n^2$ and this estimate is useless, so the
maximum over truth tables must be taken by a means other than a union bound.  The standard means is an
operator norm: a maximum of $2^M$ quadratic forms sharing one coefficient matrix costs only the norm
of that matrix.  What we then need is a bound on the expected operator norm of a random matrix whose
entries are linear statistics of $R$.  In the classical setting, where the randomness consists of
independent signs, this is matrix Khintchine: for fixed Hermitian $d\times d$ matrices $A_1,\dots,A_k$ and independent Rademacher
$\eps_i$,
\begin{equation}
  \E\,\opnorm{\textstyle\sum_i\eps_iA_i}\ \le\ C\sigma\sqrt{\log(2d)},\qquad
  \sigma^2=\opnorm{\textstyle\sum_iA_i^2},
  \label{eq:classical-khintchine}
\end{equation}
so the operator norm costs only a $\sqrt{\log}$ factor over the variance parameter, rather than the
$\sqrt k$ that a union bound over sign patterns would give.  Our randomness is a single Haar-conjugated
reflection and provides no independence, so \eqref{eq:classical-khintchine} does not apply and
we prove the analogue we need in \Cref{sec:khintchine}.

What replaces independence is curvature.  The reason \eqref{eq:scalar-conc} holds is that the special groups $\mathrm{SO}(n)$ and $\mathrm{SU}(n)$ have Ricci curvature of order $n$, and $\Un$
inherits the inequality from $\mathrm{SU}(n)$ by \Cref{lem:su-so}, so that Haar measure satisfies a
logarithmic Sobolev inequality by the Bakry--\'Emery criterion.  Huang and Tropp~\cite{HT21} show that the same
hypothesis yields concentration for matrix-valued functions, with the variance parameter of
\eqref{eq:classical-khintchine} replaced by a sum of squared derivatives along an orthonormal frame.

\begin{proposition}[{Matrix concentration under a curvature lower bound; Huang--Tropp~\cite[Thm.~1.1 of the arXiv version]{HT21}}]
\label{prop:herbst}
Let $\cM$ be a compact Riemannian submanifold of Euclidean space whose Ricci curvature is bounded
below by $\rho>0$, let $\mu$ be the uniform measure on $\cM$, let $\{\partial_i\}$ be an orthonormal
frame, and let $F\colon\cM\to\Herm_d$ be differentiable with $\E_\mu F=0$. Put
$v_F=\sup_{x\in\cM}\opnorm{\sum_i(\partial_iF(x))^2}$. Then, for a universal constant $C>0$,
\begin{enumerate}[label=(\roman*),leftmargin=2.2em]
\item $\Pr_\mu\big[\opnorm{F}\ge t\big]\ \le\ 2d\,e^{-\rho t^2/(2v_F)}$ for every $t\ge0$; and
  consequently
\item $\displaystyle \E_\mu\opnorm{F}\ \le\ C\sqrt{\frac{v_F\,\log(2d)}{\rho}}$.
\end{enumerate}
\end{proposition}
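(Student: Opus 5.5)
The proposition is imported from Huang and Tropp, and the paper uses it as a black box. If I had to reprove it, I would follow their matrix Markov-semigroup route: a matrix Laplace-transform bound, with the exponential moment controlled by a semigroup interpolation in which the curvature lower bound supplies exponential decay.

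\emph{Step 1 (reduction to an mgf bound).}
Let $\lambda_{\max}$ denote the largest eigenvalue. For $\theta>0$, Markov's inequality applied to $\tr e^{\theta F}\ge e^{\theta\lambda_{\max}(F)}$ gives
\[
\Pr[\lambda_{\max}(F)\ge t]\ \le\ e^{-\theta t}\,\E\tr e^{\theta F}.
\]
Applying the same bound to $-F$ handles the lower tail; this is where the factor $2$ in $2d$ comes from. It therefore suffices to prove
\[
\E\tr e^{\theta F}\ \le\ d\,\exp\!\big(\theta^2v_F/(2\rho)\big).
\]
Optimizing at $\theta=\rho t/v_F$ then yields (i).

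\emph{Step 2 (exponential moment via the heat semigroup).}
Let $P_s$ be the heat semigroup of the Laplace--Beltrami operator $\mathcal L$ on $\cM$, acting entrywise on matrix-valued functions, and let $\Gamma(G,G')=\sum_i(\partial_iG)(\partial_iG')$ be the associated matrix carré du champ. Put $m(\theta)=\E\tr e^{\theta F}$. Since $\E F=0$ and $P_sF\to\E F=0$ as $s\to\infty$, I can write
\[
m'(\theta)=\E\tr\big(Fe^{\theta F}\big)=\int_0^\infty\E\tr\,\Gamma\big(P_sF,\,e^{\theta F}\big)\,ds .
\]
This uses integration by parts in the form $\E\tr(G\,\mathcal L H)=-\E\tr\Gamma(G,H)$.

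Two ingredients then control the integrand.
\begin{itemize}
\item The Bakry--\'Emery condition $\mathrm{Ric}\ge\rho$ gives the gradient contraction $\Gamma(P_sF)\preceq e^{-2\rho s}P_s\Gamma(F)$. Pointwise this is bounded by $e^{-2\rho s}v_F\,\Id$. A matrix form is needed, and I would derive it from the scalar contraction applied to $\langle\xi,P_sF\,\eta\rangle$ for fixed vectors $\xi,\eta$.
\item The derivative of $e^{\theta F}$ is expressed through the Fréchet-derivative formula $\int_0^1e^{u\theta F}(\theta\,\partial_iF)e^{(1-u)\theta F}\,du$. Combined with Cauchy--Schwarz in the trace and a mean-value trace inequality, this gives
\[
\big|\tr\Gamma(G,e^{\theta F})\big|\ \le\ \theta\,\sqrt{\opnorm{\Gamma(G)}\,\opnorm{\Gamma(F)}}\;\tr e^{\theta F}.
\]
\end{itemize}

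Inserting both bounds and integrating $e^{-\rho s}$ over $s$ gives $m'(\theta)\le(\theta v_F/\rho)\,m(\theta)$. Since $m(0)=d$, Grönwall's inequality gives the claimed bound $m(\theta)\le d\,e^{\theta^2v_F/(2\rho)}$.

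\emph{Step 3 (expectation).}
Set $t_0=\sqrt{2v_F\log(2d)/\rho}$. Then
\[
\E\opnorm F\ \le\ t_0+\int_{t_0}^\infty 2d\,e^{-\rho t^2/(2v_F)}\,dt .
\]
At $t_0$ the factor $2d$ is exactly cancelled by $e^{-\rho t_0^2/(2v_F)}$, and the remaining Gaussian tail integral is $O(\sqrt{v_F/\rho})$. This gives (ii) with a universal constant.

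\emph{Main obstacle.}
The delicate point is the trace inequality in Step 2. Because $\Gamma(P_sF)$ and $e^{\theta F}$ do not commute, the scalar Herbst argument does not transfer verbatim. The key is the mean-value trace inequality
\[
\big|\tr\big(A\,(e^{B}-e^{C})\big)\big|\ \le\ \tfrac12\,\tr\big(|A|\,|B-C|\,(e^{B}+e^{C})\big)
\]
(or its derivative form), which lets one pull out $\opnorm{\Gamma(\cdot)}$ while keeping $\tr e^{\theta F}$ as a factor. Huang and Tropp prove exactly this step. I would cite it rather than rederive it, along with the matrix Bakry--\'Emery contraction, which also needs care since it is stated there for general matrix Markov processes.
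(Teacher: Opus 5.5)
Your proposal is correct and follows the paper's route: the paper also takes (i) directly from Huang--Tropp, and your Steps 1--2 are a faithful sketch of their semigroup argument; it obtains (ii) exactly as in your Step 3, by integrating the tail and cutting at $t_0=\sqrt{2v_F\log(2d)/\rho}$. One small correction: the mean-value inequality in the form you quote is not well-posed for matrices, since its right-hand side need not be real, but you do not need it, because the trace bound you display in Step 2 follows directly from Cauchy--Schwarz and the convexity estimate $\tr(Ae^{(1-u)X}Ae^{uX})\le\tr(A^2e^X)$ in the eigenbasis of $X=\theta F$.
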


Part~(ii) is the standard conversion of a subgaussian tail into a moment bound: integrate~(i) and cut
at $t_0=\sqrt{2v_F\log(2d)/\rho}$, where the exponential factor absorbs the dimension $2d$, leaving a
remainder of order $\sqrt{v_F/\rho}$ that $t_0$ dominates.

By~\cite[Prop.~5.13]{Meckes19}, in the Hilbert--Schmidt metric the Ricci curvature is exactly
$\frac{n-2}4$ on $\mathrm{SO}(n)$ and $\frac n2$ on $\mathrm{SU}(n)$, so in both cases $\rho=\Omega(n)$
and \Cref{prop:herbst} reads
\begin{equation}
  \E\,\opnorm{F}\ \le\ C\,\sqrt{\frac{v_F\,\log(2d)}{n}},
  \qquad v_F=\sup_{x\in\cM}\opnorm{\textstyle\sum_i(\partial_iF(x))^2}\ \text{ as in \Cref{prop:herbst}.}
  \label{eq:herbst-group}
\end{equation}
Applying \Cref{prop:herbst} to our ensemble requires care on two points, and the structure of the
hidden reflection resolves both.

\begin{lemma}[Reduction to $\mathrm{SU}(n)$ and $\mathrm{SO}(n)$]
\label{lem:su-so}
Let $D$ be a fixed diagonal $\pm1$ matrix with $\tr D=0$.  Then $UDU^\dagger$ has the same law when
$U$ is Haar on $\Un$ as when $U$ is Haar on $\mathrm{SU}(n)$.  Likewise $QDQ^\top$ has the same law
when $Q$ is Haar on $\On$ as when $Q$ is Haar on $\mathrm{SO}(n)$.
\end{lemma}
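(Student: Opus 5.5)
The plan is to absorb the extra phase or sign by a diagonal factor that commutes with $D$. For the unitary case, write any $U\in\Un$ as $U=V\,\Theta$, with $V\in\mathrm{SU}(n)$ and $\Theta=\mathrm{diag}(e^{i\theta},1,\dots,1)$, where $e^{i\theta}=\det U$. Since $D$ is diagonal, $\Theta D\Theta^\dagger=D$, and hence $UDU^\dagger=VDV^\dagger$. The map $\Un\to\mathrm{SU}(n)$, $U\mapsto U\Theta(U)^{-1}$, is measurable, so it remains to check that it pushes Haar measure on $\Un$ forward to Haar measure on $\mathrm{SU}(n)$.

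It is cleaner to argue at the level of the conjugation map rather than this section. Put $\phi(U)\eqdef UDU^\dagger$, and let $\mu_{\mathrm U}$ and $\mu_{\mathrm{SU}}$ be the pushforwards under $\phi$ of the two Haar measures. Both are probability measures on the orbit $\{WDW^\dagger\}$. The key observation is that the orbit of $D$ under $\Un$ equals its orbit under $\mathrm{SU}(n)$, by the identity above. The measure $\mu_{\mathrm{SU}}$ is invariant under conjugation by $\mathrm{SU}(n)$, by left invariance of Haar measure. It is also invariant under conjugation by $\Un$, since any $W\in\Un$ acts on the orbit exactly as $W\Theta(W)^{-1}\in\mathrm{SU}(n)$ does. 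This holds because $W\,\phi(V)\,W^\dagger=(WV)D(WV)^\dagger$ and $WV$ can be corrected by a diagonal factor, which fixes $D$. Now $\Un$ acts transitively on the orbit, and the orbit is a homogeneous space $\Un/\mathrm{Stab}(D)$ with compact stabilizer. A $\Un$-invariant probability measure on such a space is unique, and $\mu_{\mathrm U}$ is another one. Hence $\mu_{\mathrm U}=\mu_{\mathrm{SU}}$.

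The real case follows the same steps, with $\Theta=\mathrm{diag}(-1,1,\dots,1)$ used when $\det Q=-1$. A diagonal $\pm1$ matrix commutes with $D$, so $QDQ^\top=(Q\Theta)D(Q\Theta)^\top$ with $Q\Theta\in\mathrm{SO}(n)$. Uniqueness of the invariant measure on the orbit $\On/\mathrm{Stab}(D)$ then finishes the argument.

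The only step needing care is the uniqueness of the invariant measure on the homogeneous space. Rather than cite it abstractly, I would derive it concretely. If $\mu$ is $\Un$-invariant, then averaging gives $\mu=\E_{W\sim\mathrm{Haar}(\Un)}[W_*\mu]$. This equals the pushforward of Haar measure under $W\mapsto WXW^\dagger$ for $X\sim\mu$. That law does not depend on $X$, since all points of the orbit are conjugate and Haar measure is right invariant. The trace condition $\tr D=0$ plays no role here beyond fixing the orbit.
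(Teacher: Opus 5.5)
Your outline matches the paper's: the law of $UDU^\dagger$ is an invariant probability measure on the orbit of $D$, a correction factor commuting with $D$ shows that the $\Un$- and $\mathrm{SU}(n)$-orbits coincide, and uniqueness finishes. Your real-case correction $\mathrm{diag}(-1,1,\dots,1)$ is exactly the paper's $J$.

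One step, however, is false as you justify it. You claim that every $W\in\Un$ acts on the orbit exactly as $W\Theta(W)^{-1}$ does. That would require $\Theta(W)$ to commute with every point of the orbit, but it commutes only with $D$. Here is a counterexample. Take $n=2$, $D=\mathrm{diag}(1,-1)$ and $W=\mathrm{diag}(i,1)$. Then $\Theta(W)=W$, so $W\Theta(W)^{-1}=\Id$ acts trivially. Yet $W$ sends $2\proj v-\Id$ with $v=(1,1)/\sqrt2$ to $2\proj{Wv}-\Id$, which is a different point because $|\ip{v}{Wv}|=1/\sqrt2$.

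Your supporting sentence actually proves something else. Since $\det V=1$ gives $\Theta(WV)=\Theta(W)$, it shows $W\phi(V)W^\dagger=\phi\big(WV\Theta(W)^{-1}\big)$, with the correction on the \emph{right} of $V$. To deduce that $\mu_{\mathrm{SU}}$ is $\Un$-invariant you still need that $V\mapsto WV\Theta(W)^{-1}$ preserves Haar measure on $\mathrm{SU}(n)$. This is true: the map is conjugation by $\Theta(W)$, an automorphism of $\mathrm{SU}(n)$ and hence Haar-preserving, followed by left translation by $W\Theta(W)^{-1}\in\mathrm{SU}(n)$. But it is not what you wrote, and the real case, run through ``the same steps'', has the same defect.

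The simplest repair removes the step altogether. You already showed that $\mathrm{SU}(n)$ acts transitively on the orbit, and $\mu_{\mathrm U}$ is $\mathrm{SU}(n)$-invariant simply because $\mathrm{SU}(n)\subseteq\Un$. Your averaging argument, run with $W$ Haar on $\mathrm{SU}(n)$ and right invariance there, then shows that any $\mathrm{SU}(n)$-invariant probability measure on the orbit equals $\mu_{\mathrm{SU}}$.

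That is precisely the paper's proof. The paper uses the central phase $e^{i\theta}$ in place of your diagonal $\Theta$, and for a scalar your ``acts exactly as'' claim would be true. In the real case $n$ is even, so no scalar in $\On$ has determinant $-1$. The paper therefore splits $\On=\mathrm{SO}(n)\sqcup\mathrm{SO}(n)J$ and observes directly that $(Q'J)D(Q'J)^\top=Q'DQ'^\top$.

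Alternatively, the section argument you abandoned already works in one line. For $S\in\mathrm{SU}(n)$ we have $\Theta(SU)=\Theta(U)$, so $U\mapsto U\Theta(U)^{-1}$ is left $\mathrm{SU}(n)$-equivariant and pushes Haar measure on $\Un$ forward to Haar measure on $\mathrm{SU}(n)$. The same holds for $Q\mapsto Q\Theta(Q)$ on $\On$.
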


\begin{proof}
Consider first the unitary case.  Every $U\in\Un$ is $e^{i\theta}V$ with $V\in\mathrm{SU}(n)$, and the
central phase cancels in the conjugation, so the law of $UDU^\dagger$ is a conjugation-invariant
probability measure on the orbit $\{VDV^\dagger:V\in\mathrm{SU}(n)\}$.  Since $\mathrm{SU}(n)$ acts
transitively on that orbit, such a measure is unique, and both Haar measures induce it.

Consider next the orthogonal case.  Here $\mathrm O(n)$ is the disjoint union of $\mathrm{SO}(n)$ and
$\mathrm{SO}(n)J$, where $J=\mathrm{diag}(-1,1,\dots,1)$; as $D$ is diagonal it commutes with $J$, so
$(Q'J)D(Q'J)^\top=Q'DQ'^\top$.  The two components induce the same law, as claimed.
\end{proof}

We need the reduction for two reasons.  First, $\Un$ has a one-dimensional center, along which the
Ricci curvature vanishes, so that \Cref{prop:herbst} does not apply to $\Un$ directly.  Second, and more
awkwardly, $\On$ is disconnected, so it is not a connected Riemannian manifold.  \Cref{lem:su-so} transfers
both questions to $\mathrm{SU}(n)$ and $\mathrm{SO}(n)$, where the curvature hypothesis holds.  For the \emph{scalar} statements we need no such reduction, since \eqref{eq:scalar-conc} covers all
four families uniformly.

The second point concerns the choice of frame.  Its index is written $a$ throughout this section and
the next, where it is the only meaning that letter carries; the advice length of
\Cref{model:hybrid}, also written $a$, appears in neither.  On $\mathrm{SU}(n)$ an orthonormal frame
is $\partial_aF(U)=\frac{d}{dt}F(Ue^{itK_a})|_{t=0}$ for $\{K_a\}_{a=1}^{n^2-1}$ an orthonormal basis
of \emph{traceless} Hermitian matrices, and on $\mathrm{SO}(n)$ it is
$\partial_aF(Q)=\frac{d}{dt}F(Qe^{tX_a})|_{t=0}$ for $\{X_a\}$ an orthonormal basis of real
antisymmetric matrices.  In both cases the only property we use is the derivative bound
\eqref{eq:carre} below, and its proof is insensitive to which of the two frames is used. On
$\mathrm{SU}(n)$ the direction omitted relative to a full basis of $\Herm_n$ is $K=\Id/\sqrt n$,
which contributes nothing since $[\Id,D]=0$, and on $\mathrm{SO}(n)$ Parseval over the antisymmetric
subspace only decreases the sum.  So we write the argument once, on $\mathrm{SU}(n)$.

\subsection{A variance bound for linear statistics of a Haar-conjugated observable}
\label{sec:khintchine}

\begin{lemma}[Matrix concentration for linear statistics of $R$]
\label{lem:khintchine}
Let $R=UDU^\dagger$ with $U$ Haar on $\mathrm{SU}(n)$, $D=D^\dagger$ diagonal, $\tr D=0$, $\opnorm D\le1$. Let $B_{xy}\in\bbC^{n\times n}$, $x,y\in[M]$, satisfy
\begin{align}
  \sup_{\norm\alpha_2=1}\ \sum_x\Big\|\sum_y\alpha_yB_{xy}\Big\|_2^2\ &\le\ b^2, \tag{Row}\label{eq:row}\\
  \sup_{\norm\beta_2=1}\ \sum_y\Big\|\sum_x\beta_xB_{xy}\Big\|_2^2\ &\le\ b^2. \tag{Col}\label{eq:col}
\end{align}
Define the $M\times M$ random matrix $Y_{xy}=\tr(B_{xy}R/n)$. Then $\E\opnorm Y\le C\,b\,n^{-3/2}\sqrt{\log(2M)}$ for a universal $C$.
\end{lemma}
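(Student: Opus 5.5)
The plan is to apply \Cref{prop:herbst}, in the form \eqref{eq:herbst-group}, to the Hermitian dilation of $Y$. Set
\[
  F(U)\ \eqdef\ \begin{pmatrix}0&Y(U)\\ Y(U)^\dagger&0\end{pmatrix}\in\Herm_{2M},
\]
so that $\opnorm F=\opnorm Y$ pointwise. The mean-zero hypothesis holds because each entry $Y_{xy}=\tr(B_{xy}R)/n$ is linear in $R$. By the argument of \Cref{lem:scalar-linear}, $\E R=0$, so $\E F=0$. $F$ is smooth in $U$, and the Ricci curvature of $\mathrm{SU}(n)$ is $n/2$. The only work is therefore to bound $v_F=\sup_U\opnorm{\sum_a(\partial_aF)^2}$ by $O(b^2/n^2)$. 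Given that bound, \eqref{eq:herbst-group} with $d=2M$ yields $\E\opnorm Y\le C\sqrt{b^2n^{-2}\log(4M)/n}$, which is the claim after adjusting $C$.

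First I compute the derivative. Along $K_a$ we have $R(t)=Ue^{itK_a}De^{-itK_a}U^\dagger$, so $\partial_aR=iU[K_a,D]U^\dagger$ and
\[
  \partial_aY_{xy}\ =\ \tfrac in\tr\big(B_{xy}U[K_a,D]U^\dagger\big)\ =\ \tfrac in\tr\big(K_a\,[D,U^\dagger B_{xy}U]\big),
\]
where the second form uses cyclicity of the trace. Next, $(\partial_aF)^2$ is block diagonal with blocks $(\partial_aY)(\partial_aY)^\dagger$ and $(\partial_aY)^\dagger(\partial_aY)$. Its operator norm is the larger of the norms of the two summed blocks, and each of those is a supremum of quadratic forms over unit vectors. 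For the second block and a unit $\beta\in\bbC^M$,
\[
  \Big\langle\beta\Big|\sum_a(\partial_aY)^\dagger(\partial_aY)\Big|\beta\Big\rangle\ =\ \sum_x\sum_a\Big|\sum_y\beta_y\,\partial_aY_{xy}\Big|^2\ =\ \frac1{n^2}\sum_x\sum_a\big|\tr\big(K_a\,G_x\big)\big|^2 ,
\]
with $G_x=[D,U^\dagger C_xU]$ and $C_x=\sum_y\beta_yB_{xy}$.

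The step needing care is the frame sum $\sum_a|\tr(K_aG)|^2$ for a non-Hermitian $G$, taken over an orthonormal basis of traceless Hermitian matrices. I will write $G=G_1+iG_2$ with $G_1,G_2$ Hermitian. Then $\tr(K_aG)=\langle K_a,G_1\rangle+i\langle K_a,G_2\rangle$ with real inner products, and Bessel's inequality in the real space $\Herm_n$ gives $\sum_a|\tr(K_aG)|^2\le\norm{G_1}_2^2+\norm{G_2}_2^2=\norm G_2^2$. For $\mathrm{SO}(n)$ the same holds by the remark preceding \Cref{sec:khintchine}. Combining this with $\norm{[D,Z]}_2\le2\opnorm D\norm Z_2\le2\norm Z_2$ and unitary invariance of $\norm\cdot_2$ gives $\sum_a|\tr(K_aG_x)|^2\le4\norm{C_x}_2^2$. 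Summing over $x$ and invoking \eqref{eq:row} bounds the quadratic form by $4b^2/n^2$.

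The first block is handled the same way with \eqref{eq:col}. There the coefficient vector appears conjugated, as $\bar\alpha$, which is still a unit vector. Both bounds are uniform in $U$, so $v_F\le4b^2/n^2$, and this completes the argument. The only place I expect friction is the bookkeeping between complex coefficients and the real Hermitian frame, which the Bessel step above resolves.
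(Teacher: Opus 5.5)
Your proposal is correct and is essentially the paper's proof: you take the Hermitian dilation of $Y$, differentiate along the traceless frame to get $\partial_aR=iU[K_a,D]U^\dagger$, and feed the commutator estimate $\norm{[D,Z]}_2\le2\opnorm D\norm Z_2$ through \eqref{eq:row} and \eqref{eq:col} to bound the two diagonal blocks of $\sum_a(\partial_aF)^2$, after which \Cref{prop:herbst} applies with $d=2M$ and $v_F\le4b^2/n^2$. The only cosmetic difference is that you bound the frame sum by Bessel's inequality over the traceless Hermitian frame, whereas the paper completes the frame with $\Id/\sqrt n$ and uses Parseval; the two are equivalent.
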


The two hypotheses are the analogue of the variance parameter $\sigma^2$ in
\eqref{eq:classical-khintchine}: they say that the coefficient matrices, viewed as a map on the
index set, do not amplify a unit vector by more than $b$ in Hilbert--Schmidt norm, in either of the
two directions.  The conclusion has the same shape as \eqref{eq:classical-khintchine}, with
$b\,n^{-3/2}$ in the role of $\sigma$; the factor $n^{-3/2}$ is the one already visible in
\Cref{lem:scalar-linear}, and the content of the lemma is that passing from a single linear
statistic to the operator norm of a whole matrix of them costs only $\sqrt{\log(2M)}$.

We should be precise about the attribution here.  The concentration is entirely
Huang--Tropp's: \Cref{prop:herbst} is a matrix inequality that already dispenses with independence,
asking curvature of it instead, and~\cite[Ex.~3.11 of the arXiv version]{HT21} already instantiates it for a Haar-conjugated
fixed symmetric matrix on $\mathrm{SO}(d)$, computing the variance proxy through the carr\'e du
champ.  The Hermitian dilation we use to reach a non-Hermitian $Y$ is also due to them.

What \Cref{lem:khintchine} adds is the variance proxy in the shape the relaxation of
\Cref{sec:onequery} consumes.  Their example is a $d\times d$ function of $d\times d$ data; ours is an
$M\times M$ matrix, with $M$ as large as $2^{\poly(\lambda)}$, whose \emph{entries} are linear
statistics of one $n\times n$ conjugation.  Bounding $\opnorm{\sum_a(\partial_a\mathcal Y)^2}$ for
that object is what the row and column hypotheses \eqref{eq:row} and \eqref{eq:col} are for, and it
is the step that lets a maximum over $2^M$ truth tables be paid for at rate $\sqrt{\log(2M)}$.

Two features of the statement matter later, and the half-subspace ensemble plays no part in
either.  It does not need $D$ to be a reflection: the hypotheses ask only that $D$ be Hermitian and
traceless with $\opnorm D\le1$, so the bound applies verbatim to a Haar-conjugated projector.  It also
transfers to $\mathrm{SO}(n)$, by the reduction recorded at the start of \Cref{sec:coherent}.

\begin{proof}
Everything here rests on one elementary fact: the map $K\mapsto i[K,D]$ has Hilbert--Schmidt
operator norm at most $2\opnorm D$.  The rest of the argument carries that fact through the
dilation and through the row and column sums.

To begin, we put $Y$ into the form \Cref{prop:herbst} accepts.  That proposition applies to
Hermitian-matrix-valued functions of mean zero, and $Y$ is neither, so we pass to its Hermitian
dilation
\[
  \mathcal Y\ \eqdef\ \begin{pmatrix}0&Y\\ Y^\dagger&0\end{pmatrix},
\]
which satisfies $\opnorm{\mathcal Y}=\opnorm Y$ and lives in dimension $2M$.  Its mean vanishes,
because $\E R=0$ as in \Cref{lem:scalar-linear}, and hence $\E\mathcal Y=0$.

Now we differentiate along the frame.  Let $\{K_a\}_{a=1}^{n^2-1}$ be the orthonormal basis of
traceless Hermitian matrices fixed in \Cref{sec:concentration}.  Differentiating
$R=Ue^{itK_a}De^{-itK_a}U^\dagger$ at $t=0$ gives $\partial_aR=U\,i[K_a,D]\,U^\dagger$, so
$(\partial_aY)_{xy}=\frac1n\tr(B_{xy}U\,i[K_a,D]\,U^\dagger)$.  We will show that for any
$A\in\bbC^{n\times n}$,
\begin{equation}
  \sum_a\big|\tr(A\,i[K_a,D])\big|^2\ \le\ 4\opnorm D^2\norm A_2^2\ \le\ 4\norm A_2^2 ,
  \label{eq:carre}
\end{equation}
which is the elementary fact above, in the form the frame delivers it.  To see \eqref{eq:carre},
note first that $\tr(A\,i[K_a,D])=\tr(K_a\,i[D,A])$ by cyclicity.  Write $i[D,A]=S+iT$ with $S$ and
$T$ Hermitian, and complete $\{K_a\}$ to an orthonormal basis of $\Herm_n$ by adjoining
$\Id/\sqrt n$; that extra direction contributes nothing, since $[\Id,D]=0$.  Parseval in this basis
then gives
\[
  \sum_a|\tr(K_a\,i[D,A])|^2=\norm S_2^2+\norm T_2^2=\norm{[D,A]}_2^2\le(2\opnorm D\norm A_2)^2 ,
\]
which is \eqref{eq:carre}.

That leaves the variance parameter.  Bounding $\opnorm{\sum_a(\partial_a\mathcal Y)^2}$
means bounding $\langle z,\sum_a(\partial_a\mathcal Y)^2z\rangle$ over unit vectors $z$ of the
doubled space, and we write such a $z$ as a pair $(u,v)$ with $\norm u_2^2+\norm v_2^2=1$.  By the
block form of the dilation,
\[
  \langle z,(\partial_a\mathcal Y)^2z\rangle=\norm{(\partial_aY)v}_2^2+\norm{(\partial_aY)^\dagger u}_2^2 ,
\]
so it suffices to bound $\sum_a\norm{(\partial_aY)v}_2^2$ by $4b^2n^{-2}\norm v_2^2$.  The
calculation for the second summand is identical, with \eqref{eq:col} in place of \eqref{eq:row}, and
we omit it.  So fix $v$ and set $G_x\eqdef\sum_yv_yB_{xy}$.  Then, by cyclicity of the trace,
\[
  \big((\partial_aY)v\big)_x\ =\ \tfrac1n\tr\big(U^\dagger G_xU\,i[K_a,D]\big) .
\]
Applying \eqref{eq:carre} with $A=U^\dagger G_xU$, and noting $\norm{U^\dagger G_xU}_2=\norm{G_x}_2$,
\[
  \sum_a\norm{(\partial_aY)v}_2^2\le\frac4{n^2}\sum_x\norm{G_x}_2^2=\frac4{n^2}\sum_x\Big\|\sum_yv_yB_{xy}\Big\|_2^2\le\frac{4b^2}{n^2}\norm v_2^2
\]
by \eqref{eq:row}.  Hence $\sum_a(\partial_a\mathcal Y)^2\preceq\frac{4b^2}{n^2}\Id_{2M}$, so
\eqref{eq:herbst-group} with $d=2M$ and $v_{\mathcal Y}\le4b^2/n^2$ gives the claim, the $\log(4M)$ that
appears at $d=2M$ being at most $2\log(2M)$.
\end{proof}

\subsection{The one-query bound}
\label{sec:onequery}

First, the baseline against which the bound should be read.  A single
$[-1,1]$-valued function of the hidden subspace, a single bounded statistic of $H$, already correlates with $R$ at scale $n^{-1/2}$, and no more.

\begin{lemma}[Local correlation with the hidden reflection]
\label{lem:local-corr}
For $H\subseteq\bbC^n$ Haar of dimension $n/2$ and any random variable $s(H)\in[-1,1]$,
$\opnorm{\E_H[s(H)R]}\le(n+1)^{-1/2}$.
\end{lemma}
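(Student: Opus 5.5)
The plan is to reduce the operator norm to a supremum of scalar correlations and then bound each one by Cauchy--Schwarz, using the exact second moment of a linear statistic of $R$.

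Since $s(H)$ is real and $R$ is Hermitian, $M\eqdef\E_H[s(H)R]$ is Hermitian. Hence
\[
  \opnorm M=\sup_{\norm\psi=1}|\bra\psi M\ket\psi|=\sup_{\norm\psi=1}\big|\E_H[s(H)\bra\psi R\ket\psi]\big| .
\]

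Fix a unit vector $\psi$ and put $X\eqdef\bra\psi R\ket\psi$. Cauchy--Schwarz together with $s^2\le1$ gives
\[
  |\E[sX]|\le\sqrt{\E[s^2]\,\E[X^2]}\le\sqrt{\E X^2} .
\]
So it suffices to show $\E X^2\le\frac1{n+1}$ for every unit $\psi$. Note that no mean-zero assumption on $s$ is needed, because $\E X=0$.

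To compute $\E X^2$, use unitary invariance. By \Cref{lem:su-so} we may take $R=UDU^\dagger$ with $U$ Haar on $\Un$. Then $\phi\eqdef U^\dagger\psi$ is a uniform unit vector in $\bbC^n$, and
\[
  X=\sum_j D_{jj}|\phi_j|^2 .
\]
For a uniform vector on the complex sphere, the second moments of the $|\phi_j|^2$ are the Dirichlet moments:
\[
  \E|\phi_j|^4=\frac{2}{n(n+1)},\qquad \E\big[|\phi_j|^2|\phi_k|^2\big]=\frac1{n(n+1)}\quad(j\ne k).
\]
Therefore
\[
  \E X^2=\frac{1}{n(n+1)}\Big(\sum_j D_{jj}^2+\big(\tr D\big)^2\Big)=\frac{n}{n(n+1)}=\frac1{n+1},
\]
using $D_{jj}^2=1$ and $\tr D=0$. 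Equivalently, the Weingarten identity $\E[\tr(AR)^2]=\frac{n^2\norm{A_0}_2^2}{n^2-1}\cdot\frac{\tr(D^2)}{n^2}\cdot\ldots$, specialized to $A=\proj\psi$, gives the same value. Taking the supremum over $\psi$ yields $\opnorm{\E_H[s(H)R]}\le(n+1)^{-1/2}$.

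The only step needing care is this moment computation, namely checking the exact constant $\frac1{n+1}$ rather than a bound of order $1/n$. I would verify it through the Dirichlet$(1,\dots,1)$ law of $(|\phi_j|^2)_j$, which gives exactly the two moments quoted above. The statement is for the complex ensemble; the real case would instead give $\frac2{n+2}$.
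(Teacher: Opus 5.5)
Your proof is correct and follows the paper's route: Hermiticity reduces the norm to a single diagonal entry $\bra\psi R\ket\psi$, Cauchy--Schwarz removes $s$, and the second moment $\frac1{n+1}$ closes the argument. The paper reads that moment off the $\mathrm{Beta}(n/2,n/2)$ law of $\bra vP\ket v$, which is your Dirichlet computation aggregated over the $n/2$ coordinates; your Weingarten aside is left unfinished, but nothing depends on it.
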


\begin{proof}
Reduce the operator norm to a scalar second moment. The operator $\E_H[s(H)R]$ is Hermitian, so its
operator norm is attained on unit vectors:
\[
  \opnorm{\E_H[s(H)R]}\ =\ \sup_{\norm v=1}\big|\E_H\big[s(H)\,\bra vR\ket v\big]\big| .
\]
So it is enough to bound the right-hand side at a single fixed $v$.

Fix such a $v$. Since $|s|\le1$, and then by Cauchy--Schwarz,
\[
  \big|\E_H\big[s(H)\,\bra vR\ket v\big]\big|\ \le\ \E_H\big|\bra vR\ket v\big|
  \ \le\ \sqrt{\E_H\big(\bra vR\ket v\big)^2} .
\]
The weight $s$ has now disappeared, and only the fluctuation of a single diagonal entry is left.

Now we compute that second moment. By rotation invariance $X\eqdef\bra vP\ket v$ is distributed
as $\mathrm{Beta}(n/2,n/2)$, so $\bra vR\ket v=2X-1$ has mean $0$ and variance $1/(n+1)$, which gives
the claim.
\end{proof}

We use the one-query normal form fixed in \Cref{sec:prelim-query}: an isometry $W$ with branches
$A_x$, a phase query, and an effect $\Pi$ with blocks $\Pi_{xy}$.  Expanding
$W\Delta W^\dagger=\sum_{x,y}\ket x\!\bra y\otimes A_x\Delta A_y^\dagger$, the two copies of $D_f$
contribute $f_xf_y$, and tracing against $\Pi$ picks out a block, so the bias \eqref{eq:bias-def} is
\begin{equation}
  Z_H(f)=\sum_{x,y}f_xf_y\,C_H(x,y),\qquad C_H(x,y)=\tr\!\big(\Pi_{xy}A_y\Delta A_x^\dagger\big).
  \label{eq:quadform}
\end{equation}
Each object enters in a different way: the adversary contributes the fixed data $(A_x,\Pi_{xy})$, the
randomness of $H$ enters $\Delta$ \emph{linearly}, and the unknown truth table appears only in the
rank-one sign pattern $f_xf_y$.

\begin{theorem}[One-query bound]
\label{thm:one-query}
There is a universal $C>0$ such that for every isometry $W$ and every effect $0\preceq\Pi\preceq\Id$,
with $M=|\cX|$ the width of the oracle,
\[
  \E_H\max_{f\in\{\pm1\}^{\cX}}|Z_H(f)|\ \le\ C\sqrt{\frac{\log(2M)}{n}} .
\]
\end{theorem}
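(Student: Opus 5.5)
The plan is to follow the weighted relaxation sketched in \Cref{sec:overview-coherent}: collapse the maximum over the $2^M$ truth tables into a single operator norm, then bound that norm in expectation with \Cref{lem:khintchine}. First I rewrite the coefficients of \eqref{eq:quadform} as linear statistics of $R$. Using $\Delta=R/n$ and cyclicity,
\[
  C_H(x,y)=\tfrac1n\tr\big(A_x^\dagger\Pi_{xy}A_y\,R\big).
\]
Let $\tau$ be the query mass \eqref{eq:query-mass}. Labels with $\tau_x=0$ have $A_x=0$, so they contribute nothing and I discard them. On the remaining labels set
\[
  B_{xy}\eqdef\frac{A_x^\dagger\Pi_{xy}A_y}{\sqrt{\tau_x\tau_y}},\qquad Y_{xy}\eqdef\tfrac1n\tr(B_{xy}R),\qquad g\eqdef D_\tau^{1/2}f .
\]
Then $Z_H(f)=\langle g,Yg\rangle$. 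Since $\tau$ is a probability vector and $f_x\in\{\pm1\}$, we have $\norm g_2^2=\sum_x\tau_x=1$ for every sign pattern. Hence $\max_f|Z_H(f)|\le\opnorm Y$, and it suffices to prove $\E_H\opnorm Y\le C\sqrt{\log(2M)/n}$.

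By \Cref{lem:khintchine}, applied on $\mathrm{SU}(n)$ via \Cref{lem:su-so}, we get $\E\opnorm Y\le Cb\,n^{-3/2}\sqrt{\log(2M)}$. So the whole theorem reduces to verifying \eqref{eq:row} and \eqref{eq:col} with $b\le n$. This is the step I expect to carry the weight, and the naive estimate $\opnorm{A_x}^2\le n\tau_x$ loses a factor $n$, so the norms must be split carefully.

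For \eqref{eq:row}, fix a unit $\alpha$ and define $V_\alpha\eqdef\sum_y\ket y\otimes\alpha_yA_y/\sqrt{\tau_y}$, an operator from $\bbC^n$ to $\bbC^{\cX}\otimes\cK$. Then
\[
  \norm{V_\alpha}_2^2=\sum_y|\alpha_y|^2\,\frac{\tr(A_y^\dagger A_y)}{\tau_y}=n .
\]
Write $Z_x\eqdef(\bra x\otimes\Id)\Pi V_\alpha$ for the blocks of $\Pi V_\alpha$, so that $\sum_y\alpha_yB_{xy}=A_x^\dagger Z_x/\sqrt{\tau_x}$. I will use the inequality $\norm{A_x^\dagger Z_x}_2\le\norm{A_x}_2\opnorm{Z_x}$, putting the Hilbert--Schmidt norm on $A_x$ and the operator norm on $Z_x$. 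Since $\norm{A_x}_2^2=n\tau_x$, this gives
\[
  \sum_x\frac{\norm{A_x^\dagger Z_x}_2^2}{\tau_x}\ \le\ n\sum_x\opnorm{Z_x}^2\ \le\ n\sum_x\norm{Z_x}_2^2\ =\ n\norm{\Pi V_\alpha}_2^2\ \le\ n\norm{V_\alpha}_2^2\ =\ n^2 ,
\]
where the last inequality uses $0\preceq\Pi\preceq\Id$. Thus $b^2\le n^2$ for \eqref{eq:row}.

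Condition \eqref{eq:col} is the same computation with the roles of $x$ and $y$ exchanged. Use $\sum_x\beta_xB_{xy}=\big(\sum_x\bar\beta_x\Pi_{yx}A_x/\sqrt{\tau_x}\big)^\dagger A_y/\sqrt{\tau_y}$, which holds because $\Pi_{xy}^\dagger=\Pi_{yx}$, and note that Hilbert--Schmidt norms are invariant under adjoints. With $b=n$, \Cref{lem:khintchine} gives $\E\opnorm Y\le C\,n\cdot n^{-3/2}\sqrt{\log(2M)}=C\sqrt{\log(2M)/n}$, which is the claimed bound.

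If the splitting of norms above fails somewhere, I would first try to replace the weight $\tau$ by the $H$-independent weights $\frac1n\tr(A_x^\dagger\Pi_{xx}A_x)$, adding a small floor. That keeps $g$ a unit vector up to a constant factor, while giving the row and column sums more room.
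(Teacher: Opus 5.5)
Your proposal is correct and follows the paper's proof step for step. You use the same weighted matrix $B_{xy}=A_x^\dagger\Pi_{xy}A_y/\sqrt{\tau_x\tau_y}$, the same unit vector $D_\tau^{1/2}f$, and the same stacked operator with $\norm{\Pi V_\alpha}_2^2\le n$, which feeds \Cref{lem:khintchine} with $b=n$.

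The only difference is where you place the norms. You put the Hilbert--Schmidt norm on $A_x$ and the operator norm on the block $Z_x$, whereas the paper uses $\opnorm{A_x^\dagger}^2\le\tr(A_x^\dagger A_x)=n\tau_x$ and keeps the Hilbert--Schmidt norm on the block. Both routes amount to $\norm{A_x^\dagger Z_x}_2\le\norm{A_x}_2\norm{Z_x}_2$, so your worry that the ``naive'' estimate $\opnorm{A_x}^2\le n\tau_x$ loses a factor $n$ is unfounded.
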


\begin{proof}
\emph{Step 1: the bias is a quadratic form in the sign pattern whose coefficients are linear in $R$.}
This is the identity \eqref{eq:quadform}, established above together with the account of how each
object enters it.  We use nothing else about the algorithm below.

\emph{Step 2: the maximum over the $2^M$ sign patterns collapses to a single operator norm.} The one
general tool for a quadratic form is the operator norm, and applying it directly to \eqref{eq:quadform}
gives $|f^\top C_Hf|\le\norm f_2^2\opnorm{C_H}=M\opnorm{C_H}$.  The factor $M$ is spurious.  It would
be paid only by an adversary able to place unit query weight on every label at once, whereas the
completeness relation in \eqref{eq:normal-form} says that the total weight is $1$.  Our remedy is to change the normalization so that the sign vector becomes a unit vector.

Recall from \eqref{eq:query-mass} the query mass $\tau_x=\frac1n\tr(A_x^\dagger A_x)$, a
probability vector on $\cX$ that depends on the adversary and not on $H$.  Labels with $\tau_x=0$
have $A_x=0$, and we discard them.  Define the \emph{weighted coefficient matrix}
\begin{equation}
  G_H(x,y)\ \eqdef\ \frac{C_H(x,y)}{\sqrt{\tau_x\tau_y}} ,
  \label{eq:weighted}
\end{equation}
and rewrite \eqref{eq:quadform} through the vector $D_\tau^{1/2}f$, where $D_\tau=\mathrm{diag}(\tau)$.
That vector has norm exactly $1$, since
\[
  \norm{D_\tau^{1/2}f}_2^2=\sum_x\tau_xf_x^2=\sum_x\tau_x=1 ,
\]
which holds for \emph{every} sign pattern $f$, because $f_x^2=1$.  Hence for every $H$ and every $f$,
\[
  |Z_H(f)|=\big|(D_\tau^{1/2}f)^\top\,G_H\,(D_\tau^{1/2}f)\big|\le\opnorm{G_H},
  \qquad\text{so}\qquad \max_f|Z_H(f)|\le\opnorm{G_H}.
\]
We have therefore traded a maximum over $2^M$ sign vectors for a single operator norm, of a matrix
whose rows and columns are normalized by the adversary's own query distribution.  Neither the
weighting nor the trade is new: both are due to Lombardi, Ma, and Wright, the weighting read here on
query labels rather than on workspace basis vectors, and \Cref{sec:intro-lmw} makes the comparison
precise.  What Step~3 supplies is the variance computation that the concentration input of
\Cref{prop:herbst} asks for, in the shape this relaxation produces.  Notice that after Step~2 the truth table
has left the problem altogether, and all that remains is the expected operator norm of a single
random matrix.

\emph{Step 3: that operator norm has expectation at most $C\sqrt{\log(2M)/n}$.} By cyclicity the
entries of $G_H$ are linear statistics of $R$:
\[
  G_H(x,y)=\tr\!\Big(B_{xy}\frac Rn\Big),\qquad B_{xy}\eqdef\frac{A_x^\dagger\Pi_{xy}A_y}{\sqrt{\tau_x\tau_y}} .
\]
We verify \eqref{eq:row} with $b=n$; the verification of \eqref{eq:col} is identical, using
$\Pi_{xy}^\dagger=\Pi_{yx}$, and is omitted.  Three facts drive it: the masses normalize each branch,
$\tr(A_y^\dagger A_y)=n\tau_y$; the measurement is a contraction, $0\preceq\Pi\preceq\Id$; and each
branch has small operator norm relative to its mass,
\[
  \opnorm{A_x^\dagger}^2=\opnorm{A_x^\dagger A_x}\le\tr(A_x^\dagger A_x)=n\tau_x ,
\]
since $A_x^\dagger A_x\succeq0$.

Fix $\alpha$ with $\norm\alpha_2=1$ and put $F_y\eqdef\frac{\alpha_y}{\sqrt{\tau_y}}A_y$. Stack these
into the single operator
\[
  F\ \eqdef\ \sum_y\ket y\otimes F_y\ \colon\ \bbC^n\longrightarrow\bbC^{\cX}\otimes\cK ,
  \qquad\text{so that}\qquad \norm F_2^2=\sum_y\norm{F_y}_2^2 ,
\]
the two norms agreeing because the $\ket y$ are orthonormal. Multiplication by $\Pi$ acts on $F$
blockwise, $\Pi F=\sum_x\ket x\otimes\Gamma_x$ with $\Gamma_x\eqdef\sum_y\Pi_{xy}F_y$, and since
$\opnorm\Pi\le1$ it does not increase the Hilbert--Schmidt norm. Hence
\begin{equation}
  \sum_x\norm{\Gamma_x}_2^2=\norm{\Pi F}_2^2\ \le\ \norm F_2^2=\sum_y\frac{|\alpha_y|^2}{\tau_y}\tr(A_y^\dagger A_y)=\sum_y|\alpha_y|^2\,n=n .
  \label{eq:contraction}
\end{equation}
This is the only place where we use the contraction property $\opnorm\Pi\le1$.  For each $x$ we have
$\sum_y\alpha_yB_{xy}=\frac1{\sqrt{\tau_x}}A_x^\dagger \Gamma_x$, so
\[
  \Big\|\sum_y\alpha_yB_{xy}\Big\|_2^2=\frac1{\tau_x}\norm{A_x^\dagger \Gamma_x}_2^2\le\frac{\opnorm{A_x^\dagger}^2}{\tau_x}\norm{\Gamma_x}_2^2\le n\norm{\Gamma_x}_2^2 .
\]
Summing over $x$ and applying \eqref{eq:contraction} gives $\sum_x\|\sum_y\alpha_yB_{xy}\|_2^2\le n^2$, i.e.\ \eqref{eq:row} with $b=n$.  \Cref{lem:khintchine} then gives the bound of the theorem,
\[
  \E_H\opnorm{G_H}\le C\,\frac n{n^{3/2}}\sqrt{\log(2M)}=C\sqrt{\frac{\log(2M)}{n}} . \qedhere
\]
\end{proof}

We make two observations about the bound.  First, $n^{-1/2}$ is the right scale.  A single Boolean
bit of information about $H$ correlates with $R$ at scale at most $n^{-1/2}$
(\Cref{lem:local-corr}), and the theorem says that a coherent query of width $M$ loses only
$\sqrt{\log(2M)}$ over that single-bit baseline.  The quantity the proof bounds is genuinely of the size the theorem reports, although the
relaxation of Step~2 need not be tight: at $\Pi=\Id$, $\cK=\bbC$, $A_x=\bra{e_x}$ we have $\opnorm{G_H}=\max_x|R_{xx}|$, and each diagonal
entry has mean absolute deviation of order $n^{-1/2}$ by \Cref{lem:mad}.  A union bound over the $n$ of
them gives $\E_H\opnorm{G_H}=O(\sqrt{\log n/n})$, while retaining the single term $x=1$ gives
$\E_H\opnorm{G_H}\ge\E_H|R_{11}|=\Omega(n^{-1/2})$, so $\E_H\opnorm{G_H}$ is of that size up to the
logarithm.  At $\Pi=\Id$, however, $Z_H(f)=0$ for every $f$, so the norm can exceed the bias.  The \emph{conclusion} is tight too, and by an explicit attack rather than an extremal
example, as \Cref{cor:one-query-tight} shows.  Second, the adversary enters only through the normalizations in \eqref{eq:row} and \eqref{eq:col};
we use no property of the branch operators beyond completeness.  That is why the bound holds for the
maximum over \emph{all} truth tables, and why it composes over parallel queries at no cost.

\begin{corollary}[\Cref{thm:one-query} is tight up to $\sqrt{\log n}$]
\label{cor:one-query-tight}
Let $n\ge4$ be even.  Then there are a Boolean function $f_H$ of width $M=2n$ and a one-query
adversary, holding a single challenge copy, whose average bias is at least $(4\pi n)^{-1/2}$.  Set
this against the upper bound $C\sqrt{\log(4n)/n}$ that \Cref{thm:one-query} gives at that width.
The statement holds both in the complex ensemble of this section and in the real ensemble of the main
construction.
\end{corollary}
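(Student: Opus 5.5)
We turn the measure-then-count attack of \Cref{prop:exact-rate}(ii) into a single phase query.  Take labels $\cX=[n]\times\{0,1\}$, so $M=2n$.  Define the truth table by $f_H(j,c)=(-1)^{c\cdot g_H(j)}$ with $g_H(j)\eqdef\mathbf 1[(P_H)_{jj}<\frac12]$; this is a bit-flip oracle for $g_H$ written in phase form, as in \Cref{sec:prelim-query}.

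The adversary has three steps.
\begin{enumerate}[label=(\roman*),leftmargin=2.2em]
\item It measures its challenge in the computational basis to obtain $j$.  This is a valid isometry of the form \eqref{eq:normal-form}, with $A_{(j,c)}=\frac1{\sqrt2}\ket{j}\!\bra{j}$ tensored with the answer qubit prepared in $\ket-$.
\item It makes one query, which imprints $(-1)^{g_H(j)}$ as a relative phase.
\item It measures the answer register in the $\ket\pm$ basis, which recovers $g_H(j)$ with certainty, and outputs $1$ exactly when $g_H(j)=1$.
\end{enumerate}
Since the query is made on a classical $j$ against a deterministic Boolean function, the adversary learns $g_H(j)$ exactly.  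Its acceptance effect is therefore $E=\sum_j g_H(j)\,\ket{j}\!\bra{j}$.

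By \eqref{eq:bias-def}, the bias is
\[
Z_H=\frac1n\sum_j g_H(j)R_{jj}=\frac1n\sum_j\mathbf 1[R_{jj}<0]\,R_{jj}.
\]
Because $\tr R=0$, this equals $-\frac1{2n}\sum_j|R_{jj}|$.  Its absolute value has expectation $\frac12\E_H|R_{11}|=\E_H|(P_H)_{11}-\frac12|$.  Here $(P_H)_{11}$ is $\mathrm{Beta}(n/2,n/2)$ in the complex case and $\mathrm{Beta}(n/4,n/4)$ in the real case.  We take the mean absolute deviation bound from \Cref{lem:mad}(a); this is the same computation that yields $(2\pi n)^{-1/2}$ in \Cref{cor:vsp-cheap}.

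The main obstacle is getting the constants to land at $(4\pi n)^{-1/2}$ in both ensembles.  Two sources of loss need tracking: the factor $\frac12$ between bias and distinguishing advantage, and the factor $2$ in the width.  In the real ensemble, \Cref{lem:mad}(a) gives $\E|R_{11}|\ge 2(2\pi n)^{-1/2}$ up to normalization, so the bias is at least $(2\pi n)^{-1/2}\ge(4\pi n)^{-1/2}$.  The complex Beta has half the variance, so its mean absolute deviation is smaller by a factor $\sqrt2$.  This gives exactly $(4\pi n)^{-1/2}$, provided the Beta mean-absolute-deviation lower bound of \Cref{lem:mad} holds at parameter $n/2$ as well as $n/4$.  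I expect that to follow from the same Beta-integral with Gautschi-type Gamma-ratio bounds, and I would verify it first.

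Finally, compare with \Cref{thm:one-query} at $M=2n$, which gives the upper bound $C\sqrt{\log(4n)/n}$.  The lower bound $(4\pi n)^{-1/2}$ therefore matches it up to $\sqrt{\log n}$.
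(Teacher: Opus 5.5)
Your proof is correct and follows the paper's: you use the same function $\mathbf 1[(P_H)_{jj}<\frac12]$, queried once in bit-flip form (hence width $2n$) after a computational-basis measurement, and you get bias $\E_H|\xi_{e_1}|$; your pointwise identity $Z_H=-\frac1{2n}\sum_j|R_{jj}|$, obtained from $\tr R=0$, is a clean substitute for the symmetry argument the paper borrows from \Cref{prop:exact-rate}(ii). The complex-ensemble bound you defer is exactly \Cref{lem:mad}(b), which gives $\E_H|\xi_{e_1}|\ge\sqrt{1-1/n}\,(2\pi n)^{-1/2}\ge(4\pi n)^{-1/2}$ for $n\ge2$; note that this is an inequality rather than ``exactly'' $(4\pi n)^{-1/2}$, and the factor $2$ in the width plays no role in the lower bound.
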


\begin{proof}
Take $f_H(j)=\mathbf 1[(P_H)_{jj}<\frac12]$ for $j\in[n]$, a Boolean function of $H$ and hence one
of the truth tables \Cref{thm:one-query} quantifies over. The adversary measures its challenge in the
computational basis, obtaining $j$, queries $f_H$ at $j$, and outputs the answer.

This is the exact rule analyzed in \Cref{prop:exact-rate}(ii), whose average distinguishing advantage
is $2\,\E_H|\xi_{e_1}|$ and whose bias is therefore $\E_H|\xi_{e_1}|$, computed in whichever ensemble
is in force.  In the real ensemble \Cref{lem:mad} bounds that quantity below by $(2\pi n)^{-1/2}$,
through the left-hand inequality of \eqref{eq:beta-mad}; this is the one place where we use the
hypothesis $n\ge4$.  In the complex ensemble,
\[
  \E_H|\xi_{e_1}|\ \ge\ \sqrt{1-1/n}\,(2\pi n)^{-1/2}\ \ge\ (4\pi n)^{-1/2}
\]
for $n\ge2$. In both ensembles the bias is at least $(4\pi n)^{-1/2}$.

The last thing to check is that the attack costs a single query in the model of \Cref{thm:one-query}. The
adversary makes one \emph{bit-flip} query, which is one phase query on a domain of size $2n$ by the
conversion of \Cref{sec:prelim-query}. With the answer register prepared in $\ket-$, the phase oracle for the truth table
$g_{(j,c)}=(-1)^{c\,f_H(j)}$ on $[n]\times\{0,1\}$ implements that query.  The two Hadamard gates of
the conversion do not depend on $H$, so we absorb them into $W$ and $\Pi$.  The width is $M=2n$, as
the statement says.
\end{proof}

So the $n^{-1/2}$ scale in \Cref{thm:one-query} is exactly right, and the only question is the
$\sqrt{\log(2M)}$ factor, which at $M=\Theta(n)$ is $\sqrt{\log n}$. \Cref{cor:one-query-tight}
also bears on the model. The extremal one-query adversary is elementary; it
measures in the computational basis and asks one classical question, which is why
\Cref{prop:exact-rate} and \Cref{thm:one-query} give the same rate.

\begin{corollary}[Parallel coherent queries]
\label{cor:parallel}
Consider an adversary that prepares a state from its challenge copy by an $H$-independent isometry,
applies $D_f^{\otimes t}$ once for a truth table $f$ of $H$, and then measures.  Every such adversary
has average bias at most $C\sqrt{(t\log(2M)+1)/n}$.  Consequently, for $n=2^{\Omega(\lambda)}$,
polynomially many parallel coherent queries of width $2^{\poly(\lambda)}$ give negligible bias.
\end{corollary}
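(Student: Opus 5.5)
The plan is to reduce the parallel query to a single query on a product domain and then invoke \Cref{thm:one-query} directly.

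Observe first that $D_f^{\otimes t}$, acting on $(\bbC^{\cX})^{\otimes t}$, is the phase oracle for the truth table $F(x_1,\dots,x_t)\eqdef\prod_{i}f_{x_i}$ on the label set $\cX^t$, which has width $M^t$. The adversary of the corollary is therefore exactly a one-query adversary in the normal form \eqref{eq:normal-form}. Its isometry $W$ has branches $A_{(x_1,\dots,x_t)}$, obtained by grouping the $t$ query registers into one label register, and the remaining workspace is absorbed into $\cK$. Its post-query effect $\Pi$ is the given final measurement. The isometry is $H$-independent by hypothesis, so completeness holds and the query mass $\tau$ is a probability vector on $\cX^t$ that does not depend on $H$.

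The key point is that \Cref{thm:one-query} bounds $\E_H\max_F|Z_H(F)|$ over \emph{all} sign patterns $F\in\{\pm1\}^{\cX^t}$. The product tables $F=f^{\otimes t}$ form a subset of these, so the maximum over $f$ (and hence any $H$-dependent choice of $f$) is dominated by that quantity. Applying the theorem at width $M^t$ gives the average bias bound $C\sqrt{\log(2M^t)/n}$.

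The remaining step is arithmetic. We have $\log(2M^t)=\log 2+t\log M\le t\log(2M)+1$ for $M\ge1$ and $t\ge1$, up to a change of universal constant depending on the base of the logarithm. This yields $C\sqrt{(t\log(2M)+1)/n}$, and the $+1$ also covers the degenerate case $t=0$, where the bias is the $H$-independent effect's bias, bounded by \Cref{lem:scalar-linear}. For the consequence, take $t=\poly(\lambda)$ and $\log M=\poly(\lambda)$, so the numerator is $\poly(\lambda)$ while $n=2^{\Omega(\lambda)}$, and the bound is negligible.

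I expect no real obstacle. The only point needing care is to confirm that the grouped isometry fits the normal form exactly, in particular that intermediate processing between the tensor factors is absent (the queries are applied simultaneously). This holds because the statement applies $D_f^{\otimes t}$ as one operation.
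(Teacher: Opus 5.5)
Your proposal is correct and is essentially the paper's argument: you identify $D_f^{\otimes t}$ as the phase oracle for the product table $\prod_i f_{x_i}$ on $\cX^t$ of width $M^t$, apply \Cref{thm:one-query} (whose maximum over all sign patterns covers the product tables), and finish with $\log(2M^t)\le t\log(2M)+1$. The separate appeal to \Cref{lem:scalar-linear} at $t=0$ is unnecessary, since \Cref{thm:one-query} at width $1$ already gives the bound there.
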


\begin{proof}
A $t$-fold parallel query is a single query to a wider oracle. Indeed, $D_f^{\otimes t}$ is the phase
oracle on $\bbC^{\cX^t}$ with truth table $f'_{(x_1,\dots,x_t)}=\prod_if_{x_i}$, another Boolean
function of $H$, on a domain of size $M^t$. A $t$-parallel adversary is a one-query
adversary against that oracle, so \Cref{thm:one-query} applies with $M^t$ in place of $M$, and
$\log(2M^t)\le t\log(2M)+1$, which is the stated bound. For polynomially many queries of polynomial
width the quantity under the root has numerator $\poly(\lambda)$, while $n=2^{\Omega(\lambda)}$, so the
bound is $\negl(\lambda)$, which is the second assertion.
\end{proof}

\subsection{Quantum advice}
\label{sec:advice-is-a-query}

This subsection settles the question of \emph{quantum advice} that \Cref{sec:security-thm} raised.
We give two arguments, which serve different purposes. The first is direct and sharp: it charges $m$
qubits of advice at rate $\sqrt m/n$ rather than $\sqrt{m/n}$, and so reaches
$m=n^{2-\Omega(1)}$, which is optimal. The second routes the advice through Rosenthal's one-query
state synthesis~\cite{Rosenthal23}; it is quantitatively weaker, but it is the statement that
\emph{identifies} advice with a query, as \Cref{sec:lift} requires.

\begin{proposition}[Advice is charged at rate $\sqrt m/n$]
\label{prop:advice-sharp}
Let $\cV$ be a Hilbert space of dimension $d_\cA$ and let $0\preceq\Pi\preceq\Id$ be any fixed,
$H$-independent effect on $\cV\otimes\bbC^n$. Then
\[
  \E_H\ \sup_{\ket\sigma\in\cV,\ \norm{\sigma}=1}\ \frac1n\Big|\tr\big[\Pi\,(\proj\sigma\otimes R)\big]\Big|
  \ \le\ C\,\frac{\sqrt{\log(2d_\cA)}}{n}
\]
for the universal constant $C$ of \Cref{lem:khintchine}. Consequently, a distinguisher of unbounded
computational power that receives $m$ qubits of advice depending arbitrarily on $H$, together with one
challenge copy, and makes no oracle queries, has average bias at most $C'\sqrt{m+1}/n$. This is
$\negl(\lambda)$ for every $m=n^{2-\Omega(1)}$.
\end{proposition}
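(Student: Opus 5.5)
The plan is to read the supremum over advice states as the operator norm of a single random $d_\cA\times d_\cA$ matrix whose entries are linear statistics of $R$, and then apply \Cref{lem:khintchine} with $M=d_\cA$. Fix an orthonormal basis $\{\ket i\}$ of $\cV$ and take the blocks $\Pi_{ij}=(\bra i\otimes\Id)\,\Pi\,(\ket j\otimes\Id)$ of \Cref{sec:prelim-operators}, which are operators on $\bbC^n$. Define $Y_{ij}\eqdef\tr(\Pi_{ji}R/n)$, so that $Y=\frac1n\tr_{\bbC^n}[\Pi(\Id\otimes R)]$ up to transposition conventions. For a unit $\ket\sigma=\sum_i\sigma_i\ket i$,
\[
  \frac1n\tr\big[\Pi(\proj\sigma\otimes R)\big]=\sum_{i,j}\bar\sigma_i\sigma_j\,\tr(\Pi_{ij}R)/n ,
\]
which is a Hermitian quadratic form in $\sigma$. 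Its supremum in absolute value over unit $\sigma$ is therefore exactly $\opnorm{Y}$, with no relaxation step needed. This is simpler than \Cref{thm:one-query}, because the advice register already carries the $\ell_2$ normalization that the query-mass weighting had to manufacture there.

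Next I will verify \eqref{eq:row} and \eqref{eq:col} for $B_{ij}=\Pi_{ji}$ (or $\Pi_{ij}$, whichever indexing matches) with $b^2=n$. Fix $\alpha\in\bbC^{d_\cA}$ with $\norm\alpha_2=1$. Then $\sum_j\alpha_j\Pi_{ij}$ is the $i$-th block of $\Pi\,(\ket\alpha\otimes\Id_n)$, so
\[
  \sum_i\Big\|\sum_j\alpha_j\Pi_{ij}\Big\|_2^2=\norm{\Pi(\ket\alpha\otimes\Id_n)}_2^2\le\norm{\ket\alpha\otimes\Id_n}_2^2=n ,
\]
using $\opnorm\Pi\le1$ exactly as in \eqref{eq:contraction}. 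The column condition follows identically from $\Pi_{ij}^\dagger=\Pi_{ji}$. \Cref{lem:khintchine} then gives $\E_H\opnorm Y\le C\sqrt n\,n^{-3/2}\sqrt{\log(2d_\cA)}=C\sqrt{\log(2d_\cA)}/n$, which is the displayed inequality. The only point to check is that the lemma's entries are traced against $R/n$ with the same normalization, which they are.

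For the consequence, a query-free distinguisher of unbounded power holding advice $\xi_H$ on $m$ qubits and one challenge is a fixed, $H$-independent effect $\Pi$ on $\cV\otimes\bbC^n$. By \eqref{eq:bias-def}, its bias is $\frac1n\tr[\Pi(\xi_H\otimes R)]$. I will reduce mixed advice to pure advice in one of two ways.
\begin{itemize}
\item Convexity: the bias is affine in $\xi_H$, so it is at most the supremum over pure states.
\item Purification: this doubles the number of qubits and changes only the constant.
\end{itemize}
Since the supremum sits inside $\E_H$, arbitrary $H$-dependence of the advice is covered. With $d_\cA=2^m$ we have $\log(2d_\cA)=O(m+1)$, giving $C'\sqrt{m+1}/n$. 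For $m=n^{2-\eps}$ this is $O(n^{-\eps/2})=2^{-\Omega(\lambda)}$, hence negligible. I expect the main (mild) obstacle to be bookkeeping: matching the block indexing and Hermiticity so that $Y$ is Hermitian and its quadratic form is literally the bias, and confirming that the row/column hypotheses are met with $b^2=n$ rather than with some factor of $d_\cA$.
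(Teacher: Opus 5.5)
Your proposal is correct and follows the paper's proof. The paper also identifies the supremum over advice states, pointwise in $H$, with $\opnorm{Y}$ for the Hermitian matrix of block statistics $\frac1n\tr(\Pi_{ij}R)$, and then applies \Cref{lem:khintchine} with $b=\sqrt n$ on both sides.

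The remaining differences are cosmetic and affect only constants:
\begin{itemize}
\item You obtain the column condition from $\Pi_{ij}^\dagger=\Pi_{ji}$, which reduces it to the row condition at $\bar\beta$. The paper instead computes $\norm{(\bra w\otimes\Id)\Pi}_2^2\le n$ directly, using $\Pi^2\preceq\Id$.
\item You pass to pure advice by convexity, which keeps $d_\cA=2^m$. The paper purifies and works with $d_\cA\le2^{2m}$.
\end{itemize}
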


Notice the order of the quantifiers: the supremum is over the entire unit sphere of $\cV$, so the bound
holds for an arbitrary advice family $\{\ket{\sigma_H}\}$ with no measurability or computability
hypothesis, and no maximization is left to exchange with the expectation.

\begin{proof}
Fix an orthonormal basis $\{\ket i\}_{i=1}^K$ of $\cV$ and write
\[
  \Pi_{ji}\ \eqdef\ (\bra j\otimes\Id)\,\Pi\,(\ket i\otimes\Id)
\]
for the blocks of $\Pi$, which are operators on $\bbC^n$; since $\Pi$ is Hermitian,
$\Pi_{ji}^\dagger=\Pi_{ij}$. Writing $\ket\sigma=\sum_ic_i\ket i$ and expanding the trace,
\[
  \tr\big[\Pi(\proj\sigma\otimes R)\big]=\sum_{i,j}c_i\bar c_j\,\tr(\Pi_{ji}R)
  \qquad\text{and so}\qquad
  \tfrac1n\tr\big[\Pi(\proj\sigma\otimes R)\big]=\bra cY\ket c,
\]
where $Y$ is the $K\times K$ matrix with entries $Y_{ji}\eqdef\frac1n\tr(\Pi_{ji}R)$.

Two features of $Y$ are what decide the proposition.  First, it is Hermitian, because
$\overline{\tr(\Pi_{ij}R)}=\tr(R^\dagger\Pi_{ij}^\dagger)=\tr(\Pi_{ji}R)$.  Second, it does not
depend on $\ket\sigma$.  The supremum over unit $\ket\sigma$ is exactly $\opnorm
Y$, and this identity holds \emph{pointwise in $H$}, so that no maximization is left to exchange with
the expectation.  Contrast this with the alternative route, which would fix an advice family
$\{\ket{\sigma_H}\}$ and run a net over such families: that route needs a measurability hypothesis on
the family, and it pays for the net.  Here neither is required, and it remains only to bound
$\E_H\opnorm Y$.

Apply \Cref{lem:khintchine} with $B_{ji}=\Pi_{ji}$ on an index set of size $K$. Both of its hypotheses
hold with $b=\sqrt n$. For \eqref{eq:row}, fix $\alpha$ with $\norm\alpha_2=1$ and notice that
$\sum_i\alpha_i\Pi_{ji}=(\bra j\otimes\Id)\Pi(\ket\alpha\otimes\Id)$. Hilbert--Schmidt norms add over
blocks, so
\[
  \sum_j\Big\|\sum_i\alpha_i\Pi_{ji}\Big\|_2^2\ =\ \norm{\Pi(\ket\alpha\otimes\Id)}_2^2
  \ =\ \sum_{l=1}^n\norm{\Pi(\ket\alpha\otimes\ket{e_l})}^2
  \ \le\ \sum_{l=1}^n\norm{\ket\alpha\otimes\ket{e_l}}^2\ =\ n ,
\]
with equality at $\Pi=\Id$. For \eqref{eq:col}, put $\ket w=\sum_j\overline{\beta_j}\ket j$, so that
$\sum_j\beta_j\Pi_{ji}=(\bra w\otimes\Id)\Pi(\ket i\otimes\Id)$, and then
\[
  \sum_i\big\|(\bra w\otimes\Id)\Pi(\ket i\otimes\Id)\big\|_2^2\ =\ \norm{(\bra w\otimes\Id)\Pi}_2^2
  \ =\ \tr\big[(\bra w\otimes\Id)\Pi^2(\ket w\otimes\Id)\big]\ \le\ \tr\Id_n\ =\ n ,
\]
using $\Pi^2\preceq\Pi\preceq\Id$. With $b=\sqrt n$ on both sides, \Cref{lem:khintchine} gives
$\E_H\opnorm Y\le C\sqrt n\cdot n^{-3/2}\sqrt{\log(2d_\cA)}$, which is the bound displayed in the
statement.

Now we charge $m$ qubits of advice. A mixed $m$-qubit advice state has a purification on $2m$
qubits, and handing the receiver the purification only increases its power, so we may take $\cV$ of
dimension $d_\cA\le2^{2m}$ and $\sqrt{\log(2d_\cA)}=O(\sqrt{m+1})$. Finally $\sqrt m/n=\negl(\lambda)$ whenever
$m=n^{2-\Omega(1)}$, since $n=2^{\Omega(\lambda)}$.
\end{proof}

\paragraph{Tightness of the rate, and comparison with a coherent query.}  Both ends of
\Cref{prop:advice-sharp} are attained.  At $m=0$ the bias of a fixed effect is $O(n^{-1})$ by
\Cref{lem:scalar-linear}, which matches the bound.  At the other end, $m=\Theta(n^2)$, the advice can
carry the index of the nearest point of a constant-accuracy net of the Grassmannian.  Such a net has
$\exp(O(n^2))$ points, since the Grassmannian has real dimension $\Theta(n^2)$, and the effect
$\Pi=\sum_p\proj p\otimes P_{H_p}$ then achieves bias at least $\frac12-\delta$ for a net of constant
accuracy $\delta$.  So $m=O(n^2)$ qubits suffice for constant bias, and \Cref{prop:advice-sharp}
shows that this order is also necessary.

Comparing with \Cref{thm:one-query} explains the difference between the two rates.  A coherent query
of width $M$ is charged $\sqrt{\log(2M)/n}$, which is a factor $\sqrt n$ worse per bit of index than
advice.  The whole difference sits in the row constant.  \Cref{thm:one-query} must pay
$\opnorm{A_x^\dagger}^2\le n\tau_x$, because a query branch may concentrate the entire challenge on a
single label, whereas an orthonormal advice basis admits no such concentration and pays only
$\opnorm\Pi\le1$.  In this sense a coherent query of width $M$ is comparable to $\Theta(n\log(2M))$
qubits of advice: the two upper bounds coincide there, with a ratio of $\sqrt n$ between the row
constants.  That is why \Cref{cor:quantum-advice} below, which is obtained by turning advice into a
query, is the weaker of the two statements.

The identification of advice with a query is nevertheless needed in \Cref{sec:lift}, so we state it.

\begin{corollary}[Security against quantum advice, via one-query synthesis]
\label{cor:quantum-advice}
Let $\{\sigma_H\}$ be an arbitrary family of $\poly(\lambda)$-qubit states depending arbitrarily on $H$. Every distinguisher (\emph{of unbounded computational power}) that receives $\sigma_H$ together with one challenge copy, and makes no oracle queries, has average bias $\negl(\lambda)$.

Equivalently: $\VSP_n$ resists $\polylog(n)$-qubit one-way quantum advice from the holder of $H$ to \emph{the holder of one copy of the challenge state}, even against an unbounded receiver. \Cref{sec:advice-is-a-query} explains why that is not the same as the receiver of \Cref{def:vsp}.
\end{corollary}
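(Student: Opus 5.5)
The plan is to route the advice through Rosenthal's one-query state synthesis~\cite{Rosenthal23} and then invoke \Cref{thm:one-query}, exactly as the surrounding text announces.

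\emph{Step 1: purify the advice.} First pass to pure advice. A $\poly(\lambda)$-qubit mixed state $\sigma_H$ has a purification $\ket{\tilde\sigma_H}$ on $m=\poly(\lambda)$ qubits. Handing the receiver the purification only increases its power, as recorded in \Cref{sec:prelim-operators}, so we may assume the advice is pure.

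\emph{Step 2: synthesize the advice with one query.} Apply Rosenthal's theorem. For every $m$-qubit pure state $\ket\phi$ there is a Boolean function $g_\phi$ on a domain of size $2^{\poly(m)}$ and a fixed circuit, independent of $\phi$, that makes one query to $g_\phi$ and outputs a state within trace distance $\eps$ of $\ket\phi$. Here $\eps$ may be taken inverse-exponential in $\poly(m)$ at polynomial cost in the width exponent. Setting $f_H\eqdef g_{\tilde\sigma_H}$ gives an arbitrary Boolean function of $H$, which is exactly the generality \Cref{thm:one-query} allows. No measurability issue arises, because that theorem bounds $\E_H\max_f$ over all truth tables simultaneously.

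\emph{Step 3: assemble a one-query adversary.} The distinguisher that holds the challenge copy, runs the synthesis circuit, and then applies the advice-using measurement is a one-query adversary of width $M=2^{\poly(\lambda)}$. Its pre-query isometry acts on the challenge together with ancillas, which fits the normal form \eqref{eq:normal-form}, since the synthesis ancillas are $H$-independent. Converting bit-flip access to phase access doubles the width, as in \Cref{sec:prelim-query}. Its bias therefore differs from the advice distinguisher's bias by at most $\eps$, by contractivity of trace distance under the final measurement.

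\emph{Step 4: bound the bias.} \Cref{thm:one-query} then gives average bias at most $C\sqrt{\log(2M)/n}+\eps=\poly(\lambda)\cdot2^{-\Omega(\lambda)}+\eps$, which is negligible since $n=2^{\lambda}$.

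\emph{The equivalent $\VSP$ formulation.} The "equivalently" clause needs only a reinterpretation. A $\polylog(n)=\poly(\lambda)$-qubit message from the holder of $H$ to a holder of one copy of the challenge is precisely such advice, and the receiver's guess is a measurement on advice $\otimes$ challenge. Alternatively, the sharper \Cref{prop:advice-sharp} already yields bias $O(\sqrt{m+1}/n)$ directly, which can serve as a cross-check.

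\emph{Main obstacle.} The one point needing care is the precise form of Rosenthal's synthesis: the fidelity/width tradeoff, and whether the query is made in superposition from a fixed $H$-independent starting state with no prior access to the challenge. The width must stay $2^{\poly(\lambda)}$ while the error $\eps$ is negligible. Given that, the argument is a direct substitution into \Cref{thm:one-query}.
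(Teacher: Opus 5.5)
Your proposal is correct and follows the paper's proof essentially step for step: purify the advice, synthesize it with one query to a Boolean function $f_H$ via Rosenthal's theorem, absorb the synthesis circuit into the pre-query isometry (acting on fresh ancillas, with the challenge untouched) and into the final effect, apply \Cref{thm:one-query}, and charge the synthesis error at the end. The one detail the paper makes explicit and you leave implicit is padding the advice register to $m\ge\lambda$ qubits, so that the synthesis error $\exp(-m)$ is negligible in $\lambda$ even when the advice itself is short.
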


\begin{proof}
The idea is to trade the advice for a query.  Rosenthal's one-query state synthesis turns the advice
state into one query to a Boolean function of $H$, after which the entire advice adversary sits in
the normal form of \Cref{sec:prelim-query}.  \Cref{thm:one-query} then applies.  It applies because
that theorem maximizes over \emph{all} truth tables and constrains the adversary only to be an
isometry followed by an effect, so we need know nothing about how the truth table was
produced.  The synthesis is exact only up to an $\exp(-\poly(\lambda))$ error, and we charge that at
the end.

We may take $\sigma_H$ pure: purifying can only increase the receiver's power, and a purification of
a $\poly(\lambda)$-qubit state lives on $\poly(\lambda)$ qubits. Write $\ket{\psi_H}$ for it, on
$m=\poly(\lambda)$ qubits.

\emph{Step 1: the advice state is prepared by a fixed circuit making one query to a Boolean function
of $H$.} Pad the advice register with $\ket0$ qubits so that $m\ge\lambda$, which changes neither the
state nor the receiver's power, and fix the target error $\eps(m)=\exp(-m)\le\exp(-\lambda)$.
By~\cite[Thm.~4.1]{Rosenthal23} there is a \emph{uniform} sequence of $\poly(m)$-qubit circuits
$(C_m)$, each making \emph{one} query to a classical oracle, with the following property: for every
$m$-qubit state $\ket\psi$ there is an oracle $f_{\ket\psi}$ such that the reduced state on the first
$m$ qubits of $C_m^{f_{\ket\psi}}\ket{0\cdots0}$ is within $\eps(m)$ trace distance of $\psi$.

The query register here holds $\poly(m)$ qubits, so the truth table queried has $2^{\poly(m)}$
entries, and it can carry far more information than the $m$ qubits being prepared.  The
content of the theorem is that a single fixed circuit decodes it.  We use four features of that
statement, all of them explicit in~\cite{Rosenthal23}.
\begin{itemize}[leftmargin=1.6em,itemsep=2pt,topsep=3pt]
\item The circuit $C_m$ depends on $\eps$, which we have fixed in advance, but \emph{not} on
  $\ket\psi$; only the oracle does.
\item The oracle may be taken to have a \emph{single output bit}, queried as the phase oracle
  $\sum_x(-1)^{f(x)}\proj x$, which is precisely our $D_f$.
\item The guarantee is on the \emph{reduced} state, so no condition is needed on the other
  registers.
\item The query register holds $\poly(m)$ qubits, so the domain has size
  $M=2^{\poly(m)}=2^{\poly(\lambda)}$.
\end{itemize}
Apply this with $\ket\psi=\ket{\psi_H}$ and set $f_H\eqdef f_{\ket{\psi_H}}$, a Boolean function of
$H$, arbitrary, which is all \Cref{thm:one-query} ever asks of it.

After Step~1 the whole dependence on $H$ has been moved into the truth table $f_H$, and the circuit
that prepares the advice is fixed. We still have to assemble that circuit and the distinguisher into a
single adversary of the required shape.

\emph{Step 2: the advice distinguisher becomes a one-query adversary in the normal form of
\Cref{sec:prelim-query}.} Let $\cB$ be the advice distinguisher, whose two-outcome measurement on the
advice register together with the challenge may be arbitrary and even unbounded.  Assemble it with
$C_m$ as follows.  The pre-query isometry is
\[
  W\ \eqdef\ \big(\text{pre-query part of }C_m\big)\otimes\Id_{\text{challenge}} ,
\]
acting on fresh ancillas and leaving the challenge untouched.  This is a legitimate isometry
$\bbC^n\to\bbC^{\cX}\otimes\cK$, with $\cK$ holding the challenge together with Rosenthal's ancillas,
and its completeness relation is immediate, the challenge register being carried along unchanged.
The single query is to $D_{f_H}$, and the final effect $\Pi$ is the post-query part of $C_m$ followed
by $\cB$'s own measurement.

\emph{Step 3: the bias is negligible, and the synthesis error costs only $\exp(-\poly(\lambda))$.} The
adversary just built has truth table $f_H$ and query width $M$, so by \Cref{thm:one-query} its bias is at most
\[
  C\sqrt{\frac{\log(2M)}{n}}\;=\;C\sqrt{\frac{\poly(\lambda)}{n}}\;=\;\negl(\lambda),
\]
since $n=2^{\Omega(\lambda)}$.  By Step~1 the state it hands $\cB$ is within $\exp(-\poly(\lambda))$ of
$\sigma_H$ in trace distance, so $\cB$'s own bias differs from that display by at most
$2\exp(-\poly(\lambda))$, and is negligible as well.

The $\VSP$ restatement is the same statement read through \Cref{def:vsp}.  A one-way $m$-qubit message
from the holder of $H$ is such an advice state, and $m=\poly(\lambda)=\polylog(n)$, which is the
claim.
\end{proof}

The obvious protocol is consistent with the corollary.  The holder of $H$ sending $t$ copies of
$\rho_0=2P_H/n$ uses $t\log n$ qubits, and a swap test of one of them against the challenge has
signal $1/n$, since $\tr\rho_0^2=2/n$ and $\tr\rho_0\rho_1=0$.  The receiver holds only one challenge
copy, so it cannot run $t$ such tests independently, and the $t$ copies together are worth
$\widetilde\Theta(\sqrt t)/n$ rather than $t/n$ (\Cref{prop:copies-sqrt}); either way the bias is
negligible for $t=\poly(\lambda)$.

\paragraph{Advice against a challenge copy.}
\Cref{prop:advice-sharp} and \Cref{cor:quantum-advice} bound a receiver holding \emph{one quantum
copy} of $\rho_b$.  In \Cref{def:vsp}, by contrast, Alice holds a classical description of $u$, from
which she may prepare as many copies of $\ket u$ as she likes and compute with $u$ directly.  That is
the one-directionality noted in \Cref{sec:framework}, and here it is strict: the two models
separate, so neither statement bears on the classical-input model of \Cref{def:vsp}.

To see the separation, let Bob send a description of one unit vector $w\in H$ accurate to $1/n$, at a
cost of $m=O(n\log n)$ bits, and let Alice threshold $|\ip u{\tilde w}|$ at $n^{-3/4}$.  Suppose
first that $b=1$.  Then $u\in H^\perp$, so $\ip uw=0$ exactly and $|\ip u{\tilde w}|\le1/n$.  Suppose
instead that $b=0$.  Then $u$ is uniform on the sphere of $H$, so $\ip uw$ has the law of a
coordinate of a uniform vector on $S^{n/2-1}$, of standard deviation $\sqrt{2/n}$; hence
$|\ip uw|\ge2n^{-3/4}$ except with probability $O(n^{-1/4})$, and in that case
$|\ip u{\tilde w}|\ge2n^{-3/4}-1/n>n^{-3/4}$.  Alice succeeds with probability $1-o(1)$.
At the same $m$, meanwhile, \Cref{prop:advice-sharp} caps a receiver holding one copy of $\rho_b$ at
$C\sqrt{n\log n}/n=o(1)$.  So at $m=O(n\log n)$ the classical-input receiver is stronger by a factor
$\Omega(\sqrt{n/\log n})$.

The reverse direction is in fact settled, and at a much larger scale than a net argument gives.

\begin{proposition}[Reverse one-way communication for $\VSP$ is linear]
\label{prop:reverse-oneway}
Let $n$ be even.  Suppose Bob, holding a Haar-random half-dimensional $H\subseteq\bbR^n$, sends
Alice an $m$-qubit message, after which Alice, holding $u$ and computationally unbounded, decides
$\VSP_n$ with advantage $\frac13$ on the distribution of \Cref{def:vsp}.  Then
$m\ge\frac n2\big(1-h_2(\tfrac16)\big)=\Omega(n)$, where $h_2$ is the binary entropy.  Conversely
$O(n)$ classical bits suffice for constant advantage, so the complexity is $\Theta(n)$.
\end{proposition}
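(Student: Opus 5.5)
The lower bound will come from Nayak's bound for quantum random access codes. The idea is to embed a random access code of length $n/2$ into the reverse one-way protocol. The embedding uses the rotation invariance of the $\VSP_n$ distribution, the same invariance exploited in \Cref{lem:vsp-small}.

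First, I would fix the structured subspaces. For $x\in\{0,1\}^{n/2}$ put $f_i^{\pm}=(e_{2i-1}\pm e_{2i})/\sqrt2$ and $H_x=\mathrm{span}\{f_i^{(-1)^{x_i}}\}_i$. Then $H_x^\perp=\mathrm{span}\{f_i^{-(-1)^{x_i}}\}_i$, so for each $i$ exactly one of $f_i^+,f_i^-$ lies in $H_x$, and which one is encoded by $x_i$.

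Second, I would identify laws. The pairs $(H,u)$ with $u$ a unit vector of $H$ form a single $\On$-orbit, and so do the pairs with $u\in H^\perp$. Hence the $\VSP_n$ distribution of $(H,u,b)$ coincides with the law of $(QH_x,\,Qf_i^{s},\,b)$, where $Q$ is Haar on $\On$, $x$ and $i$ are uniform, $s$ is a uniform sign, and $b=0$ exactly when $f_i^s\in H_x$. The protocol's success probability $\frac56$ is an average over $Q$, so some fixed $Q_0$ does at least as well; I would hard-wire it on both sides.

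Third, with $Q_0$ fixed, Bob's message $\sigma_x$ is an $m$-qubit encoding of $x$. Alice, given $i$, picks a uniform sign $s$, runs her decoder on $u=Q_0f_i^{s}$, and reads $x_i$ off the decided bit $b$ together with $s$. The average over $i$ of her probability of recovering $x_i$ is then at least $\frac56$. Nayak's bound in its average-success form gives $m\ge\sum_i(1-h_2(p_i))\ge\frac n2(1-h_2(\frac16))$, where $p_i$ is the error on bit $i$; the last step uses concavity of $h_2$ and the fact that the average error is at most $\frac16$. Alice's unboundedness is harmless, since Nayak's bound allows arbitrary decoding measurements.

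For the converse I would reuse the single-vector protocol given above, with coarser rounding. Bob sends a Haar-random unit $w\in H$ with each coordinate rounded to precision $\eps/\sqrt n$, and clips the rare large coordinates; this costs $O(n\log(1/\eps))=O(n)$ bits. Alice thresholds $|\ip u{\tilde w}|$ at $c/\sqrt n$. When $u\in H^\perp$ only the rounding error contributes, and it has second moment $O(\eps^2/n)$. When $u\in H$ the signal has second moment $\Theta(1/n)$ with anti-concentration. For $\eps$ a small constant this gives constant advantage.

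I expect the main obstacle to be the orbit identification in the second step. I would need to verify that transitivity of $\On$ on each promise case transfers the measure exactly even though $u$ is drawn from the measure-zero set of basis vectors $f_i^{\pm}$. I would also need to check that fixing $Q_0$ costs nothing, given that the model has no shared randomness.
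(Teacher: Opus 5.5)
Your proposal is correct and follows the paper's route. For the lower bound, both embed a random access code of $n/2$ bits into $\VSP_n$ through pairs of coordinates and apply Nayak's Holevo--chain-rule--Fano bound; for the converse, both send a coarsely discretized unit vector of $H$ and threshold $|\ip{u}{\tilde w}|$ at order $n^{-1/2}$, your clipped coordinate rounding being just a concrete $e^{O(n)}$-point net.

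The one real difference is how the average-case guarantee reaches the structured instances. The paper symmetrizes, using a Haar rotation and the $H\leftrightarrow H^\perp$ swap as in \Cref{lem:vsp-small}, to get success exactly $\frac56$ on every promised input. It then applies the per-bit bound, conditioned on the public coins. You instead identify the rotated structured instances with the $\VSP_n$ law by orbit transitivity and fix a good $Q_0$ by averaging, then use the average-error form of Nayak's bound via concavity of $h_2$. Both work. The two steps you flagged as obstacles are fine: the identification is exact because you push Haar measure forward rather than condition on a null set, and fixing $Q_0$ needs no shared randomness since it is part of the protocol.
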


\begin{proof}
\emph{Worst case first.}  Advantage $\frac13$ means success $\frac56$ on average over
\Cref{def:vsp}.  Sample a Haar $Q\in\On$ and a uniform bit $c$ from public randomness, run the
protocol on $(Qu,QH)$ when $c=0$ and on $(Qu,QH^\perp)$ when $c=1$, complementing the output in the
second case.  Bob can form $QH$ or $QH^\perp$ from $H$ and $Q$, and Alice can form $Qu$, so this is
again a one-way protocol from Bob to Alice of the same length.  Fix any promised pair $(u,H)$ with
$u\in H$.  When $c=0$ the pair $(Qu,QH)$ has the law of \Cref{def:vsp} conditioned on $b=0$, since
$QH$ is Haar and, given it, $Qu$ is uniform on its unit sphere; when $c=1$ the pair
$(Qu,QH^\perp)$ has that law conditioned on $b=1$.  Averaging the two coins, the transformed
protocol succeeds with probability exactly $\frac56$ on \emph{every} promised input, and
symmetrically for $u\in H^\perp$.  This is the symmetrization of \Cref{lem:vsp-small}, read in the
present model.

\emph{Then embed the index problem.}  Put $d=n/2$ and, for $x\in\{0,1\}^d$, let
$H_x=\mathrm{span}\{e_{2j-1+x_j}:j\in[d]\}$, a subspace of dimension $d$.  For $i\in[d]$ take
$u_i=e_{2i-1}$.  Then $u_i\in H_x$ if $x_i=0$, and $u_i\in H_x^\perp$ if $x_i=1$, so each
$(u_i,H_x)$ is a promised input.  Bob's message on $H_x$ therefore lets Alice recover any single bit
$x_i$ of her choosing with probability at least $\frac56$: it is a quantum random access encoding of
$d$ bits into $m$ qubits.

\emph{Then count.}  Let $X$ be uniform on $\{0,1\}^d$, let $C$ be the public randomness, independent
of $X$, and let $M$ be Bob's message register.  Holevo gives $I(X:M\mid C)\le m$.  Since the $X_i$
are independent, the chain rule gives $I(X:M\mid C)\ge\sum_iI(X_i:M\mid C)$, and Fano bounds each
term below by $1-h_2(\frac16)$.  Hence $m\ge d\,(1-h_2(\frac16))$, which is the claim; the bound
allows mixed encodings and arbitrary decoding measurements~\cite{Nayak99}.

\emph{The upper bound.}  Bob rounds a unit vector $w\in H$ to a $\delta$-net of the sphere at a
fixed constant accuracy and sends its index, which is $O(n)$ bits since such a net has
$e^{O(n)}$ points.  Alice accepts when $|\ip u{\tilde w}|\ge a/\sqrt n$.  If $u\in H^\perp$ then
$\ip uw=0$, so $\E|\ip u{\tilde w}|^2\le2\delta^2/n$ and the false-positive probability is at most
$2\delta^2/a^2$ by Chebyshev.  If $u\in H$ then $\norm{P_H\tilde w}\ge1-\delta$, and the
small-ball estimate for a uniform vector on the sphere of $H$ bounds the false-negative probability
by $O(a/(1-\delta))$.  Choosing $\delta\ll a\ll1$ makes both errors small.
\end{proof}

So the reverse one-way question is not an obstacle.  What our advice statements need is the
one-copy model, and \Cref{prop:advice-sharp} settles that at $m=\Theta(n^2)$; the two thresholds
differ by the factor $n$ that the separation above exhibits.

One resource remains to be measured, and the security proof does not need it, since
\Cref{lem:reference} charges the adversary the classical \emph{spans} of its copies rather than the
copies themselves.  The most natural advice about $H$ is a supply of fresh samples of the $b=0$
state, and those accumulate as a random walk: each copy couples to the challenge only through a swap,
and each swap shifts the mean by $2/n$ against a fluctuation of unit size.  So $q$ of them are worth
$\widetilde\Theta(\sqrt q)/n$ rather than $q/n$, both as an attack and as an upper bound
(\Cref{prop:copies-sqrt}, stated and proved in \Cref{app:def-copies}).

Two things follow.  A \emph{classical} description of
the span of the same $q$ vectors achieves $2q/n$ for $q\le n/2$, so classical knowledge of the
reference states is quadratically stronger than the states themselves; that measures how much the
passage to spans in \Cref{lem:reference} concedes.  At $m=q\lceil\log_2n\rceil$ advice qubits the walk
attains the $\sqrt m$ rate of \Cref{prop:advice-sharp} up to logarithms, by the most elementary
advice available, so that rate is tight in the interior of its range as well as at its endpoints.

\subsection{Classical queries followed by one coherent query}
\label{sec:hybrid}

\Cref{thm:EFI-security} and \Cref{thm:one-query} bound two incomparable access modes by different methods. We combine them into a single statement containing both: security against an adversary that makes its full adaptive-classical budget of $\Count$-queries and then, as its final oracle interaction, one coherent query. This also closes a gap in the scope of \Cref{thm:one-query}, which assumed no reference copies while the EFI game always provides them.

The composition works because the relaxed one-query norm of \Cref{thm:one-query} is a Lipschitz
function of the Haar unitary defining $R$, and hence concentrated. The classical transcript
conditions on that same unitary, and concentration is exactly what survives such conditioning. The
feature is not peculiar to our ensemble, since Lombardi, Ma, and Wright~\cite{LMW24} prove an analogous
concentration for theirs; what the composition requires is the pairing of the concentration with the
transcript.

\begin{model}[Hybrid adversary with a final coherent query]
\label{model:hybrid}
This is \Cref{model:access} in the source model, with every parameter named.  The adversary holds a
challenge $\rho_b$ with $b$ uniform, and it may, in this order:
\begin{enumerate}[label=(\roman*),leftmargin=2.2em,itemsep=2pt,topsep=3pt]
\item[(o)] receive a classical advice string $\alpha$ of length at most $a$, depending arbitrarily
  on $H$;
\item call $\cR$ for up to $q$ reference copies per side;
\item make up to $L$ bits of adaptive \emph{classical} $\Count$-queries, interleaved with arbitrary
  quantum processing; and then
\item apply one coherent phase query $O_H$ of width $M$, or equivalently one parallel batch
  $O_H^{\otimes t}$, in which case $M$ is the width of the batch, and measure.
\end{enumerate}
Its computational power is otherwise unbounded.  All parameters are $\poly(\lambda)$, the width
entering only through $\log M$.
\end{model}

\begin{theorem}[Hybrid security against classical queries, advice, and one coherent query]
\label{thm:hybrid}
There is a universal $C>0$ such that for all $q<n/4$, all $L,a\ge0$ and all $M\ge1$, every
hybrid adversary of \Cref{model:hybrid} satisfies
\[
  \E_H\Big[\max_{\alpha\in\{0,1\}^{\le a}}\Adv\Big]\ \le\ \frac{6q}{n}
  \;+\;2\beta\big(n-2q,\,L+a+1\big)
  \;+\;C\sqrt{\frac{\log(2M)+L+a}{n-2q}} ,
\]
with $\beta$ as in \Cref{lem:vsp-small}; by that lemma the middle term is at most
$C\sqrt{(L+a+2)/(n-2q)^{1/3}}$. For $n=2^{\Omega(\lambda)}$ and all parameters $\poly(\lambda)$ the
bound is $2^{-\Omega(\lambda)}$. Setting $M=1$ recovers \Cref{cor:reference-vsp} up to the enumeration term and the constant in the copy term, the generic split paying $6q/n$ where a direct argument pays $2q/n$. At $L=a=q=0$ the third term is the bound of \Cref{thm:one-query}, up to the factor two between bias
and advantage, while the middle term $2\beta(n,1)$ is what the two-part split costs for a single sign
bit.  \Cref{thm:one-query} is the sharp statement in that corner, and it is not superseded here.
\end{theorem}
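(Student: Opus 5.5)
The plan is to follow the two-part split announced in the technical overview.  We first strip the reference copies exactly as in \Cref{lem:reference}: hand the adversary the classical spans $A\subseteq H$, $B\subseteq H^\perp$, which lets it resample its copies, and condition on $(A,B)$.  The challenge then splits into the known branch, of weight $2q/n$, and a residual branch that is a fresh half-subspace instance inside $K$ of dimension $n-2q$, with $H_A$ Haar there by \Cref{lem:residual}.  Because the final query's truth table is a function of all of $H$ rather than of $H_A$ alone, the known branch has to be paid for on each side of the split below.  This is where I expect the $6q/n$, rather than $2q/n$, to come from.

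On the residual branch we write the adversary, for each advice string $\alpha$ and each classical transcript $\pi$, in the form of \Cref{lem:record}.  The answers are written into a fixed operator $V_\pi$, and an indicator $\chi_\pi(H)$ records whether they are the true answers; the masses satisfy $\sum_\pi\chi_\pi w_\pi=1$.  After $\pi$ the adversary applies an isometry $W_\pi$, the phase query $D_{f_H}$, and an effect $\Pi_\pi$.  We decompose the bias as $Z=Z^{(0)}+(Z-Z^{(0)})$.  Here $Z^{(0)}$ is the bias obtained when the final truth table is replaced by a fixed $H$-independent one, say $f\equiv1$, so that the coherent query is absorbed into local processing.  $Z^{(0)}$ is then an adversary with classical advice, classical queries and no coherent query, and \Cref{thm:transcript-security} applies to it.  Combined with the orientation bit and \Cref{lem:reference}, this gives $2q/n+2\beta(n-2q,L+a+1)$.

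The increment is $Z_H(f)-Z_H(\mathbf 1)$ summed over transcripts with weights $\chi_\pi w_\pi$.  For each fixed $\pi$ it is a quadratic form of the shape \eqref{eq:quadform}, in the branches of $W_\pi V_\pi$ restricted to $K$, which the $\tau$-relaxation bounds by $2\opnorm{G^{(\pi)}_H}$.  Each $\opnorm{G^{(\pi)}_H}$ is a $2/n$-type Lipschitz function of the Haar unitary on $\mathrm{SO}(K)$.  Its derivative bound is the computation \eqref{eq:carre} propagated through \eqref{eq:row} with $b=n$.  Hence $\opnorm{G^{(\pi)}_H}$ concentrates around its mean, which is at most $C\sqrt{\log(2M)/(n-2q)}$ by \Cref{lem:khintchine}.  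The convex combination gives $\sum_\pi\chi_\pi w_\pi\opnorm{G^{(\pi)}}\le\max_{\pi,\alpha}\opnorm{G^{(\pi,\alpha)}}$.  There are at most $2^{L+a+2}$ pairs $(\pi,\alpha)$, and the subgaussian maximal inequality adds $C\sqrt{(L+a)/(n-2q)}$.  The known branch enters the increment once more, with its own $2q/n$, and the fixed-query term and the residual normalization account for the remaining $2q/n$.

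The main obstacle is the increment step.  The branches restricted to the residual space must still satisfy the completeness needed for \eqref{eq:row}, and here the $\chi_\pi$ weighting must be handled: $\sum_\pi\chi_\pi w_\pi=1$ holds only with the true indicator, and it must not be broken when we pass to the maximum.  I would first try to bound $\chi_\pi\le1$ and use $\sum_\pi w_\pi$ only through the maximum, so that no normalization is needed.  The Lipschitz step would then be verified by writing $\opnorm{G}$ as a supremum of linear statistics with $\norm{\cdot}_2\le\sqrt n$, as in \Cref{lem:scalar-linear}.  The final $2^{-\Omega(\lambda)}$ claim follows by substituting \Cref{lem:vsp-small}.
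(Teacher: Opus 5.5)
Your main line is the paper's proof. You split off the value of an $H$-independent final query and bound it through \Cref{lem:reference} and \Cref{thm:transcript-security}. You write the increment record by record via \Cref{lem:record} and bound each record by $2w_\pi\phi_\pi(H)$ through the $\tau$-relaxation. You collapse $\sum_\pi\chi_\pi(H)w_\pi\phi_\pi(H)\le\max_\pi\phi_\pi(H)$ pointwise in $H$, and pay a maximal inequality over $2^{L+a+O(1)}$ pairs $(\alpha,\pi)$. Three of your side remarks are wrong, however, and the last one would break the argument if you acted on it.

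\emph{The source of $6q/n$.} It is not that the truth table depends on all of $H$. Given $(A,B)$, every $\Count$ answer and the final truth table are functions of $H_A$, because $H=A\oplus H_A$, and that is why the spectral bound can run inside $K$. The $6q/n$ comes from splitting \emph{before} conditioning. The $H$-independent term pays $2q/n$ inside \Cref{lem:reference}. The increment pays $4q/n$ on the known branch, not $2q/n$, because $F_H-F_\star$ is a difference of two effects and can contribute $2$ per unit weight. In your order (concede the whole known branch first, then split only the residual branch) you pay $2q/n$ once, which is the accounting of \Cref{thm:hybrid-sharp}.

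\emph{The Lipschitz constant.} $\phi_\pi$ is $2$-Lipschitz, not $2/n$-Lipschitz. Indeed \eqref{eq:row} with $b=n$ gives $\sum_a(\partial_a\mathcal Y)^2\preceq4\Id$. This $O(1)$ constant is what produces fluctuations at scale $n^{-1/2}$, and hence the $\sqrt{(L+a)/(n-2q)}$ term you state. Your alternative via linear statistics needs $\norm{\cdot}_2\le n$, not $\sqrt n$. With $F_v\eqdef\sum_y v_y\ket y\otimes A_y/\sqrt{\tau_y}$ one has $u^\dagger G_Hv=\frac1n\tr(F_u^\dagger\Pi F_vR)$ and $\norm{F_u^\dagger\Pi F_v}_2\le\opnorm{F_u}\norm{F_v}_2\le n$. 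This is attained at $A_x=\bra{e_x}$, $\Pi=\Id$, where the coefficient is $n\proj{e_1}$. With that correction the route gives Hilbert--Schmidt Lipschitz constant $2$ directly, and it avoids the geodesic-to-chord conversion of \Cref{lem:subnorm}.

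\emph{The obstacle in your last paragraph.} It is not one. \eqref{eq:row} and \eqref{eq:col} hold with $b=n$ for any linear map, and completeness never enters (\Cref{lem:subnorm}). The only normalization needed is $\sum_\pi\chi_\pi(H)w_\pi=1$. It survives restriction to $K$ by conjugating the identity of \Cref{lem:record} with $P_K$. It is used pointwise in $H$ before the maximum is taken, and afterwards no indicator remains. Replacing $\chi_\pi$ by $1$, as you propose to try first, fails. Each hard-wired answer-write is an isometry, so without the indicator $\sum_\pi w_\pi$ counts every answer string and can be as large as $2^T$ for $T$ queries.
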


The obvious move is not available.  One would like to fix
the classical transcript and then apply \Cref{thm:one-query} to whatever follows it.  The
instrument the classical phase realizes depends on $H$ --- the answers it receives are $\Count$'s, and
those are functions of the secret --- so conditioning on a transcript reweights the law of $H$ itself,
while \Cref{thm:one-query} is an average over the unconditioned law.  Nothing forbids that
reweighting from concentrating on exactly the subspaces where the coherent query does well.

\Cref{lem:record} is what removes the difficulty, and it does so by relocating the $H$-dependence
rather than by controlling it.  The operators are made $H$-independent by hard-wiring the answers
into them, and the question of whether those answers are the \emph{true} ones is pushed into a scalar
indicator.  What is then left is a convex combination, because the masses of the consistent records
sum to one at every $H$ separately, and a convex combination of bounds is a bound.

Both lemmas are proved first.  A record here is a classical query-and-answer string, unrelated to
the recorded databases of the compressed-oracle technique.  Throughout, a \emph{record} is the
string
$\pi=(x_1,a_1,\dots)$ of query strings and answer bits produced by the classical phase; it has
length at most $L$, so there are at most $2^{L+1}$ records.

\begin{lemma}[Record instrument]
\label{lem:record}
Fix the classical phase of the adversary, and defer all measurements except the query read-outs. There is, for each record $\pi$, a fixed operator $V_\pi$ on the workspace (independent of $H$ and of the challenge bit), and an indicator $\chi_\pi(H)=\prod_i\mathbf 1[\Count(x_i,H)=a_i]$ such that the sub-normalized post-classical state on record $\pi$ is $\chi_\pi(H)V_\pi\rho V_\pi^\dagger$. Writing $w_\pi=\frac1n\tr(V_\pi^\dagger V_\pi)$,
\[
  \sum_\pi\chi_\pi(H)V_\pi^\dagger V_\pi=\Id\ \ \text{for every }H,\qquad\text{hence}\qquad \sum_\pi\chi_\pi(H)w_\pi=1\ \ \text{for every }H .
\]
\end{lemma}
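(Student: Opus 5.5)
The plan is to write the classical phase as a sequence of fixed operations interleaved with classical queries, purify everything except the query read-outs, and read off $V_\pi$ as a product of Kraus operators, one per step. The classical phase alternates $H$-independent quantum processing (an isometry on the challenge, the reference-copy registers and ancillas) with classical $\Count$ queries. By the deferred-measurement principle every intermediate measurement other than the query read-outs is replaced by a coherent isometry, so each processing step between queries is a fixed isometry $U_j$. The reference copies are granted in advance, as in \Cref{sec:source}, and the advice is absorbed by fixing it, so the workspace input is $\rho$ tensored with fixed states. This is the step where I would check that treating the reference copies as part of $\rho$ matches how \Cref{thm:hybrid} later uses the lemma. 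A classical query at step $i$ is then a computational-basis measurement of a query register, yielding a string $x_i$. This contributes a projector $Q^{(i)}_{x_i}$, followed by the oracle writing the answer bit $\Count(x_i,H)$ into a fresh register, which we also read out.

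The key move is to hard-wire the answer. For a record $\pi=(x_1,a_1,\dots)$ we let $V_\pi\eqdef \prod_i \big(\ket{a_i}_{\mathrm{ans}_i}\otimes Q^{(i)}_{x_i}\big)U_i$, ordered as the computation runs. Here the answer register is simply prepared in the state $\ket{a_i}$ rather than $\ket{\Count(x_i,H)}$, and the length-$L$ constraint truncates the product. The real branch on record $\pi$ applies the same operators, except that the answer written is $\Count(x_i,H)$. When the record is consistent, meaning $\Count(x_i,H)=a_i$ for all $i$, the two coincide. When it is inconsistent, the real process never produces that record, because the read-out of the answer register returns the true bit. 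Hence the unnormalized post-phase state on record $\pi$ is $\chi_\pi(H)V_\pi\rho V_\pi^\dagger$, and $V_\pi$ depends on neither $H$ nor $b$.

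For the completeness identity, fix $H$. The map $\rho\mapsto\sum_\pi\chi_\pi(H)V_\pi\rho V_\pi^\dagger$ is exactly the real classical phase with the records kept, and that phase is trace preserving. So $\sum_\pi\chi_\pi(H)V_\pi^\dagger V_\pi=\Id$. Equivalently, for fixed $H$ and a fixed query history, exactly one answer sequence is consistent, and summing the projectors $Q^{(i)}_{x}$ over $x$ telescopes through the isometries $U_i$. Taking $\frac1n\tr$ of this identity, restricted to the challenge input with the fixed ancillas fed in, gives $\sum_\pi\chi_\pi(H)w_\pi=1$.

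The main obstacle is the bookkeeping around adaptivity and variable-length, self-delimiting records. The query $x_{i+1}$ depends on earlier outcomes, so the projectors must be indexed along a tree of records. A record that terminates early must correspond to a halting branch, and never to a sum over continuations, or the completeness sum will double count. I would handle this by padding every branch to a fixed number of steps with dummy queries whose answers are fixed and independent of $H$, so that every record has the same format and the telescoping is a plain product.
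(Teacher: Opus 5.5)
Your proposal is correct and is essentially the paper's own argument. You hard-wire the answers $a_i$ into a fixed product of query projectors, answer-writes and fixed unitaries, let $\chi_\pi(H)$ record whether those answers are true, and read $\sum_\pi\chi_\pi(H)V_\pi^\dagger V_\pi=\Id$ off the trace-preserving instrument realized at the true $H$. Your two refinements, padding records to a common length and feeding the fixed ancillas in so that $V_\pi$ is a map out of $\bbC^n$ (this is what makes the $\frac1n$-normalized masses sum to $1$, and it is the form \Cref{lem:subnorm} consumes), are left implicit in the paper. The reference-copy concern you flag is settled in the application: \Cref{thm:hybrid} conditions on the spans before invoking the lemma, after which the copies are resampled locally and become part of the fixed preparation.
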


\begin{proof}
The construction hard-wires the answers into the operator and puts the truth into the indicator, so
that the operator carries no $H$-dependence and, for the true $H$, the surviving records are exactly
the consistent ones.

Model each classical query as a measurement of the query register, with outcome $x_i$, followed by
the oracle writing the forced answer $a_i$ into the workspace, and then a fixed unitary.  All other
operations are purified: every ancilla is retained in the workspace, and every discard is deferred to
the final effect, which acts as the identity on the discarded registers.

Let $V_\pi$ be the product of these fixed unitaries with the outcome projectors and the
answer-writes for the hard-wired string $(x_i,a_i)$.  It carries no $H$-dependence, precisely because
the answers are hard-wired, and $\chi_\pi(H)$ is what records whether those answers match the truth.

Now fix the true $H$.  Only consistent records occur, so
\[
  \rho\ \longmapsto\ \sum_{\pi\,:\,\chi_\pi(H)=1}V_\pi\rho V_\pi^\dagger
\]
is exactly the trace-preserving instrument realized by the classical phase.  Iterating
$\sum_{x_i}\proj{x_i}=\Id$ over the queries then gives $\sum_\pi\chi_\pi(H)V_\pi^\dagger V_\pi=\Id$,
and taking normalized traces gives the mass identity.
\end{proof}

\begin{lemma}[Sub-normalized one-query norm, with concentration]
\label{lem:subnorm}
Let $V\colon\bbC^n\to\bbC^{\cX}\otimes\cK$ be \emph{any} nonzero linear map and let
$0\preceq\Pi\preceq\Id$; the case $V=0$ is trivial, with $\phi\eqdef0$.  Define $Z_H(f)$ from
$(V,\Pi)$ by \eqref{eq:quadform}, put $w=\frac1n\tr(V^\dagger V)$, and let $\phi(H)\eqdef\opnorm{G_H}$
with $G_H$ the weighted coefficient matrix \eqref{eq:weighted} formed from $(V,\Pi)$.
Then $\max_f|Z_H(f)|\le w\,\phi(H)$.  Moreover $\phi$ is a nonnegative function of the Haar unitary
$U$ defining $R$, it is Lipschitz with constant $2$ along geodesics, and it satisfies
\[
  \E_H\phi\le C\sqrt{\frac{\log(2M)}n},\qquad \Pr\nolimits_H\big[\phi\ge\E_H\phi+t\big]\le2e^{-cnt^2}\ (t\ge0).
\]
\end{lemma}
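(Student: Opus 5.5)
The plan is to reduce \Cref{lem:subnorm} to the proof of \Cref{thm:one-query} by normalizing, and then to read concentration off \eqref{eq:scalar-conc} after checking a Lipschitz bound. Write $V\ket\psi=\sum_x\ket x A_x\ket\psi$ with branches $A_x$, so that $\sum_xA_x^\dagger A_x=V^\dagger V$. Set $\tilde A_x=A_x/\sqrt w$ and $\tilde\tau_x=\frac1n\tr(\tilde A_x^\dagger\tilde A_x)=\tau_x/w$, where $\tau_x=\frac1n\tr(A_x^\dagger A_x)$ satisfies $\sum_x\tau_x=w$. Then $\tilde\tau$ is a probability vector. The completeness relation was used in \Cref{thm:one-query} only through $\sum_x\tau_x=1$ in Step~2 and through the mass identity $\tr(A_y^\dagger A_y)=n\tau_y$ in Step~3, so neither use needs $V$ to be an isometry. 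Define $G_H$ by \eqref{eq:weighted} using the weights $\tilde\tau$. Since $C_H$ is quadratic in the branches, $C_H(x,y)=w\,\tilde C_H(x,y)$. The rewriting of Step~2 then gives $|Z_H(f)|=w\,|(D_{\tilde\tau}^{1/2}f)^\top\tilde G_H(D_{\tilde\tau}^{1/2}f)|\le w\,\phi(H)$, because $\norm{D_{\tilde\tau}^{1/2}f}_2=1$ for every sign pattern $f$. With this normalization $\phi=\opnorm{G_H}$ is unchanged by the rescaling, so $\phi$ is scale-free. The expectation bound $\E_H\phi\le C\sqrt{\log(2M)/n}$ is then Step~3 of \Cref{thm:one-query} verbatim: the row and column hypotheses \eqref{eq:row} and \eqref{eq:col} hold with $b=n$, and \Cref{lem:khintchine} applies.

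Next comes the Lipschitz estimate, where the row and column verification is reused. The entries of $G_H$ are $\tr(B_{xy}R/n)$ with $B_{xy}$ fixed, so $G$ is linear in $R$. Because the operator norm is a norm, $|\phi(U)-\phi(U')|\le\opnorm{G(R)-G(R')}$, and $G(R)-G(R')$ is the same linear map applied to $E\eqdef R-R'$. For unit vectors $\alpha,\beta$ I will bound $|\beta^\top G(E)\alpha|=\frac1n|\tr(\sum_{x,y}\beta_x\alpha_yB_{xy}E)|$ by Cauchy--Schwarz. Writing $\sum_y\alpha_yB_{xy}=\tilde A_x^\dagger\Gamma_x/\sqrt{\tilde\tau_x}$ as in \eqref{eq:contraction}, the same chain of inequalities should bound the Hilbert--Schmidt norm of $\sum_x\beta_x\sum_y\alpha_yB_{xy}$ by $\sqrt{\sum_x|\beta_x|^2}\cdot n$. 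The term-by-term bound $\opnorm{\tilde A_x^\dagger}^2\le n\tilde\tau_x$ is what makes this work. Granting that bound, $\opnorm{G(E)}\le\frac1n\cdot n\norm E_2$, and since $\norm E_2\le2\norm{U-U'}_2$ by the decomposition in \Cref{lem:scalar-linear}, the Lipschitz constant is $2$. This is the step I expect to be the main obstacle. The naive estimate $\norm{\sum_x\beta_xX_x}_2\le\sum_x|\beta_x|\norm{X_x}_2$ loses a factor $\sqrt M$, so the Cauchy--Schwarz must be arranged with $\beta$ absorbed into the stacked operator $F$. The cleanest route is to write $\beta^\top G\alpha=\frac1n\tr(E\,\cdot\,\tilde F_\beta^\dagger\Pi\tilde F_\alpha)$, where $\tilde F_\alpha=\sum_y\ket y\otimes\alpha_y\tilde A_y/\sqrt{\tilde\tau_y}$ and $\norm{\tilde F_\alpha}_2^2=n$. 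Then $\norm{\tilde F_\beta^\dagger\Pi\tilde F_\alpha}_2\le\opnorm{\tilde F_\beta}\,\norm{\tilde F_\alpha}_2$. Bounding $\opnorm{\tilde F_\beta}$ by $\sqrt n$ needs $\sum_y|\beta_y|^2\opnorm{\tilde A_y}^2/\tilde\tau_y\le n$, which is exactly the per-branch bound. If a constant other than $2$ comes out, only the constant $c$ changes.

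Finally, $\phi\ge0$ is immediate. The tail bound follows by applying \eqref{eq:scalar-conc} to $\phi$ as a function of $U$, which is Haar on $\mathrm{SU}(n)$ or $\mathrm{SO}(n)$ by \Cref{lem:su-so}, with $\mathrm{Lip}(\phi)\le2$ in the Hilbert--Schmidt metric. This gives $\Pr[\phi\ge\E\phi+t]\le e^{-(n-2)t^2/96}\le2e^{-cnt^2}$ for $n\ge4$. A Hilbert--Schmidt Lipschitz bound implies the geodesic one, since the chordal distance is at most the geodesic distance.
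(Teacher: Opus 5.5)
Your proof is correct. The first half matches the paper. The bound $\max_f|Z_H(f)|\le w\,\phi(H)$ and the expectation bound come from Steps~2--3 of \Cref{thm:one-query}, with \eqref{eq:row} and \eqref{eq:col} verified at $b=n$ using only $\opnorm{A_x^\dagger}^2\le n\tau_x$ and $\opnorm\Pi\le1$. Your rescaling by $\sqrt w$ is cosmetic, because $G_H$ and the $B_{xy}$ are unchanged under $A_x\mapsto A_x/\sqrt w$; the paper simply keeps $\norm{D_\tau^{1/2}f}_2^2=w$.

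You genuinely differ from the paper in the Lipschitz step. The paper reads it off the frame-derivative bound $\sum_a(\partial_a\mathcal Y)^2\preceq4\Id$ from the proof of \Cref{lem:khintchine}. That gives constant $2$ along geodesics. It then has to convert to the Hilbert--Schmidt metric that \eqref{eq:scalar-conc} requires: it extends $\phi$ to $\Un$ and uses that geodesic distance is at most $\frac\pi2$ times chordal distance, which gives constant $\pi$. You instead use that $G$ is linear in $R$ and prove the global estimate $\opnorm{G(E)}\le\norm E_2$. Combined with $\norm{R-R'}_2\le2\norm{U-U'}_2$, this gives Hilbert--Schmidt constant $2$ directly, and the geodesic statement follows for free since chordal distance is at most geodesic distance. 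Your route is more elementary and gives a better constant, with no differentiability or passage to $\Un$. The paper's route has the economy of feeding one derivative computation into both the mean and the tail.

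The obstacle you anticipate is not actually there. Put $X_x=\sum_y\alpha_yB_{xy}$. Then $(G(E)\alpha)_x=\frac1n\tr(X_xE)$, so $\norm{G(E)\alpha}_2\le\frac1n\norm E_2\big(\sum_x\norm{X_x}_2^2\big)^{1/2}\le\norm E_2$ by \eqref{eq:row}. Equivalently, Cauchy--Schwarz in $x$ bounds $\sum_x|\beta_x|\norm{X_x}_2$ by $n$. A factor $\sqrt M$ is lost only if each $\norm{X_x}_2$ is bounded separately. So the Lipschitz bound follows in one line from the hypothesis you had already verified. Your stacked-operator argument is also valid. Its identity with $\tilde F_\beta^\dagger$ computes $\beta^\dagger G\alpha$ rather than $\beta^\top G\alpha$, which is harmless since $\beta$ ranges over all unit vectors.
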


\begin{proof}
Set $A_x\eqdef(\bra x\otimes\Id)V$ and let $\tau_x$ be the query mass \eqref{eq:query-mass} attached to
the branch $A_x$, so that
\[
  \sum_xA_x^\dagger A_x\ =\ V^\dagger V,\qquad \sum_x\tau_x\ =\ w .
\]
The computation of \Cref{thm:one-query} goes through verbatim and gives
$Z_H(f)=(D_\tau^{1/2}f)^\top G_H(D_\tau^{1/2}f)$, the only change being that
$\norm{D_\tau^{1/2}f}_2^2=w$ rather than $1$; hence $\max_f|Z_H(f)|\le w\opnorm{G_H}$.

The hypotheses \eqref{eq:row} and \eqref{eq:col} still hold with row and column constant $b=n$, since their verification used
only $\opnorm{A_x^\dagger}^2\le\tr(A_x^\dagger A_x)=n\tau_x$, valid for any $A_x$, together with the
contraction $0\preceq\Pi\preceq\Id$ and $\norm\alpha_2=1$. Neither $\sum_x\tau_x=1$ nor
$V^\dagger V\preceq\Id$ enters at any point, which is why the present lemma covers a map whose
$V^\dagger V$ has operator norm as large as $n$. \Cref{lem:khintchine} now gives the stated bound on
$\E_H\phi$.

What is left is the tail, and we prove it in three steps: a derivative bound, a conversion between
the geodesic and the Hilbert--Schmidt metric, and an appeal to Haar concentration.

\emph{The derivative bound.}  Write $\mathcal Y$ for the Hermitian dilation of the proof of
\Cref{lem:khintchine}, so that $\opnorm{\mathcal Y(U)}=\phi$.  There the derivative bound reads
$\sum_a(\partial_a\mathcal Y)^2\preceq\frac{4b^2}{n^2}\Id=4\Id$, and this already makes
$U\mapsto\phi$ Lipschitz with constant $2$ along geodesics.  Indeed, each $\partial_a\mathcal Y$ is
Hermitian, so for $\norm c_2=1$, by Cauchy--Schwarz in the index $a$,
\[
  \Big\|\Big(\sum_a c_a\partial_a\mathcal Y\Big)z\Big\|\le\sum_a|c_a|\,\norm{(\partial_a\mathcal Y)z}\le\sqrt{\big\langle z,\textstyle\sum_a(\partial_a\mathcal Y)^2z\big\rangle}\le2\norm z ,
\]
and the operator norm is in turn $1$-Lipschitz in its argument.

\emph{From geodesic to Hilbert--Schmidt distance.}  Since $\phi$ depends on $U$ only through
$UDU^\dagger$, it is invariant under central phases, and so extends unchanged to $\Un$,
where the derivative bound holds trivially along the additional central direction.  On $\Un$, as on
$\mathrm{SO}(n)$, the geodesic distance is at most $\pi/2$ times the Hilbert--Schmidt distance; this
follows from the eigenphase bound $\sin(\theta/2)\ge\theta/\pi$, valid for $0\le\theta\le\pi$.  Hence
$\phi$ is Lipschitz with constant at most $\pi$ in the Hilbert--Schmidt metric.  Restricting back to
$\mathrm{SU}(n)$ keeps that constant, the chord metric there being the restriction of the one on
$\Un$, and we absorb the constant into $c$.

\emph{Concentration.}  Haar concentration at scale $n^{-1/2}$, which is \eqref{eq:scalar-conc}, now
gives the subgaussian tail, which is the last of the three assertions.
\end{proof}

\begin{proof}[Proof of \Cref{thm:hybrid}]
We write the advantage as two pieces: the value that an $H$-\emph{independent} final query would
produce, plus the increment caused by the query's dependence on $H$.  The first piece is a
classical-transcript quantity, so the communication argument bounds it.  The increment carries the
entire cost of the coherent query, and it is the two lemmas above that bound it.

The increment is handled in four moves.  First, conditioning
on the spans replaces the increment by a half-subspace instance in the orthogonal complement, of
dimension $n-2q$.  Second, the record decomposition of \Cref{lem:record} survives that restriction,
with its masses still summing to $1$.  Third, \Cref{lem:subnorm} then applies record by record.
Fourth, concentration removes the conditioning on the record, at a cost only logarithmic in the
number of records.

One convention holds throughout.  We fix an advice length budget $a$, and every object below depends
implicitly on the advice string $\alpha$.  We keep $\max_\alpha$ outside all the inequalities, which
is legitimate because each step holds for every fixed $\alpha$.

\emph{Step 1: the advantage splits into a term with an $H$-independent effect and an increment.} Let
$F_H$ be the accept effect of step (iii) and $F_\star$ the accept effect obtained with the trivial
oracle, all of whose signs are $+1$.  Both are effects, so $\Gamma(H)\eqdef F_H-F_\star$ satisfies
$\opnorm{\Gamma(H)}\le1$.  With $\mathcal E_H$ the classical-phase channel, the signed advantage
$\Delta_{\mathrm{adv}}(H)$ splits as
\[
  \Delta_{\mathrm{adv}}(H)=\underbrace{\tr[F_\star\,\mathcal E_H(\rho_0-\rho_1)]}_{(\mathrm I)}+\underbrace{\tr[\Gamma(H)\,\mathcal E_H(\rho_0-\rho_1)]}_{(\mathrm{II})} .
\]

\emph{Step 2: the term $(\mathrm I)$ contributes at most $\frac{2q}n+2\beta(n-2q,L+a+1)$ on
average.}  For each fixed $\alpha$ the final query is a fixed, $H$-independent operation.  Hence
$(\mathrm I)$ is the advantage of an adversary with $q$ reference copies per side, an $L$-bit
classical transcript, $a$ bits of advice, and otherwise $H$-independent processing.

That is an adversary of the kind \Cref{lem:reference} bounds, once we charge the advice to the
transcript.  We charge it as in the proof of \Cref{thm:transcript-security}, by having the holder of
$H$ send the maximizing advice; the sign bit needed there is the orientation bit already charged
inside \Cref{lem:reference}.  Applying that lemma with $L+a$ bits of interaction bounds
$\E_H\max_\alpha|(\mathrm I)|$ by $\frac{2q}n+2\beta(n-2q,L+a+1)$.

Nothing so far has charged the coherent query, since $F_\star$ carries no dependence on $H$.  The four
remaining steps bound $(\mathrm{II})$, where that query is paid for.

\emph{Step 3: conditioning on the spans reduces the increment to a half-subspace instance of dimension
$n-2q$, at a cost of $\frac{4q}n$.} Condition on the spans $(A,B)$ of \Cref{lem:residual}, which only increases the adversary's power.
The challenge then splits into two branches, and we treat them differently.

On the known branch we concede everything.  There the channel $\mathcal E_H$ is trace preserving and
$\opnorm{\Gamma(H)}\le1$, so $|(\mathrm{II})|\le\norm{\mathcal E_H(\varrho_0-\varrho_1)}_1\le2$.
Since that branch has weight $2q/n$, it contributes at most $2\cdot\frac{2q}n=\frac{4q}n$ in total.

On the residual branch the challenge is a half-subspace instance inside $K=(A\oplus B)^\perp$, of
dimension $n'\eqdef n-2q$ and with reflection $R'$.  Note that, given $(A,B)$, the truth table of the
final coherent query is a Boolean function of the residual subspace alone, which is all the spectral
bound requires.  From here until the end of Step~6 we work inside $K$, writing $n$ for $n'$ and $R$
for $R'$; only the term $\frac{4q}{n}$ already paid refers to the original $n$.

\emph{Step 4: the record masses still form a probability vector after the restriction to $K$.}
We shall apply \Cref{lem:subnorm} to operators supported on $K$, so it remains to check that the
record decomposition of \Cref{lem:record} survives the restriction.  It does.  Put
$w_\pi\eqdef\frac1{n}\tr\big(P_KV_\pi^\dagger V_\pi P_K\big)$.  Conjugating the identity
$\sum_\pi\chi_\pi(H)V_\pi^\dagger V_\pi=\Id$ by $P_K$, and then taking normalized traces inside $K$,
gives
\[
  \sum_\pi\chi_\pi(H)\,w_\pi\ =\ 1\quad\text{for every }H .
\]
So the masses of the restricted instrument again form a probability vector at each fixed $H$, and that
is the only property of them used below.

\emph{Step 5: record by record, the increment is at most $4\sum_\pi\chi_\pi(H)w_\pi\phi_\pi(H)$.} By
\Cref{lem:record} and $\rho_0-\rho_1=2R/n$,
\[
  (\mathrm{II})=\frac2n\sum_\pi\chi_\pi(H)\,\tr\!\big[\Gamma(H)\,V_\pi RV_\pi^\dagger\big].
\]
Fix a record $\pi$, and let $W_\pi$ be the isometry of step (iii)'s pre-query processing.  Then
$W_\pi V_\pi$ satisfies $(W_\pi V_\pi)^\dagger(W_\pi V_\pi)=V_\pi^\dagger V_\pi\preceq\Id$ and has
normalized trace $w_\pi$, so \Cref{lem:subnorm} applies to it.  Now each of the two effects making up
$\Gamma(H)=F_H-F_\star$ contributes a quantity of the form $\tr[F\,V_\pi RV_\pi^\dagger]=n\,Z_H(f)$,
taken at $f=s(H)$ for $F_H$ and at $f\equiv1$ for $F_\star$; the factor $n$ appears because $Z_H$ is
defined through $\Delta=R/n$.  Hence
\[
  \big|\tr[\Gamma(H)V_\pi RV_\pi^\dagger]\big|\ \le\ 2n\max_f|Z_H(f)|\ \le\ 2n\,w_\pi\,\phi_\pi(H),
\]
and therefore $|(\mathrm{II})|\le4\sum_\pi \chi_\pi(H)\,w_\pi\,\phi_\pi(H)$ pointwise in $H$.

That estimate holds at every $H$, but the record still depends on $H$ through $\chi_\pi$ and each
$\phi_\pi$ is itself random.  We still have to replace the $\phi_\pi$ by their means uniformly over records
and advice strings.

\emph{Step 6: passing to the means costs $C\sqrt{(\log(2M)+L+a)/n}$.}  Let
$m_\pi=\E_H\phi_\pi\le C\sqrt{\log(2M)/n}$, and recall that each $\phi_\pi$ is $2$-Lipschitz and
hence subgaussian with variance proxy $\sigma^2\le C'/n$.  By Step~4 the numbers $w_\pi\chi_\pi(H)$
form a probability vector at each fixed $H$ and each fixed advice string.  The quantity $\sum_\pi
w_\pi\chi_\pi(H)\phi_\pi$ is a convex combination of the $\phi_\pi$, and so is at most
$\max_\pi\phi_\pi$.  Splitting each term as $\phi_\pi\le m_\pi+|\phi_\pi-m_\pi|$ now gives
\[
  \max_\alpha\sum_\pi w_\pi\chi_\pi(H)\phi_\pi\ \le\ \max_{\alpha,\pi} m_\pi+\max_{\alpha,\pi}|\phi_\pi-m_\pi| .
\]
The first term is at most $C\sqrt{\log(2M)/n}$.  The second is a maximum of at most
$2^{a+1}\cdot2^{L+1}$ centered subgaussians of variance proxy $\sigma^2\le C'/n$, so the standard
maximal inequality gives
\[
  \E_H\max_{\alpha,\pi}\big|\phi_\pi-m_\pi\big|
  \ \le\ \sigma\sqrt{2\ln\big(2^{a+L+3}\big)}\ \le\ C\sqrt{\frac{a+L+3}n} .
\]
Notice that dependence among the $\phi_\pi$ is irrelevant here, since that inequality follows from a
union bound on subgaussian tails alone.  Putting the two terms together,
\[
  \E_H\max_\alpha\big|(\mathrm{II})\big|\ \le\ \frac{4q}n+C\sqrt{\frac{\log(2M)+L+a}n} .
\]

Adding the bounds of Steps~2 and~6, and restoring $n-2q$ for $n$ in the residual estimates, gives the
theorem with $\frac{2q}n+\frac{4q}n=\frac{6q}n$.
\end{proof}

\begin{corollary}[A fixed oracle for the hybrid access model]
\label{cor:hybrid-fixed}
With probability one over $\{H_\lambda\}$, the pair $(\rho_{0,\lambda},\rho_{1,\lambda})$ is an EFI
pair relative to $\cO$ against every non-uniform $\QPT$ distinguisher that
\begin{enumerate}[label=(\roman*),leftmargin=2.2em,itemsep=1pt,topsep=3pt]
\item receives $\poly(\lambda)$ bits of classical advice, depending arbitrarily on the fixed oracle;
\item calls $\cR$ polynomially many times;
\item makes $\poly(\lambda)$ adaptive classical $\Count$-queries; and then
\item makes one coherent $\Count$-query of width $2^{\poly(\lambda)}$.
\end{enumerate}
This is the access model of \Cref{thm:separation-source}(ii).
\end{corollary}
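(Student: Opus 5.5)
The plan is to mirror the proof of \Cref{thm:EFI-security}, with \Cref{thm:hybrid} taking the place of \Cref{cor:reference-vsp} as the source of the average bound, and then to pass to a fixed oracle through \Cref{lem:fixing}. Efficient generation and farness are unchanged: $G$ calls $\cR$, and the two states have orthogonal supports. So everything reduces to producing a countable family of machines, each with a negligible expected best-advice advantage.

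First I fix a finite universal gate set and enumerate the $\QPT$ machines of the stated shape as a countable family $\{N_i\}$. Each $N_i$ carries explicit polynomial bounds $t_i(\lambda)$ on running time, advice length, number of $\cR$ calls, classical transcript length, and the number of qubits in its final coherent query. The Solovay--Kitaev theorem keeps this replacement at a negligible change in advantage. A coherent query on $t_i(\lambda)$ qubits has width $M\le2^{t_i(\lambda)}$, so $\log(2M)\le t_i(\lambda)+1$. Subspaces $H_{\lambda'}$ at other security parameters are independent of $H_\lambda$. Just as in the proof of \Cref{thm:EFI-security}, they can be treated as public randomness, with the bound holding for each fixing of them and then averaged. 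With that, machine $N_i$ at parameter $\lambda$ is a hybrid adversary of \Cref{model:hybrid} with $q,L,a\le t_i(\lambda)$ and $\log M\le t_i(\lambda)+1$.

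Next I apply \Cref{thm:hybrid}, which already places $\max_\alpha$ inside $\E_H$. This gives
\[
  \mu_i(\lambda)\ \le\ \frac{6t_i}{n}+C\sqrt{\frac{2t_i+2}{(n-2t_i)^{1/3}}}+C\sqrt{\frac{3t_i+2}{n-2t_i}}\ =\ 2^{-\Omega(\lambda)} ,
\]
since $n=2^\lambda$ and $t_i=\poly(\lambda)$; for large $\lambda$ we also have $q<n/4$. \Cref{lem:fixing} then yields a probability-one set of sequences $\{H_\lambda\}$ on which every $N_i$, with any advice chosen after the oracle is fixed, has advantage at most $\lambda^2\mu_i(\lambda)$, which is negligible.

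The main obstacle is legitimacy rather than calculation. One issue is that the coherent $\Count$ query must be a Boolean function of the current $H_\lambda$ alone, since \Cref{thm:hybrid} conditions on the other parameters. Bits of $\Count$ can depend on every $H_{\lambda'}$, because samplers may query other security parameters. Fixing those other subspaces as shared randomness makes the truth table a Borel function of $H_\lambda$, by \Cref{lem:welldef}, and that is all \Cref{thm:hybrid} requires. A second issue is that the challenger's reference copies, and the ordering in which $\cR$ calls may be interleaved with classical queries, must fit the order of \Cref{model:hybrid}. This follows from the domination argument after \Cref{model:source}: all copies can be granted up front, with the spans conceded.
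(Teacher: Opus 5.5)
Your proposal is correct and follows the paper's own proof. Both arguments rerun the proof of \Cref{thm:EFI-security} over a countable family of discretized machines, take the average bound from \Cref{thm:hybrid} (which has $\max_\alpha$ inside $\E_H$) in place of \Cref{cor:reference-vsp}, and then fix one oracle with \Cref{lem:fixing}. Your extra remarks, on treating the other $H_{\lambda'}$ as public randomness and on granting the $\cR$ copies up front as after \Cref{model:source}, spell out points the paper handles elsewhere rather than restating here.
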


\begin{proof}
This is \Cref{thm:EFI-security} with \Cref{thm:hybrid} in place of \Cref{cor:reference-vsp}.
Discretizing the gate set leaves countably many machines $N_i$, now with polynomial bounds
$t_i(\lambda)$ on running time, advice length, sample calls, transcript length, and the logarithm of
the width $M$ of the final coherent query.  For each $i$, \Cref{thm:hybrid} bounds
$\E_{\mathbf H}[\max_\alpha\Adv_{N_i}(\lambda)]$ by $2^{-\Omega(\lambda)}$, with the maximum over advice
strings already inside the expectation, which is what \Cref{lem:fixing} asks for.  That lemma then
fixes a single sequence $\{H_\lambda\}$.

The one thing left to check is that the query the corollary allows is one the theorem covers.  That
query is to some Boolean function of $H$, which is all \Cref{thm:hybrid} assumes of it, so the
coherent form of $\Count$ at any rank and any precision (\Cref{model:bit}) falls under the bound.
\end{proof}

Polynomially many \emph{challenge} copies are covered as well, and for a reason particular to the EFI
game: the model supplies reference copies of \emph{both} states, which is exactly what a hybrid over
the copies needs.

\begin{corollary}[Many challenge copies]
\label{cor:multicopy}
Fix the oracle and let $\mathsf D$ be a distinguisher of \Cref{model:hybrid} that receives $k$ copies
of the challenge instead of one, with all other resources as there.  Then there is a
single-challenge distinguisher $\mathsf D'$ of \Cref{model:hybrid}, with the same query budget,
ordering and advice, and with $k-1$ further reference copies, such that
$\Adv(\mathsf D')=\Adv(\mathsf D)/k$.  Consequently the bounds of \Cref{thm:hybrid} and
\Cref{cor:hybrid-fixed} hold for $k$ challenge copies after multiplication by $k$, and remain
negligible for every $k=\poly(\lambda)$.
\end{corollary}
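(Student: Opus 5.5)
We would prove \Cref{cor:multicopy} by the standard hybrid over the $k$ challenge positions, using reference copies of both states to fill the positions that are not the live challenge. For $0\le j\le k$ let $\mathcal H_j$ be the experiment in which $\mathsf D$ receives $\rho_0^{\otimes j}\otimes\rho_1^{\otimes(k-j)}$ in its $k$ challenge registers, with all other resources unchanged. Then $\mathcal H_k$ and $\mathcal H_0$ are the two branches of the $k$-copy game, and $\Pr[\mathsf D=1\mid\mathcal H_k]-\Pr[\mathsf D=1\mid\mathcal H_0]$ telescopes into $\sum_{j=1}^k\big(\Pr[\mathcal H_j]-\Pr[\mathcal H_{j-1}]\big)$.

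The distinguisher $\mathsf D'$ picks $j\in[k]$ uniformly with its own coins. It fills the first $j-1$ challenge slots with reference copies of $\rho_0$, slot $j$ with its single challenge, and the last $k-j$ slots with reference copies of $\rho_1$, and then runs $\mathsf D$, outputting $\mathsf D$'s bit. This uses $k-1$ further reference copies in total, since only $j-1$ copies of $\rho_0$ and $k-j$ copies of $\rho_1$ are needed; requesting $k-1$ of each per side is a harmless overcount. On challenge $\rho_b$ the simulated input is exactly the input of $\mathcal H_{j-1+(1-b)\cdot 0+b}$, that is, of $\mathcal H_j$ when $b=0$ and of $\mathcal H_{j-1}$ when $b=1$ (with the orientation fixed appropriately). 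Averaging over $j$ gives a signed advantage equal to $\frac1k$ times the telescoping sum, hence $\Adv(\mathsf D')=\Adv(\mathsf D)/k$ in absolute value.

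The point to check, and the only one needing care, is that $\mathsf D'$ stays inside \Cref{model:hybrid}. The reference copies are obtained in phase (i), before any $\Count$-query, and the choice of $j$ and the slot-filling are local $H$-independent processing. $\mathsf D$'s advice, classical query sequence, and single final coherent query are then forwarded verbatim, so the ordering, the transcript bound $L$, the advice length $a$, and the width $M$ are all unchanged, and only $q$ grows by $k-1$. The reference copies are honest samples from the source, which by \Cref{model:source} are independent fresh copies of $\rho_0$ and $\rho_1$. Their joint state with the challenge therefore matches the hybrid exactly, even though the challenge and the copies are all functions of the same $H$.

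Finally, apply \Cref{thm:hybrid} to $\mathsf D'$ with $q+k-1$ in place of $q$; this still satisfies $q+k-1<n/4$ for polynomial parameters. Multiplying by $k$ gives the $k$-copy bound. With $k=\poly(\lambda)$ and $n=2^{\Omega(\lambda)}$ it is still $2^{-\Omega(\lambda)}$, and the fixed-oracle version follows through \Cref{lem:fixing} exactly as in \Cref{cor:hybrid-fixed}, with $k$ added to the machine's polynomial bounds $t_i$. No step is a real obstacle; the main thing to verify is the access-model bookkeeping above.
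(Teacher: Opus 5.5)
Your proposal is correct and is essentially the paper's proof: the same telescoping hybrid over $\rho_0^{\otimes j}\otimes\rho_1^{\otimes(k-j)}$, the same embedding of the single challenge in a uniformly random slot with $\rho_0$ copies before it and $\rho_1$ copies after it, and the same observation that the copies are drawn before the classical phase, so $\mathsf D'$ stays in \Cref{model:hybrid} with $q+k-1$ reference copies per side. One slip: the subscript $j-1+(1-b)\cdot 0+b$ evaluates to $j-1+b$, the reverse of what you intend; it should be $j-b$, which matches your verbal description ($\mathcal H_j$ for $b=0$, $\mathcal H_{j-1}$ for $b=1$).
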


\begin{proof}
Write $\Omega_j\eqdef\rho_0^{\otimes j}\otimes\rho_1^{\otimes(k-j)}$ and
$p_j\eqdef\Pr[\mathsf D(\Omega_j)=1]$, so that $p_k-p_0$ is the signed advantage of $\mathsf D$.
Let $\mathsf D'$ draw $j$ uniformly from $[k]$, place its own challenge in position $j$, fill
positions $1,\dots,j-1$ with reference copies of $\rho_0$ and positions $j+1,\dots,k$ with reference
copies of $\rho_1$, and run $\mathsf D$.  On challenge $\rho_0$ the input is $\Omega_j$ and on
$\rho_1$ it is $\Omega_{j-1}$, so the signed gap of $\mathsf D'$ is
$\frac1k\sum_{j=1}^k(p_j-p_{j-1})=(p_k-p_0)/k$.  The reduction makes no oracle queries of its own
and inserts the copies before the classical phase, so $\mathsf D'$ lies in \Cref{model:hybrid} with
$q+k-1$ reference copies per side.  Everything here is pointwise in the oracle, so it applies to the
stored pair of \Cref{sec:single} as well as to the ideal one.
\end{proof}

The step that makes this work is the availability of both $\rho_0$ and $\rho_1$ as reference copies.
It is not the generic copy amplification of a pseudorandom state, where only one state is available
and the joint law across copies is different.

The advice in \Cref{model:hybrid} is classical.  Quantum advice of moderate size is covered by the
same theorem, through a device that costs nothing beyond a longer classical string.

\begin{corollary}[Quantum advice in the hybrid]
\label{cor:hybrid-qadvice}
Let the adversary of \Cref{model:hybrid} receive, in addition to its classical advice, a quantum
advice register $\cA$ of dimension $d_\cA$ in a state $\sigma_H$ that depends arbitrarily on $H$.  Then,
with $a'\eqdef a+\lceil2d_\cA\log_29\rceil$ and $C$ the constant of \Cref{thm:hybrid},
\[
  \E_H\Big[\sup_{\sigma_H}\ \max_{\alpha\in\{0,1\}^{\le a}}\Adv\Big]\ \le\
  2\Big[\frac{6q}{n}+2\beta\big(n-2q,\,L+a'+1\big)+C\sqrt{\frac{\log(2M)+L+a'}{n-2q}}\Big].
\]
Suppose now that $n=2^{\lambda}$, that $q,L,a$ and $\log M$ are all polynomial in $\lambda$, and
that $d_\cA\le n^{1/3-\eps}$ for a fixed $\eps>0$.  Then the bound is $2^{-\Omega(\lambda)}$.  In other
words, quantum advice of up to $(\tfrac13-\eps)\lambda$ qubits is admissible.
\end{corollary}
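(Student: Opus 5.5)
The plan is to reduce quantum advice to classical advice by discretizing the advice state with the net of \Cref{lem:net}, paying a factor $2$ in the advantage and $\lceil 2d_\cA\log_29\rceil$ extra bits of classical advice. This matches the shape of the bound: a factor $2$ outside, and $a'$ in place of $a$ inside.

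\emph{Step 1: purify and linearize.} We first take $\sigma_H$ pure, purifying into a register of dimension at most $d_\cA$ times a reference. To keep the count at $9^{2d_\cA}$ we do not enlarge the register. Instead we use that the signed advantage is affine in the advice state. It is the expectation of $\tr(\Theta_{H,\alpha}\,\sigma)$, where $\Theta_{H,\alpha}$ is a Hermitian operator on $\cA$ obtained by running the rest of the adversary: reference copies, classical phase, coherent query, and the final effect, all on challenge difference $\rho_0-\rho_1$. Its operator norm is at most $2$. An affine functional over density operators is maximized at a pure state, so $\sup_{\sigma_H}|\Adv|=\sup_{\ket\gamma}|\bra\gamma\Theta_{H,\alpha}\ket\gamma|=\opnorm{\Theta_{H,\alpha}}$. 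One check is needed here: the Hermitian part of the difference of acceptance effects suffices, since acceptance probabilities are real.

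\emph{Step 2: apply the net.} By \Cref{lem:net} there is a fixed $\tfrac14$-net $\mathcal N$ of the unit sphere of $\cA$, independent of $H$, with $|\mathcal N|\le 9^{2d_\cA}$. It satisfies $\opnorm{\Theta_{H,\alpha}}\le 2\max_{\gamma\in\mathcal N}|\bra\gamma\Theta_{H,\alpha}\ket\gamma|$. Each net point $\gamma$ is a fixed advice state, so it can be indexed by $\lceil 2d_\cA\log_2 9\rceil$ classical bits. We then build the modified adversary. It receives the concatenated classical advice $(\alpha,\mathrm{idx}(\gamma))$ of length at most $a'$, prepares $\ket\gamma$ locally ($H$-independent, and unbounded preparation is allowed), and then runs the original adversary. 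This is a hybrid adversary of \Cref{model:hybrid} with advice length $a'$, and
\[\sup_{\sigma_H}\max_\alpha\Adv\le 2\max_{\alpha'\in\{0,1\}^{\le a'}}\Adv' .\]
We take $\E_H$ and invoke \Cref{thm:hybrid} with $a'$ in place of $a$. The key point is that the maximization over the joint classical string sits inside the expectation, which that theorem permits.

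\emph{Step 3: asymptotics.} With $d_\cA\le n^{1/3-\eps}$, $a'=\poly(\lambda)+O(n^{1/3-\eps})$. The middle term is $C\sqrt{(L+a'+2)/(n-2q)^{1/3}}=O(n^{-\eps/2})$, the last term is $O(n^{-1/3-\eps/2})$, and $6q/n$ is exponentially small. With $n=2^\lambda$ everything is $2^{-\Omega(\lambda)}$. Since $\log_2 d_\cA$ qubits gives $d_\cA=2^{(1/3-\eps)\lambda}$, this is the stated qubit count.

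The main obstacle is Step 1. We must verify that the advantage, maximized over advice, really is the operator norm of one $H$-dependent Hermitian operator that does not depend on $\sigma$. This requires that the advice enters only linearly, with no adaptive choice of $\sigma$ after the oracle answers. That holds because the advice is fixed before step (i). We must also check that the sign and absolute value are handled consistently: the net bound controls $|\bra\gamma\Theta\ket\gamma|$, and the classical-advice version of \Cref{thm:hybrid} already pays the orientation bit.
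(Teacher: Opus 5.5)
Your proposal is correct and follows the paper's proof essentially step for step. You write the advantage at fixed $(\alpha,H)$ as $|\tr(\sigma Z_\alpha(H))|$ for a Hermitian operator on $\cA$, use \Cref{lem:net} to replace the supremum over advice states by twice a maximum over a $\tfrac14$-net of size at most $9^{2d_\cA}$, fold the net index into a classical advice string of length $a'$, and invoke \Cref{thm:hybrid} with the maximum inside the expectation. The tentative purification at the start of your Step~1 is unnecessary, and you rightly drop it: linearity alone gives $\sup_\sigma|\tr(\sigma Z)|=\opnorm{Z}$, attained at pure states, without enlarging the register.
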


\begin{proof}
The device is to spend classical bits on a net of advice states, so that the quantum advice becomes
part of the adversary's own workspace and \Cref{thm:hybrid} applies unchanged.

Fix the adversary and a classical advice string $\alpha$.  Everything it does after receiving its
advice is a quantum instrument applied to
$\sigma\otimes\rho_b\otimes(\text{reference copies})$, and the acceptance probability is linear in
the input state.  So there is a Hermitian operator $Z_\alpha(H)$ on $\cA$ with
\[
  \Adv(\sigma,\alpha;H)\ =\ \big|\tr\big(\sigma Z_\alpha(H)\big)\big| .
\]
The supremum over states $\sigma$ is $\opnorm{Z_\alpha(H)}$, and by \Cref{lem:net} there is a
$\frac14$-net $\mathcal N$ of the unit sphere of $\cA$, of cardinality at most $9^{2d_\cA}$, on which that
norm is at most $2\max_{\gamma\in\mathcal N}|\bra\gamma Z_\alpha(H)\ket\gamma|$.

For a fixed $\gamma\in\mathcal N$ the state $\ket\gamma$ does not depend on $H$, so the adversary with
advice state $\proj\gamma$ is an adversary of \Cref{model:hybrid}: it prepares $\ket\gamma$ as part of
its own initial workspace, and its only $H$-dependent advice is $\alpha$.  Let $\mathsf D^*$ be the adversary
that reads a classical advice string of length $a'$ as a pair $(\alpha,j)$ with $j$ an index into
$\mathcal N$, prepares $\ket{\gamma_j}$, and runs the original adversary with advice $\alpha$.  Then
$\mathsf D^*$ is again an adversary of \Cref{model:hybrid}, with $a'$ in place of $a$, and for every $H$
\[
  \max_{(\alpha,j)}\Adv(\mathsf D^*;H)\ =\ \max_{\alpha}\ \max_{\gamma\in\mathcal N}\big|\bra\gamma Z_\alpha(H)\ket\gamma\big|
  \ \ge\ \tfrac12\,\sup_{\sigma}\max_\alpha\Adv(\sigma,\alpha;H).
\]
Taking expectations and applying \Cref{thm:hybrid} to $\mathsf D^*$ gives the display.  The final claim is
arithmetic: $a'=O(K)$, and the middle term is at most $C\sqrt{(L+a'+2)/(n-2q)^{1/3}}$ by
\Cref{lem:vsp-small}, which is $2^{-\Omega(\lambda)}$ once $d_\cA\le n^{1/3-\eps}$, while the other two
terms are $2^{-\Omega(\lambda)}$ as well.
\end{proof}

The reduction is oblivious to what the hidden object is, so the corollary applies wherever
\Cref{thm:hybrid} does, in particular to the enlarged object $X=(H,S)$ of \Cref{sec:single}.

\paragraph{The limit of the method.}  The proof needs a classical transcript \emph{between} the
state source and the coherent query.  The reason is that the decorrelation step is a conditioning on
$H$ through the records, and it requires the information separating the two oracle interactions to be
classical.  Two coherent queries leave no transcript to condition on.  A coherent query placed
\emph{before} the classical ones would be worse still, since it would put the $H$-dependence on the
effect side of the relaxation, where there is no mass normalization.  The method reaches a
hybrid whose coherent query comes last, and no further.  Full interleaving would require a
two-sided-normalized relaxation, and two full-width coherent queries would require \Cref{conj:lift}.

Both of the excluded orderings have a concrete meaning, which \Cref{cor:quantum-advice} supplies: a
coherent query placed first can synthesize arbitrary quantum advice about $H$, and so already
contains that case.  ``Coherent-first, then classical'' is thus the quantum-advice version of this
theorem (\Cref{sec:security-thm}), and ``coherent, then coherent'' is quantum advice together with
one coherent query (\Cref{sec:lift}).  In neither case is there a record left to condition on.

\section{The separation relative to a single classical oracle}
\label{sec:single}

This section proves the main separation.  Our oracle here is one deterministic Boolean function on
classical strings: a single function that answers every probability-estimation question and, at the
same time, carries the two states of the pair.  Relative to it, one-way puzzles do not exist,
including the inefficiently verifiable ones, and the half-subspace pair is nevertheless an EFI pair
in the access model of \Cref{model:access}.

We store samples of the pair inside the oracle, through Rosenthal's one-query synthesis, so that the
source $\cR$ of \Cref{model:source} disappears from the statement.  That source was always the
analysis model rather than a part of the result: \Cref{sec:oracle} through \Cref{sec:coherent} work
with fresh samples from a hidden $H$, because that is the form the concentration arguments require.
Bridging the two costs us one elementary step, which replaces the stored challenge by a fresh one,
and \Cref{sec:single-thm} explains why we cannot omit it.

\subsection{The oracle and the generator}
\label{sec:single-oracle}

Rosenthal's theorem, in the form quoted in the proof of \Cref{cor:quantum-advice}, provides for
every $m$-qubit state $\ket\psi$ and every error $\eps\ge\exp(-\poly(m))$ a Boolean function
$f_{\ket\psi}$ and a single uniform circuit $C_m=C_{m,\eps}$, making one query, such that the reduced state on
the first $m$ qubits of $C_m^{f_{\ket\psi}}\ket{0\cdots0}$ is within trace distance $\eps$ of
$\psi$.

We apply that theorem not to the sampled vectors themselves but to their roundings to a fixed finite
set of states.  Our reason is measurability.  The expectations over $S$ taken below, and the Borel
hypothesis of \Cref{thm:vsp-lb}, both require the oracle to be a measurable function of the hidden
object, and $\psi\mapsto f_{\ket\psi}$ is a selection made inside the proof of~\cite{Rosenthal23},
about which we prefer to assume nothing.  Restricted to a finite set of states the assignment is a
function on a finite set, hence measurable for that reason alone, and rounding costs one further
$\delta$ in trace distance, both $\eps$ and $\delta$ being fixed in \Cref{model:single}.  We therefore fix, for each $\lambda$, a finite $\delta$-net $\mathcal
N_\lambda$ of the unit sphere of $\bbR^{n}$, together with the rule sending each $u$ to the nearest
point of $\mathcal N_\lambda$, least in a fixed enumeration if several are nearest.  That rule is
constant on the cells of a fixed finite measurable partition and so is Borel, and the rounded vector
$\tilde u$ satisfies $\td(\proj{\tilde u},\proj u)\le\delta$, since $\norm{u-\tilde u}\le\delta$
gives $|\ip u{\tilde u}|\ge1-\delta^2/2$.

\begin{model}[The oracle $\cO'$]
\label{model:single}
Fix $m\eqdef\log_2n=\lambda$, $\kappa\eqdef m+4\lambda$, and $\eps\eqdef\delta\eqdef2^{-3\lambda}$.
For each $\lambda$, independently across $\lambda$ and in addition to the subspace $H=H_\lambda$,
draw a family $S=\{u_{b,r}\}_{b\in\{0,1\},\,r\in\{0,1\}^\kappa}$ of independent vectors.  Here
$u_{0,r}$ is uniform on the unit sphere of $H$, and $u_{1,r}$ on the unit sphere of $H^\perp$.  Write
$\tilde u_{b,r}$ for the rounding of $u_{b,r}$ to $\mathcal N_\lambda$, and set
\[
  \cG\big(1^\lambda,b,r,x\big)\ \eqdef\ f_{\ket{\tilde u_{b,r}}}(x),
\]
the Rosenthal function of the rounded $(b,r)$-th sample at error $\eps$.  On inputs not of this form
we set $\cG\eqdef0$, so that $\cG$ is a total Boolean function.

The oracle is then the tagged Boolean join $\cO'\eqdef(\cG,\Count)$, where $\Count$ is the oracle of
\Cref{def:oracle} with base layer $\cG$ in place of $\cR$.  That is, a rank-$j$ query may reference
clocked samplers whose oracle access is to $\cG$ and to ranks strictly below $j$; we write the rank
as $j$ here because $r$ already indexes the stored samples.  Thus $\cO'$ is one deterministic Boolean
function on classical strings.
\end{model}

The generator $G'(1^\lambda,b)$ samples $r\in\{0,1\}^\kappa$ uniformly with its own coins, runs
$C_m$ with the one query addressed to $\cG(1^\lambda,b,r,\cdot)$, outputs the designated $m$-qubit register,
and discards the rest, including $r$. Writing $\tilde\rho_{b,r}$ for the output at index $r$, the
output of $G'$ on branch $b$ is $\tilde\rho_b\eqdef2^{-\kappa}\sum_r\tilde\rho_{b,r}$, and by
construction, through the rounded vector, $\td(\tilde\rho_{b,r},\proj{u_{b,r}})\le\eps+\delta$.
Since the vectors $u_{0,r}$ lie in $H$ and the vectors $u_{1,r}$ in $H^\perp$, we get
$\td(\tilde\rho_0,\tilde\rho_1)\ge1-2(\eps+\delta)=1-2^{-3\lambda+2}$. The generator queries its one $\cG$-section coherently. A distinguisher with coherent-last access (\Cref{model:access}) is granted a coherent query as well, so it may run $G'$ itself, and its reference copies are supplied at no
cost.  That same access restriction applies to generator and distinguisher alike.

Well-definedness of $\Count$ is \Cref{lem:welldef} verbatim. For every fixing of $(H,S)$ the base
layer $\cG$ is a fixed total Boolean function, so the recursion on ranks closes exactly as before,
and the present case is in fact simpler, no channel semantics being needed for the base layer.

\subsection{The separation theorem}
\label{sec:single-thm}

\begin{theorem}[Main separation]
\label{thm:single}
With probability one over $(H,S)=\{(H_\lambda,S_\lambda)\}_\lambda$, the oracle $\cO'$ of
\Cref{model:single} satisfies both of the following.
\begin{enumerate}[label=(\roman*),leftmargin=2.2em]
\item One-way puzzles, including inefficiently verifiable and classically-secure ones, do not exist
  relative to $\cO'$, and neither do $\QEFID$ pairs; both statements hold for every fixing of
  $(H,S)$.
\item The pair $(\tilde\rho_{0,\lambda},\tilde\rho_{1,\lambda})$ generated by $G'$ is efficiently
  generatable relative to $\cO'$, is statistically far, with $\td\ge1-2^{-3\lambda+2}$, and is
  computationally indistinguishable, with advantage $2^{-\Omega(\lambda)}$, to
  every $\QPT$ distinguisher with coherent-last access to $\cO'$ (\Cref{model:access}).  It is
  therefore a classical-advice EFI pair (\Cref{def:efi-classical}), and \Cref{cor:barrier} draws the
  black-box consequence.
\end{enumerate}
\end{theorem}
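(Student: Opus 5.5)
Part~(i) transfers from the source model essentially verbatim, and I would argue it first.  The attack of \Cref{thm:no-owpuzz}, the distinguisher of \Cref{prop:no-qefid} and the simulation of \Cref{thm:interactive} use only two facts.  The first is \Cref{lem:pe-easy}: every $\QPT^{\cO'}$ classical-output sampler, after $\mathsf{Norm}_J$, is an admissible argument to $\Count_{J+1}$.  The second is that $\Count$ returns exact bits.  Neither fact refers to the base layer, and with $\cG$ in place of $\cR$ the recursion of \Cref{lem:welldef} closes as noted after \Cref{model:single}.  So the conditional-sampling attack, which is classical and makes classical queries, goes through for every fixing of $(H,S)$.  The hard-wired incoming messages and prefixes are again clocked samplers with oracle access to $\cG$ and to lower ranks.

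For part~(ii), generation and farness are already computed after \Cref{model:single}.  Indistinguishability I would prove in three steps.  First, replace the stored challenge by a fresh one.  The challenge is $\tilde\rho_{b,r}$ for a uniform $r$ of $\kappa=m+4\lambda$ bits, and the adversary makes $\poly(\lambda)$ classical queries and one coherent query, plus reference copies drawn at other uniform indices.  Via \Cref{lem:fresh}, I would argue that the adversary's view is within $2^{-\Omega(\lambda)}$ of an experiment in which the challenge is a fresh vector $u$, independent of $S$, with the section at the challenge index resampled.  For the classical queries this is a collision/union bound: the probability that a query or reference index hits $r$ is $\poly(\lambda)2^{-\kappa}$.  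For the single coherent query, I would bound the query mass on the section $(b,r)$ by an averaging (hybrid) argument over the uniform $r$, costing $O(\sqrt{\poly(\lambda)2^{-\kappa}})$.  Rounding and synthesis errors add $2(\eps+\delta)$.

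Second, with a fresh challenge, apply \Cref{thm:hybrid} to the enlarged hidden object $X=(H,S)$, as \Cref{lem:side} licenses.  Given $H$, the set $S$ is $H$-dependent side data entering only through the Boolean oracle.  Every bit of $\cO'$ is a Borel function of $(H,S)$ by \Cref{lem:welldef}, and the rounding makes $\cG$ measurable.  The reference copies are generated from $S$, so they are now part of the oracle's content rather than fresh samples of $\rho_b$.  I would handle them by letting the holder of $X$ supply them classically: the distinguisher obtains them through the coherent query it is granted, or, for copies supplied by the challenger, through the challenger's description.  I would then charge their spans as in \Cref{lem:reference}.  The key point to check is that, conditioned on $(A,B)$ and $S$, the fresh challenge remains a half-subspace instance inside $K$, so that \Cref{lem:residual} still applies; \Cref{thm:hybrid} only needs the final coherent query's truth table to be some Boolean function of $X$.

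Third, pass from average to fixed.  I would apply \Cref{lem:fixing} with $\mathbf X=(H,S)$ over the countable family of discretized machines, exactly as in \Cref{cor:hybrid-fixed}.  \Cref{cor:hybrid-qadvice} covers the quantum advice of $(\tfrac13-\eps)\lambda$ qubits.  Finally, I would intersect with the probability-one event from part~(i); since part~(i) holds for every fixing, this intersection is trivial.  The main obstacle I expect is the first step: bounding the weight a single wide coherent query can put on the hidden challenge section.  Without it, the communication reduction collapses by the fingerprinting attack described in the overview.  I would first try the averaging over $r$ above, using that the query operator does not depend on $r$ before the challenge is consumed.
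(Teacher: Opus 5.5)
There is a genuine gap in Steps~1 and~2 of part~(ii). Part~(i) and the final fixing step are fine and follow the paper.

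\textbf{Step 1.} Your Step~1 treats the challenge section $(b,r)$ as something each query either hits or misses. But a $\Count$ query, classical or coherent, may name a sampler that calls $\cG$ on every section. An example is $G'(1^\lambda,0)$ followed by a computational-basis measurement, whose output probabilities $2^{-\kappa}\sum_{r'}\bra j\tilde\rho_{0,r'}\ket j$ involve every stored sample of that branch. So answers depend on the challenge entry $u_{b,r}$ no matter what string is sent. Because the answers are exact bits, resampling that one entry moves such a probability by order $2^{-\kappa}/n$. That changes its binary digits beyond position roughly $\kappa+\log_2 n$, which a polynomial-length query can request. No coupling makes the answers agree with high probability, so neither the collision bound nor a bound on the coherent query's mass on section $(b,r)$ controls the change in acceptance probability.

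Your closing remark is the right one, applied to the wrong object: nothing the distinguisher does depends on the indices that the honest runs of $G'$ discard. Fix $(H,S)$ and the advice. The whole interaction is then a single effect $E(H,S,\alpha)$ on the $2q+1$ input registers. Averaging over the independent discarded indices makes the input exactly $\tilde\rho_0^{\otimes q}\otimes\tilde\rho_1^{\otimes q}\otimes\tilde\rho_b$, where $\tilde\rho_{b'}=2^{-\kappa}\sum_{r'}\tilde\rho_{b',r'}$. Since $|\tr(EY)|\le\norm{Y}_1$ for every effect, however $E$ depends on $S$, replacing all $2q+1$ registers by fresh states costs at most $(2q+1)\sum_{b'}\norm{\tilde\rho_{b'}-\rho_{b'}}_1$. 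This splits into the rounding error $2(\eps+\delta)$ plus the empirical-average bound $\E_S\norm{\sigma_{b'}-\rho_{b'}}_1\le\sqrt{n2^{-\kappa}/2}$, with $\sigma_{b'}$ the average of the unrounded $\proj{u_{b',r'}}$. That is where $2^\kappa\gg n$ is needed, a condition your collision bound never uses. This is \Cref{lem:fresh}, which you cite, but it replaces the reference copies together with the challenge, and its proof never asks where a query lands.

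\textbf{Step 2.} Keeping the reference copies as stored samples is what breaks Step~2. The list $S$ holds $2^\kappa\ge n$ independent vectors in each of $H$ and $H^\perp$, so it almost surely spans both. Conditioned on $(A,B)$ \emph{and} $S$, the subspace $H$ is therefore determined. \Cref{lem:residual} then gives no Haar residual, and nothing is left for either bound to act on. The span substitution of \Cref{lem:reference} also fails here: stored copies are entries of the list on which the oracle depends, so copies resampled from the span do not reproduce their joint law with the oracle.

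Once all $2q+1$ registers are fresh, \Cref{lem:side} applies without ever conditioning on $S$. Bob keeps $S$. The challenge and copies are independent of $S$ given $H$ (and given $(H_A,A,B)$ after conditioning on the spans), so Bob can resample $S$ from $H$ with private coins, and the communication term is that of $H$. The weighted norms $\phi_\pi$ depend only on $R_H$. Finally, the maximum over truth tables already covers every Boolean function of $(H,S)$.
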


Part~(i) is the attack of \Cref{app:attack} unchanged. The adversary of \Cref{thm:no-owpuzz} queries
$\Count$ on the candidate sampler and its prefix specializations, all of which are admissible
arguments by the reasoning of \Cref{lem:pe-easy} with $\cG$ in place of $\cR$, and it never queries $\cG$ itself; the same applies to \Cref{prop:no-qefid} and to \Cref{thm:interactive}. Nothing in that reasoning used the base
layer being a state source: what it used is that base-layer calls carry no rank, so that
$\mathsf{Norm}_J$ leaves them untouched exactly as it left the calls to $\cR$, and that the base
layer is a fixed object once $(H,S)$ is fixed. Both hold for $\cG$, and the sampler may query it
coherently.

Part~(ii) rests on the following lemma, and the order of the two steps below is forced.  Applied
directly to the stored samples, the communication reduction would hand Bob the list from which
Alice's challenge was drawn, and the associated problem would then be trivial, so the samples must be
replaced before the reduction rather than after, and \Cref{sec:single-thm} gives the argument.

\begin{lemma}[Listed samples may be replaced by fresh ones]
\label{lem:fresh}
Fix $\lambda$ and a distinguisher $\mathsf D$ in the access model of \Cref{thm:single}(ii), with $q$
reference copies per side, transcript length $L$, advice length $a$, and coherent width $M$, all
finite.  Write $\Adv(\mathsf D;H,S)$ for its advantage, maximized over advice strings.  Write $\Adv^{\circ}(\mathsf D;H,S)$ for the same quantity in the modified experiment.  In that experiment the
$2q$ reference copies and the challenge are replaced by fresh samples: copies of $\rho_0(H)$ and
$\rho_1(H)$, and a challenge drawn from $\rho_{b}(H)$, all independent of $S$.  Then for every $H$
and every $\lambda\ge2$,
\[
  \E_S\,\big|\Adv(\mathsf D;H,S)-\Adv^{\circ}(\mathsf D;H,S)\big|
  \   \le\ 2(2q+1)\Big(2(\eps+\delta)+\sqrt{n2^{-\kappa}/2}\Big)
  \ \le\ 2(2q+1)\cdot 2^{-2\lambda+1} .
\]
\end{lemma}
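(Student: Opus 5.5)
The plan is a hybrid over the $2q+1$ quantum inputs that depend on the stored list, namely the $2q$ reference copies and the challenge. Each step replaces one stored-list output by a fresh sample of $\rho_{b}(H)$. The cost of each step is bounded by a trace distance computed \emph{pointwise in $S$}. Working pointwise means the correlation between the stored list and the oracle $\cO'$ never has to be analysed.

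First, fix $H$, $S$ and an advice string. Everything $\mathsf D$ does after receiving its inputs is then a fixed instrument, possibly querying $\cO'$, which is a fixed Boolean function once $(H,S)$ is fixed. All $2q+1$ quantum inputs arrive as a tensor product of states, each given to it by the challenger. Freeze all inputs but one. The acceptance probability is then $\tr(E_{H,S}\,\xi)$ for an effect $E_{H,S}$ on that input register $\xi$. Consequently, replacing $\xi$ by $\xi'$ changes each of the two acceptance probabilities, and hence the advantage, by at most $2\,\td(\xi,\xi')$. This holds however $E_{H,S}$ depends on $S$. Maximizing over advice preserves the bound, because a difference of maxima is at most the maximum of the differences. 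Summing over the $2q+1$ hybrid steps gives the factor $2(2q+1)$. It remains to bound, for each branch $b$, the quantity $\E_S\,\td(\tilde\rho_b,\rho_b(H))$.

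Second, split this trace distance by the triangle inequality through the unrounded mixture $\bar\rho_b\eqdef2^{-\kappa}\sum_r\proj{u_{b,r}}$. By the synthesis and rounding guarantees of \Cref{model:single}, $\td(\tilde\rho_{b,r},\proj{u_{b,r}})\le\eps+\delta$ for every $r$, and convexity passes this bound to the average. I will carry it as $2(\eps+\delta)$ to leave slack for the two-sided accounting.

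The remaining term, $\E_S\,\td(\bar\rho_b,\rho_b)$, is the concentration of an empirical average of $N=2^\kappa$ i.i.d.\ rank-one projectors about their mean $\rho_b=2P_H/n$, and I expect it to be the main step. Both operators are supported on $H$, or on $H^\perp$, which has dimension $n/2$. Cauchy--Schwarz therefore gives $\norm{\bar\rho_b-\rho_b}_1\le\sqrt{n/2}\,\norm{\bar\rho_b-\rho_b}_2$. Jensen then gives $\E\norm{\bar\rho_b-\rho_b}_2\le\sqrt{\E\norm{\cdot}_2^2}$. By independence, $\E\norm{\bar\rho_b-\rho_b}_2^2=N^{-1}\big(\tr(\proj u^2)-\tr\rho_b^2\big)\le N^{-1}$. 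Together these yield $\E_S\norm{\bar\rho_b-\rho_b}_1\le\sqrt{n2^{-\kappa}/2}$, and so the trace distance is at most half of this, which is within the stated term.

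Finally, the arithmetic. With $\eps=\delta=2^{-3\lambda}$ we get $2(\eps+\delta)=2^{-3\lambda+2}$. With $n=2^\lambda$ and $\kappa=5\lambda$ we get $\sqrt{n2^{-\kappa}/2}=2^{-2\lambda-1/2}$. For $\lambda\ge2$ the sum is at most $2^{-2\lambda+1}$, which is the second inequality.
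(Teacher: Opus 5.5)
Your proposal is correct and follows the paper's proof essentially step for step. You collapse the interaction for fixed $(H,S)$ and advice into one effect on a product input state. You replace the $2q+1$ registers one at a time. You split each single-register distance through the unrounded empirical mixture into a rounding term of $\eps+\delta$ per sample and a sampling term bounded via Hilbert--Schmidt norm and Jensen.

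There are two cosmetic differences. First, you bound acceptance probabilities by trace distance rather than trace norm, which actually yields half the stated bound. Second, you assume the inputs form the exact product $\tilde\rho_0^{\otimes q}\otimes\tilde\rho_1^{\otimes q}\otimes\tilde\rho_b$ without argument. The paper justifies this product structure from the independence of the generator's discarded indices $r$, and from the fact that the effect cannot depend on them.
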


\begin{proof}
Everything the distinguisher does collapses into a single effect, and the two experiments then differ
only in the states loaded into its $2q+1$ input registers.  First, averaging over the indices that the
honest executions of $G'$ discard makes the joint input state an exact tensor product, so that effect
is applied to a product.  Second, replacing the registers one at a time bounds the change in
acceptance probability by the trace norms of the individual discrepancies.  Third, each of those
discrepancies is small, the rounding contributing $2(\eps+\delta)$ and the sampling error
$\sqrt{n2^{-\kappa}/2}$.

\emph{Step 1: the acceptance probability is one fixed effect applied to a state that factorizes across
the input registers.} Fix $(H,S)$ and an advice string $\alpha$.  Deferring all measurements, the
distinguisher is a fixed oracle algorithm, and for a fixed oracle its interaction is a fixed quantum
channel followed by a two-outcome measurement.  Its acceptance probability on challenge branch $b$ is
therefore $\tr[E\,\omega_b]$ for a single effect $E=E(H,S,\alpha)$ on the $2q+1$ input registers, with
$\omega_b$ their joint state.

Notice that $E$ does not depend on the indices drawn by the honest executions of $G'$.  Those indices
are internal coins of the generator and are discarded, so they reach the distinguisher only through
the states themselves.  This is exactly why the generator discards $r$.

The indices of the $2q+1$ honest executions are independent and uniform on $\{0,1\}^\kappa$.
Averaging over them, the joint input state factorizes exactly:
\[
  \omega_b\ =\ \tilde\rho_0^{\otimes q}\otimes\tilde\rho_1^{\otimes q}\otimes\tilde\rho_b .
\]
The independence of the indices is what makes the average of a tensor product equal the tensor
product of the averages.  This includes the terms in which two indices coincide, since a repeated
index still contributes a product.  In the modified experiment
$\omega_b^{\circ}=\rho_0^{\otimes q}\otimes\rho_1^{\otimes q}\otimes\rho_b$.

Both experiments are now one effect applied to a product state, and no further property of $\mathsf D$ enters
below.

\emph{Step 2: passing from one product state to the other costs $2q+1$ single-register trace norms.}
Replacing the registers one at a time, and using $|\tr[EX]|\le\norm X_1$ together with
$\norm{A\otimes\tau}_1=\norm A_1$ for a state $\tau$, each replaced register contributes the trace
norm of its own discrepancy.  Branch $b=0$ carries $q+1$ registers of type $0$ and $q$ of type $1$,
and branch $b=1$ the reverse, so
\[
  \big|\tr[E\,\omega_0]-\tr[E\,\omega_0^{\circ}]\big|\ \le\ (q+1)\norm{\tilde\rho_0-\rho_0}_1+q\norm{\tilde\rho_1-\rho_1}_1 ,
\]
with the two coefficients exchanged for $b=1$.  The advantage is the difference of the two challenge
branches, so adding the two displays gives
\[
  \big|\Adv(\mathsf D;H,S)-\Adv^{\circ}(\mathsf D;H,S)\big|\ \le\ (2q+1)\big(\norm{\tilde\rho_0-\rho_0}_1+\norm{\tilde\rho_1-\rho_1}_1\big),
\]
the maximum over advice passing through since the bound is uniform in $\alpha$.

So far we have shown that the entire discrepancy is carried by the two single-register distances, and
it remains to bound them.

\emph{Step 3: each stored mixture is within $2(\eps+\delta)+\sqrt{n2^{-\kappa}/2}$ of the true state
in expectation over $S$.} We bound $\norm{\tilde\rho_{b'}-\rho_{b'}}_1$ by splitting off the exact
empirical mixture, namely $\sigma_{b'}\eqdef2^{-\kappa}\sum_r\proj{u_{b',r}}$, formed from the
\emph{unrounded} vectors, which are the ones actually lying in $H$ and $H^\perp$:
\[
  \tilde\rho_{b'}-\rho_{b'}\ =\ \underbrace{\big(\tilde\rho_{b'}-\sigma_{b'}\big)}_{\text{rounding}}
  \;+\;\underbrace{\big(\sigma_{b'}-\rho_{b'}\big)}_{\text{sampling}} .
\]
The rounding term has trace norm at most $2(\eps+\delta)$, since
$\td(\tilde\rho_{b',r},\proj{u_{b',r}})\le\eps+\delta$ for every $r$ and the trace norm is convex.

The sampling term is where the number of stored vectors enters.  It is supported on the
$(n/2)$-dimensional subspace carrying $\rho_{b'}$, so we may pass to the Hilbert--Schmidt norm at a
cost of $\sqrt{n/2}$.  The summands $\proj{u_{b',r}}$ are i.i.d.\ with mean $\rho_{b'}$ and
$\E\norm{\proj{u_{b',r}}-\rho_{b'}}_2^2=1-\tfrac2n$, so
\[
  \E_S\norm{\sigma_{b'}-\rho_{b'}}_2^2\ =\ 2^{-\kappa}\Big(1-\frac2n\Big),
  \qquad\text{and hence}\qquad
  \E_S\norm{\sigma_{b'}-\rho_{b'}}_1\ \le\ \sqrt{n2^{-\kappa}/2}
\]
by Jensen.  Taking $\E_S$ and summing the two terms gives the first bound.

The second bound is arithmetic.  Here $2(\eps+\delta)=2^{-3\lambda+2}$ and
$\sqrt{n2^{-\kappa}/2}\le2^{-2\lambda}$, the latter because $\kappa=m+4\lambda$ and $2^m\ge n$; the
first of these is below the second for $\lambda\ge2$, so their sum is at most $2^{-2\lambda+1}$, as
stated.
\end{proof}

The fresh experiment is one to which the theorems of \Cref{sec:coherent} already apply, with the
enlarged hidden object $X=(H,S)$ playing the role of $H$.

\begin{lemma}[The fresh experiment is governed by $H$ alone]
\label{lem:side}
Fix $\lambda$, and let $\mathsf D$ be any distinguisher of \Cref{thm:single}(ii), with $q$ reference copies
per side, transcript length $L$, advice length $a$, and coherent width $M$.  Then
\[
  \E_{H,S}\big[\Adv^{\circ}(\mathsf D;H,S)\big]\ \le\ \frac{6q}{n}+2\beta\big(n-2q,\,L+a+1\big)+C\sqrt{\frac{\log(2M)+L+a}{n-2q}},
\]
where $C$ is the universal constant of \Cref{thm:hybrid} and $\beta$ is as in \Cref{lem:vsp-small}.
The right-hand side is exactly the bound \Cref{thm:hybrid} gives for the hidden object $H$.
\end{lemma}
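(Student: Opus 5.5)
The plan is to condition on $S$ and observe that, in the fresh experiment, the only remaining dependence on $S$ runs through the oracle $\cO'$ itself. The challenge and the $2q$ reference copies are fresh samples of $\rho_b(H)$, independent of $S$. So for each fixed $S$ I will treat the pair $X=(H,S)$ as the hidden object, with the challenge depending on $H$ alone, and rerun the proof of \Cref{thm:hybrid} verbatim. That proof never uses any property of the oracle beyond two facts: each classical query returns a deterministic bit that is a Borel function of the hidden object, and the final coherent query is a phase query to some Boolean function of it. Both facts hold for $\cO'$ as a function of $(H,S)$. Borel measurability is \Cref{lem:welldef} in its $(H,S)$ form, and it is here that the rounding to $\mathcal N_\lambda$ is used.

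I will take the steps of \Cref{thm:hybrid} in order, checking at each one that the enlarged object does no harm.

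\emph{Step 2 (term (I)).} This invokes \Cref{lem:reference}, and hence \Cref{thm:transcript-security}, with Bob holding $(H,S)$. The associated problem is still $\VSP_{n-2q}$ after conditioning on the spans, because Alice's challenge is drawn from a kernel depending only on $H$. I would hard-wire $S$ into Bob's strategy, apply the lemma pointwise in $S$, and then average. An unbounded Bob answering queries with extra private data is still a Borel two-party protocol for the same distribution, and its advantage is bounded by $\beta(n-2q,L+a+1)$.

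\emph{Steps 3--6 (term (II)).} These use the record decomposition \Cref{lem:record}, whose indicators $\chi_\pi$ now depend on $(H,S)$, and \Cref{lem:subnorm}. The key point is that the random quantity $\phi_\pi$ there is a function of the Haar unitary defining $R_H$ alone, and its concentration is over $H$ for a fixed record operator $V_\pi$, which is independent of everything. The mass identity $\sum_\pi\chi_\pi w_\pi=1$ holds pointwise in $(H,S)$, and the maximal inequality in Step 6 is a union bound over records and advice strings. Both therefore go through pointwise in $S$. The truth table of the final query is a Boolean function of $(H,S)$, and \Cref{thm:one-query} and \Cref{lem:subnorm} bound the maximum over \emph{all} truth tables, so $S$ is absorbed.

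\emph{Main obstacle.} The delicate point is Step 3. After conditioning on the spans, the final truth table must still be a function that the spectral bound quantifies over, while the law of the residual reflection must stay Haar given everything the bound conditions on. Here $S$ is \emph{not} independent of $H$: the stored vectors lie in $H$ and $H^\perp$. I would handle this by never conditioning on $S$ inside the spectral argument. The bound $\max_f|Z_H(f)|\le w\,\phi(H)$ is pointwise, and $\phi$ does not involve $f$, so the dependence of $f$ on $S$ is irrelevant. All expectations are then taken over the marginal of $H$, which is Haar, and the residual law comes from \Cref{lem:residual}. The remaining terms follow exactly as in \Cref{thm:hybrid}, which gives the stated bound.
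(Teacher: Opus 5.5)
\textbf{Gap in the communication term.} Your spectral part is sound, and your last paragraph gives the right reason for it. The bound $\max_f|Z_H(f)|\le w_\pi\phi_\pi(H)$ holds pointwise, and $\phi_\pi$ involves only the reflection. Once the increment is bounded by $\max_{\alpha,\pi}\phi_\pi$, the expectation is over the Haar marginal of $H$ (of $H_A$ given $(A,B)$), so $S$ never enters.

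That paragraph, however, contradicts the framing you use everywhere else, and that framing breaks the communication term. You cannot condition on $S$ or argue ``pointwise in $S$''. $S$ contains $2^\kappa\gg n$ independent unit vectors of $H$, which span $H$ almost surely, so $S$ determines $H$. Given $S$, no Haar average is left for \Cref{lem:reference} or \Cref{thm:transcript-security} to use. If $S$ is hard-wired into the protocol, Alice reads off $H$ and decides $b$ with no communication, so no bound $2\beta(n-2q,L+a+1)$ holds pointwise in $S$. If instead you fix $S$ while letting $H$ range independently, you are bounding a different experiment from $\E_{H,S}[\Adv^{\circ}]$. The same objection applies to ``both therefore go through pointwise in $S$'' in Steps 3--6, which only your final paragraph rescues.

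\textbf{The missing idea.} The real content of the lemma is the reason Bob's extra data is harmless, so the delicate point is the communication term, not Step~3. Your assertion that Bob with ``extra private data'' still runs a protocol for the same distribution is true only because he can manufacture that data himself. In the fresh experiment the challenge and reference copies are independent of $S$ given $H$. The conditional law of $S$ given $H$ is explicit: $2^{\kappa+1}$ independent uniform vectors on the spheres of $H$ and $H^\perp$. So Bob samples $S'\sim S\mid H$ with private randomness and runs the protocol with $S'$ in place of $S$. The joint law of the transcript and Alice's output is unchanged, so any protocol for the $(H,S)$-problem yields one for $\VSP$ with the same cost and advantage.

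This must be checked again after conditioning on the spans. There Bob holds $(H_A,S)$ and samples $S'$ from $(H_A,A,B)$, using $H=A\oplus H_A$. The step is genuinely needed: in the real experiment $S$ contains Alice's own challenge vector, and the same ``extra private data'' trivializes the problem.

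\textbf{A repair close to your approach.} You can make your pointwise instinct correct by fixing not $S$ but independent auxiliary randomness $\omega$ with $S=S(H,\omega)$ Borel. For instance, take normalized projections of fixed Gaussian vectors onto $H$ and $H^\perp$. For each fixed $\omega$ the whole oracle is a Borel function of $H$ alone, so \Cref{thm:hybrid} applies with hidden object $H$. You then average over $\omega$.
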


\begin{proof}
In the fresh experiment the challenge and the $2q$ reference copies are drawn from $\rho_0(H),\rho_1(H)$
independently of $S$, while the oracle $\cO'=(\cG,\Count)$ is a deterministic Boolean function of
$X=(H,S)$. This is exactly the setting of \Cref{thm:transcript-security} and \Cref{thm:hybrid} with $X$
in place of $H$, since those theorems ask only that the truth table be an arbitrary Boolean function of
the hidden object. What is left is to reduce $X$ to $H$, and we do so one term at a time.

First, $S$ does not enter the spectral term.  The weighted norm $\phi_\pi(H)$ of \Cref{lem:subnorm} is
built from the adversary's fixed operators and the reflection $R_H$, so it depends on $H$ but not on
$S$.  As for the truth table $s(X)$, it enters \Cref{thm:hybrid} only through the maximum over all
Boolean functions, and that maximum already covers any function of $(H,S)$.

The conditioning behaves in the same way.  The record identity of \Cref{lem:record} holds pointwise in
$(H,S)$, and the decorrelation step of \Cref{thm:hybrid} bounds a maximum over record strings whose
summands are functions of $H$ alone, so its expectation is over $H$.

Second, handing Bob the extra $S$ does not raise the communication term. In the associated problem Bob
holds $X=(H,S)$ and Alice a fresh challenge drawn from $\rho_b(H)$, independent of $S$ given $H$.
The conditional law of $S$ given $H$ is explicit here: it is $2^{\kappa+1}$ independent uniform
vectors on the spheres of $H$ and $H^\perp$.  So from $H$ and private randomness Bob draws $S'$ with
that law and runs the protocol with $S'$ in place of $S$.  Since the challenge is independent of $S$
given $H$, the joint law of the transcript and of Alice's output is unchanged, and any $L$-bit
protocol for the $(H,S)$-problem yields an $L$-bit protocol for the $H$-problem of the same
advantage.

The same substitution is needed after conditioning on the spans, which is the form in which
\Cref{thm:hybrid} uses it.  There Bob holds $(H_A,S)$ with $(A,B)$ known to both.  Since
$H=A\oplus H_A$ he can still sample $S'$ from the conditional law, and the residual challenge is
independent of $S$ given $(H_A,A,B)$.

Hence $\beta$ for the associated problem is the $\beta$ of \Cref{lem:vsp-small}, and
\Cref{cor:reference-vsp} applies with hidden object $H$. The spectral term and the communication term
are the ones \Cref{thm:hybrid} supplies for $H$, which is the stated bound.
\end{proof}

\begin{proof}[Proof of \Cref{thm:single}\textup{(ii)}]
Efficient generation from $\cO'$ and farness $\td\ge1-2(\eps+\delta)$ were shown after
\Cref{model:single}, so it remains to prove indistinguishability.

Discretize the gate set as in the proof of \Cref{thm:EFI-security}, giving a countable family of
machines $\{N_i\}$ with the same polynomial bounds $t_i(\lambda)$ on their parameters. Fix $i$. By \Cref{lem:fresh}, which replaces the stored samples
by fresh ones, and then \Cref{lem:side}, which returns the fresh experiment to the hidden object $H$,
\[
  \E_{H,S}\big[\Adv_{N_i}\big]\ \le\ \E_{H,S}\big[\Adv^{\circ}_{N_i}\big]+2(2t_i+1)\cdot2^{-2\lambda+1}
  \ \le\ 2^{-\Omega(\lambda)},
\]
the last step because the \Cref{thm:hybrid} bound is $2^{-\Omega(\lambda)}$ for $n=2^{\Omega(\lambda)}$
and all parameters $\poly(\lambda)$.

The quantum-advice clause is the same argument with \Cref{cor:hybrid-qadvice} in place of
\Cref{thm:hybrid}.  Its bound is $2^{-\Omega(\lambda)}$ under the stated size restriction, and the
supremum over advice states lies inside the expectation, which is what \Cref{lem:fixing} requires.

Throughout, the base objects $(H_\mu,S_\mu)$ at security parameters $\mu\ne\lambda$ are independent
of $(H_\lambda,S_\lambda)$, and we fix them as shared public randomness, exactly as the proof of
\Cref{thm:EFI-security} fixes the subspaces at other parameters.  Every bit of $\cO'$ is then a
Borel function of $X=(H_\lambda,S_\lambda)$ alone, by the rank induction of \Cref{lem:welldef}.  This
holds for every section and every rank, and it holds even for a $\Count$ bit about a sampler at
another input length that calls $\cG$ at level $\lambda$.

Now let $i$ range.  \Cref{lem:fixing}, applied with $\mathbf X=(H,S)$, fixes a set of $(H,S)$ of
probability one on which every $N_i$ has negligible advantage.  Since part~(i) holds for
\emph{every} fixing of $(H,S)$, that same set witnesses both parts, which gives the theorem.
\end{proof}

\paragraph{The order of the replacement.}  \Cref{lem:fresh} is a precondition for the communication
argument, not a convenience.  Applied directly to the real experiment, \Cref{thm:transcript-security}
would hand Bob the whole object $X=(H,S)$, and $S$ contains the description
$u_{b^\ast,r^\ast}$ of the very challenge Alice prepares, which makes the associated problem trivial:
Alice fingerprints her challenge vector in $O(\kappa)=O(\lambda)$ bits, and Bob, holding $S$, finds
the matching entry and reads off its branch.  Applying \Cref{lem:fresh} first removes the challenge
from $S$ and restores the $\Omega(n^{1/3})$ bound.  The generator of \Cref{model:single} discards its
index $r$ for the same reason, and in the source model the same consideration is why $\cR$ traces out
its sampling randomness (\Cref{model:source}).

Two further reasons bind harder still, since they apply to the coherent half of the access model,
where no communication argument is in play.  The first concerns the spectral bounds.  Those of \Cref{sec:coherent}
rest on $\rho_0-\rho_1=2R_H/n$, a linear statistic of a Haar-conjugated reflection with $\E R_H=0$,
whereas the stored pair has difference $\tilde\rho_0-\tilde\rho_1$, an empirical average over the
list, to which \Cref{lem:khintchine} does not apply.  The second concerns the reference copies.
\Cref{lem:reference} replaces them by their spans, using the rotation-invariant conditional law of
$q$ independent Haar vectors given their span; the stored copies, however, are drawn \emph{with
replacement from a finite list}, so resampling from the span does not reproduce their joint law.
Both objections disappear once the samples are fresh.

The two demands come apart, even at the most permissive end of the $\OWPuzz$ definition: an EFI pair
asks that two quantum states be hard to distinguish, a one-way puzzle that a classical search problem
be hard.  A single Boolean function
collapses the meta-complexity hardness that characterizes $\OWPuzz$s~\cite{CGGH25,HM24,HHM25},
removing them entirely. The half-subspace pair survives it, because distinguishing that pair
requires implementing the hidden measurement $\{P_H,P_{H^\perp}\}$, and neither a polynomial-length
classical transcript~\cite{KR11} nor a single coherent query (\Cref{thm:one-query}) conveys enough
information about $H$ to do so.

\subsection{The black-box barrier}
\label{sec:barrier}

\Cref{thm:single} has a consequence for constructions.  We state it in the fully black-box
framework, the setting for black-box separations since Impagliazzo and Rudich~\cite{IR89}, restricted
to reductions whose oracle access lies in the class the theorem covers.

\begin{definition}[Fully black-box construction; $\mathcal C$-bounded reduction]
\label{def:fbb}
Let $\mathcal C$ be a class of oracle distinguishers.  A \emph{fully black-box construction of
one-way puzzles from EFI pairs} is a pair $(\mathsf{Con},\mathsf{Red})$ of oracle algorithms with the
following property, required relative to every oracle $\cO$ and for every $\QPT^{\cO}$ generator $G$
of a statistically far pair.  First, $\mathsf{Con}^{G,\cO}$ is the sampler of a one-way puzzle
relative to $\cO$, for some verifier.  Second, for every adversary $\cA$ that inverts
$\mathsf{Con}^{G,\cO}$ with non-negligible probability, $\mathsf{Red}^{G,\cA,\cO}$ distinguishes the
pair of $G$ with non-negligible advantage.

We call the construction \emph{$\mathcal C$-bounded} if the following holds whenever $\cA$ is a
classical polynomial-time algorithm making classical queries to $\cO$.  The distinguisher
$\mathsf{Red}^{\cA,\cO}$, with its reference copies of $G$'s outputs supplied as in
\Cref{model:access}(i), belongs to $\mathcal C$.
\end{definition}

The boundedness condition says that the reduction, run together with the puzzle attacker, is itself a
distinguisher of the class.  For the class of \Cref{model:access} it means that the reduction obtains
the generator's outputs as reference copies and makes its own coherent query last.

\begin{corollary}[No coherent-last black-box construction]
\label{cor:barrier}
Let $\mathcal C$ be the class of $\QPT$ distinguishers with classical advice and coherent-last access
(\Cref{model:access}).  There is no $\mathcal C$-bounded fully black-box construction of one-way puzzles
from EFI pairs.
\end{corollary}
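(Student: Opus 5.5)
The plan is the standard Impagliazzo--Rudich argument by contradiction, instantiated with the oracle $\cO'$ of \Cref{thm:single}. Suppose $(\mathsf{Con},\mathsf{Red})$ is a $\mathcal C$-bounded fully black-box construction. Fix a sequence $(H,S)$ in the probability-one set supplied by \Cref{thm:single}, so that both parts of that theorem hold relative to $\cO'$ simultaneously. Take $G\eqdef G'$, the generator of \Cref{model:single}. It is a $\QPT^{\cO'}$ generator of a statistically far pair, with $\td\ge1-2^{-3\lambda+2}$. The definition of a fully black-box construction therefore applies to it, and $\mathsf{Con}^{G',\cO'}$ is the sampler of a one-way puzzle relative to $\cO'$ for some verifier.

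Next I remove the separate oracle $G'$. Since $G'$ is itself a $\QPT$ algorithm making one query to $\cO'$, the sampler $\mathsf{Con}^{G',\cO'}$ is a $\QPT^{\cO'}$ sampler with classical output. It is clocked, and its calls to $\cG$ and $\Count$ are admissible, possibly coherent. \Cref{thm:single}(i) then gives a classical polynomial-time adversary $\cA$, making only classical queries to $\Count$, that inverts it with probability $1-\negl(\lambda)$. This uses the attack of \Cref{thm:no-owpuzz}, which never runs the verifier and so works for whichever verifier the construction supplies. In particular, $\cA$ inverts with non-negligible probability.

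By the security clause of the construction, $\mathsf{Red}^{G',\cA,\cO'}$ distinguishes $(\tilde\rho_0,\tilde\rho_1)$ with non-negligible advantage. Because $\cA$ is a classical polynomial-time algorithm making classical queries, $\mathcal C$-boundedness places this distinguisher, with reference copies of $G'$'s outputs supplied as in \Cref{model:access}(i), inside $\mathcal C$. It is a $\QPT$ distinguisher with coherent-last access and polynomial classical advice. To see that it is $\QPT$, replace $\cA$ by its code: every internal call to $\cA$ becomes polynomially many classical $\Count$ queries, which fall under \Cref{model:access}(ii). Any coherent query remains last, as the boundedness hypothesis stipulates. \Cref{thm:single}(ii) bounds every such distinguisher by $2^{-\Omega(\lambda)}$, which is a contradiction.

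The main obstacle is the bookkeeping in this last step. Two points need checking. First, inlining $\cA$ must not move a coherent query out of last position. Second, $\mathsf{Red}$'s calls to $G$ must be realized exactly as challenger-supplied reference copies, rather than as extra coherent queries to $\cG$. I would handle both by reading \Cref{def:fbb} literally: the membership in $\mathcal C$ is assumed for the composite $\mathsf{Red}^{\cA,\cO}$ with copies supplied as in \Cref{model:access}, so nothing about these points needs to be proved beyond confirming that inlining a classical-query $\cA$ preserves polynomial time and the classical character of its queries. A minor further point is that $\mathsf{Red}$ is a single uniform machine, so it belongs to the countable family to which \Cref{lem:fixing} was applied. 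The fixed $(H,S)$ therefore defeats it without any further averaging.
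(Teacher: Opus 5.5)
Your proposal is correct and is essentially the paper's own proof. You fix $(H,S)$ in the probability-one event of \Cref{thm:single}, instantiate the construction with $\cO'$ and $G'$, and use the classical attacker of part~(i) as the inverter; $\mathcal C$-boundedness then makes $\mathsf{Red}^{\cA,\cO'}$, with challenger-supplied reference copies, a distinguisher that part~(ii) defeats. Your extra bookkeeping (inlining $\cA$, countability) is harmless but not needed: the paper takes membership in $\mathcal C$ directly from the boundedness hypothesis, and part~(ii) already holds for every such distinguisher on the fixed event.
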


\begin{proof}
Suppose $(\mathsf{Con},\mathsf{Red})$ were one.  Fix $(H,S)$ in the probability-one event of
\Cref{thm:single} and take $\cO=\cO'$, $G=G'$.  The construction is then fed a genuine EFI pair: by
\Cref{thm:single}(ii) the pair is statistically far, so $\mathsf{Con}^{G',\cO'}$ is a one-way puzzle
sampler relative to $\cO'$.

But puzzles do not survive $\cO'$.  By \Cref{thm:single}(i) there is a classical polynomial-time $\cA$
making classical queries that inverts that sampler with probability $1-\negl(\lambda)$.  Feed it to
the reduction.  By $\mathcal C$-boundedness, $\mathsf{Red}^{\cA,\cO'}$ with challenger-supplied
reference copies is a distinguisher of \Cref{model:access}, and it has non-negligible advantage
against the pair of $G'$.  That contradicts \Cref{thm:single}(ii).
\end{proof}

\section{Fully coherent access}
\label{sec:lift}

\Cref{thm:single} places the separation in the classical-query model for $\cO'$, extended by a single
coherent query.  In this section we ask what survives when $\cO'$ may instead be queried in superposition throughout,
which for a deterministic Boolean oracle is the standard relativized model.  The hidden object is now $X=(H,S)$ of \Cref{model:single}, and we read every statement below
with an arbitrary Boolean function of $X$ where it had one of $H$.  Nothing depends on which of the
two it is, since each bound quantifies over all Boolean functions of the hidden object.  That quantification makes the
conjecture below stronger than coherent security of $\cO'$ itself: it implies that security, but its
failure for some function of $X$ would not by itself yield an attack on $\cO'$.

\begin{model}[Coherent access to $\cO'$]
\label{model:bit}
Since $\cO'$ is a single deterministic Boolean function on classical strings, coherent access to it
is the ordinary relativized query $U_X\ket z\ket c=\ket z\ket{c\oplus\cO'(z)}$, with $X=(H,S)$ the
hidden object.  Here $z$ ranges over the query strings of \Cref{model:single}, padded to a fixed
length $\ell(\lambda)=\poly(\lambda)$, so that the query domain at security parameter $\lambda$ is
$\{0,1\}^{\ell(\lambda)}$.

A query addressed to the counting layer carries a sampler description, an output string, and a bit
index $k$, and it returns the $k$-th bit of the corresponding output probability.  We call the
largest index $k$ an adversary uses its \emph{precision}.  A query addressed to the base layer $\cG$ returns a
bit of a stored sample instead.  No statement below depends on which layer a query addresses, since
every bound quantifies over all Boolean functions of $X$.
\end{model}

\begin{conjecture}[Coherent multi-query security]
\label{conj:lift}
There are $c>0$ and $d<\infty$ such that the following holds for every sufficiently large even $n$, with $\lambda\eqdef\lceil\log_2n\rceil$.
Let $X$ be either $H$ alone, Haar of dimension $n/2$ in $\bbR^n$, or else $(H,S)$ drawn as in
\Cref{model:single}.  Let $\cA$ be a fixed circuit whose operations, apart from its oracle queries,
do not depend on $X$.  Suppose $\cA$ receives:
\begin{enumerate}[label=(\roman*),leftmargin=2.2em,itemsep=1pt,topsep=3pt]
\item one challenge copy drawn from $\rho_0(H)$ or $\rho_1(H)$, the maximally mixed states on $H$ and
  on $H^\perp$, with a uniform bit;
\item at most $\bar q$ fresh reference copies of each of $\rho_0(H)$ and $\rho_1(H)$; and
\item $a$ bits of classical advice $\alpha(X)$.
\end{enumerate}
Suppose further that $\cA$ makes $T$ adaptive coherent queries to $U_{f_X}$, for a Boolean function
$f_X$ of $X$, each query of width at most $M$.  The maps $X\mapsto f_X$ and $X\mapsto\alpha(X)$ are
assumed Borel.  Write
\[
  \mathrm{bias}_\cA(X)\ \eqdef\ \tfrac12\big(\Pr[\cA\text{ accepts}\mid\rho_0]-\Pr[\cA\text{ accepts}\mid\rho_1]\big)
\]
for its signed bias at a fixed $X$. Then
\[
  \E_X\big|\mathrm{bias}_\cA(X)\big|\ \le\ \big(1+T+\bar q+a+\log M\big)^{d}\cdot n^{-c} .
\]
Since $\lambda=\lceil\log_2n\rceil$, this is $\negl(\lambda)$ for all $T,\bar q,a$ and $\log M$
polynomial in $\lambda$.  Note that a $\Count$ query carries its precision inside the query string,
so precision is charged here through $\log M$.
\end{conjecture}

\paragraph{The conjecture and the advice restriction.}  \Cref{conj:lift} grants its adversary
only classical advice.  Resolving it affirmatively nevertheless removes the quantum-advice
restriction as well, so that the two features delimiting \Cref{model:access} have a single
resolution.

Let an adversary hold $m=\poly(\lambda)$ qubits of quantum advice $\sigma_X$ and make $T$ classical
$\Count$ queries, and fix a finite net of $m$-qubit states.  For each net point the bias is a Borel
function of $X$, so a lexicographically first pointwise maximizer over the net is Borel, as in the proof of
\Cref{thm:transcript-security}, and it lies within $\exp(-\poly(\lambda))$ in trace distance of the
best advice.  Rosenthal's one-query synthesis then supplies a Boolean function $g_X$, together with a
fixed circuit that prepares a purification of the rounded state with one query to $g_X$, exactly as
in \Cref{cor:quantum-advice}.  Tagging $g_X$ together with the counting layer gives a single Boolean
function $F_X$ of $X$.  Under $F_X$ the adversary is a fixed circuit making $T+1$ adaptive coherent
queries to $U_{F_X}$ and carrying no quantum advice, which is exactly the form \Cref{conj:lift}
bounds.  What makes this work is the quantification over \emph{every} Boolean function of $X$: a
statement about the counting truth table alone would not cover $g_X$.

We call the weaker statement that every such circuit with $T,\bar q,a$ and $\log M$ polynomial in
$\lambda$ has $\E_X|\mathrm{bias}_\cA(X)|=\negl(\lambda)$ the \emph{polynomial form} of the
conjecture.  The displayed bound implies it, and it is the polynomial form that we use in
\Cref{thm:two-cases}.

The single-challenge restriction costs us nothing, EFI security being a single-copy statement, and
we do need it for the analysis, since $\rho_0^{\otimes s}-\rho_1^{\otimes s}$ is not linear in $R$
for $s>1$.  The reference copies are inessential, by the following lemma, so it suffices to prove or refute the conjecture at $\bar q=0$; this is why the conjecture is stated for every even $n$ rather than for $n=n(\lambda)$, since \Cref{lem:copies-free} moves the instance to dimension $n-2\bar q$.  The case $T=1$ with no advice and no reference copies is \Cref{thm:one-query}; with advice and copies it is \Cref{thm:hybrid} at $L=0$, and the parallel case is \Cref{cor:parallel}.

\begin{lemma}[Reference copies cost only a dimension loss under coherent access]
\label{lem:copies-free}
Let $\cA$ be an adversary that receives one challenge copy, $\bar q<n/4$ reference copies of each of
$\rho_0$ and $\rho_1$, and $a$ bits of classical advice given by a Borel map $H\mapsto\alpha_H$.
Grant it arbitrary coherent access to an arbitrary Boolean function of $H$, and suppose it achieves
average bias $\eta$.  Then there is an adversary $\cA'$ with one challenge copy, \emph{no} reference copies, $a$
bits of advice, and coherent access to an arbitrary Boolean function of a Haar-random
half-dimensional subspace of a space of dimension $n'=n-2\bar q$.  It has the same query count and
width as $\cA$, and its average bias is at least $\eta-2\bar q/n$.
\end{lemma}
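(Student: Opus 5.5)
I will follow the span-conditioning argument of \Cref{lem:reference}, with Step~3 of \Cref{thm:hybrid} as the template, and I will check that no step needs the oracle interaction to be classical. First I replace the reference copies by the spans $A\subseteq H$ and $B\subseteq H^\perp$. As in \Cref{lem:reference}, the conditional law of $\bar q$ i.i.d.\ Haar vectors given their span depends on the span alone, so an adversary holding bases of $A$ and $B$ can resample the copies locally with exactly the right joint law. This step involves no oracle. Then I average over $(A,B)$: since the bias $\eta$ is an average over them, some fixing $(A_0,B_0)$ attains average bias at least $\eta$ conditioned on it. I will hard-wire that fixing, together with the resampling, into $\cA'$. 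By \Cref{lem:residual}, conditional on $(A_0,B_0)$ the residual $H_A$ is Haar of half dimension inside $K=(A_0\oplus B_0)^\perp$, of dimension $n'=n-2\bar q$, and $H=A_0\oplus H_A$ is then a function of $H_A$.

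Next I decompose the challenge exactly as in the display of \Cref{lem:reference}. Conditioned on the spans, $\rho_{0}$ is the mixture, with weight $2\bar q/n$, of the maximally mixed state on the known subspace $A_0$, and, with weight $1-2\bar q/n$, of the maximally mixed state on $H_A$; symmetrically for $\rho_1$ with $B_0$ and $H_B$. The acceptance probability is affine in the input state for every fixed oracle, coherent queries included, since the whole adversary is one effect applied to the challenge. So the signed bias splits along the mixture. The known branches contribute at most $2\bar q/n$ in absolute value, because there the two states are fixed and the bias per unit weight is at most~$1$. The residual branch then carries bias at least $\eta-2\bar q/n$, up to the factor $1-2\bar q/n\le1$, which only helps when dividing back out.

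The adversary $\cA'$ lives on the $n'$-dimensional residual problem and is defined as follows. It embeds its challenge into $\bbC^n$ through the fixed isometry $K\hookrightarrow\bbC^n$, and it uses the hard-wired $(A_0,B_0)$ to resample the reference copies. Here is the point I must check, and I expect it to be the only delicate one: the copies are drawn from $\rho_b(H)$, and they are not all in the known subspaces. Resampling is nonetheless legitimate, because in \Cref{lem:reference} the copies were replaced by their spans and, by construction, those spans are $A_0$ and $B_0$; conditioning on the spans already accounts for them. $\cA'$ queries the Boolean function $f'_{H_A}\eqdef f_{A_0\oplus H_A}$, which is an arbitrary Boolean function of the residual subspace with the same width. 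Its advice is $\alpha'_{H_A}\eqdef\alpha_{A_0\oplus H_A}$, which is Borel as a composition of Borel maps. Query count and width are unchanged. Finally, the law of $H_A$ given $(A_0,B_0)$ is Haar on the Grassmannian of $K$ by \Cref{lem:residual}, so $\cA'$ faces exactly the stated ensemble in dimension $n'$, with average bias at least $\eta-2\bar q/n$.

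The main obstacle is the measure-theoretic choice of the good fixing $(A_0,B_0)$, together with making sure the advice map remains Borel after the substitution. If selecting a single fixing is awkward, I would instead let $(A,B)$ be sampled from the adversary's own independent randomness, using a Haar-random embedding of $\bbR^{n'}$. This keeps everything an average and avoids any selection.
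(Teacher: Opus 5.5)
Your proposal is correct and follows the paper's proof. Replace the copies by their spans, condition on $(A,B)$, and discard the weight-$2\bar q/n$ known branch by affineness of the acceptance probability. Then hard-wire a good fixing $(A_0,B_0)$ and the resampled copies, which by construction lie in $A_0$ and $B_0$, so the truth table $f_{A_0\oplus H_A}$ and the advice $\alpha_{A_0\oplus H_A}$ become Borel functions of $H_A$ alone; the paper likewise picks the fixing by averaging. Drop your fallback of drawing $(A,B)$ from $\cA'$'s private coins: the oracle and the advice would then depend on those coins and not on $H_A$ alone, so the selected fixing is the right move.
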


\begin{proof}
Give $\cA$ the classical spans $(A,B)$ of the copies it would have drawn.  This only increases its
power: from a basis of the spans it can resample the copies with the correct joint law, by the
argument that opens the proof of \Cref{lem:reference}.

Now condition on $(A,B)$.  By \Cref{lem:residual} the challenge becomes a mixture of two branches, of
weights
\[
  \underbrace{\tfrac{2\bar q}n}_{\text{known subspace}}\ \text{ and }\ \underbrace{1-\tfrac{2\bar q}n}_{\text{residual}} .
\]
The first is the maximally mixed state on the subspace the adversary already knows.  The second is
the maximally mixed state on $H_A$, or on $H_B=K\cap H_A^\perp$, inside $K$; here $\dim K=n'$ and
$H_A$ is Haar-random of dimension $n'/2$ in $K$.

The known branch can be discarded. The bias is affine in the challenge, and is at most $1$ on that
branch, so the residual branch carries bias at least $\eta-2\bar q/n$ on average.

All that is left is to see that the residual branch is already an instance with no reference copies.
Given $(A,B)$, the truth table is a Boolean function of $H_A$ alone.  Hard-wiring $(A,B)$ and the
resampled copies into $\cA$ leaves an adversary $\cA'$ that holds one challenge copy and
nothing else, and that queries a Boolean function of a Haar-random half-dimensional subspace of $K$,
at the same query count and width.  Its average bias is at least $\eta-2\bar q/n$, which is the
claim.
\end{proof}

One general statement is available, and it locates the regime in which the question lives.

\begin{proposition}[Small width is settled at every query count]
\label{prop:small-width}
Consider an adversary that holds one challenge copy, makes any number of coherent queries to any
Boolean function of $H$ of width $M$, receives $a$ bits of classical advice, and uses no reference
copies.  There is a universal $C$ such that every such adversary satisfies
\begin{equation}
  \E_H\max_{f\in\{\pm1\}^{\cX},\,\alpha}|Z_H(f,\alpha)|\ \le\ C\,\frac{\sqrt{M+a}}{n} .
  \label{eq:union}
\end{equation}
Here $\cX$, of size $M$, is the set of query labels, and $Z_H(f,\alpha)$ is the bias
\eqref{eq:bias-def} of the adversary with truth table $f$ and advice string $\alpha$.
In particular, for each fixed $\eps>0$ the conclusion of \Cref{conj:lift} holds unconditionally with
$c=\eps/2$, at every query count, whenever $M+a\le n^{2-\eps}$.
\end{proposition}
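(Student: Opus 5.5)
The plan is a union bound over truth tables and advice strings, made possible by the fact that for each fixed pair $(f,\alpha)$ the bias is a linear statistic of $R$. Fix the adversary's circuit. For each truth table $f\in\{\pm1\}^{\cX}$ and each advice string $\alpha$, the adversary with oracle $D_f$ and advice $\alpha$ is an $H$-independent channel followed by a two-outcome measurement, so by \eqref{eq:bias-def} its acceptance effect $E_{f,\alpha}$ on $\bbC^n$ does not depend on $H$, and
\[
  Z_H(f,\alpha)\ =\ \frac1n\tr\big(E_{f,\alpha}R_H\big).
\]
The number of queries and their adaptivity are hidden inside $E_{f,\alpha}$; they never enter the estimate. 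All of the $H$-dependence sits in $R_H$, and the adversary's actual behaviour, using the true $f_H$ and advice $\alpha(H)$, is bounded by the maximum over the fixed family $\{E_{f,\alpha}\}$.

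Next I apply \Cref{lem:scalar-linear} to the family $\{E_{f,\alpha}\}$. Each member is Hermitian with $0\preceq E\preceq\Id$, so $\norm{E}_2\le\sqrt{\tr E}\le\sqrt n$, which is exactly the lemma's normalization. The family has cardinality $N\le 2^M\cdot 2^{a+1}$, since advice strings of length at most $a$ number fewer than $2^{a+1}$. The lemma then gives
\[
  \E_H\max_{f,\alpha}|Z_H(f,\alpha)|\ =\ O\Big(\frac{\sqrt{\log(2N)}}{n}\Big)\ =\ O\Big(\frac{\sqrt{M+a+2}}{n}\Big),
\]
and absorbing the additive constant into $C$ (using $M\ge1$) yields \eqref{eq:union}. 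For the real ensemble I would note that \Cref{lem:scalar-linear} is stated for $\mathrm{SO}(n)$ as well, and \Cref{lem:su-so} transfers the law between the full orthogonal group and $\mathrm{SO}(n)$, so only the constant changes.

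For the final clause I would compare with \Cref{conj:lift}. There the quantity bounded is the signed bias of the actual adversary, which is at most the maximum above, and the setting is $\bar q=0$ (without reference copies, as the proposition assumes). If $M+a\le n^{2-\eps}$, the right-hand side of \eqref{eq:union} is at most $C n^{-\eps/2}$, which is of the conjectured form with $c=\eps/2$ and $d=0$ after adjusting the constant, at every $T$.

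I expect no step to be a genuine obstacle. The only point needing care is measurability: the maximum over $(f,\alpha)$ must be taken inside the expectation, so that the Borel hypotheses on $f_X$ and $\alpha(X)$ become irrelevant. This holds because the family is a fixed finite set independent of $H$, so the maximum is a finite maximum of continuous functions of $U$. The case $X=(H,S)$ is not claimed here, since the proposition is stated for Boolean functions of $H$.
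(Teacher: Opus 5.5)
Your proposal is correct and is essentially the paper's proof. Both fix one $H$-independent effect $E_{f,\alpha}$ per truth table and advice string, use $\norm{E_{f,\alpha}}_2\le\sqrt n$, and apply \Cref{lem:scalar-linear} to a family of size $2^{O(M+a)}$. Your count $2^M\cdot2^{a+1}$, the transfer to the real ensemble via \Cref{lem:su-so}, and your remark that the finite maximum sits inside the expectation only make explicit details the paper leaves implicit.
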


\begin{proof}
Whatever the number of queries, the algorithm ends in a single acceptance effect $E_{f,\alpha}$ on
the challenge register, one such effect for each truth table $f$ and advice string $\alpha$.  By
\eqref{eq:bias-def} its bias is $\frac1n|\tr(E_{f,\alpha}R)|$, which is a linear
statistic of $R$ with a fixed coefficient matrix. Those coefficient matrices are bounded uniformly in
Hilbert--Schmidt norm,
\[
  \norm{E_{f,\alpha}}_2\ \le\ \sqrt{\tr E_{f,\alpha}}\ \le\ \sqrt n ,
\]
the last step because $0\preceq E_{f,\alpha}\preceq\Id$ on $\bbC^n$.  That is exactly the
normalization \Cref{lem:scalar-linear} asks for.

So apply that lemma to the family $\{E_{f,\alpha}:f,\alpha\}$, one effect per truth table and
advice string.  The family has cardinality at most $2^{M+a}$, so its logarithmic factor is
$O(\sqrt{M+a})$, and the bound the lemma returns is \eqref{eq:union}.
\end{proof}

Two statements are available beyond a single query, and each restricts a width rather than a query
count.  The first admits arbitrary quantum advice.

\begin{proposition}[Quantum advice with one coherent query]
\label{prop:advice-one-query}
There is a universal $C>0$ such that the following holds.  Let $\cA$ be an advice register of
dimension $K$, let $W\colon\cA\otimes\bbC^n\to\bbC^{\cX}\otimes\cK$ be an isometry that does not
depend on $H$, let $0\preceq\Pi\preceq\Id$, and write $\tr_{\bbC^n}$ for the partial trace over the
challenge register.  For a truth table $f\colon\cX\to\{\pm1\}$ put
\[
  Y_H(f)\ \eqdef\ \tfrac1n\,\tr_{\bbC^n}\big[(\Id_\cA\otimes R_H)\,W^\dagger D_f\Pi D_fW\big],
\]
an operator on $\cA$.  Then
\[
  \E_H\max_{f\in\{\pm1\}^{\cX}}\opnorm{Y_H(f)}\ \le\ C\Big(\sqrt{\tfrac{\log(2M)}{n}}+\tfrac{\sqrt K}{n}\Big).
\]
Consequently, consider an adversary that holds one challenge copy together with quantum advice of
$m$ qubits depending arbitrarily on $H$, and that then makes one coherent query of any width.  Its
average bias is $2^{-\Omega(\lambda)}$ whenever $\log M=\poly(\lambda)$ and $m\le(2-\eps)\lambda$ for
a fixed $\eps>0$.
\end{proposition}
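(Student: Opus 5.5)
The plan is to reduce the maximum over truth tables and advice states to the expected operator norm of a single random matrix whose entries are linear statistics of $R_H$, and then to invoke \Cref{lem:khintchine}. Only the row and column constants have to be recomputed, so that the two sources of size, the width $M$ and the advice dimension $K$, are charged separately.

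\emph{The consequence first.} Fix $f$. The adversary's bias on an advice state $\sigma$ is $|\tr(\sigma Y_H(f))|$. This is linear in $\sigma$, so the supremum over all states, pure or mixed, is $\opnorm{Y_H(f)}$. No purification is needed, and the advice register has dimension $K=2^m$. Since the bound is pointwise in $H$ before the expectation, an $H$-dependent choice of $f$ and $\sigma$ is harmless, and no measurability issue arises. With $K=2^m\le n^{2-\eps}$ the second term is at most $n^{-\eps/2}$. The first term is $2^{-\Omega(\lambda)}$ when $\log M=\poly(\lambda)$ and $n=2^{\lambda}$. So the consequence follows from the displayed inequality.

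\emph{Structure of $Y_H(f)$.} Write $W=\sum_x\ket x\otimes A_x$ with $A_x\colon\cA\otimes\bbC^n\to\cK$ and $\sum_xA_x^\dagger A_x=\Id$. For a unit $\ket c\in\cA$ put $A_x^{c}\eqdef A_x(\ket c\otimes\Id)\colon\bbC^n\to\cK$, and expand as in \eqref{eq:quadform}:
\[
\bra cY_H(f)\ket c\ =\ \sum_{x,y}f_xf_y\,\tfrac1n\tr\big(\Pi_{xy}A^c_yRA^{c\dagger}_x\big).
\]
For fixed $c$ the map $W(\ket c\otimes\Id)$ is an isometry on $\bbC^n$. \Cref{thm:one-query} therefore already gives $\E_H\max_f|\bra cY_H(f)\ket c|\le C\sqrt{\log(2M)/n}$.

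\emph{Naive route and what it loses.} The naive continuation takes a $\frac14$-net of the sphere of $\cA$ (\Cref{lem:net}, size $9^{2K}$). It then uses the concentration of \Cref{lem:subnorm}: the relaxed norm is $2$-Lipschitz with subgaussian proxy $O(1/n)$. A maximal inequality over the net then gives $C\big(\sqrt{\log(2M)/n}+\sqrt{K/n}\big)$. That is off by $\sqrt n$ in the advice term and reaches only $m<\lambda$.

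\emph{Sharper route.} To get $\sqrt K/n$ I would instead treat advice and query labels jointly, as in \Cref{prop:advice-sharp}. Fix an orthonormal basis $\{\ket i\}$ of $\cA$ and form the matrix $G_H$ indexed by pairs $(x,i)$ with entries
\[
\tfrac1n\tr\big(\Pi_{xy}A_{y,j}RA_{x,i}^\dagger\big)/\sqrt{\tau_x\tau_y},\qquad \tau_x=\tfrac1{nK}\tr(A_x^\dagger A_x).
\]
Then $\bra cY_H(f)\ket c=w^\dagger G_Hw$ with $w_{(x,i)}=f_xc_i\sqrt{\tau_x}$, which is a unit vector for every $f$ and every $c$. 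This collapses both suprema to $\opnorm{G_H}$, and \Cref{lem:khintchine} applies once \eqref{eq:row} and \eqref{eq:col} are checked.

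\emph{Main obstacle.} The difficulty is the row constant. The bound $\opnorm{A_x}^2\le nK\tau_x$ used in \Cref{thm:one-query} now costs a factor $K$ and produces $\sqrt{K\log(2MK)/n}$. The advice basis, however, is orthonormal and pays only $\opnorm\Pi\le1$, as in \Cref{prop:advice-sharp}. So I would split each branch by a threshold on its query mass. Labels on which $A_x$ is spread over the advice directions would be charged like advice, at rate $\sqrt K/n$ through the Hilbert--Schmidt computation of \Cref{prop:advice-sharp}. Labels on which a branch concentrates the challenge would be charged like a query, at rate $\sqrt{\log(2M)/n}$, by reweighting only the label index and treating the advice index with the unweighted blocks. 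The two pieces would be recombined with a Hermitian dilation and the triangle inequality. I expect this splitting to be where the real work lies. If it fails, the fallback is the net argument, which proves the statement with $\sqrt{K/n}$ in place of $\sqrt K/n$ and still gives the consequence for $m\le(1-\eps)\lambda$.
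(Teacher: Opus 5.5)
Your reduction of the consequence to the displayed inequality is correct, since the bias is $|\tr(\sigma_HY_H(f_H))|\le\opnorm{Y_H(f)}$ with $K=2^m$. Your fixed-direction step, in which $W(\ket c\otimes\Id)$ is an isometry of the challenge register and \Cref{thm:one-query} bounds $\E_H\max_f|\bra cY_H(f)\ket c|$ by $C\sqrt{\log(2M)/n}$, is also correct, and both match the paper.

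There is nevertheless a genuine gap: neither route you give proves the displayed bound. The joint $(x,i)$-indexed relaxation is only a sketch. You yourself observe that its natural row constant costs a factor $K$, and the threshold splitting meant to repair this is not specified enough to be checked. The fallback gives only $\sqrt{K/n}$, hence $m\le(1-\eps)\lambda$, which is weaker than the statement's $(2-\eps)\lambda$.

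The missing idea is that the net argument you called naive already gives $\sqrt K/n$. What loses the $\sqrt n$ is the choice of what you concentrate. You concentrated the relaxed norm $\phi$ of \Cref{lem:subnorm}, whose Lipschitz constant $2$ is inflated by the reweighting $1/\sqrt{\tau_x\tau_y}$: after dividing by a tiny mass, a single label can carry an entry like $\bra vR\ket v$ that moves at unit rate. Use the relaxation only for the mean, and concentrate the true maximal bias $F_c\eqdef\max_f|\bra cY_H(f)\ket c|$ instead.

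For unit $\ket c$ put $X_{f,c}\eqdef(\bra c\otimes\Id)W^\dagger D_f\Pi D_fW(\ket c\otimes\Id)$. Since $0\preceq W^\dagger D_f\Pi D_fW\preceq W^\dagger W=\Id$, this is an effect on $\bbC^n$. Hence $\norm{X_{f,c}}_2\le\sqrt{\tr X_{f,c}}\le\sqrt n$ and $\bra cY_H(f)\ket c=\frac1n\tr(RX_{f,c})$. Each such statistic is therefore $n^{-1/2}$-Lipschitz in $R$, hence $2n^{-1/2}$-Lipschitz in the conjugating unitary, and the maximum over the finitely many $f$ keeps that constant. By \eqref{eq:scalar-conc}, $F_c$ is then subgaussian with variance proxy $O(n^{-2})$ rather than $O(n^{-1})$.

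Since $Y_H(f)$ is Hermitian, \Cref{lem:net} gives $\max_f\opnorm{Y_H(f)}\le2\max_{c\in\mathcal N}F_c$ with $|\mathcal N|\le9^{2K}$. The maximal inequality then yields
\[
\E_H\max_{c\in\mathcal N}F_c\ \le\ \max_{c}\E_HF_c+\E_H\max_{c}\big(F_c-\E_HF_c\big)\ \le\ C\sqrt{\frac{\log(2M)}n}+C\,\frac{\sqrt{\log|\mathcal N|}}n ,
\]
which is the claimed bound because $\log|\mathcal N|=O(K)$. This is the paper's proof, and it needs neither the joint matrix nor the splitting.
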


\begin{proof}
The advice $\sigma_H$ enters only through $|\tr(\sigma_HY_H(f))|\le\opnorm{Y_H(f)}$, so the
consequence follows from the displayed bound, and it remains to prove that.  There are three steps.
We bound the quadratic form at a single fixed advice direction; we show that what results
concentrates at scale $n^{-1}$; and we then pass from the quadratic form to the operator norm by a
net.

\emph{Step 1: at a fixed direction the advice disappears, and the old bound applies.}  Fix a unit
vector $\alpha\in\cA$ and set $W_\alpha\eqdef W(\ket\alpha\otimes\Id)$.  Since $W^\dagger W=\Id$ on
$\cA\otimes\bbC^n$,
\[
  W_\alpha^\dagger W_\alpha\ =\ (\bra\alpha\otimes\Id)\,\Id\,(\ket\alpha\otimes\Id)\ =\ \Id
  \qquad\text{on }\bbC^n ,
\]
so $W_\alpha$ is an isometry of the challenge register alone, and its query masses
$\bra\alpha Q_x\ket\alpha$ sum to $1$.  In other words $(W_\alpha,\Pi)$ is an ordinary one-query
strategy, and its bias is $\bra\alpha Y_H(f)\ket\alpha$.  \Cref{thm:one-query} therefore applies to
it unchanged:
\[
  \E_H\max_f\big|\bra\alpha Y_H(f)\ket\alpha\big|\ \le\ C\sqrt{\frac{\log(2M)}n} .
\]

\emph{Step 2: that quantity fluctuates at scale $n^{-1}$.}  Put
\[
  X_{f,\alpha}\ \eqdef\ (\bra\alpha\otimes\Id)\,W^\dagger D_f\Pi D_fW\,(\ket\alpha\otimes\Id) ,
\]
an effect on $\bbC^n$, so that $\bra\alpha Y_H(f)\ket\alpha=\frac1n\tr(R_HX_{f,\alpha})$.  Being an
effect, $X_{f,\alpha}$ satisfies $\norm{X_{f,\alpha}}_2^2\le\tr X_{f,\alpha}\le n$, and Cauchy--Schwarz
then makes that quantity $n^{-1/2}$-Lipschitz in $R$ in the Hilbert--Schmidt metric.  A maximum of
$n^{-1/2}$-Lipschitz functions is again $n^{-1/2}$-Lipschitz, and $U\mapsto R$ is $2$-Lipschitz by the
commutator estimate of \Cref{lem:khintchine}.  Hence
\[
  F_\alpha\ \eqdef\ \max_f\big|\bra\alpha Y_H(f)\ket\alpha\big|
\]
is $2n^{-1/2}$-Lipschitz in $U$, and \eqref{eq:scalar-conc} makes it subgaussian at scale $n^{-1}$.

\emph{Step 3: a net converts the quadratic form into the operator norm.}  Each $Y_H(f)$ is Hermitian,
so \Cref{lem:net} applies to it.  Fix the net $\mathcal N$ it provides, of cardinality at most
$9^{2d_\cA}$; then
\[
  \opnorm{Y_H(f)}\ \le\ 2\max_{\alpha\in\mathcal N}\big|\bra\alpha Y_H(f)\ket\alpha\big| ,
  \qquad\text{so}\qquad
  \max_f\opnorm{Y_H(f)}\ \le\ 2\max_{\alpha\in\mathcal N}F_\alpha .
\]
Now combine the two estimates.  Step~1 bounds each $F_\alpha$ in expectation, Step~2 makes it
subgaussian, and a maximal inequality over $\mathcal N$ then gives
\[
  \E_H\max_{\alpha\in\mathcal N}F_\alpha\ \le\ C\sqrt{\frac{\log(2M)}n}\;+\;\frac{C\sqrt{\log|\mathcal N|}}n .
\]
Since $\log|\mathcal N|=O(K)$, this is the bound we wanted.
\end{proof}

The second statement allows two genuine queries, at the cost of a width restriction on the first.

\begin{proposition}[Two coherent queries with a narrow first query]
\label{prop:asym-two-query}
Consider an adversary that holds one challenge copy and makes two adaptive coherent queries, the
first of width $M_1$ and the second of width $M_2$.  Their truth tables $f$ and $g$ are independent,
and each may depend arbitrarily on $H$.  There is a universal $C>0$ such that every such adversary
satisfies
\[
  \E_H\max_{f,g}|Z_H(f,g)|\ \le\ C\Big(\sqrt{\tfrac{\log(2M_2)}{n}}+\tfrac{\sqrt{M_1}}{n}\Big).
\]
In particular the bias is $2^{-\Omega(\lambda)}$ whenever $M_1\le n^{2-\eps}$ for a fixed $\eps>0$ and
$\log M_2=\poly(\lambda)$, however large $M_2$ may be.
\end{proposition}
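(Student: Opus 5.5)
The plan is to reduce to \Cref{prop:advice-one-query}, treating the first query as a piece of advice.  Write the adversary in normal form: an $H$-independent isometry $W_1$ from the challenge into $\bbC^{\cX_1}\otimes\cK_1$, the phase oracle $D_f$, an $H$-independent unitary, the phase oracle $D_g$ on a register of width $M_2$, and a final effect.  For each fixed $f\in\{\pm1\}^{\cX_1}$ everything before the second query is an isometry independent of $H$.  Maximizing over all $2^{M_1}$ choices of $f$ costs a union bound of $M_1$ bits, not an operator-norm argument.  This suggests treating $f$ as an \emph{index register}.

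Concretely, introduce a register $\cA=\bbC^{2^{M_1}}$, or better an $M_1$-qubit space, and the fixed isometry $\widehat W\colon\cA\otimes\bbC^n\to\bbC^{\cX_2}\otimes\cK$ acting as $\ket f\otimes\ket\psi\mapsto(\text{post-}f\text{ processing})\,D_fW_1\ket\psi$.  This isometry does not depend on $H$.  It is block-diagonal in the classical index $f$, so $\widehat W^\dagger\widehat W=\Id$.  The bias at $(f,g)$ is then $\bra fY_H(g)\ket f$, with $Y_H(g)$ the operator of \Cref{prop:advice-one-query} built from $(\widehat W,\Pi)$.  Hence
\[
\max_{f,g}|Z_H(f,g)|\le\max_g\max_{f}|\bra fY_H(g)\ket f| .
\]
Applying \Cref{prop:advice-one-query} with $K=2^{M_1}$ would give only $\sqrt{2^{M_1}}/n$, which is far too weak.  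I would therefore not invoke its operator-norm conclusion.  Instead I would reuse its Steps~1 and~2 and replace the net of Step~3 by the \emph{finite} set $\{\ket f\}$ of $2^{M_1}$ basis vectors.

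Step~1 gives, for each fixed $f$, that $F_f\eqdef\max_g|\bra fY_H(g)\ket f|$ satisfies $\E_HF_f\le C\sqrt{\log(2M_2)/n}$ by \Cref{thm:one-query}.  Step~2 gives that $F_f$ is $2n^{-1/2}$-Lipschitz in $U$, hence subgaussian at scale $n^{-1}$ by \eqref{eq:scalar-conc}.  The maximal inequality over the $2^{M_1}$ indices then yields
\[
\E_H\max_fF_f\le\max_f\E_HF_f+C\sqrt{M_1}/n ,
\]
which is the claimed bound.  The "in particular" clause follows because $\sqrt{M_1}/n\le n^{-\eps/2}$ when $M_1\le n^{2-\eps}$, and $\sqrt{\poly(\lambda)/n}=2^{-\Omega(\lambda)}$.

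The main point to check is that adaptivity is captured: the second-query isometry may depend on the first query's outcome only through the quantum state $D_fW_1\ket\psi$, so fixing $f$ really does produce a one-query strategy $(W_f,\Pi)$ with $W_f^\dagger W_f=\Id$.  The independence of $f$ and $g$ is used only to allow the maximum to be taken jointly.  I would also verify that the Lipschitz bound of Step~2 holds uniformly in $f$, since each $X_{f,g}$ is an effect.  No net is needed, because the index set is already finite.
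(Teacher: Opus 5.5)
Your argument is correct and is essentially the paper's proof. You fix $f$, so that the processing before the second query is an $H$-independent isometry, and apply \Cref{thm:one-query}. You then use that $\max_g|Z_H(f,g)|=\max_g\frac1n|\tr(R_HE_{f,g})|$ is $2n^{-1/2}$-Lipschitz in $U$, hence subgaussian at scale $n^{-1}$, and take a maximal inequality over the $2^{M_1}$ first truth tables.

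The index-register packaging adds nothing. Moreover, $\widehat W$ is an isometry only if it also writes $\ket f$ into its output. All your argument actually uses are the per-$f$ isometries $W_f=\widehat W(\ket f\otimes\Id)$, and the paper works with these directly.
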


\begin{proof}
The two queries are handled by different means: the second by \Cref{thm:one-query}, the first by a
union bound, which is affordable only because its width is restricted.

\emph{The second query, at $f$ fixed.}  Everything the adversary does before its second query is a
composition of $H$-independent isometries with the fixed unitary $D_f$, and so is itself an
$H$-independent isometry.  That isometry, together with the final effect, is an ordinary one-query
strategy for the second query, so \Cref{thm:one-query} applies to it and gives
\[
  \E_H\max_g\big|Z_H(f,g)\big|\ \le\ C\sqrt{\frac{\log(2M_2)}n} .
\]

\emph{The maximum over $f$.}  Write $Z_H(f,g)=\frac1n\tr\big(R_HE_{f,g}\big)$ as in
\eqref{eq:bias-def}, where $E_{f,g}$ is the acceptance effect on the challenge register.  As in the
proof of \Cref{prop:advice-one-query} this is $n^{-1/2}$-Lipschitz in $R$; the maximum over $g$
preserves that constant; and so $\max_g|Z_H(f,g)|$ is subgaussian at scale $n^{-1}$.  A maximal
inequality over the $2^{M_1}$ first truth tables then costs
\[
  C\,\frac{\sqrt{M_1}}n ,
\]
which is where the restriction on $M_1$ is spent.  Adding the two estimates gives the bound.
\end{proof}

What remains open is $M=n^{2-o(1)}$, a regime the model forces, since $\QPT$
quantifies over all polynomials and an adversary may ask about samplers of description length
$\lambda^{k}$ for any $k$, giving $M=2^{\lambda^{k}}\gg n^2$.  At $T\ge2$ no nontrivial bound is
available once \emph{both} widths exceed $n^2$, which is what \Cref{prop:asym-two-query} leaves open.

\paragraph{The first open case.}  The first open case of \Cref{conj:lift} is more structured than the
general statement suggests, and \Cref{cor:quantum-advice} is what exhibits the structure.

By Rosenthal's one-query synthesis, a single coherent query to a suitable Boolean function of $H$
prepares an \emph{arbitrary} $\poly(\lambda)$-qubit state $\sigma_H$.  Splitting the query domain in
two costs nothing here, since the truth table is arbitrary and $M=2^{\poly(\lambda)}$.  Two adaptive
coherent queries subsume arbitrary $\poly(\lambda)$-qubit quantum advice about $H$,
together with one coherent query.

That special case is already settled up to advice of dimension $n^{2-\eps}$, by
\Cref{prop:advice-one-query}, and the complementary ordering -- advice followed by classical queries
-- is settled on $O(\lambda)$ qubits by \Cref{cor:hybrid-qadvice}.  What remains in both is advice on
more than linearly many qubits, and the reason is the one \Cref{sec:hybrid} identifies: the
decorrelation step of \Cref{thm:hybrid} conditions on a \emph{classical} record, quantum advice is
not a record, and two coherent queries leave nothing to condition on.  Note that the case is
narrower than $T=2$, since a first query may act on the challenge register jointly with the
workspace, which no advice state models.  For unitary synthesis, Dong, Lombardi, and
Ma~\cite{DLM26} compare one-query synthesis with quantum programs, that is, with synthesis from
quantum advice, for phase unitaries.

Both answers to the conjecture are informative, and we record what each implies.

\begin{proposition}[Consequences of each answer; informal]
\label{thm:two-cases}
Both of the following hold.
\begin{itemize}[leftmargin=1.6em]
\item \textbf{(A)} Suppose \Cref{conj:lift} holds in its polynomial form: every adversary with $T,\bar q,a,\log M=\poly(\lambda)$ has $\E_X|\mathrm{bias}_\cA(X)|=\negl(\lambda)$. Then \Cref{thm:single} holds with fully coherent access, so the separation carries no restriction on the access mode, and by \Cref{sec:lift} none on the advice either. (The no-$\OWPuzz$ direction is immediate: the attack of \Cref{app:attack} is classical, a special case of coherent access.)
\item \textbf{(B)} Suppose instead the polynomial form of \Cref{conj:lift} fails.  Then for every $c>0$ there are, for infinitely many
$\lambda$, a query count $T=\poly(\lambda)$, a width $M$ with $\log M=\poly(\lambda)$, an advice length $a=\poly(\lambda)$, and a family of Boolean functions
$\{f_H\colon[M]\to\{\pm1\}\}$ with advice strings $\{\alpha_H\}$, admitting a $T$-query circuit with
acceptance effect $E_{f_H,\alpha_H}$ such that
\[
  \E_H\Big[\tfrac1n\big|\tr\big(E_{f_H,\alpha_H}R_H\big)\big|\Big]\ >\ n^{-c} ,
\]
for a Haar-random half-dimensional $H$ in dimension $n=2^{\Theta(\lambda)}$, with no reference copies. Taking any
$c<1/6$, coherent access to a classical function of $H$, with polynomial classical advice, then achieves a
bias exceeding the ceiling $C\sqrt{(L+a+2)/n^{1/3}}=O(\poly(\lambda)\cdot n^{-1/6})$ that
\Cref{lem:vsp-small} imposes on every classical transcript of length $L=\poly(\lambda)$ together with
$a$ bits of advice about the same secret.
\end{itemize}
\end{proposition}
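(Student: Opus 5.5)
The two parts need different arguments: (A) is a transfer of the proof of \Cref{thm:single}(ii) with \Cref{conj:lift} in place of \Cref{thm:hybrid}, and (B) is the contrapositive of the conjecture, unpacked through \Cref{lem:copies-free} and compared with \Cref{lem:vsp-small}.

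\emph{Part (A).} Part~(i) of \Cref{thm:single} is unchanged: the attack of \Cref{app:attack} makes classical queries, and a classical query is a special case of a coherent one. For part~(ii) I follow the proof of \Cref{thm:single}(ii) step by step.
\begin{enumerate}[label=(\arabic*),leftmargin=2.2em]
\item Discretize the gate set to get a countable family $\{N_i\}$ with polynomial bounds $t_i(\lambda)$, now counting the number $T$ of coherent queries to $\cO'$.
\item Apply \Cref{lem:fresh} to replace the stored challenge and reference copies by fresh ones. Its proof uses only that the distinguisher is one fixed effect for fixed $(H,S,\alpha)$, and that holds with any number of coherent queries.
\item In the fresh experiment the adversary is a fixed circuit making $T$ adaptive coherent queries to a Boolean function of $X=(H,S)$. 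It holds one challenge copy, $\bar q\le t_i$ reference copies and $a\le t_i$ advice bits.
\item Taking the Borel lexicographically first maximizing advice, the polynomial form of \Cref{conj:lift} gives $\E_X\max_\alpha|\mathrm{bias}|=\negl(\lambda)$.
\item \Cref{lem:fixing} then fixes a probability-one set of oracles.
\end{enumerate}
For the advice clause I use the reduction in the paragraph after \Cref{conj:lift}: net-round the best quantum advice, synthesize it with one Rosenthal query to $g_X$, and tag $g_X$ into a single function $F_X$, which costs $T+1$ queries.

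\emph{Part (B).} I take the negation of the polynomial form. It gives a polynomial $p$ and a circuit family with $T,\bar q,a,\log M\le p(\lambda)$ such that $\E_X|\mathrm{bias}|\ge\lambda^{-k}$ infinitely often, for some $k$. This is non-negligible, so it exceeds $n^{-c}$ for every $c>0$ once $n=2^{\Theta(\lambda)}$.

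Three reductions put this in the stated form.
\begin{itemize}[leftmargin=1.6em]
\item \textbf{Removing reference copies.} \Cref{lem:copies-free} loses only $2\bar q/n$, which is exponentially small. The dimension $n-2\bar q$ is still $2^{\Theta(\lambda)}$.
\item \textbf{Removing $S$.} If $X=(H,S)$, I average the circuit's effect over $S$ given $H$, which is legitimate because the conditional law of $S$ is explicit. The obstacle is that $f_X$ depends on $S$, so it is not a function of $H$. I would resolve this by conditioning on the $S$ that maximizes bias for each $H$, Borel-selected. This is allowed because the conclusion only asks for some family $f_H$ and $\alpha_H$.
\item \textbf{Expressing the bias.} I write the bias as $\frac1n|\tr(E_{f_H,\alpha_H}R_H)|$ via \eqref{eq:bias-def}.
\end{itemize}

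Finally, for $c<1/6$ I compare with \Cref{lem:vsp-small}. Its bound $C\sqrt{(L+a+2)/n^{1/3}}=\poly(\lambda)\,n^{-1/6}$ is eventually below $n^{-c}$. The main obstacle is the quantifier handling in (B): making the $S$-elimination and the Borel selection rigorous. I would present (B) informally, as the statement itself says.
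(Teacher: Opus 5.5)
Your part (A) is the paper's argument: \Cref{lem:fresh}, then the observation that the fresh experiment is exactly the one \Cref{conj:lift} bounds (with the lexicographically first maximizing advice, which is Borel), then \Cref{lem:fixing}, and the Rosenthal-tagging paragraph for quantum advice. Part (B) has the right skeleton. It departs from the paper at the step you flag as the obstacle, and the order of your reductions has to be reversed.

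\emph{Order.} \Cref{lem:copies-free} is stated for a truth table and an advice map that are Borel functions of $H$ alone. It therefore cannot be applied while the hidden object is still $X=(H,S)$. The paper eliminates $S$ first, then removes the reference copies, and then fixes the spans and identifies $K$ with $\bbR^{n'}$.

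\emph{Eliminating $S$.} Averaging the effect over $S$ given $H$ does not by itself help: an averaged effect is not the acceptance effect of a circuit querying one Boolean function of $H$. Your second idea is sound, since the expected absolute bias can only increase: choose, for each $H$, an $S$ whose bias is at least its conditional mean. But making $H\mapsto S^*(H)$ Borel requires a uniformization theorem for Borel sets whose sections have positive measure, and that is the difficulty you left open.

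The paper uses Fubini instead of a selection. Write $S=S(H,\omega)$ with auxiliary randomness $\omega$ independent of $H$. The bias averaged over $\omega$ exceeds the threshold, so some fixed $\omega$ does at least as well. Fixing it makes $f_{(H,S(H,\omega))}$ and $\alpha_{(H,S(H,\omega))}$ Borel functions of $H$ automatically, with $H$ still Haar. That is exactly the hypothesis \Cref{lem:copies-free} needs, and the rest of your argument, including the comparison with \Cref{lem:vsp-small} for $c<1/6$, then goes through as you wrote it.
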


\begin{proof}
The two cases run in opposite directions.  In case~(A) we push the conjecture forward through the
reductions of \Cref{sec:single}; in case~(B) we pull a hypothetical attack backwards through them,
into the standard position the statement describes.

Suppose first that the polynomial form of \Cref{conj:lift} holds.  Then the indistinguishability half
of \Cref{thm:single} under fully coherent access follows by the reductions of \Cref{sec:single}, none
of which depends on the access mode.  There are three of them.

\begin{enumerate}[label=(\arabic*),leftmargin=2.2em,itemsep=3pt,topsep=3pt]
\item \Cref{lem:fresh} collapses any interaction with a fixed oracle into one effect, and replaces
  the stored samples by fresh ones at the stated cost.
\item The resulting experiment is exactly the one the conjecture bounds: fresh copies and challenge,
  the oracle $\cO'$ as the Boolean function of $X=(H,S)$, and the distinguisher's advice as the advice
  parameter.  The pointwise maximizing advice is Borel here, because the advice set is finite and each
  fixed-advice bias is Borel, so the lexicographically first maximizer is a Borel map.
\item \Cref{lem:fixing}, over the same countable family of machines as in \Cref{cor:hybrid-fixed},
  fixes one oracle.
\end{enumerate}

The no-$\OWPuzz$ half needs nothing further, since the attack of \Cref{app:attack} is classical and
hence a special case of coherent access.  That is case~(A).

Suppose instead that the polynomial form fails.  Then some adversary with polynomially bounded
$T,\bar q,a,\log M$ has bias exceeding $\lambda^{-k}$ for some $k$ and infinitely many $\lambda$.  Since
$n=2^{\Theta(\lambda)}$, an inverse polynomial in $\lambda$ is larger than $n^{-c}$ for every $c>0$, so
that bias exceeds $n^{-c}$ for every $c>0$ as well.

Three reductions now bring the adversary to the displayed form, and their order matters.
\begin{enumerate}[label=(\arabic*),leftmargin=2.2em,itemsep=3pt,topsep=3pt]
\item \emph{The hidden object is reduced to $H$.}  The stored samples are generated from $H$ together
  with auxiliary randomness $\omega$ independent of $H$, as $S=S(H,\omega)$.  Since the bias averaged
  over $\omega$ exceeds the threshold, some fixed $\omega$ does at least as well.  Fixing it makes the
  truth table $f_{(H,S(H,\omega))}$ and the advice $\alpha_{(H,S(H,\omega))}$ Borel functions of $H$
  alone, which is the hypothesis \Cref{lem:copies-free} requires.
\item \emph{The reference copies are removed.}  \Cref{lem:copies-free} removes them and retains the
  advice map while doing so, at a cost of $2\bar q/n$.  That cost is at most $\frac12n^{-c}$ for every
  $c<1$ and all large $\lambda$, since $\bar q=\poly(\lambda)$ while $n=2^{\Omega(\lambda)}$.  The
  residual dimension is $n'=n-2\bar q\ge n/2$, so the bias is at least $(n')^{-c'}$ for any $c'>c$;
  and as $c$ was arbitrary, the stated conclusion follows after renaming.
\item \emph{The residual instance is put in standard position.}  The spans that
  \Cref{lem:copies-free} conditions on are fixed by the same averaging, and the residual space $K$ is
  identified with $\bbR^{n'}$ by a fixed isometry, under which $H_A$ remains Haar of half dimension.
  The advice survives as the strings $\alpha_H$ of the statement.
\end{enumerate}

That leaves putting the surviving adversary into the displayed form. Holding one challenge copy, its
bias is $\frac1n|\tr(E_{f_H,\alpha_H}R_H)|$ by \eqref{eq:bias-def}, which is the quantity displayed in
case~(B), and the comparison with the classical ceiling is then \Cref{lem:vsp-small} read at
$L=\poly(\lambda)$.
\end{proof}

Case~(A) is the separation we are after.  Case~(B) would not by itself break $\cO'$, whose coherent
security would remain open.  It would, however, contrast sharply with Klartag--Regev: with the
same secret and the same holder of $H$ available, the passage from classical to coherent access would
turn a provable impossibility into an attack.  It would not resolve the Unitary
Synthesis Problem of Aaronson and Kuperberg~\cite{AK07} positively, and we see three reasons why.
It gives inverse-polynomial bias where synthesis asks for inverse-exponential diamond error; it
concerns a single ensemble where the problem quantifies over all unitaries; and it is query-efficient
where the problem asks for circuit efficiency.  It would instead refute a non-synthesis statement for \emph{measurement} synthesis on
the half-subspace ensemble, which~\cite{LMW24} single out as the hard core of unitary synthesis.

\paragraph{The quantum protocol for $\VSP$.}  The $O(\log n)$-qubit quantum protocol
for $\VSP$ does not break the pair under coherent access.  The power of that protocol lies in the
measurement rather than in the message: Alice sends $\ket u$,
and it is Bob, who holds $H$, who applies $\{P_H,P_{H^\perp}\}$.  Neither of our oracles performs
that measurement on the challenge, since $\cR$ only emits states and $\Count$ returns classical bits.
The adversary, who plays the part of Alice, cannot make the oracle act as the measuring
Bob.

The difference is one of direction.  Our correspondence sends a query to a message, and that is the
direction the lower bound uses.  The reverse direction, from a cheap quantum protocol to an attack,
would require a reflection oracle, a quantum-input channel, or some other means of letting the holder
of $H$ act on the challenge; neither $\cG$ nor $\Count$ is such a means, and neither was $\cR$ in the
analysis model.  A cheap quantum protocol rules out a communication-based security
\emph{proof} under coherent access, which is the reason for the query-complexity method of
\Cref{sec:coherent}, but it yields no attack by itself.

If we grant the reflection oracle named there, the pair breaks at once, which makes the distinction
just drawn a substantive one.  Suppose an adversary can apply the \emph{controlled} unitary $R_H$
using $O(1)$ oracle queries.  Since $P_H^2=P_H$ and $P_HP_{H^\perp}=0$ give $R_H\rho_0=\rho_0$ and
$R_H\rho_1=-\rho_1$, the challenge $\rho_b$ lies entirely in the $(-1)^b$-eigenspace of $R_H$.
Applying $R_H$ alone is useless, since it multiplies $\rho_b$ by a global sign; the control is what
converts that sign into an observable bit.  Prepare an ancilla in $\ket+$, apply controlled-$R_H$ to
the challenge, Hadamard the ancilla and measure it: the outcome is deterministic and equals $b$.

Two oracles supply the controlled reflection.  The subspace reflection $\Id-2P_H=-R_H$, given in
controlled form, supplies it in one query, the extra $Z$ on the control merely flipping the outcome
label.  Oracle access to the defining representation $Q$ of $\On$ together with $Q^\top$ supplies it
in two, since $R_H=QDQ^\top$ with $D$ public: apply $Q^\top$, then the controlled public $D$, then
$Q$, and note that $QQ^\top=\Id$ on the control-zero branch.  Neither oracle is a Boolean function of
$H$ answering in classical bits, which is the only kind \Cref{conj:lift} concerns.

\section{Discussion and open problems}
\label{sec:discussion}

Every classical resource we have granted the distinguisher has a communication counterpart, and
security against that resource is the hardness of $\VSP$ against its counterpart.  Adaptive classical
$\Count$ queries are interactive classical communication, and oracle-dependent classical advice is a
one-way message from the holder of $H$.  Both lie below the classical $\VSP$ threshold, and both are
provably useless.  Not perfectly useless, however, since the oracle does leak: at the rate
$\Theta(n^{-1/2})$ for the measure-then-count adversaries of \Cref{prop:exact-rate}, and at most
$\poly(\lambda)\,n^{-1/6}$ in general.  Quantum advice, finally, is a one-way \emph{quantum} message
to the holder of one challenge copy, and there the threshold is $m=\Theta(n^2)$ qubits, which is both
necessary and sufficient (\Cref{prop:advice-sharp}, \Cref{sec:advice-is-a-query}).

Coherent queries, by contrast, have no useful communication counterpart.  Of course, one can simulate
such a query by sending the query register to the holder of $H$ and back, but that is quantum
communication, and for quantum communication $\VSP$ is exponentially easy.  No reduction of our kind reaches
them, and \Cref{sec:coherent} argues in query complexity instead.  Just one resource crosses the two
regimes, namely a polynomial classical transcript followed by a single coherent query, and that is
the access model of \Cref{thm:single}.

We conclude with four open problems.

First, fully coherent access.  \Cref{conj:lift} asks whether the pair survives polynomially many
adaptive coherent queries.  An affirmative answer would place the separation of \Cref{thm:single} in
the standard quantum oracle model, with no restriction on where the distinguisher's queries fall, and
none on its advice either (\Cref{sec:lift}).  Two facts delimit an attempt.  The first is that the
question lives at $T\ge2$ with every width at $n^{2-o(1)}$ and above, by \Cref{prop:small-width} read
at each fixed $\eps$ and by \Cref{prop:asym-two-query}.  That is also the regime the model forces, so
a union bound over truth tables is unavailable, and the maximum over them must instead be taken by an
operator norm, as it is at $T=1$.  The second fact is that the first open case is concrete.  Two coherent queries contain quantum advice
together with one coherent query (\Cref{sec:lift}), which is \Cref{thm:hybrid} with its classical
record replaced by advice.  The advice on its own is already handled (\Cref{cor:quantum-advice}), and
\Cref{prop:advice-one-query} settles the pair of them for advice of dimension $n^{2-\Omega(1)}$.  What
remains open is advice of larger dimension together with a query.  What limits the argument there is the conditioning step rather than the
construction.  \Cref{thm:hybrid} conditions on a classical record and averages against the masses
$w_\pi$ of the consistent records, which sum to one for every $H$ (\Cref{lem:record}).  Quantum
advice supplies no record, and two coherent queries supply none either, so the convex combination
that makes the conditioning harmless has no counterpart.

Second, the classical communication complexity of $\VSP$ itself.  Our proof uses only
$\VSP_n\in\Omega(n^{1/3})$, while the known window is $[n^{1/3},\sqrt n]$ with Raz's protocol at the
top~\cite{Raz99,Mon19}.  Both endpoints are $2^{\Omega(\lambda)}$, so the separation is unaffected,
but the true threshold would locate the exact exponential rate at which classical counting begins to
compromise quantum state hiding.  Klartag and Regev show that their sampling theorem is tight, a
spherical cap of measure $\exp(-n^{1/3})$ deviating with probability $\exp(-n^{1/3})$, and conclude
that reaching $\Omega(\sqrt n)$ ``is probably impossible using the rectangle bound''~\cite{KR11}.
That assessment is a heuristic drawn from the example rather than a proved barrier, and the example
is circumvented by the geometric-mean form of the statement, which the same cap satisfies
(\Cref{sec:oneway-rate}).  A second and related gap is quantitative.  The security proof lives in the
small-advantage regime, where \Cref{lem:vsp-small} gives $O(\sqrt{L+1}\,n^{-1/6})$ and
\Cref{cor:twoway-disc} gives $O(2^{L/2}n^{-1/2})$, and where by \Cref{prop:exact-rate} no bound can
improve on $n^{-1/2}$ at $L=\poly(\lambda)$.  A two-way analogue of \Cref{prop:oneway-rate}, linear in $L$ at rate $O((L+1)n^{-1/2})$, would
sharpen \Cref{thm:EFI-security} to its exact exponent and would in
particular give $\Omega(\sqrt n)$ at constant advantage, matching Raz.  Whether a spectral argument of that kind
survives interaction is open.  It is not a rectangle bound, so the obstruction identified above does
not immediately apply to it.

Third, the classical communication complexity of $k$-$\VSP_n$, in which Alice receives $k$ unit
vectors drawn independently from the sphere of $H$ or from that of $H^\perp$.  This is no longer a
question the construction needs: security of the pair against $k$ challenge copies follows from the
single-copy statement by the hybrid of \Cref{cor:multicopy}, at a factor $k$, because the access
model supplies reference copies of both states.  As a communication question it remains open.  The
obvious reduction runs in the unhelpful direction: from $k$ vectors Alice can produce one uniform
vector of the same law, by taking a uniformly random unit vector in their span, so $k$-$\VSP$ is at
least as easy as $\VSP$ and hardness does not transfer.  We are not aware of a lower bound for any
$k\ge2$.  The gap between the two statements is the one-directionality of \Cref{sec:framework}:
Alice holds classical descriptions, and a receiver of $k$ quantum copies does not.

Fourth, quantum advice.  Linearly many qubits are admitted alongside the full classical budget
(\Cref{cor:hybrid-qadvice}), and polynomial size is admitted when the advice stands alone
(\Cref{cor:quantum-advice}); what is open is polynomial size together with classical $\Count$
queries.  What limits the argument is the ordering restriction in \Cref{thm:hybrid} rather than a
missing communication bound (\Cref{sec:security-thm}), and \Cref{sec:lift} gives the one route that
closes the case.

The nearest communication question, a one-way \emph{quantum} message from the holder of $H$ to a
holder of the classical $u$, is settled: \Cref{prop:reverse-oneway} puts it at $\Theta(n)$, by a
reduction to random access coding.  What our advice statements need is the other model, in which the
receiver holds one quantum copy of the challenge rather than a classical description, and there
\Cref{prop:advice-sharp} gives the threshold $m=\Theta(n^2)$.  The two models separate by the factor
$n$ between these answers (\Cref{sec:advice-is-a-query}).  What remains open is neither of them, but
the mixed resource: quantum advice \emph{together with} adaptive classical $\Count$ queries, which
corresponds to an initial quantum message followed by two-way classical interaction.

Two further questions we leave aside.  Whether public-key quantum money with \emph{mixed} banknotes
survives the oracle is open, since the reduction of~\cite{KT25} produces a mini-scheme with a pure
banknote and the mixed notion of~\cite{AarChr12} is not covered (\Cref{sec:no-owpuzz}).  Whether
the counting oracle can be made uniform, rather than defined by a recursion on ranks, we have not
pursued; nothing in the security proof would change, since \Cref{thm:transcript-security}
is indifferent to how the truth table is specified.

\clearpage
\section*{Acknowledgements}
We thank Keshav Bhateja and Ezekiel Cochran for useful discussions on one-way puzzles and EFI pairs.
Large language models were used in preparing this paper: Claude Opus 4.8 and ChatGPT 5 throughout,
and Claude Opus 5 in the later stages of writing and to check proofs and computations.  Their role
was most substantial in the proofs of \Cref{sec:coherent} and \Cref{app:rates}, in particular in
adapting the concentration inequality of Huang and Tropp to our ensemble and in the harmonic analysis
on the sphere.  We verified every argument line by line, checked the originality of the results and
every reference against its source, and take full responsibility for all content.
This work is supported by the author's faculty startup grant from Virginia Tech.

\printbibliography

\newpage
\appendix
\crefalias{section}{appendix}\crefalias{subsection}{appendix}
\addtocontents{toc}{\protect\setcounter{tocdepth}{1}}

\section{The conditional-sampling attack}
\label{app:attack}

This appendix proves \Cref{thm:no-owpuzz}.  We use the conditional-sampling attack that underlies
the forward direction of the characterization of~\cite{CGGH25}: given a puzzle $s$, sample a key from
the true conditional distribution of keys given $s$, one bit at a time, computing each conditional
probability from output-probability estimates.  In the general setting that estimator is weak and the
bookkeeping is delicate.  Here, by contrast, the counting oracle provides \emph{exact bits}, so
additive error $2^{-t}$ costs $t$ queries and our analysis stays elementary.

Let $(\Samp,\Ver)$ be a candidate one-way puzzle relative to $\cO$ with correctness.  We may assume
without loss of generality that the key has fixed length $m=m(\lambda)$ and the puzzle length
$\ell=\ell(\lambda)$, both polynomial. Since $\Samp$ is a fixed polynomial-time machine, the ranks of its $\Count$ queries are bounded by an
integer $J=J(\lambda)$, and it accesses $\cO_{<J+1}$ only.  We read the integer $J+1$ off its clocked
description. For $0\le i\le m$ let $C_{i,\lambda}$ be the clocked specialization that runs
$\Samp(1^\lambda)$ to get $(k,s)$ and outputs $(s,k_1\cdots k_i)$, and put
$\widehat C_{i,\lambda}\eqdef\mathsf{Norm}_J(C_{i,\lambda})$. Each of these is an admissible argument to $\Count_{J+1}$, by the reasoning of \Cref{lem:pe-easy}.
Finally, we write $P(w)\eqdef\Pr[\widehat C_{i,\lambda}^{\cO}(1^\lambda)=w]$ for $w=(s,y)$ with
$y\in\{0,1\}^i$, so that $P(w)=P(w0)+P(w1)$.

The adversary is the following.  Fix the precision $t\eqdef m+\ell+3\lambda$ and put
$\eta\eqdef2^{-t}$.  On input $(1^\lambda,s)$ it produces the key one bit at a time.  Having built the
prefix $w=(s,k_1\cdots k_{i-1})$, it determines the next bit $k_i$ as follows.
\begin{itemize}[leftmargin=1.6em,itemsep=2pt,topsep=3pt]
\item It queries $\Count_{J+1}$ for the first $t$ bits of $P(wy)$ for $y\in\{0,1\}$, obtaining
  truncations with $0\le P(wy)-\widetilde P(wy)\le\eta$.
\item If $\widetilde P(w0)+\widetilde P(w1)=0$, it sets $k_i\eqdef0$.
\item Otherwise it sets $k_i\eqdef y$ with probability
  $\widetilde P(wy)/(\widetilde P(w0)+\widetilde P(w1))$, rounded to a multiple of $2^{-2t}$ so that
  $2t$ fair coins realize it exactly.
\end{itemize}
After $m$ rounds it outputs $k$.  This is a classical randomized algorithm making
$2mt=\poly(\lambda)$ classical queries.

\begin{lemma}[The attacker's key distribution is close to the honest one]
\label{lem:tv}
Let $\cD_{\mathrm{real}}$ be the law of $(s,k)\gets\Samp(1^\lambda)$ and $\cD_\cA$ the law of $(s,\cA(1^\lambda,s))$ with $s$ from the puzzle marginal, where $\cA$ is the adversary above with parameters $t=m+\ell+3\lambda$ and $\eta=2^{-t}$. Then $\TV(\cD_\cA,\cD_{\mathrm{real}})\le4m\sqrt{\eta\,2^{\ell+m}}\le2^{-\lambda}$ for all large $\lambda$.
\end{lemma}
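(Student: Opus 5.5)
The plan is a hybrid over the $m$ key bits, in exact parallel with Step~3 of the proof of \Cref{thm:interactive}, using \Cref{lem:ratio} for the local error and \Cref{lem:threshold} for the global accounting. For $0\le i\le m$ let $\cD_i$ be the law obtained by drawing $(s,k_1\cdots k_i)$ from the honest joint law $P$, that is, the puzzle together with the first $i$ true key bits, and then completing bits $i+1,\dots,m$ with the attacker's rule $\widetilde\pi$. Then $\cD_m=\cD_{\mathrm{real}}$, and $\cD_0=\cD_\cA$, since $s$ is drawn from the puzzle marginal and every key bit comes from $\cA$. Adjacent hybrids differ only in how bit $i+1$ is drawn, and the history $w=(s,k_1\cdots k_i)$ before it has the honest law $P(w)$. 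The data-processing inequality, applied to the common completion, therefore gives
\[
  \TV(\cD_i,\cD_{i+1})\ \le\ \sum_w P(w)\,\TV\big(\pi_w,\widetilde\pi_w\big),
\]
where $\pi_w(y)=P(wy)/P(w)$ is the true conditional law of the next bit.

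The local bound comes from \Cref{lem:ratio}, with $p_y=P(wy)$, so that $p=P(w)$ by the consistency identity $P(w)=P(w0)+P(w1)$. Whenever $P(w)\ge4\eta$ it gives $\TV(\pi_w,\widetilde\pi_w)\le4\eta/P(w)$; elsewhere we use the trivial bound $\TV\le1$. The attacker's fallback rule when both truncations vanish falls into this second case.

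Next, apply \Cref{lem:threshold} on the set of histories of length $i$. Take $V\equiv1$, so the normalization $\sum_wP(w)=1$ holds, and take $K=2^{\ell+i}\le2^{\ell+m}$. The hypothesis $K^{-1}\ge4\eta$ holds because $\eta=2^{-(m+\ell+3\lambda)}$. Each adjacent pair then costs at most $2\sqrt{4\eta\,2^{\ell+m}}=4\sqrt{\eta\,2^{\ell+m}}$. Summing over the $m$ steps gives $4m\sqrt{\eta\,2^{\ell+m}}=4m\,2^{-3\lambda/2}\le2^{-\lambda}$ for large $\lambda$, since $m=\poly(\lambda)$.

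The main point to check is the identification of $\cD_0$ with $\cD_\cA$: the attacker sees only $s$, and its first bit conditions on $w=(s)$, whose law is exactly the puzzle marginal. A second point is that the admissible samplers $\widehat C_{i,\lambda}$ really have output law $P$, which is \Cref{lem:pe-easy}. Everything else is bookkeeping already isolated in the two lemmas.
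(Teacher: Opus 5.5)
Your proposal is correct and follows the paper's proof essentially step for step. Both use a hybrid over the $m$ key bits with the same endpoints $\cD_m=\cD_{\mathrm{real}}$ and $\cD_0=\cD_\cA$, the local bound of \Cref{lem:ratio} on prefixes of mass at least $4\eta$, and \Cref{lem:threshold} with $V\equiv1$ and $K=2^{\ell+j}$, summed over the $m$ steps to give $4m\,2^{-3\lambda/2}$.
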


\begin{proof}
Fix a prefix $w$, write $\pi_w(y)=P(wy)/P(w)$ for the true conditional law of the next key bit, and
write $\widetilde\pi_w$ for the adversary's rule.  These are the two laws of \Cref{lem:ratio}, at
$p_y=P(wy)$ and $\tilde p_y=\widetilde P(wy)$, so on prefixes carrying mass $P(w)\ge4\eta$ that lemma
gives
\[
  \TV(\pi_w,\widetilde\pi_w)\ \le\ \frac{4\eta}{P(w)} .
\]
Elsewhere we use only the trivial bound $\TV\le1$.

Now replace the adversary's rule by the true conditional law, one bit at a time.  For $0\le j\le m$
let $\cD_j$ draw $(s,k_1\cdots k_j)$ from the real process and the remaining bits by $\cA$'s rule, so
that $\cD_m=\cD_{\mathrm{real}}$ and $\cD_0=\cD_\cA$.  Adjacent hybrids differ only in the $(j+1)$-st
bit, conditioned on a prefix distributed as in the real process, so
\[
  \TV(\cD_j,\cD_{j+1})\ \le\ \sum_wP(w)\,\TV(\pi_w,\widetilde\pi_w) ,
\]
where $w$ ranges over at most $2^{\ell+j}$ prefixes.

That sum is what \Cref{lem:threshold} bounds, at $K=2^{\ell+j}$ and with $V\equiv1$.  Its two
hypotheses hold: $\sum_wP(w)=1$, and the displayed per-prefix estimate is the required one, while
$K^{-1}\ge4\eta$ comfortably, since $\eta=2^{-(m+\ell+3\lambda)}$.  Hence
\[
  \TV(\cD_j,\cD_{j+1})\ \le\ 2\sqrt{4\eta\,2^{\ell+j}}\ \le\ 4\sqrt{\eta\,2^{\ell+m}} .
\]
Summing over the $m$ steps,
\[
  \TV(\cD_\cA,\cD_{\mathrm{real}})\le4m\sqrt{\eta\,2^{\ell+m}}=4m\,2^{-3\lambda/2}\le2^{-\lambda}
\]
for all large $\lambda$, as claimed.
\end{proof}

\begin{proof}[Proof of \Cref{thm:no-owpuzz}]
The event ``$\Ver$ accepts'' is a fixed measurable, possibly uncomputable, function of $(k,s)$, and
$\cA$ never runs $\Ver$, so no assumption on the verifier enters anywhere. Under
$\cD_{\mathrm{real}}$ that event has probability at least $1-\negl(\lambda)$ by correctness, hence at
least $1-\negl(\lambda)-2^{-\lambda}$ under $\cD_\cA$ by \Cref{lem:tv}.

Two further properties of the argument are used by later statements.
First, it holds for every fixing of $\{H_\lambda\}$, because \Cref{lem:pe-easy} does.  Second, $\cA$
is classical and makes classical queries, which is what gives \Cref{cor:collateral}.

To summarize, a classical randomized polynomial-time adversary making $\poly(\lambda)$ classical
queries produces, from the puzzle alone, a key that $\Ver$ accepts with probability
$1-\negl(\lambda)$.  That contradicts security.
\end{proof}

\section{The two rates}
\label{app:rates}

This appendix proves \Cref{prop:exact-rate} and the upper bound of \Cref{prop:oneway-rate}, the two
statements of Theorem~E.  Neither is needed for the separation, and the two arguments are
independent of one another.

\subsection{The exact rate against a measure-then-count adversary}
\label{app:exact-rate}

We isolate the Beta computation first, in the three ensembles the paper needs it in, and then read
both halves of \Cref{prop:exact-rate} off it.

\begin{lemma}[Mean absolute deviation of a diagonal entry]
\label{lem:mad}
For a unit vector $v$ put $\xi_v\eqdef\bra vP_H\ket v-\frac12$, so that $\bra vR_H\ket v=2\xi_v$.
\begin{enumerate}[label=(\alph*),leftmargin=2.2em]
\item If $H\subseteq\bbR^n$ is Haar of dimension $n/2$ and $v\in\bbR^n$, then
  $\bra vP_H\ket v\sim\mathrm{Beta}(a,a)$ with $a=n/4$, and for $n\ge4$
  \begin{equation}
    \frac1{\sqrt{2\pi n}}\ \le\ \frac{1}{\sqrt{\pi n}}\sqrt{1-\tfrac2n}\ \le\ \E_H|\xi_v|\ \le\ \frac1{\sqrt{\pi n}} .
    \label{eq:beta-mad}
  \end{equation}
  The middle bound is the sharp one: $\E_H|\xi_v|=(\pi n)^{-1/2}\big(1-\tfrac1{2n}+O(n^{-2})\big)$,
  so the two outer bounds of~\eqref{eq:beta-mad} differ by a factor $\sqrt2$ while the true value lies
  at the top of the interval.
\item If $H\subseteq\bbC^n$ is Haar of dimension $n/2$ and $v\in\bbC^n$, the same holds with $a=n/2$,
  giving $\sqrt{1-1/n}\,(2\pi n)^{-1/2}\le\E_H|\xi_v|\le(2\pi n)^{-1/2}$.
\item In the real ensemble the upper bound of~\eqref{eq:beta-mad} holds for every \emph{complex} unit
  vector $v\in\bbC^n$ as well.
\end{enumerate}
\end{lemma}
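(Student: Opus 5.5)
We first identify the law of $\bra vP_H\ket v$, then compute the mean absolute deviation exactly via a Beta integral, then bound the resulting ratio of Gamma functions.

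\emph{Step 1: the distribution.} By rotation invariance we may fix $H=\mathrm{span}(e_1,\dots,e_{n/2})$ and take $v$ uniform on the sphere. In the real case write $v=g/\norm g$ with $g$ standard Gaussian in $\bbR^n$. Then $\bra vP_H\ket v=\sum_{i\le n/2}g_i^2/\sum_i g_i^2$, a ratio of independent $\chi^2_{n/2}$ variables, hence $\mathrm{Beta}(n/4,n/4)$. In the complex case each coordinate contributes two real Gaussians, which gives $\mathrm{Beta}(n/2,n/2)$.

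\emph{Step 2: the exact MAD.} For $X\sim\mathrm{Beta}(a,a)$, symmetry gives $\E|X-\frac12|=2\int_{1/2}^1(x-\frac12)x^{a-1}(1-x)^{a-1}\,dx/B(a,a)$. Since $\frac{d}{dx}[x^a(1-x)^a]=a\,x^{a-1}(1-x)^{a-1}(1-2x)$, the integrand is an exact derivative, and
\[
\E|X-\tfrac12|=\frac{4^{-a}}{a\,B(a,a)}=\frac{\Gamma(2a)}{4^a\,a\,\Gamma(a)^2}.
\]
By the duplication formula this equals $\Gamma(a+\frac12)/(2\sqrt\pi\,\Gamma(a+1))$.

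\emph{Step 3: Gautschi-type bounds.} The main work is the two-sided estimate
\[
\sqrt{a}\,\sqrt{1-\tfrac1{2a}}\ \cdots\ \le\ \frac{\Gamma(a+1)}{\Gamma(a+\frac12)}\ \le\ \sqrt{a+\tfrac12},
\]
or, more precisely, the form $\sqrt{a}\le\Gamma(a+1)/\Gamma(a+\frac12)\le\sqrt{a+\frac12}$ obtained from log-convexity of $\Gamma$ (Wendel/Gautschi). The lower ratio bound gives $\E|\xi_v|\le(2\sqrt{\pi a})^{-1}$, which is $(\pi n)^{-1/2}$ at $a=n/4$ and $(2\pi n)^{-1/2}$ at $a=n/2$. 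The upper ratio bound gives $\E|\xi_v|\ge(2\sqrt{\pi(a+\frac12)})^{-1}$. Rather than this form, we need the stated $\sqrt{1-2/n}$ factor, so we would use the Kershaw-type bound $\Gamma(a+1)/\Gamma(a+\frac12)\le\sqrt{a+\frac14+\dots}$, or simply check that $a/(a+\frac12)\ge1-\frac1{2a}$. This gives $\sqrt{1-2/n}$ when $a=n/4$ and $\sqrt{1-1/n}$ when $a=n/2$, as claimed. The leftmost inequality $\sqrt{1-2/n}\ge1/\sqrt2$ holds for $n\ge4$. The asymptotic $1-\frac1{2n}$ follows from Stirling's expansion $\Gamma(a+1)/\Gamma(a+\frac12)=\sqrt a\,(1+\frac1{8a}+O(a^{-2}))$ at $a=n/4$. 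Getting these constants exactly right is the fiddly step.

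\emph{Step 4: part (c).} For complex $v=x+iy$ and real symmetric $P_H$, we have $\bra vP_H\ket v=\bra xP_H\ket x+\bra yP_H\ket y$ with $\norm x^2+\norm y^2=1$, so
\[
\xi_v=\norm x^2\,\xi_{\hat x}+\norm y^2\,\xi_{\hat y}.
\]
The triangle inequality together with part (a) applied to the real unit vectors $\hat x$ and $\hat y$ then gives $\E|\xi_v|\le(\pi n)^{-1/2}$.
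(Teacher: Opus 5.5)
Your proof is correct and follows essentially the paper's route: the same Gaussian-ratio identification of $\mathrm{Beta}(a,a)$, the same exact value $\Gamma(a+\frac12)/(2\sqrt\pi\,\Gamma(a+1))$ via Legendre duplication, the same two readings of log-convexity of $\Gamma$, and the same convexity decomposition $\xi_v=\norm x^2\xi_{\hat x}+\norm y^2\xi_{\hat y}$ for part~(c). The only differences are cosmetic: you evaluate the Beta integral by an exact antiderivative where the paper substitutes $x=\frac{1+u}2$, and the Kershaw-type detour in Step~3 is unnecessary, since $\Gamma(a+1)/\Gamma(a+\frac12)\le\sqrt{a+\frac12}$ together with $a/(a+\frac12)\ge1-\frac1{2a}$ already gives the stated $\sqrt{1-2/n}$ and $\sqrt{1-1/n}$ factors, as the paper's own chain does.
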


\begin{proof}
Each part is the mean absolute deviation of a symmetric Beta variable about its mean, read at a
different parameter, so we compute that deviation once.  Throughout this subsection $a$ denotes that
Beta parameter and nothing else.  Let $X$ be Beta distributed with both
parameters equal to $a$, so that its density on $[0,1]$ is proportional to $x^{a-1}(1-x)^{a-1}$ and its
law is symmetric about $\frac12$.  Substitute $x=\frac{1+u}2$.  The density becomes proportional to $(1-u^2)^{a-1}$ on $[-1,1]$, with
normalizing constant $2\cdot4^{a-1}B(a,a)$, and $|X-\frac12|=\frac{|u|}2$, so
\[
  \E\Big|X-\tfrac12\Big|\ =\ \frac1{4^{a}\,B(a,a)}\int_{-1}^{1}|u|\,(1-u^{2})^{a-1}\,du
  \ =\ \frac1{a\,4^{a}\,B(a,a)} ,
\]
the last step because $w=u^2$ turns the integral into $\int_0^1(1-w)^{a-1}dw=\frac1a$.

Now clear the Beta function.  Legendre duplication,
$\Gamma(2a)=2^{2a-1}\pi^{-1/2}\Gamma(a)\Gamma(a+\frac12)$, turns
\[
  B(a,a)\ =\ \frac{\Gamma(a)^2}{\Gamma(2a)}
  \qquad\text{into}\qquad
  B(a,a)\ =\ \frac{\sqrt\pi\,\Gamma(a)}{2^{2a-1}\,\Gamma(a+\frac12)} ,
\]
and substituting that gives
\begin{equation}
  \E\Big|X-\tfrac12\Big|\ =\ \frac{\Gamma(a+\frac12)}{2a\sqrt\pi\,\Gamma(a)} .
  \label{eq:beta-exact}
\end{equation}
At $a=1$ this returns $\frac14$, the value for the uniform law on $[0,1]$, which is a useful check.

One tool controls the Gamma ratio in \eqref{eq:beta-exact} from both sides, namely log-convexity of
$\Gamma$:
\[
  \Gamma\big(x+\tfrac12\big)^2\ \le\ \Gamma(x)\,\Gamma(x+1)\ =\ x\,\Gamma(x)^2
  \qquad (x>0) .
\]
Read it at $x=a$ and it gives the upper bound at once,
$\Gamma(a+\frac12)/\Gamma(a)\le\sqrt a$.  Read it instead at $x=a+\frac12$ and it gives
\[
  \Gamma(a+1)^2\ \le\ \big(a+\tfrac12\big)\,\Gamma\big(a+\tfrac12\big)^2 ,
  \qquad\text{that is}\qquad
  \frac{\Gamma(a+\frac12)}{\Gamma(a)}\ \ge\ \frac a{\sqrt{a+\frac12}}\ \ge\ \sqrt{a-\tfrac12} ,
\]
the last step because $a^2\ge(a-\frac12)(a+\frac12)$.  The two readings sandwich the ratio between
$\sqrt{a-\frac12}$ and $\sqrt a$, and hence
\begin{equation}
  \frac{\sqrt{a-1/2}}{2a\sqrt\pi}\ \le\ \E\Big|X-\tfrac12\Big|\ \le\ \frac1{2\sqrt{\pi a}} .
  \label{eq:beta-bounds}
\end{equation}
It remains to identify the Beta parameter in each of the three ensembles.

\emph{(a)} By rotation invariance we may instead fix $H$ and take $v$ uniform on the sphere.  Then
$\bra vP_H\ket v=\sum_{i\le n/2}v_i^2$ has the form $U/(U+V)$, where $U$ and $V$ are independent
chi-square variables with $n/2$ degrees of freedom each; hence
$\bra vP_H\ket v\sim\mathrm{Beta}(n/4,n/4)$.  Setting $a=n/4$ in \eqref{eq:beta-bounds}, the upper
bound is exactly $1/\sqrt{\pi n}$ and the lower bound is
\[
  \frac{\sqrt{n-2}}{n\sqrt\pi}\ =\ \frac1{\sqrt{\pi n}}\sqrt{1-\frac2n} ,
\]
which is at least $1/\sqrt{2\pi n}$ precisely when $n\ge4$; this is \eqref{eq:beta-mad}.  For the asymptotic statement we go back to the exact value \eqref{eq:beta-exact}, the two-sided
bounds \eqref{eq:beta-bounds} being too coarse.  The standard expansion of the Gamma ratio,
\[
  \frac{\Gamma(a+\frac12)}{\Gamma(a)}\ =\ \sqrt a\,\Big(1-\frac1{8a}+O(a^{-2})\Big),
\]
turns \eqref{eq:beta-exact} into
$\E|X-\frac12|=\tfrac12(\pi a)^{-1/2}\big(1-\frac1{8a}+O(a^{-2})\big)$, which at $a=n/4$ reads
$(\pi n)^{-1/2}\big(1-\frac1{2n}+O(n^{-2})\big)$.

\emph{(b)} The complex ensemble is the same computation with the degrees of freedom doubled.  The
numerator now has $n$ real degrees of freedom out of $2n$ rather than $n/2$ out of $n$, so the Beta
parameter is $a=n/2$, and the stated bounds follow from \eqref{eq:beta-bounds} exactly as in part~(a).

\emph{(c)} A complex vector tested against a real subspace has no Beta law of the kind used in
part~(a), and convexity replaces it.  Write $v=x+iy$ with $x,y\in\bbR^n$ and $\norm x^2+\norm y^2=1$.
Since $P_H$ is real symmetric the cross terms cancel, so that
\[
  \bra{v}P_H\ket{v}\ =\ x^\top P_Hx+y^\top P_Hy ,
  \qquad\text{and hence}\qquad
  \xi_v\ =\ \norm x^2\,\xi_{\hat x}+\norm y^2\,\xi_{\hat y}
\]
for the unit vectors $\hat x=x/\norm x$ and $\hat y=y/\norm y$, where we omit any term of weight
zero.  Thus $\xi_v$ is a convex combination of two instances of the real case, and the triangle inequality together with rotation invariance gives
\[
  \E|\xi_v|\ \le\ \norm x^2\,\E|\xi_{\hat x}|+\norm y^2\,\E|\xi_{\hat y}|\ =\ \E|\xi_{e_1}| ,
\]
which part~(a) bounds by $1/\sqrt{\pi n}$, the upper bound of \eqref{eq:beta-mad}.  Only that bound
survives the convexity step, and it is all that \Cref{prop:exact-rate}(i) uses.
\end{proof}

\begin{proof}[Proof of \Cref{prop:exact-rate}]
\emph{(i)} Fix a POVM $\{E_k\}$.  The adversary sees the outcome $k$ together with every value
$\tr(E_{k'}P_H)$, so its optimal rule is the likelihood-ratio test between the two outcome
distributions, whose probabilities are $\frac2n\tr(E_kP_H)$ on branch $b=0$ and
$\frac2n\tr(E_kP_{H^\perp})$ on branch $b=1$.  The advantage of that test is the total variation
distance between them, which by $R_H=P_H-P_{H^\perp}$ equals
\[
  \frac1n\sum_{k}\big|\tr(E_kP_H)-\tr(E_kP_{H^\perp})\big|\ =\ \frac1n\sum_k\big|\tr(E_kR_H)\big| .
\]

We bound each summand separately, using nothing about the POVM beyond its trace.
Diagonalize the effect as $E_k=\sum_iw_{k,i}\proj{v_{k,i}}$ with $w_{k,i}\ge0$, noting that
the eigenvectors may be complex, since nothing constrains the POVM to be real.  Expanding the trace and
applying the triangle inequality,
\[
  \big|\tr(E_kR_H)\big|\ =\ \Big|\sum_iw_{k,i}\bra{v_{k,i}}R_H\ket{v_{k,i}}\Big|
  \ \le\ \sum_iw_{k,i}\,\big|\bra{v_{k,i}}R_H\ket{v_{k,i}}\big| .
\]
Part~(c) of \Cref{lem:mad} covers those complex eigenvectors, and this is the only place where we
use it.  Since $\bra vR_H\ket v=2\xi_v$, it gives $\E_H|\bra vR_H\ket v|\le2/\sqrt{\pi n}$ for every
unit vector $v$.  Using $\sum_iw_{k,i}=\tr E_k$ and $\sum_k\tr E_k=\tr\Id=n$,
\[
  \E_H\Big[\frac1n\sum_k\big|\tr(E_kR_H)\big|\Big]
  \ \le\ \frac1n\cdot\frac2{\sqrt{\pi n}}\sum_k\tr E_k\ =\ \frac2{\sqrt{\pi n}} ,
\]
which is the bound of~(i).

\emph{(ii)} Write $d_j\eqdef\bra{e_j}P_H\ket{e_j}$ and $\xi_j\eqdef d_j-\frac12$.  Measuring $\rho_b$
in the computational basis returns the outcome $j$ with probability $2d_j/n$ if $b=0$ and $2(1-d_j)/n$
if $b=1$.  Suppose for the moment that the adversary could apply the exact rule
$\mathbf 1[d_j<\frac12]$.  Its advantage would then be
\begin{equation}
  \Big|\frac2n\sum_j\big((1-d_j)-d_j\big)\,\mathbf 1[\xi_j<0]\Big|
  \ =\ \frac4n\sum_j|\xi_j|\,\mathbf 1[\xi_j<0] .
  \label{eq:exact-rule}
\end{equation}
Three facts evaluate the expectation of the right-hand side: there are $n$ indices; each $\xi_j$ has
the same law as $\xi_{e_1}$, by rotation invariance; and that law is symmetric about $0$, because
$\mathrm{Beta}(a,a)$ is symmetric about $\frac12$, so that
$\E\big[|\xi_j|\,\mathbf 1[\xi_j<0]\big]=\frac12\E|\xi_j|$.  Together they give
\[
  \E_H\Big[\frac4n\sum_j|\xi_j|\,\mathbf 1[\xi_j<0]\Big]
  \ =\ \frac4n\cdot n\cdot\frac12\,\E|\xi_{e_1}|\ =\ 2\,\E|\xi_{e_1}|
  \ \ge\ \frac2{\sqrt{\pi n}}\sqrt{1-\tfrac2n}
\]
by \Cref{lem:mad}(a).  The same lemma gives $\E|\xi_{e_1}|=(\pi n)^{-1/2}\big(1-O(n^{-1})\big)$,
while part~(i) is $2\sup_v\E|\xi_v|$ up to the POVM normalization; so the exact rule
attains the upper bound of~(i) to within a factor $1-O(n^{-1})$.  That is the asymptotic claim, once we check
that the truncation below costs less.

Now we pay for the finite precision, since the adversary in the statement runs a truncated rule
rather than the exact one.  It does not learn $p_j\eqdef2d_j/n$ exactly, but only its truncation
$\widetilde p_j$ to $t$ bits, and with $\eta\eqdef2^{-t}$ that truncation satisfies
$\widetilde p_j\in[p_j-\eta,p_j]$.  The rule in the statement outputs $1$ exactly when
$\widetilde p_j<\frac1n-\eta$.

The two rules agree outside a narrow window.  Indeed, if $p_j<\frac1n-2\eta$ then
$\widetilde p_j\le p_j<\frac1n-\eta$, so the truncated rule outputs $1$; and if $p_j\ge\frac1n$ then
$\widetilde p_j\ge p_j-\eta\ge\frac1n-\eta$, so it outputs $0$.  They can differ only in between, at
the indices with
\[
  \tfrac1n-2\eta\ \le\ p_j\ <\ \tfrac1n ,\qquad\text{that is}\qquad |\xi_j|\ \le\ n\eta .
\]
Each such index changes the sum in \eqref{eq:exact-rule} by at most $\frac4n\cdot n\eta=4\eta$, and
there are at most $n$ indices altogether, so the total loss is at most
\[
  4n\eta\ \le\ 4n\cdot n^{-2}2^{-\lambda}\ =\ \frac4n\,2^{-\lambda} .
\]
Subtracting that loss from the advantage of the exact rule gives the bound of~(ii).  Relative to the
$2(\pi n)^{-1/2}$ of part~(i) this loss is $2\sqrt\pi\,2^{-\lambda}n^{-1/2}$, which at
$n=2^{\lambda}$ is $O(n^{-3/2})$ and so is dominated by the $O(n^{-1})$ of the previous paragraph;
that is the asymptotic claim in full.
\end{proof}

\subsection{The one-way rate: harmonic analysis on the sphere}
\label{app:harmonics}

The space $L^2(S^{n-1},\sigma)$
decomposes as an orthogonal direct sum $\bigoplus_{\ell\ge0}\cH_\ell$ of spaces of \emph{spherical
harmonics} of degree $\ell$, the restrictions to the sphere of harmonic polynomials homogeneous of
degree $\ell$.  Each $\cH_\ell$ is invariant and irreducible under the action of $\On$ by rotation.  We write $P_\ell$ for the orthogonal projection onto $\cH_\ell$ and
\[
  W_\ell(f)\ \eqdef\ \norm{P_\ell f}_2^2
\]
for the mass of $f$ at degree $\ell$, so that Parseval reads $\sum_\ell W_\ell(f)=\norm f_2^2$.  For
an indicator $\mathbf 1_A$ that mass is $\sigma(A)$.  Two standard facts are used.

The first is the Funk--Hecke theorem~\cite{Helgason}.  If an operator on $L^2(S^{n-1})$ commutes with all rotations,
then by Schur's lemma it acts on each $\cH_\ell$ as a scalar.  When the operator has the form
$f\mapsto\E[f(u')]$ for $u'$ drawn from a law depending on $u$ only through the inner product
$t=\ip u{u'}$, that scalar is $\E[G_\ell(t)]$, where
\[
  G_\ell\ \eqdef\ C^{\zeta}_{\ell}/C^{\zeta}_{\ell}(1),\qquad \zeta=\tfrac{n-2}2 ,
\]
is the normalized Gegenbauer (ultraspherical) polynomial of degree $\ell$.  We write the index
$\zeta$ rather than the customary $\lambda$, which in this paper is the security parameter.  We use
the classical values
$C^{\zeta}_{2j}(0)=(-1)^j\binom{\zeta+j-1}{j}$ and $C^{\zeta}_{2j}(1)=\binom{2\zeta+2j-1}{2j}$,
the binomial coefficients being the generalized ones since $\zeta$ need not be an integer, together
with the expansion
\[
  C^\zeta_{2j}(t)=\sum_{k=0}^{j}(-1)^{j-k}\frac{\Gamma(\zeta+j+k)}{\Gamma(\zeta)\,(j-k)!\,(2k)!}(2t)^{2k}.
\]
Throughout, $(a)_k\eqdef a(a+1)\cdots(a+k-1)$ denotes the rising factorial and ${}_2F_1$ the Gauss
hypergeometric function.

The second is the heat semigroup $e^{t\Delta}$ generated by the Laplace--Beltrami operator, which acts
on $\cH_\ell$ as multiplication by $e^{-t\ell(\ell+n-2)}$.  The round sphere has Ricci curvature $n-2$
in every direction, so by the Bakry--\'Emery criterion $\sigma$ satisfies a logarithmic Sobolev
inequality with constant $n-2$, and Gross's theorem makes the semigroup
hypercontractive~\cite{Ledoux01,MW82}:
\[
  \norm{e^{t\Delta}f}_{2}\ \le\ \norm f_{p},\qquad p=1+e^{-2(n-2)t} .
\]

\subsection{The second-moment operator}
\label{app:second-moment}

Throughout, $\mu_H\eqdef\nu_H-\nu_{H^\perp}$ denotes the difference of the uniform probability
measures on the unit spheres of $H$ and $H^\perp$.

\begin{lemma}[Spectral form of the second-moment operator]
\label{lem:second-moment}
Let $n\ge4$ be even and let $H\subseteq\bbR^n$ be Haar of dimension $n/2$. Then for every real-valued
$f\in L^2(S^{n-1})$,
\[
  \E_H\Big[\Big(\int f\,d\mu_H\Big)^{2}\Big]
  \;=\;4\sum_{j\ \mathrm{odd}}\ \omega_{2j}\,W_{2j}(f),
  \qquad
  \omega_{2j}\eqdef\prod_{i=1}^{j}\frac{2i-1}{\,n-3+2i\,},
\]
and the degrees $\ell\not\equiv2\pmod4$ contribute nothing. In particular $\omega_2=\frac1{n-1}$, the
$\omega_{2j}$ are strictly decreasing in $j$, and $\omega_{2j}\le\big(\tfrac{2j}{n-1}\big)^{j}$.
\end{lemma}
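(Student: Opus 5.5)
We would prove \Cref{lem:second-moment} by expanding the square, averaging over $H$, and identifying the result as a rotation-invariant operator to which Funk--Hecke applies. Writing $\int f\,d\mu_H=\int f\,d\nu_H-\int f\,d\nu_{H^\perp}$ and squaring gives
\[
  \E_H\Big[\Big(\int f\,d\mu_H\Big)^2\Big]=\langle f,Tf\rangle,
  \qquad T\eqdef\E_H\big[\nu_H\otimes\nu_H+\nu_{H^\perp}\otimes\nu_{H^\perp}-\nu_H\otimes\nu_{H^\perp}-\nu_{H^\perp}\otimes\nu_H\big],
\]
viewed as a kernel on $S^{n-1}\times S^{n-1}$. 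Because $H$ and $H^\perp$ have the same law, this becomes $2(K_{\mathrm{same}}-K_{\mathrm{cross}})$. Here $K_{\mathrm{same}}$ is the law of $(u,u')$ with $u,u'$ independent uniform on the sphere of one Haar $H$. $K_{\mathrm{cross}}$ has $u\in H$ and $u'\in H^\perp$. Both kernels are $\On$-invariant with first marginal $\sigma$, so each equals $\sigma(du)$ times a Markov kernel depending only on $t=\ip u{u'}$. Funk--Hecke therefore diagonalises $T$ on each $\cH_\ell$, with eigenvalue
\[
  2\big(\E_{\mathrm{same}}G_\ell(t)-\E_{\mathrm{cross}}G_\ell(t)\big).
\]

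The next step is to identify the two laws of $t$. In the cross case, $u'\perp u$ always, so $t=0$ and the eigenvalue contribution is $G_\ell(0)$. In the same case, condition on $u$; then $u'$ is uniform on the unit sphere of a Haar $H\ni u$. The component of $u'$ along $u$ is a coordinate of a uniform vector on $S^{n/2-1}$. Hence $t^2\sim\mathrm{Beta}(\tfrac12,\tfrac{n-2}4)$ with a symmetric sign. Odd $\ell$ then vanish by symmetry. For $\ell=2j$, we would compute $\E G_{2j}(t)$ from the Gegenbauer expansion termwise, using $\E t^{2k}=(\tfrac12)_k/(\tfrac n4)_k$. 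The sum is a terminating ${}_2F_1$ evaluated at $1$, which Chu--Vandermonde reduces to a product.

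The target is to show that $\E_{\mathrm{same}}G_{2j}=(-1)^j\,\omega_{2j}$ and, via the stated values of $C^\zeta_{2j}(0)$ and $C^\zeta_{2j}(1)$, that $G_{2j}(0)=(-1)^j\,\omega_{2j}$ as well. Hence
\[
  \E_{\mathrm{same}}G_{2j}-G_{2j}(0)=\big((-1)^j-(-1)^j\big)\,\omega_{2j}\cdot(\text{sign})\cdots.
\]
More precisely, the two terms should agree when $j$ is even and be opposite when $j$ is odd. That gives the eigenvalue $4\omega_{2j}$ for odd $j$ and $0$ for even $j$, which kills the degrees $\ell\equiv0\pmod4$. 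The main obstacle is this hypergeometric evaluation and the bookkeeping of signs. I would first check $j=1$ by hand: $G_2(t)=(nt^2-1)/(n-1)$, so $G_2(0)=-\tfrac1{n-1}$, and $\E t^2=\tfrac2n$ gives $\E_{\mathrm{same}}G_2=\tfrac1{n-1}$. The difference is $\tfrac2{n-1}$, and the eigenvalue is $4\omega_2$, which confirms the normalisation. I would then do the general Chu--Vandermonde step.

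The remaining claims are elementary. Each factor $\tfrac{2i-1}{n-3+2i}$ is less than $1$ for $n\ge4$, so $\omega_{2j}$ is strictly decreasing in $j$. Each factor is also at most $\tfrac{2j}{n-1}$ for $i\le j$, which gives $\omega_{2j}\le(2j/(n-1))^j$. Finally, Parseval over the $\cH_\ell$ turns $\langle f,Tf\rangle$ into the stated sum.
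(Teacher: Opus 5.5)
Your reduction is the paper's. Symmetry gives $2(K_{\mathrm{same}}-K_{\mathrm{cross}})$, and Funk--Hecke turns each kernel into the scalar $\E[G_\ell(t)]$. The cross kernel has $t=0$. In the same kernel $t$ is a coordinate of a uniform unit vector in dimension $n/2$, with $\E t^{2k}=(\tfrac12)_k/(\tfrac n4)_k$.

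The gap is in the one step that carries the lemma. The target you state, $\E_{\mathrm{same}}G_{2j}=(-1)^j\omega_{2j}$, is false. Taken literally, it makes the two scalars coincide at every degree, so the second moment would vanish identically. That is why your displayed difference collapses to $0$. The correct statement is $\E_{\mathrm{same}}G_{2j}=+\omega_{2j}$ for every $j$; the sign lives only in $G_{2j}(0)=(-1)^j\omega_{2j}$. Your $j=1$ check and your ``more precisely'' sentence are consistent only with this corrected target.

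The general evaluation is short, but it is exactly where the half dimension enters, so it cannot be deferred as bookkeeping. Insert $\E(2t)^{2k}=4^k(\tfrac12)_k/(\tfrac n4)_k$ into the Gegenbauer expansion. Then use $(2k)!=4^kk!(\tfrac12)_k$, $\Gamma(\zeta+j+k)=\Gamma(\zeta+j)(\zeta+j)_k$ and $1/(j-k)!=(-1)^k(-j)_k/j!$. This gives
\[
  \E\,C^\zeta_{2j}(t)\;=\;(-1)^j\frac{(\zeta)_j}{j!}\;{}_2F_1\Big(-j,\ \zeta+j;\ \tfrac n4;\ 1\Big)\;=\;(-1)^j\frac{(\zeta)_j}{j!}\cdot\frac{(\tfrac n4-\zeta-j)_j}{(\tfrac n4)_j},
\]
the second step being Chu--Vandermonde. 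Since $\zeta=\tfrac n2-1$, one has $\tfrac n4-\zeta-j=1-\tfrac n4-j$. Hence $(1-\tfrac n4-j)_j=(-1)^j(\tfrac n4)_j$, and the hypergeometric factor is exactly $(-1)^j$.

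It follows that $\E\,C^\zeta_{2j}(t)=(\zeta)_j/j!=|C^\zeta_{2j}(0)|$, that is, $\E_{\mathrm{same}}G_{2j}=|G_{2j}(0)|$. Next, $(2\zeta)_{2j}=4^j(\zeta)_j(\zeta+\tfrac12)_j$ and $(2j)!=4^jj!(\tfrac12)_j$ give $|G_{2j}(0)|=(\tfrac12)_j/(\zeta+\tfrac12)_j=\omega_{2j}$. So the eigenvalue at degree $2j$ is $2\big(\omega_{2j}-(-1)^j\omega_{2j}\big)$: it equals $4\omega_{2j}$ for odd $j$ and $0$ for even $j$.

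With this step supplied, your argument is the paper's proof. Your treatment of the monotonicity of $\omega_{2j}$ and of the bound $\omega_{2j}\le(2j/(n-1))^j$ is fine.
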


\begin{proof}
Expand the square and average over $H$, so that the whole expectation becomes a quadratic form in two
rotation-invariant two-point operators.  The argument then has three steps: Funk--Hecke diagonalizes
both operators, the Chu--Vandermonde identity evaluates the resulting Gegenbauer scalars, and at half
dimension the two scalars cancel at every degree $\ell\equiv0\pmod4$.

One preliminary.  For bounded Borel $f$ the averaged quadratic forms
$f\mapsto\E_H[(\int f\,d\nu_H)^2]$ and $f\mapsto\E_H[(\int f\,d\nu_{H^\perp})^2]$ are at most
$\norm f_2^2$ by Jensen, so both sides of the claimed identity extend to $L^2$.

Now expand.  The two square terms agree by the symmetry $H\leftrightarrow H^\perp$, so that
\[
  \E_H\Big[\Big(\int f\,d\mu_H\Big)^{2}\Big]=2\,\langle f,(Q_s-Q_o)f\rangle .
\]
Here $Q_s$ is the two-point operator of a \emph{common} subspace: $Q_sf(u)=\E[f(u')]$, for $u'$
drawn uniformly from the sphere of a Haar $H$ conditioned on $u\in H$, and $Q_o$ is the two-point
operator of \emph{opposite} subspaces, defined in the same way but with $u'$ drawn from the sphere of
$H^\perp$.  Both are self-adjoint and commute with rotations, so by Funk--Hecke each acts on
$\cH_\ell$ as the scalar $\E[G_\ell(t)]$, where $t=\ip u{u'}$.  What is left is to evaluate those two
scalars.

Consider first $Q_o$, where $u'\perp u$ exactly and, given $u$, the vector $u'$ is uniform on the
equator sphere $S^{n-1}\cap u^\perp$. Its scalar is $G_\ell(0)$, which vanishes for odd
$\ell$ and equals $(-1)^{j}\omega_{2j}$ for $\ell=2j$, by the Gegenbauer values recorded in
\Cref{app:harmonics} together with the product form of $\omega_{2j}$ derived at the end of this proof.

Consider next $Q_s$. Given $u$ the conditional law of $u'$ is that of $tu+\sqrt{1-t^2}\,w$ with $w$
uniform on the equator of $u$ and, independently, $t$ distributed as the first coordinate of a Haar
unit vector in dimension $m=n/2$, whose density is proportional to $(1-t^2)^{(m-3)/2}$. So the scalar is
$\E[G_\ell(t)]$, again $0$ for odd $\ell$. For $\ell=2j$ we insert the expansion of $C^\zeta_{2j}$ together with the moments
$\E[t^{2k}]=\frac{(1/2)_k}{(m/2)_k}$.  Using $(2k)!=4^k\,k!\,(1/2)_k$ and $(-j)_k=(-1)^k j!/(j-k)!$,
this gives
\[
  \E\big[C^\zeta_{2j}(t)\big]
  =\frac{(-1)^j\Gamma(\zeta+j)}{\Gamma(\zeta)\,j!}\;
   {}_2F_1\!\Big(-j,\ \zeta+j;\ \frac m2;\ 1\Big)
  =\frac{(-1)^j\Gamma(\zeta+j)}{\Gamma(\zeta)\,j!}\cdot
   \frac{(\frac m2-\zeta-j)_j}{(\frac m2)_j}\ ,
\]
by the Chu--Vandermonde identity, which evaluates ${}_2F_1$ at argument $1$ when its first parameter is
a negative integer.

The half-dimension now enters, and the result rests on the cancellation it produces. At $m=n/2$
we have $\frac m2-\zeta-j=1-\frac n4-j$, so
$(\frac m2-\zeta-j)_j=(-1)^j(\frac n4)_j=(-1)^j(\frac m2)_j$, the ${}_2F_1$ equals $(-1)^j$ exactly,
and
\[
  \E\big[G_{2j}(t)\big]=\frac{\Gamma(\zeta+j)}{\Gamma(\zeta)\,j!\,C^\zeta_{2j}(1)}
  =\big|G_{2j}(0)\big| = \omega_{2j}.
\]
The scalar of $Q_s-Q_o$ at degree $2j$ is $\omega_{2j}-(-1)^j\omega_{2j}$, which is zero for
even $j$ and $2\omega_{2j}$ for odd $j$, and Parseval gives the stated identity.

Now for the product form. Legendre duplication turns the two binomial coefficients into
$(2\zeta)_{2j}=4^{j}(\zeta)_j(\zeta+\frac12)_j$ and $(2j)!=4^{j}j!\,(\frac12)_j$, so that the
ratio collapses to
\[
  \omega_{2j}\ =\ \frac{(\frac12)_j}{(\zeta+\frac12)_j}\ =\ \prod_{i=1}^{j}\frac{i-\frac12}{\zeta+i-\frac12}
  \ =\ \prod_{i=1}^{j}\frac{2i-1}{n-3+2i},
\]
using $\zeta=\frac{n-2}2$. Each factor is less than $1$ and at most $\frac{2j}{n-1}$, which gives the
monotonicity and the last bound, as claimed.
\end{proof}

\paragraph{Two elementary checks on \Cref{lem:second-moment}.}  Only two features of \Cref{lem:second-moment} are used downstream.  The first is the value
$\omega_2=\frac1{n-1}$ at degree two, which dominates every estimate below.  The second is the
vanishing at degrees $\ell\equiv0\pmod4$, which is where the half dimension enters.  We can confirm both directly, without any harmonic analysis.

For degree two take $f(x)=x_1^2-x_2^2$.  For $u$ uniform on the unit sphere of an $m$-dimensional
subspace with projector $P$ we have $\E[uu^\top]=P/m$, so with $m=n/2$,
\[
  \int f\,d\mu_H=\frac{4\,(P_{11}-P_{22})}{n},
  \qquad \norm f_2^2=\frac4{n(n+2)} .
\]
Now $P_{11}\sim\mathrm{Beta}(n/4,n/4)$ has variance $\frac1{2n+4}$, and $\sum_iP_{ii}=\frac n2$ forces
$\E[P_{11}P_{22}]=\frac14-\frac1{(n-1)(2n+4)}$, so that
$\E[(P_{11}-P_{22})^2]=\frac n{(n-1)(n+2)}$ and
\[
  \E_H\Big[\Big(\int f\,d\mu_H\Big)^2\Big]=\frac{16}{(n-1)\,n\,(n+2)}
  \ =\ 4\,\omega_2\,\norm f_2^2\quad\text{at }\omega_2=\tfrac1{n-1} .
\]

For degree four take the harmonic part of $x_1^4$, which on the sphere is
$h(x)=x_1^4-\frac6{n+4}x_1^2+\frac3{(n+2)(n+4)}$.  The same identity for $\E[uu^\top]$, together with
$\E[u_1^4]=\frac{3P_{11}^2}{m(m+2)}$, gives for $d\eqdef P_{11}-\frac12$
\[
  \int h\,d\mu_H=\frac{6d}{m}\Big(\frac1{m+2}-\frac2{n+4}\Big) ,
\]
and at $m=n/2$ the bracket vanishes identically, since then $m+2=\frac{n+4}2$.  So $\int h\,d\mu_H=0$ for \emph{every} $H$, not merely in expectation, whereas the bracket is
nonzero at every other dimension.  The vanishing at degrees $\ell\equiv0\pmod4$ is not an
artifact of the Gegenbauer bookkeeping.  It is the same cancellation that the proof above locates in
the Chu--Vandermonde step.

\subsection{A level-\texorpdfstring{$\ell$}{ell} inequality, and the leakage of one cell}
\label{app:level-ell}

\begin{lemma}[Level-$\ell$ inequality on the sphere]
\label{lem:level-ell}
There are universal $C_0,c_0'>0$ such that for every measurable $A\subseteq S^{n-1}$ with
$\alpha=\sigma(A)\le\frac12$ and every $1\le\ell\le2\ln(e/\alpha)$ with
$\ell\,\ln\!\big(2\ln(e/\alpha)\big)\le c_0'\,n$,
\[
  W_\ell(\mathbf 1_A)\ \le\ \alpha^2\Big(\frac{C_0\ln(e/\alpha)}{\ell}\Big)^{\ell}.
\]
\end{lemma}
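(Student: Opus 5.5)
The plan is the standard hypercontractive proof of the level-$\ell$ inequality on the Boolean cube, transplanted to the sphere through the heat semigroup of \Cref{app:harmonics}. Write $f=\mathbf 1_A$ and $f_\ell=P_\ell f$. First express $W_\ell(f)$ as a correlation:
\[
  W_\ell(f)=\langle f_\ell,f\rangle=\int_A f_\ell\,d\sigma\le\alpha^{1-1/q}\,\norm{f_\ell}_q
\]
by H\"older, for an exponent $q\ge2$ to be chosen. Everything then reduces to a reverse-hypercontractive bound on the single harmonic $f_\ell$, namely
\[
  \norm{f_\ell}_q\le K(q,\ell)\,\norm{f_\ell}_2 .
\]
With such a bound, $W_\ell\le\alpha^{1-1/q}K\sqrt{W_\ell}$, hence $W_\ell\le\alpha^{2-2/q}K^2$.

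To obtain $K$, I use the duality form of Gross's estimate stated above. Since $e^{t\Delta}$ is self-adjoint and maps $L^p\to L^2$ with norm $1$ for $p=1+e^{-2(n-2)t}$, it maps $L^2\to L^{q}$ with norm $1$ for the conjugate exponent $q=p/(p-1)=1+e^{2(n-2)t}$. Apply this to $g=e^{t\ell(\ell+n-2)}f_\ell$, which satisfies $e^{t\Delta}g=f_\ell$. This gives
\[
  \norm{f_\ell}_q\le e^{t\ell(\ell+n-2)}\norm{f_\ell}_2 .
\]
Solving $e^{2(n-2)t}=q-1$ gives $t=\ln(q-1)/(2(n-2))$, so
\[
  K=(q-1)^{\ell(\ell+n-2)/(2(n-2))}=(q-1)^{\ell/2}\,(q-1)^{\ell(\ell)/(2(n-2))+O(\ell/n)} .
\]
The main factor is $(q-1)^{\ell/2}$, the sphere analogue of the cube's $(q-1)^{\ell/2}$. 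The correction exponent $\ell(\ell+n-2)/(2(n-2))-\ell/2=\ell^2/(2(n-2))$ is what the side hypothesis must control.

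Next choose $q$. Combining the pieces,
\[
  W_\ell\le\alpha^{2}\,\alpha^{-2/q}\,(q-1)^{\ell}\,(q-1)^{\ell^2/(n-2)} .
\]
Take $q=1+2\ln(e/\alpha)/\ell$, which is at least $2$ because $\ell\le2\ln(e/\alpha)$. Then $\alpha^{-2/q}=e^{(2/q)\ln(1/\alpha)}$. Since $q-1\ge2\ln(e/\alpha)/\ell$, we have $2/q\le 2/(q-1)\le\ell/\ln(e/\alpha)$, so $\alpha^{-2/q}\le e^{\ell}$. The factor $(q-1)^{\ell}$ equals $(2\ln(e/\alpha)/\ell)^{\ell}$. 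Together these give $\alpha^2(2e\ln(e/\alpha)/\ell)^{\ell}$, which is the claimed shape.

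The main obstacle is the curvature correction $(q-1)^{\ell^2/(n-2)}=\exp\big(\tfrac{\ell^2}{n-2}\ln(q-1)\big)$. It must be absorbed into $C_0^{\ell}$, which requires $\ell\ln(q-1)=O(n)$. Since $q-1\le2\ln(e/\alpha)$, this is exactly the hypothesis $\ell\ln(2\ln(e/\alpha))\le c_0'n$. Under it the correction is at most $e^{O(\ell)}$, giving the lemma with $C_0=2e\cdot e^{O(1)}$.

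Three things need checking. First, the exponent bookkeeping near $\ell$ comparable to $n$; this case is excluded by the hypothesis, because $\ln(2\ln(e/\alpha))\ge\ln 2$ when $\alpha\le\tfrac12$. Second, that Gross's inequality, proved for functions in the domain of the generator, applies to bounded $f_\ell$; it does, since $f_\ell$ is a polynomial. Third, the constant $n-2$ in the log-Sobolev inequality as quoted. If the quoted normalisation of the semigroup's time parameter differs by a constant factor, that only rescales $t$ and changes $C_0$ and $c_0'$ by universal factors.
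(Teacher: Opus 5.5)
Your proof is correct, and it is the paper's proof in dual form. The paper applies $L^p\to L^2$ hypercontractivity directly, $W_\ell(\mathbf 1_A)\le e^{2t\ell(\ell+n-2)}\norm{e^{t\Delta}\mathbf 1_A}_2^2\le e^{2t\ell(\ell+n-2)}\alpha^{2/p}$, while you go through H\"older and the adjoint $L^2\to L^q$ bound; you reach the same estimate $\alpha^{2-2/q}=\alpha^{2/p}$ with the same diffusion time ($q-1=1/\eps=2\ln(e/\alpha)/\ell$) and the same two $e^{\ell}$ error factors.

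Two minor nits. The bound $\norm{f_\ell}_q\le K\norm{f_\ell}_2$ is ordinary hypercontractivity in dual form, not reverse hypercontractivity. The stray $+O(\ell/n)$ in your expression for $K$ should be dropped, since the split $\ell/2+\ell^2/(2(n-2))$ is exact.
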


\begin{proof}
The argument is hypercontractivity of the heat semigroup, optimized over the diffusion time.  The
semigroup acts on degree-$\ell$ harmonics as multiplication by $e^{-t\ell(\ell+n-2)}$, so for
$f=\mathbf 1_A$ we may undo that damping at level $\ell$ and then apply the hypercontractive estimate
of \Cref{app:harmonics},
\begin{equation}
  W_\ell(\mathbf 1_A)\ \le\ e^{2t\ell(\ell+n-2)}\,\norm{e^{t\Delta}\mathbf 1_A}_2^2
  \ \le\ e^{2t\ell(\ell+n-2)}\,\alpha^{2/p} .
  \label{eq:level-ell-raw}
\end{equation}
This holds at every diffusion time, and it remains to choose one.

Reparameterize by $\eps\eqdef e^{-2(n-2)t}\in(0,1]$, so that $p=1+\eps$, the prefactor becomes
$e^{2t\ell(\ell+n-2)}=\eps^{-\ell(1+\frac{\ell}{n-2})}$, and
$\alpha^{2/p}\le\alpha^2e^{2\eps\ln(1/\alpha)}$. Take
\[
  \eps\ \eqdef\ \frac{\ell}{2\ln(e/\alpha)}\ \le\ 1 ,
\]
which is admissible by the hypothesis $\ell\le2\ln(e/\alpha)$.

Two error factors must then be controlled. The first is $e^{2\eps\ln(1/\alpha)}\le e^{\ell}$,
immediate from the choice of $\eps$. The second is the excess produced by the term $\frac{\ell}{n-2}$
in the exponent of the prefactor,
\[
  \eps^{-\ell^2/(n-2)}\ =\ \exp\Big(\frac{\ell^2}{n-2}\,\ln\frac{2\ln(e/\alpha)}{\ell}\Big)\ \le\ e^{\ell} ,
\]
the last inequality holding by the second hypothesis $\ell\,\ln\!\big(2\ln(e/\alpha)\big)\le c_0'n$.
Substituting both into \eqref{eq:level-ell-raw} and absorbing the resulting factor $e^{2\ell}$ into
$C_0^{\ell}$ gives the claim.
\end{proof}

\begin{lemma}[The leakage of a single cell]
\label{lem:cell-leak}
Let $n\ge4$ be even, let $H\subseteq\bbR^n$ be Haar of dimension $n/2$, let $A\subseteq S^{n-1}$ be
measurable, and put $\alpha\eqdef\sigma(A)$ and $s\eqdef\ln(e/\alpha)$. Then
\[
  \E_H\big|\mu_H(A)\big|\ \le\ 2\alpha
\]
for every $A$, and there are universal $C,c_0>0$ such that if $\alpha\le\frac12$ and $s\le c_0\sqrt n$
then also
\begin{equation}
  \E_H\big|\mu_H(A)\big|\ \le\ C\,\frac{\alpha\,s}{\sqrt n}\ +\ 2\sqrt\alpha\,\theta(s),
  \qquad \theta(s)\eqdef\Big(\frac{4s}{n-1}\Big)^{\lceil s\rceil/2} .
  \label{eq:oneway-cell}
\end{equation}
\end{lemma}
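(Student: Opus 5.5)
The trivial bound is immediate: $|\mu_H(A)|\le\nu_H(A)+\nu_{H^\perp}(A)$. Averaging over Haar $H$, $\E_H\nu_H(A)=\E_H\nu_{H^\perp}(A)=\sigma(A)=\alpha$, because a uniform vector on the sphere of a Haar half-subspace is uniform on $S^{n-1}$. This gives $\E_H|\mu_H(A)|\le2\alpha$ for every $A$.

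For \eqref{eq:oneway-cell} I will split $\mathbf 1_A$ by harmonic degree, $\mathbf 1_A=g+h$. Here $g\eqdef\sum_{\ell\le L}P_\ell\mathbf 1_A$ is the low-degree part and $h$ the remainder, with cutoff $L\eqdef2\lceil s\rceil$. Since $\mu_H$ is a signed measure, the triangle inequality gives $\E_H|\mu_H(A)|\le\E_H|\int g\,d\mu_H|+\E_H|\int h\,d\mu_H|$, and I will bound each piece by Cauchy--Schwarz in $H$ followed by \Cref{lem:second-moment}. By that lemma only degrees $2j$ with $j$ odd contribute, with weights $4\omega_{2j}$.

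For the low part the estimate is
\[
\E_H\Big[\Big(\int g\,d\mu_H\Big)^2\Big]=4\sum_{j\ \mathrm{odd},\,2j\le L}\omega_{2j}W_{2j}(\mathbf 1_A).
\]
I will insert \Cref{lem:level-ell} at $\ell=2j\le2\ln(e/\alpha)$. Its second hypothesis, $\ell\ln(2s)\le c_0'n$, follows from $s\le c_0\sqrt n$ for small $c_0$. I will also use $\omega_{2j}\le(2j/(n-1))^j$. Then term $j$ is at most $\alpha^2\big(2j/(n-1)\big)^j\big(C_0s/(2j)\big)^{2j}=\alpha^2\big(C_0^2s^2/(2j(n-1))\big)^j$. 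Because $s^2\le c_0^2n$, the ratio $C_0^2s^2/(2j(n-1))$ is below $\frac12$ for small $c_0$. The series is therefore geometric and dominated by $j=1$, giving $O(\alpha^2s^2/n)$. Taking the square root yields $C\alpha s/\sqrt n$. This is the first term, and it is also where the main constraint $s\le c_0\sqrt n$ is spent.

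For the tail I will use monotonicity of $\omega_{2j}$. Every contributing degree exceeds $L$, so it has $j\ge\lceil s\rceil+1>\lceil s\rceil$. Hence $\omega_{2j}\le\omega_{2\lceil s\rceil}\le(2\lceil s\rceil/(n-1))^{\lceil s\rceil}$, and Parseval gives $\sum_\ell W_\ell(\mathbf 1_A)\le\alpha$. Together these give
\[
\E_H\Big[\Big(\int h\,d\mu_H\Big)^2\Big]\le4\alpha\,(2\lceil s\rceil/(n-1))^{\lceil s\rceil}.
\]
Since $\lceil s\rceil\le2s$ (as $s\ge1$), the right side is at most $4\alpha\,(4s/(n-1))^{\lceil s\rceil}$, whose square root is $2\sqrt\alpha\,\theta(s)$.

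The main obstacle is the bookkeeping in the low-degree sum. The cutoff must stay within the range $\ell\le2\ln(e/\alpha)$ where \Cref{lem:level-ell} applies, and the loss $(s/j)^{2j}$ must be beaten uniformly by $\omega_{2j}\approx(2j/n)^j$. I expect this to work because $\omega_{2j}$ decays like $n^{-j}$ while the level-$\ell$ gain grows only like $s^{2j}$. If the cutoff ever exceeds $2\ln(e/\alpha)$, I would lower it to $L=2\lfloor s\rfloor$ and absorb the off-by-one into the tail estimate.
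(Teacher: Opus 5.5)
Your argument is essentially the paper's: Cauchy--Schwarz against \Cref{lem:second-moment}, \Cref{lem:level-ell} on the low odd-$j$ terms with the geometric sum dominated by $j=1$, and monotonicity of $\omega_{2j}$ plus Parseval on the tail. Splitting $\mathbf 1_A=g+h$ and using the triangle inequality gives the same bound as the paper's single Cauchy--Schwarz followed by $\sqrt{x+y}\le\sqrt x+\sqrt y$.

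One correction: your fallback is the generic case, not an edge case. Whenever $s\notin\mathbb Z$ we have $2\lceil s\rceil>2\ln(e/\alpha)$, so the top low-degree term violates the hypothesis of \Cref{lem:level-ell}. Take the cutoff $2\lfloor s\rfloor$ from the start; the paper equivalently splits at $j<\lceil s\rceil$. The tail then still has $j\ge\lceil s\rceil$, so your tail estimate goes through unchanged.
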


\begin{proof}
The first bound needs no spectral information. By rotation invariance $\E_H\nu_H(A)$ and
$\E_H\nu_{H^\perp}(A)$ both equal $\sigma(A)=\alpha$, so the triangle inequality gives
$\E_H|\mu_H(A)|\le\E_H\nu_H(A)+\E_H\nu_{H^\perp}(A)=2\alpha$.

For the second bound, start from Cauchy--Schwarz and the identity of \Cref{lem:second-moment},
\[
  \E_H\big|\mu_H(A)\big|\ \le\ \Big(\E_H\big[\mu_H(A)^{2}\big]\Big)^{1/2}
  \ =\ 2\Big(\sum_{j\ \mathrm{odd}}\omega_{2j}\,W_{2j}(\mathbf 1_A)\Big)^{1/2},
\]
and split the sum at $j^*\eqdef\lceil s\rceil$, which is at least $2$ because $\alpha\le\frac12$
forces $s>1$. The two ranges are controlled by different means, the level-$\ell$ inequality below
$j^*$ and the decay of $\omega$ above it.

Consider first the terms with $j<j^*$. There $2j<2s$, so \Cref{lem:level-ell} applies at $\ell=2j$.  Its second hypothesis reads
$2j\ln(2s)\le c_0'n$, where $c_0'$ is the constant of \Cref{lem:level-ell}, and this holds since
$j<s\le c_0\sqrt n$ once $c_0$ is small in terms of $c_0'$. Inserting it
together with $\omega_{2j}\le(\frac{2j}{n-1})^j$,
\[
  \omega_{2j}W_{2j}(\mathbf 1_A)\ \le\ \alpha^2\Big(\frac{2j}{n-1}\Big)^{j}\Big(\frac{C_0s}{2j}\Big)^{2j}
  \ =\ \alpha^2\Big(\frac{C_1s^2}{j\,(n-1)}\Big)^{j},\qquad C_1\eqdef\tfrac{C_0^2}2,
\]
and summing over $j\ge1$ gives at most twice the first term, namely $2C_1\alpha^2s^2/(n-1)$, because
the ratio of consecutive terms is at most $C_1s^2/(n-1)\le\frac12$ once $c_0$ is small.

Consider next the terms with $j\ge j^*$, where monotonicity of $\omega$ and Parseval give
\[
  \sum_{j\ge j^*}\omega_{2j}W_{2j}(\mathbf 1_A)\ \le\ \omega_{2j^*}\alpha\ \le\ \alpha\,\theta(s)^2 ,
\]
the last step because $\lceil s\rceil\le2s$ for $s\ge1$. Adding the two ranges, taking square roots and using
$\sqrt{x+y}\le\sqrt x+\sqrt y$ gives \eqref{eq:oneway-cell}.
\end{proof}

\subsection{Proof of the one-way rate}
\label{app:oneway-proof}

\begin{proof}[Proof of the upper bound in \Cref{prop:oneway-rate}]
Fix the shared randomness, so that the protocol is deterministic. Alice's message then realizes a
measurable partition $\{A_m\}$ of $S^{n-1}$ into at most $2^{L+1}$ cells (\Cref{sec:prelim-cc}), and
Bob, knowing $H$ and $m$, answers optimally.  Write $\alpha_m\eqdef\sigma(A_m)$ and
$s_m\eqdef\ln(e/\alpha_m)$ for the parameters of \Cref{lem:cell-leak} at the cell $A_m$. Since the two challenge distributions on $(m,H)$ have the
same $H$-marginal and conditional laws $\nu_H(A_m)$ and $\nu_{H^\perp}(A_m)$, the advantage satisfies
\[
  \beta_{\mathrm{ow}}\ \le\ \tfrac14\,\E_H\sum_m\big|\mu_H(A_m)\big| .
\]

Three reductions dispose of the degenerate ranges.  Set $c\eqdef c_0/(4\ln2)$.
\begin{enumerate}[label=(\arabic*),leftmargin=2.2em,itemsep=3pt,topsep=3pt]
\item \emph{Long messages.}  For $L\ge c\sqrt n$ the right-hand side of the proposition exceeds
  $\frac12$ once its constant $C$ is large enough, while $\beta_{\mathrm{ow}}\le\frac12$ always.  The
  bound holds there, so we may assume $L\le c\sqrt n$.
\item \emph{Small $n$.}  We may assume $n\ge n_0$ for a universal $n_0$ fixed at the end of the proof,
  since for $n<n_0$ the proposition holds by enlarging its constant.
\item \emph{Large cells.}  Because $\mu_H(S^{n-1})=0$ we may replace any cell of measure above
  $\frac12$ by its complement, and since at most one cell has such a measure this costs one further
  application of the bounds below.
\end{enumerate}
What is left is to sum \Cref{lem:cell-leak} over the cells, which we do in two ranges of cell size,
the very small cells being disposed of by the cruder of its two bounds.

Consider first the cells with $s_m>c_0\sqrt n$, for which the first bound of \Cref{lem:cell-leak}
suffices. Each such cell has $\alpha_m<e^{1-c_0\sqrt n}$, and there are at most $2^{L+1}$ of them, so
their total contribution is at most $2^{L+2}e^{1-c_0\sqrt n}\le n^{-1/2}$, the last step because
$L\le c_0\sqrt n/(4\ln2)$ and $n\ge n_0$.

Consider next the cells with $s_m\le c_0\sqrt n$, where \eqref{eq:oneway-cell} applies and its two
terms are summed separately. The first term sums by the entropy bound, with $C$ the constant of \Cref{lem:cell-leak}, since $\sum_m\alpha_m\le1$ and
there are at most $2^{L+1}$ cells,
\[
  \frac C{\sqrt n}\sum_m\alpha_m\ln\frac e{\alpha_m}\ \le\ \frac{C\big(1+(L+1)\ln2\big)}{\sqrt n}.
\]

The second term sums by counting cells at each scale. Fewer than $e^{k}$ cells have $s_m\in[k,k+1)$,
since each of those has $\alpha_m>e^{-k}$, and for such a cell
$\theta(s_m)\le(\frac{4(k+1)}{n-1})^{k/2}$, the base being at most one for $n\ge n_0$. The cells in this range have
$s_m\le c_0\sqrt n$, so only the scales $1\le k\le k_{\max}\eqdef\lceil c_0\sqrt n\rceil$ occur, and
\[
  \sum_m2\sqrt{\alpha_m}\,\theta(s_m)
  \ \le\ 2\sum_{k=1}^{k_{\max}}e^{k}\Big(\frac{4(k+1)}{n-1}\Big)^{k/2}
  \ =\ 2\sum_{k=1}^{k_{\max}}\Big(\frac{4e^{2}(k+1)}{n-1}\Big)^{k/2} .
\]
Over $k\le k_{\max}$ the base is at most
\[
  \vartheta\ \eqdef\ \frac{4e^{2}(k_{\max}+1)}{n-1}\ =\ O\!\Big(\frac{c_0}{\sqrt n}\Big),
\]
so there is a universal $n_0$ beyond which $\vartheta\le\frac14$.  We take $n_0$ large enough that
the two earlier steps that assumed it hold as well. Separating the first scale from the rest
and bounding every base by $\vartheta$,
\[
  2\sum_{k=1}^{k_{\max}}\Big(\frac{4e^{2}(k+1)}{n-1}\Big)^{k/2}
  \ \le\ 2\Big(\frac{8e^{2}}{n-1}\Big)^{1/2}+2\sum_{k\ge2}\vartheta^{k/2}
  \ \le\ 2\Big(\frac{8e^{2}}{n-1}\Big)^{1/2}+\frac{2\vartheta}{1-\sqrt\vartheta}
  \ \le\ \frac{C'}{\sqrt n},
\]
both terms being $O(n^{-1/2})$.

Putting everything together, the three contributions just bounded are each $O\big((L+1)/\sqrt n\big)$,
so adding them and dividing by $4$ gives the claim.
\end{proof}

\section{Deferred proofs}
\label{app:deferred}

This appendix collects two statements that we use only locally: the counting argument behind the
comparison drawn in \Cref{sec:framework}, and the side result on reference copies quoted in
\Cref{sec:advice-is-a-query}.

\subsection{Classical queries do not synthesize}
\label{app:def-synth}

The following counting argument is folklore. It is the argument that \Cref{sec:intro-lmw} and
\Cref{sec:framework} appeal to when they call classical queries the easy case, and it should
be contrasted with~\cite[Thm.~4.1]{Rosenthal23}, where a single \emph{coherent} query to a classical
oracle synthesizes an arbitrary state.

\begin{proposition}[Classical queries do not synthesize]
\label{obs:classical-synthesis}
Fix $m$, and fix a circuit on $m$ output qubits and any number of ancillas that makes $T$ classical
queries to an oracle $g\colon\{0,1\}^*\to\{0,1\}$, each query string being computed from an
oracle-independent random seed $s$ and from the previous answers.  Write $\rho_g$ for the reduced
state of its output on the $m$ output qubits. If $T\le2^m-m-3$, there is an $m$-qubit pure state $\ket\psi$
such that $\bra\psi\rho_g\ket\psi\le\frac34$, and hence $\td(\rho_g,\proj\psi)\ge\frac14$, for every $g$. In
particular this holds for every $T=\poly(m)$.
\end{proposition}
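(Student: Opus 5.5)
The argument is a counting argument followed by a dimension count.

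\emph{Step 1: few reachable states.} Fix the seed $s$. The circuit is then a decision tree of depth $T$. The answers $a_1,\dots,a_T\in\{0,1\}$ determine every later query string, and beyond the query strings the oracle enters only through those answer bits. So for fixed $s$ the output state on the $m$ qubits is a fixed function of the answer string. Call it $\sigma_{s,a}$, with $a\in\{0,1\}^T$. For a given $g$, the state $\rho_g$ is the average over $s$ of $\sigma_{s,a(g,s)}$. This is a mixture over seeds, and different seeds may pick different answer strings, so a count over individual states $\sigma_{s,a}$ is not enough.

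Instead we work with the effective averaged object. The acceptance value $\bra\psi\rho_g\ket\psi=\E_s\bra\psi\sigma_{s,a(g,s)}\ket\psi$ depends on $g$ only through the ``strategy'' $s\mapsto a(g,s)$. A cleaner route is to note that every $\sigma_{s,a}$ is a state on $\bbC^{2^m}$, so that
\[
  \bra\psi\rho_g\ket\psi\ \le\ \max_{s,a}\bra\psi\sigma_{s,a}\ket\psi .
\]
This holds because an average is at most a maximum. Hence it suffices to find $\ket\psi$ with $\bra\psi\sigma\ket\psi\le\frac34$ for every $\sigma$ in the set $\mathcal S=\{\sigma_{s,a}\}$. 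That set is not finite if the seed is continuous. We therefore discretize the seed, or, better, use that for each fixed $a$ the map $s\mapsto\sigma_{s,a}$ ranges over states whose averages we never need.

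The cleanest fix is to treat the seed's randomness as a mixture. The internal randomness can be absorbed into the circuit as a fixed ancilla register prepared in a fixed superposition. The query strings are then computed coherently from a measured seed, which is equivalent. We therefore assume the circuit is deterministic given $g$, except for quantum randomness. Then the answers are deterministic, so $\rho_g\in\{\sigma_a:a\in\{0,1\}^T\}$, a set of at most $2^T$ states.

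\emph{Step 2: a state far from $2^T$ states.} Choose $\ket\psi$ orthogonal to the top eigenvector of each $\sigma_a$. More robustly, choose $\ket\psi$ in the orthogonal complement of the span of the $2^T$ top eigenvectors. This fails once $2^T\ge2^m$, so a pure dimension count is too weak for $T$ near $2^m$. The stated threshold $T\le2^m-m-3$ instead suggests a count of $T$ itself against the dimension $2^m$.

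So the plan is to use the decision-tree structure directly. Along the single realized path, the final state depends on $g$ only through $T$ bits. Average $\bra\psi\rho_g\ket\psi$ over a Haar-random $\ket\psi$: for each fixed state this average is $2^{-m}$. Then apply a concentration or union bound over the at most $2^T$ states $\sigma_a$. The bound $\Pr_\psi[\bra\psi\sigma\ket\psi>\frac34]\le\exp(-c2^m)$ beats the union over $2^T$ states whenever $T\le c'2^m$. The exact constant $2^m-m-3$ would come from the sharp tail $\Pr[\bra\psi\sigma\ket\psi\ge t]\le(1-t)^{2^m-1}$ for pure $\sigma$, extended to mixed $\sigma$ by convexity over its eigenvectors. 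At $t=\frac34$ this gives $4^{-(2^m-1)}$, and the union over $2^T$ answer strings then requires $T<2(2^m-1)$. The stated hypothesis sits comfortably below that.

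\emph{Main obstacle.} The delicate point is Step 1: handling the seed and the intermediate measurements so that the set of achievable output states is genuinely of size $2^T$ uniformly in $g$. I would first try to make the seed part of the quantum workspace by deferred measurement, conditioning on the answer transcript rather than on $s$.
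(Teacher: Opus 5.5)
There is a genuine gap in Step~1, at exactly the point you flag as delicate: the seed.

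Your ``cleanest fix'' does not make the answers deterministic. Suppose the seed is moved into a register that is measured before the queries. The outcome is still random, and the query strings are still computed from it, so $\rho_g=\E_s\,\sigma_{s,a(g,s)}$ remains a genuine mixture over seeds. Deferred measurement is not available, because a classical query needs a measured query string. Conditioning on the answer transcript alone does not help either, because the conditional law of $s$ given $a$ depends on $g$.

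Consequently the states $\rho_g$ do not range over a $g$-independent set of $2^T$ states. Take $T=1$, a seed uniform on $\{0,1\}^k$, and query string $s$: there can be $2^{2^k}$ distinct $\rho_g$, and no union bound over them survives.

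Your fallback $\bra\psi\rho_g\ket\psi\le\max_{s,a}\bra\psi\sigma_{s,a}\ket\psi$ is true but useless, since nothing bounds the seed space. Consider a circuit with $T=0$ that uses its seed to prepare a uniformly chosen point of a fine net of pure $m$-qubit states. It makes that maximum close to $1$ for every $\ket\psi$, although the proposition certainly holds for it.

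The missing idea is to keep the seed inside an expectation, and to choose $\ket\psi$ good on average over seeds rather than for every seed. For fixed $s$ the set $F_s\eqdef\{\sigma_{s,a}:a\in\{0,1\}^T\}$ has at most $2^T$ members and does not depend on $g$. So with $m_s(\psi)\eqdef\max_{\sigma\in F_s}\bra\psi\sigma\ket\psi$ you get $\bra\psi\rho_g\ket\psi\le\E_s\,m_s(\psi)$ for all $g$ simultaneously.

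Your Step~2 then bounds $\Pr_\psi[m_s(\psi)\ge t]$ uniformly in $s$. It uses a Haar-random $\ket\psi$, the exact tail $(1-t)^{N-1}$ with $N=2^m$, and a union bound over the $N$ eigenvectors of each $\sigma$ and over $F_s$. Exchanging the two expectations gives $\E_\psi\E_s\,m_s(\psi)\le t+\sup_s\Pr_\psi[m_s(\psi)\ge t]$, hence one fixed $\ket\psi$ that works for every $g$.

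This forces a threshold strictly below $\frac34$. At $t=\frac12$ the failure term is at most $N\,2^T\,2^{-(N-1)}$, which is at most $\frac14$ precisely when $T\le2^m-m-3$. That is the origin of the hypothesis, not your count at $t=\frac34$. That count also drops the factor $N$ contributed by the eigenvector union bound. With this repair your argument becomes the paper's.
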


\begin{proof}
Fix the seed $s$. The first query string is then fixed, and by induction, once the first $i$ answers
are fixed so is the $(i+1)$-st query string. Hence the transcript is a function of the answer vector
$(a_1,\dots,a_T)\in\{0,1\}^T$, and the conditional output $\rho_{g,s}$ ranges, as $g$ varies, over a
set $F_s$ of at most $2^T$ states that does not depend on $g$. For any pure $\ket\psi$,
\[
  \bra\psi\rho_g\ket\psi\ =\ \E_s\bra\psi\rho_{g,s}\ket\psi\ \le\ \E_s\,m_s(\psi),\qquad
  m_s(\psi)\eqdef\max_{\sigma\in F_s}\bra\psi\sigma\ket\psi .
\]
So it is enough to exhibit a single $\ket\psi$ whose overlap with $F_s$ is small on average over
$s$.  A Haar-random $\ket\psi$ will do, and the rest of the proof verifies this.

Put $N=2^m$ and draw $\ket\psi$ Haar-random in $\bbC^N$.  Against a fixed unit vector $\ket\phi$,
\[
  \Pr_\psi\big[\,|\ip\phi\psi|^2\ge\tfrac12\,\big]\ =\ 2^{-(N-1)} .
\]
A state $\sigma$ overlaps $\ket\psi$ by at most its largest overlap with one of its $N$
eigenvectors, so the same estimate survives a factor $N$:
\[
  \Pr_\psi\big[\,\bra\psi\sigma\ket\psi\ge\tfrac12\,\big]\ \le\ N\,2^{-(N-1)} .
\]
Now union bound over the at most $2^T$ members of $F_s$.  Since $T\le2^m-m-3$, the exponent is at
most $-2$, and so for each fixed seed $s$,
\[
  \Pr_\psi\big[\,m_s(\psi)\ge\tfrac12\,\big]\ \le\ N\,2^{\,T-N+1}\ \le\ \tfrac14 .
\]
Split the expectation of $m_s$ on that event.  Off it $m_s<\frac12$, and on it $m_s\le1$ while the
event itself has probability at most $\frac14$, so $\E_\psi\E_sm_s(\psi)\le\frac12+\frac14$.  Some
fixed $\ket\psi$ must then have $\E_sm_s(\psi)\le\frac34$, and for that $\ket\psi$ the first display
gives $\bra\psi\rho_g\ket\psi\le\frac34$ for every $g$.

Testing with the effect $\proj\psi$ converts an overlap into a trace distance:
$\td(\rho,\proj\psi)\ge1-\bra\psi\rho\ket\psi$ for every $\rho$.  That is the claim.
\end{proof}

\subsection{Reference copies accumulate as a random walk}
\label{app:def-copies}

This subsection states and proves the side result quoted in \Cref{sec:advice-is-a-query}: $q$
reference copies of one of the two states are worth $\sqrt q$ rather than $q$, up to a logarithm,
both as an attack and as an upper bound.  Nothing in the separation depends on it.

\begin{proposition}[Reference copies accumulate as a random walk]
\label{prop:copies-sqrt}
Let $n\ge4$ be even and $1\le q\le n^2$, and for $b\in\{0,1\}$ put
$\mu_b\eqdef\rho_0^{\otimes q}\otimes\rho_b$ on $(\bbC^n)^{\otimes(q+1)}$: a distinguisher holding
$q$ reference copies of the $b=0$ state alongside its challenge. Then:
\begin{enumerate}[label=(\roman*),leftmargin=2.2em]
\item \emph{(Attack, pointwise in $H$.)} There is a single effect $\Pi_q$ on
  $(\bbC^n)^{\otimes(q+1)}$, depending only on $n$ and $q$, such that for \emph{every} half-dimensional
  subspace $H$,
  \[
    \tr(\Pi_q\,\mu_0)-\tr(\Pi_q\,\mu_1)\ \ge\ \frac{1}{40}\,\frac{\sqrt{q}}{n\sqrt{\log_2(4n)}}\ ,
  \]
  so the bias of the corresponding distinguisher is at least $\tfrac1{80}\sqrt{q/\log_2(4n)}\,/\,n$.
\item \emph{(Matching upper bound, on average.)} Every distinguisher measuring
  $\mu_b$ with an $H$-independent effect has average bias
  $\E_H[\mathrm{bias}]\le C\sqrt{2q\log(2n)}\,/\,n$, with $C$ the constant of
  \Cref{prop:advice-sharp}.
\end{enumerate}
Part~(i) holds for every fixed half-dimensional $H$, real or complex; part~(ii) holds in either
ensemble, with $C$ the corresponding constant of \Cref{lem:khintchine}.
\end{proposition}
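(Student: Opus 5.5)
For part~(ii) I would treat the reference copies as quantum advice and apply \Cref{prop:advice-sharp} directly. Fix an $H$-independent effect $\Pi$ on $(\bbC^n)^{\otimes q}\otimes\bbC^n$. Since $\mu_0-\mu_1=\rho_0^{\otimes q}\otimes\frac{2R_H}{n}$, the bias is
\[
  \tfrac12\tr\big(\Pi(\mu_0-\mu_1)\big)=\tfrac1n\tr\big[\Pi(\sigma_H\otimes R_H)\big],\qquad \sigma_H\eqdef\rho_0^{\otimes q} .
\]
Here $\sigma_H$ is a mixed state on a register of dimension $d_\cA=n^q$. Writing it as a convex combination of pure states shows that the bias is at most the supremum over unit $\ket\sigma$. \Cref{prop:advice-sharp} bounds the expectation of that supremum by $C\sqrt{\log(2n^q)}/n\le C\sqrt{2q\log(2n)}/n$, with no purification needed. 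The same computation goes through in the real ensemble with the real constant of \Cref{lem:khintchine}. I expect this part to be routine.

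For part~(i) I would build the effect from the observable
\[
  S\eqdef\sum_{i=1}^q\mathrm{SWAP}_{0,i},
\]
where register $0$ holds the challenge. This is the Jucys--Murphy element for the transposition $(0\,i)$ acting on $q+1$ tensor factors. The first step is the mean. Since $\tr(\mathrm{SWAP}\,(\rho_0\otimes\rho_b))=\tr(\rho_0\rho_b)$, which equals $2/n$ for $b=0$ and $0$ for $b=1$, we get $\tr(S\mu_0)-\tr(S\mu_1)=2q/n$ for every $H$. The naive effect $(I+S/q)/2$ only turns this into a gap of $1/n$, so the point is to clip. I would take
\[
  \Pi_q\eqdef\tfrac12\big(I+\mathrm{clip}_t(S)/t\big),\qquad t\asymp\sqrt{q\log_2(4n)},
\]
where $\mathrm{clip}_t$ is the spectral truncation of $S$ to $[-t,t]$. $\Pi_q$ depends only on $(n,q)$. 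Its gap is
\[
  \frac{1}{2t}\,\tr\big(\mathrm{clip}_t(S)(\mu_0-\mu_1)\big)=\frac{q}{nt}-\frac{1}{2t}\,\tr\big((S-\mathrm{clip}_t(S))(\mu_0-\mu_1)\big).
\]
The first term is of order $\sqrt{q}/(n\sqrt{\log n})$, and the second is the loss from clipping. Everything therefore reduces to showing that the truncation loss under $\mu_0$ and under $\mu_1$ is at most a small constant times $q/n$.

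The main obstacle is this tail estimate, and it must hold pointwise in $H$. My plan is to compute moments of $S$. Fix the challenge as a pure vector $\ket v$ drawn from $\rho_b$, and each copy as an independent $\ket{w_i}$ drawn from $\rho_0$. Expanding $\tr(S^k\mu_b)$ gives a sum over words of transpositions. Each word reduces to traces of products of the $\rho$'s along the cycles of the resulting permutation. Every nontrivial cycle carries a factor $(2/n)^{\text{length}-1}$, since $\tr\rho_0^j=(2/n)^{j-1}$ and $\rho_1$ kills any cycle it shares with a $\rho_0$. So $\tr(S^{2k}\mu_b)$ is dominated by the words that reduce to the identity. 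Their count is at most $(2k-1)!!\,q^k$, which gives the subgaussian moment bound
\[
  \tr(S^{2k}\mu_b)\le (Cqk)^k\big(1+O(k^2/n)\big).
\]
If the combinatorics become unwieldy, I would instead use the representation theory: $S$ acts on each isotypic component of $S_{q+1}$ by the content of the box that $0$ occupies, which lets me read off the spectral weight that $\mu_b$ places above $t$. Taking $k\asymp\log(4n)$ and applying Markov's inequality makes the weight above $t$, multiplied by the crude bound $\opnorm S\le q$, at most $q\,n^{-2}$. That is negligible against $q/n$ because $q\le n^2$. Tuning the constants then gives the stated $\frac1{40}$, and halving it gives the bias bound of $\frac{1}{80}$.
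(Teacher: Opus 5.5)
Part~(ii) is fine. Bounding the bias by the supremum over pure advice states and applying \Cref{prop:advice-sharp} with $d_\cA=n^q$ is a valid substitute for the paper's purification, and it even saves a factor $\sqrt2$. Part~(i) follows the paper's architecture. You use the same swap sum and the exact gap $2q/n$. You use the bound $0\le\langle P_\pi\rangle_{\mu_b}\le(2/n)^{|\pi|}$, with $|\pi|$ the number of registers minus the number of cycles. You truncate at scale $\sqrt{qk}$ with $k\asymp\log n$, and your Markov step with $\opnorm S\le q$ works. However, the moment bound that carries part~(i) is not established.

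Your claim that $\tr(S^{2k}\mu_b)$ is dominated by the words multiplying to the identity, up to a factor $1+O(k^2/n)$, is false in the range the proposition covers. Already at $k=1$, the words $(0\,i)(0\,j)$ with $i\ne j$ give $3$-cycles of weight $(2/n)^2$. So $\tr(S^2\mu_0)=q+4q(q-1)/n^2$, and at $q=n^2$ the non-identity words contribute about four times as much as the identity words. More fundamentally, Jensen gives $\tr(S^{2k}\mu_0)\ge(2q/n)^{2k}$, which exceeds $(Cqk)^k$ as soon as $q>Ckn^2/4$. So the moment bound holds only because of the hypothesis $q\le n^2$, and your sketch uses that hypothesis nowhere in this step. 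Where you do invoke it, at the Markov step, it is not needed, since $q/n^2\le q/n$ always. The representation-theoretic fallback is only named, and it would have to confront the same regime.

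What is missing is a count that trades the number of distinct indices in a word against its weight; this is \Cref{lem:walk-count} in the paper. The first occurrence of an index is a fixed point of the permutation built so far, so that step raises $|\cdot|$ by exactly one, while any step changes it by $\pm1$. Hence a word with $f$ distinct indices has $|\pi_w|\ge\max(0,2f-2k)$, and there are at most $4^kq^f(2k)^{2k-f}$ such words. For $q>2k$, each value $f\le k$ then contributes at most $4^k(2kq)^k$. Each value $f=k+g$ contributes at most $4^k(2kq)^k(x^2q/2k)^g$ with $x=2/n$, and this is exactly where $q\le n^2$ enters, through $x^2q\le4\le2k$ for $k\ge2$. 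For $q\le2k$ the trivial bound $q^{2k}\le(2kq)^k$ suffices. Summing gives $\tr(S^{2k}\mu_b)\le(24kq)^k$. It is this explicit constant, rather than an unspecified $C$, that delivers the stated $\frac1{40}$.
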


The proof rests on an elementary counting bound, which we state on its own because it is the only
combinatorial input and is independent of everything quantum. Throughout, the copies occupy registers $1,\dots,q$ and the challenge occupies register $q+1$.  For
$\pi\in S_{q+1}$ we write $|\pi|\eqdef(q+1)-\#\mathrm{cycles}(\pi)$ for the minimal number of
transpositions expressing $\pi$.

\begin{lemma}[Walk counting for the swap sum]
\label{lem:walk-count}
For a word $w=(i_1,\dots,i_{2k})\in[q]^{2k}$ put
\[
  \pi_w\ \eqdef\ (i_1\ q{+}1)\,(i_2\ q{+}1)\cdots(i_{2k}\ q{+}1)\ \in\ S_{q+1}.
\]
Then for every integer $k\ge2$ and every $x\in(0,1]$ with $x^2q\le2k$,
\[
  \sum_{w\in[q]^{2k}}x^{\,|\pi_w|}\ \le\ (24\,kq)^k .
\]
\end{lemma}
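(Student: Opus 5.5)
The plan is to organize $[q]^{2k}$ by how often each letter occurs, and to drop the weight $x^{|\pi_w|}$ on every letter except those occurring exactly once. For a word $w$, let $d_1$ be the number of letters occurring exactly once in $w$, and let $m\eqdef 2k-d_1$ be the number of positions filled by the remaining letters, each of which occurs at least twice. Since $x\le1$, the whole lemma follows from one structural claim,
\[
  |\pi_w|\ \ge\ d_1 ,
\]
because then $x^{|\pi_w|}\le x^{d_1}$, and it only remains to count words by their pair $(d_1,m)$.

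First step: the structural claim. Note $|\pi_w|\ge\#\{\text{points moved by }\pi_w\}-\#\{\text{nontrivial cycles}\}$. So I would show two things: every singleton letter $i$ is moved, and the singletons are spread out enough that each one costs a transposition. For the first, write $\pi_w=\alpha\,(i\ q{+}1)\,\beta$ with $\alpha,\beta$ products of star transpositions not involving $i$. Then $\pi_w(i)=\alpha(q{+}1)$, and this is not $i$, since $\alpha$ never touches $i$. For the counting, I would argue by induction on $d_1$. Deleting the single occurrence of a singleton letter changes $\pi_w$ by one transposition, so $|\pi_w|$ changes by exactly $\pm1$. The goal is to show that it always drops, i.e.\ that deleting the occurrence merges $i$ back as a fixed point rather than splitting a cycle. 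This is equivalent to: in $\pi_w$, the points $i$ and $q{+}1$ lie in the same cycle. I expect this to follow from the same computation $\pi_w(i)=\alpha(q{+}1)$, tracking the orbit of $q{+}1$ under the partial products. I expect this step to be the main obstacle. If the direct orbit argument resists, my fallback is to prove the weaker $|\pi_w|\ge d_1/2$ from "moved points $\ge d_1$". That costs only a constant in the final estimate if I also use $x^2q\le 2k$ more carefully, by trading $x^{d_1/2}$ against $\sqrt{q}$ per singleton.

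Second step: counting. Fix $d_1$ and choose the singleton positions in at most $\binom{2k}{d_1}$ ways. Fill those positions with distinct letters in at most $q^{d_1}$ ways, so that with the weight they contribute at most $(qx)^{d_1}\le (2kq)^{d_1/2}$, using $x\le\sqrt{2k/q}$. For the remaining $m$ positions, in which every letter recurs, I would use the standard encoding: scan left to right and record at each position either "new letter" (at most $q$ choices) or "repeat of one of the at most $m/2$ letters already used". At most $m/2$ positions are new, so the number of such fillings is at most $2^m(q)^{m/2}(m/2)^{m/2}\le (2mq)^{m/2}\cdot 2^{m/2}$, i.e.\ at most $(Ckq)^{m/2}$.

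Third step: assembly. Summing over $d_1$ gives
\[
  \sum_{d_1}\binom{2k}{d_1}(2kq)^{d_1/2}(Ckq)^{(2k-d_1)/2}\ \le\ 2^{2k}(Ckq)^{k},
\]
and tracking the constants ($2$ from the binomial sum per pair of positions, and $2\cdot2\cdot\tfrac12$ from the encoding) should land at $24$. The hypothesis $k\ge2$ is used only to absorb lower-order factors into the constant.
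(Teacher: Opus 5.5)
Your decomposition by singleton letters is a legitimate alternative to the paper's. However, the step you flag as the main obstacle is aimed at a false statement, and your fallback would not close it.

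Deleting the unique occurrence of a singleton $i$ does not remove the literal transposition $(i\ q{+}1)$. Write $\pi_w=\alpha\,(i\ q{+}1)\,\beta$ and $w'$ for the shortened word. Then $\pi_w=\bigl(\alpha(i\ q{+}1)\alpha^{-1}\bigr)\alpha\beta=(i\ \alpha(q{+}1))\,\pi_{w'}$, so the transposition is conjugated by the prefix. The drop is therefore not equivalent to ``$i$ and $q{+}1$ lie in the same cycle of $\pi_w$'', and that condition does fail. For $w=(1,2,1,3)$ one gets $\pi_w=(1\ 2)(3\ q{+}1)$. The singleton $2$ is not in the cycle of $q{+}1$, yet deleting it lowers $|\pi|$ from $2$ to $1$. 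Tracking the orbit of $q{+}1$ would thus be an attempt to prove something false.

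The fallback loses too much. From $|\pi_w|\ge d_1/2$, each singleton contributes $q\sqrt x\le q^{3/4}(2k)^{1/4}$ rather than $O(\sqrt{kq})$. At $d_1=2k$ and $x=\sqrt{2k/q}$, your estimate exceeds $(Ckq)^k$ by a factor growing like $(q/k)^{k/2}$.

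The repair is one line. Since $\alpha$ fixes $i$, we have $\alpha(q{+}1)\ne i$. Since $i$ does not occur in $w'$, $\pi_{w'}=\alpha\beta$ fixes $i$. Multiplying by $(i\ \alpha(q{+}1))$ therefore merges the fixed point $i$ into another cycle, so $|\pi_w|=|\pi_{w'}|+1$ exactly. Deleting the $d_1$ singletons one at a time (the others remain singletons) gives $|\pi_w|\ge d_1$.

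Your counting then goes through, with two adjustments.
\begin{itemize}
\item $2^mq^{m/2}(m/2)^{m/2}$ already equals $(2mq)^{m/2}\le(4kq)^{m/2}$. The extra factor $2^{m/2}$ you append would only yield $(32kq)^k$, whereas without it the assembly gives $4^k(4kq)^k=(16kq)^k$.
\item To make the block count valid for small $q$, note that with $j$ new positions a repeat has at most $j\le\min(q,m/2)\le\sqrt{qm/2}$ choices. Hence $q^jj^{m-j}\le(qm/2)^{m/2}$ for every $q$.
\end{itemize}

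For comparison, the paper scans left to right and counts fresh steps (first occurrences). A fresh index is fixed by the current partial product, so the literal transposition raises $|\cdot|$ by one with no conjugation to track, and each old step costs at most one. This gives $|\pi_w|\ge2f-2k$. Words are counted as $\binom{2k}{f}q^ff^{2k-f}$, the sum is split at $f=k$, and $q\le2k$ is treated separately.

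Your repaired inequality $|\pi_w|\ge d_1$ is the stronger one, since letters recurring at least twice force $d_1\ge2f-2k$. It handles all $q$ uniformly and gives a better constant. The paper's prefix formulation buys a structural step that is immediate.
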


\begin{proof}
Suppose first that $q\le2k$. Then $|\pi_w|\ge0$ and $x\le1$ give
\[
  \sum_{w}x^{|\pi_w|}\ \le\ q^{2k}\ =\ (q^2)^k\ \le\ (2kq)^k ,
\]
which suffices.

Suppose instead that $q>2k$. Only a repeated index can move the walk back toward the identity. Each
fresh index pushes $\pi_w$ one step further from it, so a word using many distinct indices carries a
small weight $x^{|\pi_w|}$, while a word using few of them has few realizations, and the whole estimate
is a trade between these two effects.

Call step $\ell$ \emph{fresh} if $i_\ell\notin\{i_1,\dots,i_{\ell-1}\}$ and \emph{old} otherwise,
and let $f$ be the number of fresh steps.  Multiplying by a transposition changes $|\cdot|$ by
exactly $\pm1$, and a fresh step always contributes $+1$.  To see the latter, note that the support
of the permutation built so far lies in the challenge slot together with the indices already used, so
a fresh $i_\ell$ is a fixed point of it; joining its cycle to the cycle of the challenge
increases the length.  Since the remaining $2k-f$ steps contribute at least $-1$ each,
\[
  |\pi_w|\ \ge\ \max(0,\,2f-2k).
\]

It remains to count the words at each value of $f$. Those with exactly $f$ fresh steps number at most
$\binom{2k}{f}q^ff^{\,2k-f}\le4^kq^f(2k)^{2k-f}$, by choosing the positions of the fresh steps, then
their values, and then the old values from among the at most $f$ indices already used. Splitting the
sum at $f=k$ gives
\[
  \sum_wx^{|\pi_w|}\ \le\ 4^k\Bigg[\sum_{f\le k}q^f(2k)^{2k-f}
  \;+\;\sum_{g=1}^{k}q^{k+g}(2k)^{k-g}x^{2g}\Bigg].
\]
Each sum is now a geometric comparison against the single quantity $(2kq)^k$.  In the first,
\[
  q^f(2k)^{2k-f}\ =\ (2kq)^k\Big(\frac{2k}q\Big)^{k-f}\ \le\ (2kq)^k ,
\]
since $2k\le q$ and $f\le k$, and there are at most $k+1$ terms.  In the second, write $f=k+g$; then
\[
  q^{k+g}(2k)^{k-g}x^{2g}\ =\ (2kq)^k\Big(\frac{x^2q}{2k}\Big)^{g}\ \le\ (2kq)^k
\]
by the hypothesis $x^2q\le2k$, and there are $k$ terms.  The whole sum is thus at most
\[
  4^k(2k+1)(2kq)^k\ \le\ 4^k3^k(2kq)^k\ =\ (24\,kq)^k ,
\]
using $2k+1\le3^k$, and the lemma follows.
\end{proof}

\begin{proof}[Proof of \Cref{prop:copies-sqrt}]
The mechanism is a random walk.  Each copy couples to the challenge only through a swap, and the $q$
swaps sum to $X=\sum_{i=1}^q\mathrm{SWAP}_{i,\,q+1}$.  Against $\mu_b$ every permutation observable
evaluates exactly, with no dependence on $H$, so that
$\langle X\rangle_{\mu_0}-\langle X\rangle_{\mu_1}=2q/n$ exactly.  Meanwhile $X$ fluctuates at scale
$\sqrt q$ under both states, and truncating $X$ at that scale and thresholding gives~(i).  Part~(ii)
is \Cref{prop:advice-sharp} applied to a purification of the copies.  Throughout we use the
notation fixed above, writing $P_\pi$ for the operator permuting the tensor factors.

\emph{(i).} Start from an exact trace identity. For operators $\sigma_1,\dots,\sigma_{q+1}$ on
$\bbC^n$,
\[
  \tr\big[P_\pi(\sigma_1\otimes\cdots\otimes\sigma_{q+1})\big]
  \;=\;\prod_{\text{cycles }c\text{ of }\pi}\ \tr\Big(\prod_{j\in c}\sigma_j\Big),
\]
where the product inside each trace is taken along the cycle.  The order is immaterial here, because
all our factors are polynomials in the single operator $P_H$ and hence commute.  Now take
$\sigma_j=\rho_0$ for $j\le q$ and $\sigma_{q+1}=\rho_b$.  Since $\rho_0=2P_H/n$ has rank $n/2$ with
all nonzero eigenvalues $2/n$,
\[
  \tr(\rho_0^{\,\ell})=(2/n)^{\ell-1},\qquad
  \tr(\rho_0^{\,\ell-1}\rho_1)=0\ \ (\ell\ge2),\qquad \tr(\rho_1)=1 ,
\]
the middle identity holding because $\rho_0\rho_1=\frac{4}{n^2}P_HP_{H^\perp}=0$.

Read cycle by cycle, this says the following.  A cycle of length $\ell$ that avoids the challenge slot
contributes $(2/n)^{\ell-1}$.  A cycle through the challenge contributes $(2/n)^{\ell-1}$ when $b=0$;
when $b=1$ it contributes $0$, unless $\ell=1$.  Multiplying over cycles,
\begin{equation}
  \big\langle P_\pi\big\rangle_{\mu_0}=\Big(\frac2n\Big)^{|\pi|},
  \qquad
  \big\langle P_\pi\big\rangle_{\mu_1}=\mathbf 1\big[\pi(q{+}1)=q{+}1\big]\cdot\Big(\frac2n\Big)^{|\pi|},
  \label{eq:perm-values}
\end{equation}
\emph{for every fixed $H$}, and in both the real and the complex ensemble; the dimension $n/2$ is the
only feature of $H$ that enters anywhere.  Note that every quantity below is a linear combination of
the values \eqref{eq:perm-values}, so the entire argument is pointwise in $H$.

We state the content of \eqref{eq:perm-values} separately, since it is the reason the rest of
the argument stays elementary.  The maximally mixed state on $(\bbC^{d})^{\otimes(q+1)}$ with $d=n/2$
also has $\langle P_\pi\rangle=d^{-|\pi|}$, since $\tr(P_\pi)=d^{\#\mathrm{cycles}(\pi)}$.  So on
permutation observables $\mu_0$ is indistinguishable from that state, and $\mu_1$ never exceeds it.
In other words, the moment bound and the truncation below are statements about the maximally mixed
state on $q+1$ registers of dimension $d$, with no hidden subspace in them.

The two means are immediate.  Put
\[
  X\ \eqdef\ \sum_{i=1}^q\mathrm{SWAP}_{i,\,q+1}\ =\ \sum_{i=1}^qP_{(i\ q+1)} .
\]
Each transposition has $|\pi|=1$ and moves the challenge slot, so \eqref{eq:perm-values} gives
\[
  \langle X\rangle_{\mu_0}\ =\ q\cdot\frac2n ,\qquad \langle X\rangle_{\mu_1}\ =\ 0 .
\]

Next we bound the even moments of $X$, and this is the only place where the walk counting is used. We
claim that for every integer $k\ge2$ and both $b$,
\begin{equation}
  \big\langle X^{2k}\big\rangle_{\mu_b}\ \le\ (24\,kq)^k .
  \label{eq:walk-moments}
\end{equation}
Expanding $X^{2k}=\sum_wP_{\pi_w}$ over words $w\in[q]^{2k}$, every term is nonnegative and at most
$(2/n)^{|\pi_w|}$ by \eqref{eq:perm-values}, so the claim is exactly \Cref{lem:walk-count} applied
with $x=2/n$. Its hypotheses hold, since $x\le1$ for $n\ge2$, and $x^2q=4q/n^2\le4\le2k$ because
$q\le n^2$ and $k\ge2$.

Finally, truncate.  Set
\[
  k\eqdef\lceil\log_2(4n)\rceil\ \ (\ge4),\qquad \Lambda\eqdef e\sqrt{24\,kq},\qquad
  A\eqdef X\cdot\mathbf 1\big[\,|X|\le \Lambda\,\big] .
\]
Then $A$ is Hermitian with $\norm A_{\mathrm{op}}\le \Lambda$, so $\Pi_q\eqdef\frac12(\Id+A/\Lambda)$ is
an effect.  What the truncation discards is controlled by the moment bound.  Indeed
$|x|\,\mathbf 1[|x|>\Lambda]\le x^{2k}/\Lambda^{2k-1}$ for every real $x$, and
$e^{2k}\ge e^{2\ln(4n)}=16n^2$, so \eqref{eq:walk-moments} gives, for both $b$,
\[
  \big|\langle X\,\mathbf 1[|X|>\Lambda]\rangle_{\mu_b}\big|
  \ \le\ \frac{(24kq)^k}{\Lambda^{2k-1}}\ =\ \Lambda\,e^{-2k}
  \ \le\ \frac{e\sqrt{24kq}}{16n^2}\ \le\ \frac{q}{2n},
\]
the last step because $e\sqrt{24k}\le8n$ for $n\ge4$ and $q\ge1$.

So truncation costs at most half of the gap $2q/n$ between the two means, and what is left is still of
that order:
\[
  \tr\big(A(\mu_0-\mu_1)\big)\ \ge\ \frac{2q}{n}-2\cdot\frac{q}{2n}\ =\ \frac qn,
  \qquad\text{so}\qquad
  \tr\big(\Pi_q(\mu_0-\mu_1)\big)\ \ge\ \frac{q}{2\Lambda n}\ =\ \frac{\sqrt q}{2e\sqrt{24k}\,n}\ ,
\]
where the second equality uses $\Lambda=e\sqrt{24kq}$.  Since $k\le2\log_2(4n)$ we have
$2e\sqrt{24k}\le2e\sqrt{48\log_2(4n)}\le40\sqrt{\log_2(4n)}$, and together these give~(i), for every
$H$.

\emph{(ii).} Fix an effect $\Pi$ on the $q+1$ registers, and purify the copies, writing
$\rho_0^{\otimes q}=\tr_E\proj{\Psi_H}$ with $\ket{\Psi_H}\in\cV\otimes E$ for
$\cV=(\bbC^n)^{\otimes q}$ and an environment $E$ of dimension $n^q$. Then
\[
  \mathrm{bias}(\Pi)=\tfrac12\big|\tr\big(\Pi(\mu_0-\mu_1)\big)\big|
  =\tfrac1n\big|\tr\big[(\Pi\otimes\Id_E)\big(\proj{\Psi_H}\otimes R\big)\big]\big| ,
\]
after reordering the registers so that the copies and $E$ together form the advice space, of dimension
$K=n^{2q}$. This is exactly the quantity of \Cref{prop:advice-sharp}, whose bound is uniform over effects.  Hence
\[
  \E_H[\mathrm{bias}]\ \le\ \frac{C\sqrt{\log(2n^{2q})}}n\ \le\ \frac{C\sqrt{2q\log(2n)}}n ,
\]
which is~(ii).
\end{proof}

\paragraph{The origin of the $\sqrt q$ rate.}  The operator being controlled is the Jucys--Murphy
element $X_{q+1}$ of $\bbC[S_{q+1}]$~\cite{Jucys74,Murphy81}, whose spectrum consists of the integer
contents of Young-diagram boxes.  \Cref{lem:walk-count} is an elementary substitute for that spectral
information.  At $q=1$ the value is exactly $1/n$, attained for every $H$ by the symmetric-subspace
effect $\frac12(\Id+\mathrm{SWAP})$.

The proposition completes an account of what each resource about $H$ is worth against the pair.
Fresh samples of the secret's own output are worth $\widetilde\Theta(\sqrt q)/n$ for $q$ of them.  A
classical description of their span is worth $\Theta(q/n)$.  One coherent query, of any width
$M=2^{\poly(\lambda)}$, is worth $n^{-1/2}$ up to the factor $\sqrt{\log(2M)}$ (\Cref{thm:one-query},
\Cref{cor:one-query-tight}).  The pair survives relative to $\cO$ because the oracle provides only
the weakest of these resources: fresh samples from $\cR$, and classical bits from $\Count$ that the
communication bound already accounts for.

\section{The sharp rate for the half-subspace pair}
\label{app:sharp}

The bounds of \Cref{sec:security} and \Cref{sec:hybrid} pass through
\Cref{thm:transcript-security}, and a worst-case communication bound enters them through
\Cref{lem:vsp-small}, whose amplification step costs a square root; the rate they return is
$\poly(\lambda)\cdot n^{-1/6}$.

This appendix proves the same statements at $\poly(\lambda)\cdot n^{-1/2}$, which is optimal up to
the polynomial by \Cref{prop:exact-rate}.  The argument uses no communication bound.  In place of
one it uses the linearity of $\rho_0-\rho_1=2R_H/n$ in the reflection, and so, unlike
\Cref{thm:transcript-security}, it does not extend to an arbitrary hidden object.

\begin{theorem}[Sharp hybrid security for the half-subspace pair]
\label{thm:hybrid-sharp}
There is a universal $C>0$ such that for all $q<n/4$, all $L,a\ge0$ and all $M\ge1$, every hybrid
adversary of \Cref{model:hybrid} satisfies
\[
  \E_H\Big[\max_{\alpha\in\{0,1\}^{\le a}}\Adv\Big]\ \le\ \frac{2q}{n}
  \;+\;C\sqrt{\frac{\log(2M)+L+a}{n-2q}} .
\]
For $n=2^{\Omega(\lambda)}$ and all parameters $\poly(\lambda)$ the bound is
$\poly(\lambda)\cdot n^{-1/2}$.  At $M=1$ the final query is trivial, and the statement bounds an
adversary with $q$ reference copies per side, $L$ bits of adaptive classical queries and $a$ bits of
advice, with no coherent query, by $\frac{2q}n+C\sqrt{(L+a+1)/(n-2q)}$.  At $L=a=q=0$ it is
\Cref{thm:one-query}, up to the factor two between bias and advantage.
\end{theorem}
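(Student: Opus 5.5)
Rather than split the advantage into an $H$-independent part and an increment, as \Cref{thm:hybrid} does, we treat the whole post-advice experiment with the record decomposition and spectral concentration. The communication term $2\beta$ then never appears.

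First we reduce the reference copies exactly as in \Cref{lem:reference}. The adversary is handed the spans $(A,B)$, and we condition on them. The known branch of weight $2q/n$ is conceded, at cost $2q/n$. This cost is applied once to the full advantage, rather than once to each of two terms, which is why the coefficient drops from $6q/n$ to $2q/n$. We then work inside $K$ of dimension $n'=n-2q$ with the Haar reflection $R'$, and every truth table and advice map becomes a Boolean function of $H_A$ given $(A,B)$.

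Next, for each fixed advice string $\alpha$ we apply \Cref{lem:record} to the classical phase, restricted to $K$ as in Step 4 of the proof of \Cref{thm:hybrid}. This gives $H$-independent operators $V_\pi$ with masses satisfying $\sum_\pi\chi_\pi(H)w_\pi=1$. The signed advantage on the residual branch is then
\[
\frac{2}{n'}\sum_\pi\chi_\pi(H)\,\tr\big[F_H V_\pi R' V_\pi^\dagger\big],
\]
where $F_H$ is the full final effect, coherent query included. For each record, $W_\pi V_\pi$ is a sub-normalized one-query map, so \Cref{lem:subnorm} bounds the record's term by $w_\pi\phi_\pi(H)$ up to a constant. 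The point is that no subtraction of $F_\star$ is needed here, since the lemma already bounds the maximum over every truth table, including the trivial one. At $M=1$, $\phi_\pi$ reduces to a single linear statistic $\frac1{n'}\tr(B_\pi R')$ with $\norm{B_\pi}_2$ controlled by $w_\pi$, and its mean is zero by \Cref{lem:scalar-linear}.

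Finally, the convex combination $\sum_\pi\chi_\pi w_\pi\phi_\pi$ is at most $\max_{\alpha,\pi}\phi_\pi$. We split this as mean plus fluctuation, exactly as in Step 6 of \Cref{thm:hybrid}. The means are $C\sqrt{\log(2M)/n'}$ by \Cref{lem:subnorm}. The fluctuations are $2$-Lipschitz functions of the Haar unitary, hence subgaussian at scale $n'^{-1/2}$, and a maximum over at most $2^{L+a+2}$ of them costs $C\sqrt{(L+a+3)/n'}$. Together these give the stated bound.

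The main obstacle is making the record-by-record bound valid for the \emph{entire} advantage, not only the increment. The issue is that $\chi_\pi(H)$ correlates the surviving record with $H$. The pointwise-in-$H$ convexity in \Cref{lem:record}, combined with taking the maximum over $\pi$ before the expectation, is what neutralizes this; the cost is the $\sqrt{L+a}$ enumeration term, in place of the Klartag--Regev term.
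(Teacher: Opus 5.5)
Your proposal is correct and follows essentially the same route as the paper's proof. It concedes the known branch once at cost $2q/n$, and applies the record identity of \Cref{lem:record} to the whole signed residual advantage, so that \Cref{lem:subnorm} bounds each record's term, the trivial truth table included, without subtracting $F_\star$. It then pays $\sqrt{\log(2M)/n'}$ for the means and $\sqrt{(L+a)/n'}$ for the maximum over the at most $2^{L+a+2}$ pairs $(\alpha,\pi)$. The one slip is your aside on $M=1$: there $\phi_\pi$ is the absolute value of a linear statistic, so its mean is $O(n'^{-1/2})$ rather than zero, but the aside is unnecessary, because the general bound with $\log(2M)=\log 2$ already covers that case.
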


\begin{proof}
The argument is Steps~3--6 of the proof of \Cref{thm:hybrid}, applied to the whole signed advantage
rather than to its increment.  The split of Step~1 there, and the communication bound of Step~2, are
not needed: \Cref{lem:subnorm} bounds $\max_f|Z_H(f)|$ over \emph{all} truth tables, so it bounds
the contribution of an $H$-independent effect exactly as it bounds that of $F_H$, and the record
identity of \Cref{lem:record} turns the whole advantage into a convex combination of such terms.

Fix the adversary.  Every object below depends implicitly on the advice string $\alpha$, and we keep
$\max_\alpha$ outside all inequalities, each of which holds for every fixed $\alpha$.

\emph{Step 1: reference copies cost $\frac{2q}n$ and a loss of dimension.}  Give the adversary the
classical spans $(A,B)$ of its reference copies.  This only increases its power, since from a basis
of the spans it resamples the copies with the correct joint law, by the argument that opens the
proof of \Cref{lem:reference}.  Condition on $(A,B)$.  By \Cref{lem:residual} each challenge branch
is a mixture, with weight $\frac{2q}n$ on the maximally mixed state of the known subspace, $A$ or
$B$, and weight $1-\frac{2q}n$ on the maximally mixed state of the residual, $H_A$ or $H_B$, inside
$K=(A\oplus B)^\perp$ of dimension $n'\eqdef n-2q$.  For a fixed strategy the acceptance probability
is affine in the challenge, so the signed advantage splits along the mixture.  On the known branch
it is a difference of two probabilities, hence at most $1$ in absolute value, and that branch
contributes at most $\frac{2q}n$.  This is where the present argument is cheaper than Step~3 of
\Cref{thm:hybrid}, which bounds a difference of \emph{effects} and pays $\frac{4q}n$.

On the residual branch the adversary faces the half-subspace pair of $H_A$ inside $K$, with $H_A$
Haar of dimension $n'/2$ in $K$ by \Cref{lem:residual}.  Given $(A,B)$, the $\Count$ answers and the
truth table of the final query are Boolean functions of $H_A$ alone, since $H=A\oplus H_A$, and the
resampled copies are functions of $(A,B)$ and private randomness.  Hard-wiring these leaves a hybrid
adversary with no reference copies against a half-subspace pair in dimension $n'$.  It remains to
bound its advantage uniformly in $(A,B)$.  From here on we write $n$ for $n'$, $R$ for the
reflection $2P_{H_A}-P_K$ on $K$, and $\rho_b$ for the residual pair, so that $\rho_0-\rho_1=2R/n$.

\emph{Step 2: the whole advantage is a convex combination of one-query expressions.}  Let $F_H$ be
the accept effect of step~(iii): $F_H=W^\dagger(D_{s(H)}\otimes\Id)\,\Pi\,(D_{s(H)}\otimes\Id)W$ for
the pre-query isometry $W$, the truth table $s(H)$ and the final effect $\Pi$.  By
\Cref{lem:record}, with the operators $V_\pi$ of the classical phase and the masses
$w_\pi\eqdef\frac1n\tr(P_KV_\pi^\dagger V_\pi P_K)$ of the instrument restricted to $K$,
\[
  \Delta(H)\ \eqdef\ \tr\big[F_H\,\mathcal E_H(\rho_0-\rho_1)\big]
  \ =\ \frac2n\sum_\pi\chi_\pi(H)\,\tr\big[F_H\,V_\pi RV_\pi^\dagger\big],
\]
and the masses satisfy $\sum_\pi\chi_\pi(H)\,w_\pi=1$ for every $H$, the identity surviving the
restriction to $K$ exactly as in Step~4 of the proof of \Cref{thm:hybrid}.  Fix a record $\pi$.
Then $\tr[F_HV_\pi RV_\pi^\dagger]=n\,Z_H(s(H))$ for the quadratic form \eqref{eq:quadform} built
from the pair $(WV_\pi,\Pi)$, and $WV_\pi$ has normalized trace $w_\pi$ because $W$ is an isometry.
\Cref{lem:subnorm} applies to that pair and gives, for the weighted norm $\phi_\pi(H)$ it defines,
\[
  \big|\tr[F_HV_\pi RV_\pi^\dagger]\big|\ \le\ n\max_f|Z_H(f)|\ \le\ n\,w_\pi\,\phi_\pi(H).
\]
Hence, pointwise in $H$ and for every fixed $\alpha$,
\[
  |\Delta(H)|\ \le\ 2\sum_\pi\chi_\pi(H)\,w_\pi\,\phi_\pi(H)\ \le\ 2\max_\pi\phi_\pi(H),
\]
the last step because the numbers $\chi_\pi(H)w_\pi$ form a probability vector and each
$\phi_\pi\ge0$.

\emph{Step 3: passing to the means.}  This is Step~6 of the proof of \Cref{thm:hybrid}.  Write
$m_\pi=\E_H\phi_\pi\le C\sqrt{\log(2M)/n}$.  Each $\phi_\pi$ is $2$-Lipschitz along geodesics in the
Haar unitary defining $R$, hence subgaussian with variance proxy $\sigma^2\le C'/n$, and there are
at most $2^{a+1}\cdot2^{L+1}$ pairs $(\alpha,\pi)$.  The maximal inequality for subgaussian
variables, which uses only a union bound on their tails, gives
\[
  \E_H\max_\alpha|\Delta(H)|\ \le\ 2\max_{\alpha,\pi}m_\pi+2\,\E_H\max_{\alpha,\pi}|\phi_\pi-m_\pi|
  \ \le\ C\sqrt{\frac{\log(2M)+L+a}n},
\]
the first term being at most $2C\sqrt{\log(2M)/n}$ and the second at most
$2\sigma\sqrt{2\ln(2^{a+L+3})}\le C\sqrt{(L+a+3)/n}$; the constants are absorbed, since
$\log(2M)\ge\log2$.

\emph{Step 4: reassembling.}  The bound of Step~3 holds for every $(A,B)$, so averaging over
$(A,B)$ preserves it.  Adding the $\frac{2q}n$ of Step~1 and restoring $n-2q$ for $n$ gives the
theorem.
\end{proof}

The net argument of \Cref{cor:hybrid-qadvice} is unchanged by the substitution, and it improves with
the bound it is fed.

\begin{corollary}[Quantum advice at the sharp rate]
\label{cor:qadvice-sharp}
In the setting of \Cref{cor:hybrid-qadvice}, with $a'\eqdef a+\lceil2d_\cA\log_29\rceil$ and $C$ the
constant of \Cref{thm:hybrid-sharp},
\[
  \E_H\Big[\sup_{\sigma_H}\ \max_{\alpha\in\{0,1\}^{\le a}}\Adv\Big]\ \le\
  2\Big[\frac{2q}{n}+C\sqrt{\frac{\log(2M)+L+a'}{n-2q}}\Big].
\]
For $n=2^{\lambda}$, all of $q,L,a,\log M$ polynomial in $\lambda$, and $d_\cA\le n^{1-\eps}$ with
$\eps>0$ fixed, the bound is $2^{-\Omega(\lambda)}$: quantum advice of up to $(1-\eps)\lambda$
qubits is admissible, against $(\tfrac13-\eps)\lambda$ from \Cref{cor:hybrid-qadvice}.
\end{corollary}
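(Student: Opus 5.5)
The plan is to rerun the net argument of \Cref{cor:hybrid-qadvice} verbatim and replace \Cref{thm:hybrid} by \Cref{thm:hybrid-sharp} at the single point where a bound for classical-advice adversaries is invoked.

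First, fix the adversary and a classical advice string $\alpha$. Its acceptance probability is linear in the input state. So, exactly as before, there is a Hermitian $Z_\alpha(H)$ on the advice register $\cA$ with $\Adv(\sigma,\alpha;H)=|\tr(\sigma Z_\alpha(H))|$, and hence $\sup_\sigma\Adv=\opnorm{Z_\alpha(H)}$. \Cref{lem:net} supplies a $\frac14$-net $\mathcal N$ of the unit sphere of $\cA$ with $|\mathcal N|\le9^{2d_\cA}$ and $\opnorm{Z_\alpha(H)}\le2\max_{\gamma\in\mathcal N}|\bra\gamma Z_\alpha(H)\ket\gamma|$.

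Next, define $\mathsf D^*$ as before. It reads an advice string of length $a'=a+\lceil2d_\cA\log_29\rceil$ as a pair $(\alpha,j)$, prepares the $H$-independent state $\ket{\gamma_j}$ in its own workspace, and runs the original adversary. Then $\mathsf D^*$ is a hybrid adversary of \Cref{model:hybrid}, with the same $q,L,M$ and advice length $a'$. Pointwise in $H$, its best-advice advantage is at least half of $\sup_\sigma\max_\alpha\Adv$. Taking expectations and applying \Cref{thm:hybrid-sharp} to $\mathsf D^*$ gives the displayed bound, with the factor $2$ in front. The only check is that the maximum over the pairs $(\alpha,j)$ sits inside the expectation, and that is the form in which \Cref{thm:hybrid-sharp} is stated.

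Finally, the asymptotics are arithmetic. With $d_\cA\le n^{1-\eps}$ we get $a'=a+O(n^{1-\eps})$, so $(\log(2M)+L+a')/(n-2q)=O(n^{-\eps})+\poly(\lambda)/n$. The square root is therefore $O(n^{-\eps/2})+\poly(\lambda)n^{-1/2}=2^{-\Omega(\lambda)}$ at $n=2^\lambda$, and the copy term $2q/n$ is also $2^{-\Omega(\lambda)}$. In qubits, $\log_2 d_\cA\le(1-\eps)\lambda$.

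The one step needing care, which I expect to be the main obstacle, is the comparison with \Cref{cor:hybrid-qadvice}. There the threshold was $n^{1/3-\eps}$ because the communication term $\beta(n-2q,L+a'+1)$ carries the exponent $n^{-1/6}$ under a square root in $a'$. Here no such term appears: \Cref{thm:hybrid-sharp} has no middle term, so the advice cost enters only through $\sqrt{a'/n}$, and this is what moves the threshold to $n^{1-\eps}$.
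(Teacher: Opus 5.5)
Your proposal is correct and is essentially the paper's own proof: you rerun the net argument of \Cref{cor:hybrid-qadvice}, charging the net index as $\lceil 2d_\cA\log_2 9\rceil$ extra classical advice bits at a loss of a factor two, and then apply \Cref{thm:hybrid-sharp} to the resulting classical-advice adversary in place of \Cref{thm:hybrid}. Your account of the improved threshold is also the paper's: the sharp theorem has no communication term $\beta(n-2q,L+a'+1)$, so the advice cost enters only through $\sqrt{a'/(n-2q)}$, and this is why $d_\cA\le n^{1-\eps}$ now suffices where $n^{1/3-\eps}$ was needed before.
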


\begin{proof}
The net argument of \Cref{cor:hybrid-qadvice} is unchanged: an index into a $\frac14$-net of the
unit sphere of $\cA$ costs $\lceil2d_\cA\log_29\rceil$ classical bits and loses a factor two.  Applying
\Cref{thm:hybrid-sharp} to the resulting adversary gives the display.  For the final claim,
$a'=O(K)\le O(n^{1-\eps})$, so the square root is $O(n^{-\eps/2})=2^{-\Omega(\lambda)}$, and
$\frac{2q}n$ is $2^{-\Omega(\lambda)}$ as well.
\end{proof}

\begin{remark}[The enlarged hidden object of \Cref{sec:single}]
\label{rem:sharp-enlarged}
\Cref{thm:hybrid-sharp} uses three properties of the hidden object: the record identity of
\Cref{lem:record} holds pointwise; the weighted norms $\phi_\pi$ depend on the object only through
the reflection $R$; and the truth tables queried are covered by the maximum over all Boolean
functions.  All three hold for $X=(H,S)$ of \Cref{model:single} in the fresh experiment of
\Cref{lem:fresh}.  The first is an identity about the instrument realized by any fixed oracle.  For
the second, $\phi_\pi$ is built from the adversary's fixed operators and $R_{H_A}$, so $S$ does not
enter it, and the expectation of Step~3 is over $H_A$ given $(A,B)$, which is Haar in $K$ by
\Cref{lem:residual}, the fresh reference copies being drawn from $\rho_b(H)$ independently of $S$.
The third holds by construction.  So \Cref{thm:single} holds with $\poly(\lambda)\cdot n^{-1/2}$ in
place of $\poly(\lambda)\cdot n^{-1/6}$, and the clause of \Cref{lem:side} that transfers the
communication term is not needed on this route.
\end{remark}

\end{document}